\def\a{\alpha}
\def\b{\beta}
\def\c{\gamma}
\def\e{\varepsilon}
\def\ic{{\rm i}}
\def\s{\sigma}
\def\w{\omega}
\def\D{\Delta}
\def\p{\partial}
\def\x{{\hat{x}}}
\def\r{\rho}
\def\cG{{\mathcal G}}
\def\R{\mathbb R}
\def\N{\mathbb N}
\newtheorem{thm}{Theorem}
\newtheorem{lemma}[thm]{Lemma}
\newtheorem{cor}[thm]{Corollary}
\newtheorem{defn}[thm]{Definition}
\theoremstyle{definition}
\newtheorem*{acknowledgement}{Acknowledgement} 
\newtheorem{remark}[thm]{Remark}{\rm}{\rm}
\newcommand{\be}{\begin{equation}}
\newcommand{\ee}{\end{equation}}
\newcommand{\bea}{\begin{eqnarray}}
\newcommand{\eea}{\end{eqnarray}}
\newcommand{\beax}{\begin{eqnarray*}}
\newcommand{\eeax}{\end{eqnarray*}}
\newcommand{\Tr}{\mbox{\rm Tr}}
\newcommand{\supp}{\mbox{\rm supp}}
\newcommand{\dist}{\mbox{\rm dist}}
\newcommand{\mfr}[2]{{\textstyle\frac{#1}{#2}}}
\newcommand{\V}{V^{\rm TF}}
\newcommand{\rmc}{\rm c}
\newcommand{\rmm}{\rm m}
\newcommand{\rme}{\rm e}
\newcommand{\cS}{\mathcal S}
\begin{document}

\title{THE RELATIVISTIC SCOTT CORRECTION FOR 
  ATOMS AND MOLECULES}  

\author[J.~P.~Solovej]{Jan Philip Solovej}
    \address{Department of Mathematics\\ 
    University of Copenhagen\\
    Universitetsparken 5\\
    DK-2100 Copenhagen, Denmark}
    \email{solovej@math.ku.dk}
  
\author[T.~\O stergaard~S\o rensen]{Thomas \O stergaard S\o rensen}  
    \address{Department of Mathematical Sciences\\ 
    Aalborg University\\
    Fredrik Bajers Vej 7G\\
    DK-9220 Aalborg East, Denmark}
    \email{sorensen@math.aau.dk}
     
\author[W.~L.~Spitzer]{Wolfgang L. Spitzer}
    \address{Institut f\"ur Theoretische Physik\\
    Universit\"at Erlangen-N\"urnberg\\
    Staudtstrasse 7\\
    D-91058 Erlangen, Germany}
    \email{Wolfgang.Spitzer@physik.uni-erlangen.de}

\thanks{\copyright\ 2008 by the authors. This article may be
      reproduced, in its entirety, for non-commercial purposes. }

\date{\today}

\begin{abstract} 
We prove the first correction to the leading Thomas-Fermi energy for
the ground state energy of atoms and molecules in a model where the
kinetic energy of the electrons is treated relativistically. The
leading Thomas-Fermi energy, established in \cite{sor}, as well as the
correction given here are of semi-classical nature. Our result on
atoms and molecules is proved from a general semi-classical estimate
for relativistic operators with potentials with Coulomb-like
singularities. This semi-classical estimate is obtained using the
coherent state calculus introduced in \cite{SS}. The paper contains a
unified treatment of the relativistic as well as the non-relativistic
case.
\end{abstract}

\maketitle

\tableofcontents

\section{Introduction and main results}

Our goal in this paper is to study how relativistic effects influence
the energies of atoms and molecules. More specifically, we are aiming
at proving a relativistic analog of the celebrated Scott correction
\cite{Scott,Lieb1,Hughes,Ivrii-Sigal,Siedentop-Weikard1,
  Siedentop-Weikard2, Siedentop-Weikard3,SS}. At present there is no
mathematically well-defined fully relativistically invariant theory of
atoms and molecules. We will here consider a simplified model, which
shows the relevant qualitative features of relativistic effects. In
this model, these effects are introduced by treating the kinetic
energy of electrons of mass $\rmm$ by the operator $\sqrt{-\hbar^2
  \rmc^2\Delta+\rmm^2\rmc^4}-\rmm\rmc^2$ instead of the standard
non-relativistic Laplace operator $-\hbar^2\Delta/2\rmm$. Here $\rmc$
refers to the speed of light and $\hbar$ is Planck's constant. It is
the simplest of a class of models that attempts to include relativistic
effects; see \cite{Herbst,Lieb-Yau}. Although this model does not give
accurate numerical agreement with observations it is from a
qualitative point of view quite realistic.

One of the qualitative features that our model shares with all other
relativistic models is the instability of large atoms or
molecules. The natural parameter to measure relativistic corrections
in atoms and molecules is the dimensionless fine-structure constant
$\a={\rme}^2/\hbar {\rmc}$, where $\rme$ is the electron charge. As we
will explain below, if  $Z\alpha$ is too large ($Z$ is the atomic
number) then atoms are unstable. In our model the critical value of
$Z\alpha$ is too small compared with experimental results and with
what is assumed to be the correct critical value, namely $Z\alpha=1$.

Our main interest here is the behavior of the total ground state
energy of large atoms and molecules. Because of the relativistic
instability problem mentioned above one cannot simply consider the
limit of large atomic number $Z$.  One is forced to look at the
simultaneous limit of small fine-structure constant $\alpha$ in such a
way that the product $Z\alpha$ remains bounded. Of course, $\alpha$
has a fixed value which experimentally is approximately $1/137$. Thus
considering the limit $\alpha$ tending to zero is strictly speaking
not physically correct.  Likewise, considering the limit of $Z$
tending to infinity is in contradiction with the fact that the
experimentally observed values of $Z$ are bounded (by 92 for the
stable atoms). Studying the limit $Z\to\infty$ and $\alpha\to0$ with
$Z\alpha$ bounded allows us to make a precise mathematical statement
about the asymptotics. There is numerical evidence that the
asymptotics is indeed a good approximation to the total ground state
energy for the physical values of $Z$ and $\alpha$.

The first to, at least heuristically, suggest to consider $Z\alpha$ as
a separate parameter in the limit $Z\to\infty$ was Schwinger
\cite{Schwinger0}. In this original paper, Schwinger finds
discrepancies of his estimates of relativistic corrections with
numerical evidence. Later \cite{Dmitrieva-Plindov}, more corrections
are taken into account and excellent agreement is found. This accuracy
however goes beyond a rigorous mathematical treatment. We content
ourselves with giving a rigorous treatment of the simplified model
with the correct qualitative behavior.

The first rigorous treatment of the limit $Z\to\infty$ with $Z\alpha$
bounded was given by one of us is the paper \cite{sor}, where the
leading asymptotics of the ground state energy was found. It turns out
it does not depend on $Z\alpha$. The goal of the present paper is the
first correction to the leading asymptotics, i.e., the Scott
correction and, in particular, to show that it depends on
$Z\alpha$. The work in \cite{sor} was generalized to another
relativistic model in \cite{Cassanas-Siedentop}. 

We now introduce the molecular many-body Hamiltonian we consider in
this paper. Let ${\rme}$ and ${\rmm}$ denote the electric charge and
mass of an electron.  Let ${\bf Z}{\rme}=
(Z_1{\rme},\ldots,Z_M{\rme})$, where $Z_1,\ldots,Z_M>0$, be the
charges of the $M$ nuclei. We consider the Born-Oppenheimer
formulation where these nuclei are at fixed positions ${\bf
  R}=(R_1,\ldots,R_M)\in\R^{3M}$. We have $N$ electrons. As explained
above the relativistic kinetic energy of the $j$-th electron is equal
to $\sqrt{-\hbar^2{\rmc}^2\D_j + {\rm m^2c^4}}-{\rm mc^2}$, where
$\Delta_j$ is the Laplacian with respect to the \(j\)-th electron
coordinate \(y_j\in\R^3\), \(j=1,\ldots,N\). The potential energy of
the electrons is composed of the attraction to the nuclei,
\be\label{V} 
  {\rme}V({\bf Z}{\rme},{\bf R},y) = \sum_{k=1}^M
  \frac{Z_k{\rm e}^2}{|y-R_k|}\,, 
\ee 
and the electron-electron repulsion,
$$ \sum_{1\le i<j\le N} \frac{{\rme}^2}{|y_i-y_j|}\,.
$$ 
The total energy of the electrons is described by the Hamiltonian,
\beax 
  H_{\rm rel} =\sum_{j=1}^N \Big[\sqrt{-\hbar^2{\rmc}^2\D_j +
  {\rmm}^2{\rmc}^4}-{\rmm}{\rmc}^2 - {\rme}V({\bf Z}{\rme},{\bf R},y_j)
  \Big]+\sum_{1\le i<j\le N}\frac{{\rme}^2}{|y_i-y_j|} \,.
\eeax
Let us now introduce the fundamental constants. Namely, let
$a=\hbar^2/{\rm m e^2}$ be the Bohr radius, and $R_\infty =
\frac{1}{2}{\rm m e^4}/\hbar^2$ Rydberg's constant. Then by a change
of coordinates $y_j\to x_j = y_j/a$, we see that
\beax 
  \lefteqn{(2R_\infty)^{-1} H_{\rm rel} =:  H({\bf Z},{\bf R};\alpha) 
  = H(Z_1,\ldots,Z_M,R_1,\ldots,R_M;\a)}\\
  &=& \sum_{j=1}^{Z}\Big[\sqrt{-\a^{-2}\D_j + \a^{-4}} - \a^{-2} - 
  V({\bf Z},{\bf R},x_j) \Big] + \sum_{1\le i<j\le N}\frac{1}{|x_i-x_j|}\,, 
\eeax
where again $\a$ is the fine-structure constant. For \(\alpha=0\) the
kinetic energy of the $j$-th electron is \(-\frac12\D\).

Here we have set $N=Z=\sum_{k=1}^MZ_k$ so that the molecule is
neutral. In particular, this means that $Z$ must be an integer. From
now on we study the operator $H({\bf Z},{\bf R};\a)$. This operator
acts as an unbounded operator on the anti-symmetric tensor product,
$\bigwedge^{Z} L^2(\R^3\times\{-1,1\})$, where $\pm1$ refers to the
spin variables. We are interested in the ground state energy 
$$ 
  E({\bf Z},{\bf R};\a) = \inf \sigma\big( H({\bf Z},{\bf R};\a)\big)\,,
$$
and, in particular, in an asymptotic expansion of this when
$Z\to\infty$.

The ground state energy $E({\bf Z},{\bf R};\a)$ is finite if
$\max_k\{Z_k\alpha\}\leq 2/\pi$, but $E({\bf Z},{\bf R};\a)=-\infty$
if $\max_k\{Z_k\alpha\}> 2/\pi$ (see
\cite{Herbst,Lieb-Yau})\footnote{Here, and in the sequel, operators 
  are defined as the Friedrichs extension for the corresponding form
  sum, originally defined on \(C_0^\infty\)-functions (here, for
  instance, $\bigwedge^{Z} C_0^\infty(\R^3\times\{-1,1\})$).}. This is
the relativistic instability discussed above. Therefore we must
require the atomic numbers to be smaller than or equal to
$2/(\pi\alpha)$ which is approximately $87$. This is of course in
contradiction with the experimental fact that larger stable atoms
exist and is one reason why our model can only be qualitatively
correct. (We want to emphasise that the instability we are discussing
here is not the nuclear instability causing atoms larger than atomic
number 92 to be unstable. The relativistic instability we discuss here
is only believed to manifest itself for atomic numbers greater than
137.) 

The true energy of the molecule should include the nuclear-nuclear
repulsion. Since the nuclei are considered fixed here the
nuclear-nuclear repulsion is simply a constant which we have omitted.

As discussed above the leading asymptotics of $ E({\bf Z},{\bf R};\a)$
will be independent of the relativistic parameter $\alpha$. It will be
given by what is called Thomas-Fermi theory. The seminal contribution
by Lieb and Simon~\cite{Lieb-Simon} was to put Thomas-Fermi theory on
a solid mathematical foundation and to prove that in the
non-relativistic case the Thomas-Fermi energy of a molecule is indeed
the correct leading asymptotic energy for the true ground state energy
as $Z\to\infty$. This is the result that was generalized to our
relativistic model in \cite{sor}.

The main result of this paper is the following asymptotic result on
the ground state energy.  
\begin{thm}[Relativistic Scott correction] \label{main theorem} 
  Let ${\bf z}=(z_1,\ldots,z_M)$ with $z_1,\ldots,z_M>0$,
  $\sum_{k=1}^M z_k=1$, and ${\bf r}=(r_1,\ldots,r_M)\in\R^{3M}$ with
  $\min_{k\ne\ell}|r_k-r_\ell|>r_0$ for some $r_0>0$ be given. Define
  ${\bf Z}=(Z_1,\ldots,Z_M)=Z{\bf z}$ and ${\bf R}=Z^{-1/3}{\bf
    r}$. Then there exist a constant $E^{\rm TF}({\bf z},{\bf r})$ and
  a universal (independent of \({\bf z}\), \({\bf r}\) and \(M\))
  continuous, non-increasing function $\cS:[0,2/\pi]\to\R$ with
  $\cS(0)=1/4$ such that as $Z=\sum_{k=1}^M Z_k\to\infty$ and
  $\alpha\to0$ with $\max_k\{Z_k\alpha\}\leq 2/\pi$ we have
\begin{equation} \label{main expansion}
   E({\bf Z},{\bf R};\a) = Z^{7/3}E^{\rm TF} ({\bf z},{\bf r}) +
   2\sum_{k=1}^{M}Z_k^2\cS(Z_k\alpha)
   + {\mathcal O}(Z^{2-1/30})\,.
\end{equation}
Here  the error term means that $|{\mathcal O}(Z^{2-1/30})|\leq
CZ^{2-1/30}$, where the constant $C$ only depends on $r_0$ and
\(M\). As before, \(\sqrt{-\a^{-2}\D+\a^{-4}}-\a^{-2}=-\frac12\D\)
when \(\a=0\).
\end{thm}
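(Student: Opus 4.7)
The overall strategy is to reduce the many-body ground state problem to a semi-classical one-body problem and then split the latter into a Thomas--Fermi bulk on scale $Z^{-1/3}$ and an inner, Coulomb-singular problem on scale $Z_k^{-1}$ near each nucleus. As announced in the abstract, the molecular result will be derived from a general semi-classical estimate on $\Tr[\sqrt{-\a^{-2}\D+\a^{-4}}-\a^{-2}-\Phi]_-$ for one-body potentials $\Phi$ with Coulomb-like singularities, established by adapting the coherent state calculus of \cite{SS} to the relativistic kinetic symbol.

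For the lower bound I would use a correlation inequality of Mancas--Siedentop--Stockmeyer or Bach type to replace the electron--electron interaction by the mean field built from the Thomas--Fermi minimizer $\r^{\rm TF}$, obtaining
\be
H({\bf Z},{\bf R};\a)\;\ge\;\sum_{j=1}^Z\Bigl[\sqrt{-\a^{-2}\D_j+\a^{-4}}-\a^{-2}-V_{\rm mf}(x_j)\Bigr]-D(\r^{\rm TF})-\text{exchange error},
\ee
and then bound the ground state energy below by minus the sum of negative eigenvalues of the one-particle operator. For the upper bound I would test against a Slater determinant built from coherent states whose one-particle density is close to $\r^{\rm TF}$, handling the interaction by the Lieb--Oxford inequality. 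Both directions reduce the proof of \eqref{main expansion} to the semi-classical asymptotics of $\Tr[h]_-$ for the one-body relativistic operator $h$ with effective Thomas--Fermi potential.

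The core step is a microlocal decomposition of $\Tr[h]_-$. A smooth partition of unity separates an inner ball of radius $\sim Z^{-1/3}$ around each $R_k$ from the bulk. In the bulk, coherent states on an intermediate scale reproduce the Weyl phase-space integral, which is exactly $Z^{7/3}E^{\rm TF}({\bf z},{\bf r})$ to leading order. In the inner region the rescaling $x\mapsto y/Z_k$ converts the operator, modulo screening terms and lower-order corrections, into $Z_k^2$ times the hydrogenic Chandrasekhar operator
\be
C_\kappa:=\sqrt{-\kappa^{-2}\D+\kappa^{-4}}-\kappa^{-2}-|y|^{-1},\qquad \kappa:=Z_k\a\in[0,2/\pi].
\ee
The universal function $\cS(\kappa)$ is then defined as the (half-)limit of the renormalized difference between the sum of negative eigenvalues of $C_\kappa$, suitably cut off at large radii, and its Weyl integral. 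Continuity in $\kappa$ follows from norm-resolvent perturbation, the value $\cS(0)=1/4$ reduces to the non-relativistic Scott constant of \cite{SS}, and monotonicity comes from Hellmann--Feynman applied to $\p_\kappa C_\kappa$.

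The main obstacle will be optimizing the error to the claimed $O(Z^{2-1/30})$. The non-locality of $\sqrt{-\a^{-2}\D+\a^{-4}}$ produces commutator remainders in the coherent-state calculus that, unlike in the non-relativistic case, remain sensitive to the Coulomb singularity even in the bulk region, and these must be controlled by Hardy-type estimates that are uniform as $Z_k\a\uparrow 2/\pi$. One must simultaneously balance the coherent-state width, the inner/outer cutoff radius, the regularization scale of the TF potential near each $R_k$, and the Lieb--Oxford error; the fractional exponent $1/30$ reflects this multi-parameter optimization, and matching the inner hydrogenic correction to the outer TF expansion without picking up spurious $Z^2\log Z$ terms is the most delicate point.
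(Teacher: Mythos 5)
Your overall framework is correct and matches the paper's strategy in broad strokes: reduce to a one-body semi-classical trace via a correlation inequality (lower bound) and Lieb's variational principle (upper bound), use multi-scale coherent-state/IMS localisation to split a Thomas--Fermi bulk from inner Coulomb regions, and define $\cS$ via a renormalised limit for the localised relativistic hydrogen operator in the spirit of Sobolev. Two small misfires: (i) the Lieb--Oxford inequality is not needed for the upper bound (the exchange term has the favourable sign there; the paper instead combines Lieb's variational principle with the Hardy--Littlewood--Sobolev inequality and the triangle inequality for $\sqrt{D(\cdot)}$), and the paper explicitly avoids Lieb--Oxford on the lower bound side because combined with the Daubechies inequality it yields $\int\rho^{4/3}$ at leading rather than subleading order; the superharmonicity-based correlation inequality (Theorem~\ref{thm:correlation}) is used precisely to circumvent this. (ii) No $Z^2\log Z$ terms arise here; the TF potential differs from the pure Coulomb tail by a bounded function near each singularity, so the inner/outer matching is log-free.

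The genuine gap is your claim that \emph{``continuity in $\kappa$ follows from norm-resolvent perturbation.''} This is insufficient, for two separate reasons, both of which the paper handles by dedicated arguments (Step~4 of the proof of Theorem~\ref{thm:TF-semicl}). First, at the non-relativistic endpoint $\kappa=0$ the symbol $T_{\alpha^2}(p)$ converges only pointwise, not uniformly, to $\frac12 p^2$ (it is always linear at large $|p|$), so a momentum cut-off and uniform a~priori trace bounds (via the combined Daubechies--Lieb--Yau inequality) are required to pass to the limit in the renormalised trace; the paper obtains a quantitative rate $O(\kappa^{-3/2}\alpha^{2/5})$. Second, and more seriously, at the critical endpoint $\kappa=2/\pi$ the operator $H_{\rm C}(2/\pi)$ is at the threshold of instability, the a~priori bounds from Theorem~\ref{thm:L-Y-D} degenerate, and the only self-adjoint realisation available is the Friedrichs extension; norm-resolvent convergence of $H_{\rm C}(\alpha)\to H_{\rm C}(2/\pi)$ is neither obvious nor invoked. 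The paper instead uses a finite-rank approximation from the $C_0^\infty$ core of the Friedrichs form together with the monotonicity $T_{\alpha_1^2}\ge T_{\alpha_2^2}\ge(\alpha_1/\alpha_2)^2 T_{\alpha_1^2}$ for $\alpha_1\le\alpha_2$ to establish one-sided continuity from the left at $2/\pi$. This endpoint analysis is one of the genuinely new contributions and cannot be replaced by an abstract resolvent-perturbation argument.
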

\begin{remark}
A less detailed version of our result was announced in
\cite{solovej-oberwolfach}.
\end{remark}
\begin{remark}
Several features of our result and its proof should be stressed: 
  \(\, \)
\begin{enumerate}
\item[\rm (i)] The constant $E^{\rm TF} ({\bf z},{\bf r})$ is
  determined in Thomas-Fermi theory. 
\item[\rm (ii)] The fact that ${\bf R}=Z^{-1/3}{\bf r}$ is the
  relevant scaling of the nuclear coordinates may, as we shall see, be
  understood from Thomas-Fermi theory.  
\item[\rm (iii)] A characterization of the function $\cS$ is given
  explicitly in Corollary~\ref{cor:chrarc-S}  below (see also
  Lemma~\ref{lem:Coulomb}). Its continuity is proved in
  Theorem~\ref{thm:TF-semicl}. 
\item[\rm (iv)] The asymptotic result is uniform in the parameters
  $Z_k\alpha\in[0,2/\pi]$, $k=1,\ldots,M$.
\item[\rm (v)] The result contains, as a special case, the
  non-relativistic situation $Z_k\alpha=0$ and, in particular, the
  non-relativistic limit is controlled due to the continuity of the
  function $\cS$ and the uniformity in the parameters $Z_k\alpha$.  In
  order to get the non-relativistic limit it is important that all
  estimates have the correct non-relativistic behavior. This is an
  important issue in this work.  Note that in the non-relativistic
  case the value $\cS(0)=1/4$ is explicitly known, whereas this is not
  the case for any other value. This is because the eigenvalues of
  Hydrogen are explicitly known in the non-relativistic case, but not
  in this relativistic case. The technique to prove a Scott-correction
  without knowing explicitly the eigenvalues for the one-body
  Hydrogen(like) operator was invented by Sobolev~\cite{Sobolev}.
\item[\rm (vi)] The proof of Theorem~\ref{main theorem} does not rely
  on knowing the non-relativistic case, but treats both the
  relativistic and non-relativistic case simultaneously. 
\item[\rm (vii)] The situation near the critical value
  $Z_k\alpha=2/\pi$ is understood since the function $\cS$ is
  continuous up to the critical value $2/\pi$. This is, however, a
  less important point since we do not know whether the model we study
  gives a good description near the critical value.
\end{enumerate}
\end{remark}
The Scott correction was predicted by Scott~\cite{Scott} to be the
first correction to the Thomas-Fermi energy. In the non-relativistic
setting, this was mathematically established by Hughes~\cite{Hughes},
Siedentop and Weikard~\cite{Siedentop-Weikard1, Siedentop-Weikard2,
  Siedentop-Weikard3} for atoms (and by Bach~\cite{Bach} for ions) and
later by Ivri{\u\i} and Sigal~\cite{Ivrii-Sigal} for molecules. Later
a different proof was given by two of us for molecules
\cite{SS}. Based on methods in \cite{Ivrii-Sigal}, Balodis
Matesanz~\cite{Matesanz} gave a proof for the Scott correction of
matter. The Scott correction for operators with magnetic fields was
studied by Sobolev~\cite{Sobolev, Sobolev2} (in the non-interacting
case). 

In \cite{Fefferman-Seco}, Fefferman and Seco derived rigorously the
second correction to Thomas-Fermi theory for atoms, which is of the
order $Z^{5/3}$. This was predicted by Dirac~\cite{Dirac} and
Schwinger~\cite{Schwinger}. It is apparently still an open problem to
prove this for molecules and to find the relativistic correction to
this order.

The main approach to proving the energy asymptotics for large atoms
and molecules goes back to Lieb and Simon~\cite{Lieb-Simon} and is to
use semi-classical estimates.  The $Z$-scaling makes it possible to
relate the many-body problem to a one-body spectral problem, which may
be treated semi-classically, where the semi-classical parameter is
$h=Z^{-1/3}$. Here, several techniques have been developed.  Lieb and
Simon used Dirichlet--Neumann bracketing. This is however not refined
enough to get beyond the leading term.  The Weyl calculus
\cite{Robert-book} is the most advanced and precise method as far as
optimal semi-classical error estimates are concerned, but it also will
not directly give the Scott correction. Ivri{\u\i} and
Sigal~\cite{Ivrii-Sigal} used Fourier intergral operator techniques to
establish the non-relativistic Scott correction for molecules. Hughes
and Siedentop--Weikard used methods that were designed particularly
for spherically symmetric models, i.e., the atomic case.

A simple method, which is particularly well adapted to many-body
problems is that of coherent states. It was pioneered by
Lieb~\cite{Lieb1} and Thirring~\cite{Thirring} to give very short
proofs that Thomas-Fermi theory is correct to leading order. It is one
of our major contributions here to use an improved calculus of
coherent states as developed by two of us in~\cite{SS} to the
relativistic setting.

One feature of our work is that we give a general semi-classical
estimate for relativistic one-body operators for potentials with
singularities such as the Thomas-Fermi potential (see
Theorem~\ref{thm:TF-semicl} below). This is derived by first proving a
localised semi-classical estimate for potentials with some smoothness
(see Theorem~\ref{local semi-classics}). The proof here is not much
different from the one presented in \cite{SS} for non-relativistic
Schr\"odinger operators. We do not claim that our error estimates are
sharp given the regularity we assume on the potential, but only that
they are sufficient to prove the Scott correction. In this connection
we point out that in order to prove the Scott correction it is enough
that the error relative to the leading term is smaller by more than
one power of the semi-classical paramter $h$. In our case the relative
error in Theorem~\ref{local semi-classics} is $h^{6/5}$.

The relativistic kinetic energy is more cumbersome to work with than
the Laplace operator and large parts of the rest of our proof from
\cite{SS} have to be done differently. A main issue is to be able to
localise into separate regions.  Since the relativistic kinetic energy
is a non-local operator, localisation estimates are more involved than
in the non-relativistic setting. The philosophy is that localisation
errors should behave as if we were working with non-relativistic local
error terms up to some exponentially small tails (see
Theorem~\ref{IMS-error-est}). 

The proof of the main theorem presented in Section \ref{main proof} is
based on the general semi-classical estimate
Theorem~\ref{thm:TF-semicl} and the use of a correlation estimate (see
Theorem~\ref{thm:correlation}) to reduce to the one-body problem.

After we had announced our results in \cite{solovej-oberwolfach},
Frank, Siedentop, and Warzel~\cite{FSW} found a proof for the atomic
case based on the method of Siedentop and Weikard
\cite{Siedentop-Weikard1, Siedentop-Weikard2, Siedentop-Weikard3},
also \cite{FSW2} for the model studied in
\cite{Cassanas-Siedentop}. This approach seems to be restricted to the
spherical case. This work does also not, contrary to the present work,
make any special treatment of the non-relativistic limit or the
continuity of the function $\cS$. 

\subsection{Main semi-classical result}\label{sect:main s-c}

We consider the semi-classical approximation for the {\it
  relativistic} operator 
$$T_{\beta}(-\ic h\nabla)-V(\x)\,,$$
where 
\begin{align}\label{def:kinetic energy}
  T_{\beta}(p)=
  \begin{cases}
    \ \sqrt{\beta^{-1}p^2+\beta^{-2}}-\beta^{-1} \ ,&
    \beta\in(0,\infty)\\ 
    \qquad\  \frac12p^2 \ ,\qquad\quad\, & \beta=0
  \end{cases}.
\end{align}
We will consider potentials $V:\R^3\to\R$ with Coulomb singularities
of the form $z_k|x-r_k|^{-1}$, $k=1,\ldots,M$, at points
$r_1,\ldots,r_M\in\R^3$ and with charges
$0<z_1,\ldots,z_M\leq2/\pi$. Define 
\begin{equation}\label{ddefinitionBIS}
  d_{\bf r}(x)=\min\big\{|x-r_k|\ \big|\ k=1,\ldots,M\big\}\,
  \ , \quad {\bf r}=(r_1,\ldots,r_M)\in\R^{3M}\,.
\end{equation}
We assume that for some $\mu\geq0$ the potential \(V\) satisfies
\begin{equation}\label{eq:VcondD}
  \big|\partial^\eta \big(V(x)+\mu\big)\big|\leq
  \left\{\begin{array}{ll}C_{\eta,\mu} 
  d_{\bf r}(x)^{-1-|\eta|} &\hbox{ if }\mu\ne0\\ 
  C_\eta \min\{d_{\bf r}(x)^{-1},d_{\bf r}(x)^{-3}\}\,d_{\bf
    r}(x)^{-|\eta|} &\hbox{ if }\mu=0 
\end{array}\right.
\end{equation}
for all $x\in\R^3$ with $d_{\bf r}(x)\ne0$ and all multi-indices
$\eta$ with $|\eta |\leq 3$, and 
\begin{equation}\label{eq:VcondS}
  \big|V(x)-z_k|x-r_k|^{-1}\big|\leq Cr_{\rm min}^{-1}+C
\end{equation}
for $|x-r_k|<r_{\rm min}/2$ where $r_{\rm
  min}=\min_{k\ne\ell}|r_k-r_\ell|$. 
Note, in particular, that the Thomas-Fermi potential $\V({\bf z}, {\bf
  r}, \cdot)$ discussed in \eqref{eq:tfpotgeneral} below satisfies
these requirements, by Theorem~\ref{thm:tfestimate}. So does the
potential \(V(x)=\frac{2}{\pi|x|}-1\) (with \(M=1\), \(r_0=0\), and
\(d_{\bf r}(x)=|x|\)). 

The main new result in this section is the relativistic Scott
correction to the semi-classical expansion for potentials of this
form. It will be proved in Section~\ref{sect:semi-tf} below. The power
$-3$ in \eqref{eq:VcondD} is not optimal.  

\begin{thm}[{\bf Scott-corrected relativistic semi-classics}]
  \label{thm:TF-semicl}  
There exists a continuous, non-increasing function
\(\mathcal{S}:[0,2/\pi]\to\R\) with \(\mathcal{S}(0)=1/4\), such that
for all $h>0$, $0\leq\beta\leq h^2$, \(T_\b\) as in \eqref{def:kinetic
  energy},  and all potentials $V:\R^3\to\R $ satisfying
\eqref{eq:VcondD} and \eqref{eq:VcondS} with $r_{\min}>r_0>0$ and
$\max\{z_1,\ldots,z_M\}\leq 2/\pi$, we have
\bea\label{TF-Scott}\nonumber
  \Big|\Tr\big[T_{\beta}(-\ic h\nabla)-V(\x)\big]_- - (2\pi h)^{-3}
  \int \big[\mfr{1}{2}p^2 - V(v)\big]_-\, dvdp
  - h^{-2}\sum_{k=1}^Mz_k^2
  {\mathcal S}(\beta^{1/2}h^{-1}z_{k})
  \Big|\nonumber \\
  && \hspace{-6cm}\le 
  C h^{-2+1/10}\,.
\eea   
Here, $[x]_{-}=\min\{x,0\}$. The constant $C>0$ depends only on  $M$,
$r_0$, $\mu$ and the other constants in \eqref{eq:VcondD} and
\eqref{eq:VcondS}.  

Moreover, we can find a density matrix $\gamma$, whose density
\(\rho_{\gamma}\) satisfies (with \(\|\cdot\|_{6/5}\) the
\(L^{6/5}\)-norm) 
\begin{equation}\label{eq:rhogammaint}
  \left|\int\rho_\gamma(x)\,dx-2^{1/2}(3\pi^2)^{-1} h^{-3}\int 
  |V(x)_-|^{3/2}\,dx\right|\leq C h^{-2+1/5}
\end{equation}
and 
\begin{equation}\label{eq:rhogamma6/5}
  \big\|\rho_\gamma-2^{1/2}(3\pi^2)^{-1}
  h^{-3}|V_-|^{3/2}\big\|_{6/5}\leq C h^{-2-1/10}\,,
\end{equation}
such that 
\begin{eqnarray}\label{TF-Scott-trial}\nonumber
   \Tr\big[(T_{\beta}(-\ic h\nabla)-V(\x))\gamma\big]\leq 
   (2\pi h)^{-3} \int \big[\mfr{1}{2}p^2 - V(v)\big]_-\, dvdp 
   + h^{-2}\sum_{k=1}^Mz_k^2 {\mathcal S}(\beta^{1/2}h^{-1}z_{k})
   \nonumber \\
   && \hspace{-6cm}{}+ C h^{-2+1/10}\,.
\end{eqnarray}
\end{thm}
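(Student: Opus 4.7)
\medskip

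\noindent\textbf{Proof plan.} The strategy is to reduce Theorem~\ref{thm:TF-semicl} to two simpler pieces: a local semi-classical estimate for a \emph{smooth} potential, which is already supplied by Theorem~\ref{local semi-classics}, and an explicit treatment of the Coulomb singularities that supplies the Scott term. I would fix a localisation scale $h^{s}$ (with $s\in(0,1)$ to be tuned at the end, presumably close to $2/5$ so that the inner-region volume error and the smooth-region semi-classical error both fit inside $h^{-2+1/10}$) and build a smooth quadratic partition of unity $\sum_{k=0}^{M}\phi_k^2=1$, with $\phi_k$ supported in $B(r_k,h^{s})$ for $k\geq1$ and $\phi_0$ supported where $d_{\bf r}(x)\geq h^{s}/2$. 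Applying the relativistic IMS-type formula of Theorem~\ref{IMS-error-est},
\begin{equation*}
  T_\beta(-\ic h\nabla)\,=\,\sum_{k=0}^{M}\phi_k\,T_\beta(-\ic h\nabla)\,\phi_k
  \;-\;\mathcal L_h,
\end{equation*}
where $\mathcal L_h$ is a bounded localisation error behaving like the non-relativistic $h^2\sum|\nabla\phi_k|^2$ up to exponentially small non-local tails; by the min--max principle this yields
\begin{equation*}
  \Tr\bigl[T_\beta(-\ic h\nabla)-V(\x)\bigr]_{-}
  \;=\;\sum_{k=0}^{M}\Tr\bigl[\phi_k\bigl(T_\beta(-\ic h\nabla)-V\bigr)\phi_k
  -\phi_k\mathcal L_h\phi_k\bigr]_{-}
  \;+\;o(h^{-2+1/10}),
\end{equation*}
both as a lower and upper bound (the upper bound requires the Lieb--Thirring style trial state coming from a glued density matrix, see below).

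\medskip

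\noindent\textbf{Smooth region.} On $\supp\phi_0$ the potential $V$ satisfies \eqref{eq:VcondD} with smoothness controlled by $d_{\bf r}(x)^{-1-|\eta|}\lesssim h^{-(1+|\eta|)s}$, so a rescaled version of Theorem~\ref{local semi-classics} applies and produces
\begin{equation*}
   \Tr\bigl[\phi_0(T_\beta(-\ic h\nabla)-V)\phi_0\bigr]_{-}
   \,=\,(2\pi h)^{-3}\!\!\int\!\phi_0(x)^{2}\bigl[\tfrac12 p^2-V(x)\bigr]_{-}\,dx\,dp
   \,+\,\mathcal O(h^{-2+1/10}),
\end{equation*}
since $\beta\leq h^{2}$ keeps the relativistic correction subdominant in this regime. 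The constraint $\beta\leq h^{2}$ is precisely what lets me trade $T_\beta(p)$ for $\tfrac12 p^{2}$ in the Weyl symbol with error of size $\beta p^{4}$, which is integrable against the Thomas-Fermi measure.

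\medskip

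\noindent\textbf{Singular regions and definition of $\mathcal S$.} Inside $B(r_k,h^{s})$ the hypothesis \eqref{eq:VcondS} gives $V(x)=z_k|x-r_k|^{-1}+O(1)$. After translating to $r_k=0$ I would compare with the Coulomb operator $H_k=T_\beta(-\ic h\nabla)-z_k|x|^{-1}$, absorbing the bounded remainder and the localisation error via a straightforward trace-class perturbation bound (Hardy/relativistic Hardy for the form, and Lieb--Thirring/Daubechies bounds for the difference of negative traces). The central object is then
\begin{equation*}
   \mathcal S(\tau)\;:=\;\lim_{h\to0}\,h^{2}\Bigl\{\Tr\bigl[T_{\tau^{2}h^{2}}(-\ic h\nabla)-|x|^{-1}\bigr]_{-}
   \;-\;(2\pi h)^{-3}\!\!\int\!\bigl[\tfrac12 p^{2}-|x|^{-1}\bigr]_{-}\,dv\,dp\Bigr\},
\end{equation*}
or an equivalent form given by Lemma~\ref{lem:Coulomb} and Corollary~\ref{cor:chrarc-S}, since the Coulomb problem is scale-invariant under $x\mapsto h^{2}x$ and the combination $\beta^{1/2}h^{-1}z_k=\tau$ is the only remaining parameter. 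Rescaling and inserting a cutoff at $|x|\sim h^{s-2}$ (inherited from the partition) then replaces the full Coulomb trace by the cut-off one, with error $\mathcal O(h^{-2+1/10})$ after choosing $s$. The non-relativistic value $\mathcal S(0)=1/4$ is the classical Scott constant; continuity and monotonicity of $\mathcal S$ on $[0,2/\pi]$ would be proved separately by a monotone comparison of the Coulomb trace in $\tau$, exploiting that $T_\beta(p)$ is monotone decreasing in $\beta$ and using a limiting argument near the critical coupling $\tau=2/\pi$ (where the operator remains bounded below).

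\medskip

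\noindent\textbf{Upper bound and density matrix.} For \eqref{TF-Scott-trial}--\eqref{eq:rhogamma6/5} I would glue a trial density matrix $\gamma=\gamma_0+\sum_{k\geq1}\gamma_k$: $\gamma_0$ is the coherent-state trial state produced by Theorem~\ref{local semi-classics} in the smooth region, and each $\gamma_k$ is the negative spectral projection of the Coulomb operator $H_k$, conjugated by $\phi_k$ and scaled appropriately. Orthogonality between the $\gamma_k$'s follows from disjoint supports up to exponentially small tails, and the energy bound is obtained by the same IMS decomposition run in reverse. The density estimates \eqref{eq:rhogammaint}--\eqref{eq:rhogamma6/5} then reduce to (i) the density bound already in Theorem~\ref{local semi-classics} on $\supp\phi_0$, and (ii) a volume bound on $B(r_k,h^{s})$ of size $h^{3s}h^{-3}$, which is acceptable in $L^{6/5}$ with the right exponents.

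\medskip

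\noindent\textbf{Main obstacle.} The delicate point is the passage from the local truncated Coulomb trace to the function $\mathcal S$ uniformly down to the critical coupling $\tau=2/\pi$. One must show that $\mathcal S(\tau)$ is well defined as a finite limit even when the Hardy-type bound $T_\beta(p)\geq (2/\pi)|x|^{-1}$ is saturated; this requires a careful analysis of the negative spectrum of $H_k$ (accumulation at zero, absence of an embedded singularity of the trace), and this is where the relativistic non-locality of $T_\beta$ makes the argument substantially harder than in \cite{SS}. Controlling the IMS error $\mathcal L_h$ uniformly in $\beta\in[0,h^{2}]$ so that it behaves like the non-relativistic one up to exponentially small corrections is the other technical core, and this is exactly the content I would cite from Theorem~\ref{IMS-error-est}.
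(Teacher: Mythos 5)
Your plan captures the broad architecture — localise near each nucleus, compare there to a Coulomb operator to produce the Scott term, apply smooth local semi-classics elsewhere, and glue a trial density matrix for the upper bound — and in this respect it matches the paper's strategy. But there is a genuine gap in the treatment of the region $\supp\phi_0$. You propose to apply ``a rescaled version of Theorem~\ref{local semi-classics}'' on all of $\supp\phi_0$, using the worst-case smoothness $|\partial^\eta V|\lesssim h^{-s(1+|\eta|)}$. Theorem~\ref{local semi-classics} (and its rescaling, Theorem~\ref{corollary}) is a \emph{fixed-ball} estimate: one must cover $\supp\phi_0$ by balls, and if the balls all have a single scale $\ell=h^{s}$ the argument does not close. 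Indeed the per-ball error from Theorem~\ref{corollary} is $h^{-9/5}f^{19/5}\ell^{9/5}$ with $f^2\sim d_{\bf r}^{-1}$, so a ball at distance $d\sim1$ contributes error $\sim h^{-9/5}h^{9s/5}$ and there are $\sim h^{-3s}$ such balls, giving a total of order $h^{-9/5-6s/5}$; with $s\geq 1$ (forced by the annulus near $d\sim h^{s}$, which requires $h^{-9/5-s/10}\le h^{-19/10}$) this is at least $h^{-3}$, much worse than $h^{-2+1/10}$. The paper avoids this by an Ivri\u\i-type \emph{multi-scale} partition $\{\varphi_u\}$ with \emph{variable} scale $\ell(u)\sim\min\{d_{\bf r}(u),1\}$ (Theorem~\ref{partition}, formula \eqref{eq:ldefinition}), so each local application of Theorem~\ref{corollary} works at the natural length scale, and the integrated error $\sim h^{-2+1/5}\int f^{19/5}\ell^{-6/5}\,du$ comes out to $h^{-2+1/10}$ only because $\ell(u)\sim d(u)$. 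Without this variable scale the estimate overshoots by a power of $h$.

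A few further points. (i) The same multi-scale machinery is needed \emph{again} in the inner region, inside Lemma~\ref{lem:Coulomb}: proving that the truncated localised Coulomb trace minus its semi-classical approximant is Cauchy \emph{with quantitative rate} $r^{-1/10}$ requires localising the intermediate annulus $r<|x|<R$ with the same $\varphi_u$. Your limit definition of $\mathcal{S}(\tau)$ is formally ill-posed (the untruncated Coulomb trace diverges); the well-posed object is the cutoff trace of Lemma~\ref{lem:Coulomb} or the regularised form of Lemma~\ref{cor:alt-S}, and the $r^{-1/10}$ rate there is exactly what produces the $h^{-2+1/10}$ error. (ii) The paper's inner cutoff radius is $r=\delta^{-1}h^{2}=h$ (with $\delta=h$), not $h^{2/5}$; the exponent $1/10$ arises precisely from this balance. (iii) You omit the far region $d_{\bf r}(x)>R$ entirely, which in the $\mu=0$ case must be controlled by the Daubechies inequality. (iv) ``A straightforward trace-class perturbation bound'' near the singularity is too optimistic near the critical coupling $z_k\alpha\to2/\pi$, where the ordinary Daubechies/Lieb--Thirring bounds degenerate; the paper needs the combined Daubechies--Lieb--Yau inequality (Theorem~\ref{thm:L-Y-D}) and the new critical Hydrogen inequality (Theorem~\ref{thm:new-critical}) to keep the constants uniform up to $2/\pi$, and this uniformity is where most of the technical novelty lies.
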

\begin{remark}\label{rem:non-rel-in-s-c}
The term proportional to \(h^{-2}\) is called the {\it Scott
  correction}. If \(\b=h^2\) then it only depends on the charges
\(z_k, k=1,\ldots,M\), of the Coulomb-singularities. Notice that the
function in the semi-classical integral is the {\it non-relativistic}
energy. This is also the reason why the leading Thomas-Fermi energy is
independent of \(\beta\). 
\end{remark}
Applying this theorem to the potential \(V(x)=\frac{2}{\pi|x|}-1\)
(which satisfies \eqref{eq:VcondD} and \eqref{eq:VcondS} with \(M=1\),
\(r_0=0\), and \(d_{\bf r}(x)=|x|\)), and using a simple scaling
argument, gives the following explicit characterization of the
function \(\mathcal{S}\) in Theorem~\ref{thm:TF-semicl} (see details
in Lemma~\ref{cor:alt-S} in Section~\ref{sect:semi-tf} below). 
\begin{cor}[{\bf Characterization of the Scott-correction
    \(\mathcal{S}\)}]\label{cor:chrarc-S} 
The function \(\mathcal{S}\) satisfies, uniformly for
$\alpha\in[0,2/\pi]$,
\begin{align}
  \mathcal S(\alpha)=\lim_{\kappa\to0}&\Big(\,
  \Tr\big[H_{\rm C}
  +\kappa\big]_- - (2\pi)^{-3}\int\big[\mfr12
  p^2-|v|^{-1}+\kappa\big]_-\,dpdv\,\Big)\,,
\end{align}
where
\begin{align}\label{def:H_C-BIS}
  H_{\rm C}(\alpha)=
  \begin{cases}
     \sqrt{-\alpha^{-2}\D+\alpha^{-4}}-\alpha^{-2}
     -|{\x}|^{-1}\ , & \alpha\in(0,2/\pi] \\
     {}-\frac12\D -|{\x}|^{-1} \ ,& \alpha=0
   \end{cases}\,.
\end{align}
\end{cor}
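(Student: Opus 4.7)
The plan is to apply Theorem~\ref{thm:TF-semicl} to the explicit one-centre potential $V(x)=\tfrac{2}{\pi|x|}-1$ (which satisfies the hypotheses \eqref{eq:VcondD}--\eqref{eq:VcondS} with $M=1$, $r_{1}=0$, $z_{1}=2/\pi$, $\mu=1$ and purely numerical constants) and then to recognise both sides of the resulting asymptotic as rescaled versions of the quantities in Corollary~\ref{cor:chrarc-S}. For $h>0$ and $\alpha\in[0,2/\pi]$ I choose
\[
  \beta=\alpha^{2}\pi^{2}h^{2}/4,\qquad A=4/(\pi^{2}h^{2}),\qquad \kappa=A^{-1}=\pi^{2}h^{2}/4,
\]
so that the Theorem's constraint $\beta\le h^{2}$ becomes exactly $\alpha\le 2/\pi$, and $\beta^{1/2}h^{-1}z_{1}=\alpha$; the Scott coefficient produced by the Theorem is then precisely $h^{-2}(2/\pi)^{2}\mathcal{S}(\alpha)$, with an error $O(h^{-2+1/10})$ that is uniform in $\alpha$ since the constants entering \eqref{eq:VcondD}--\eqref{eq:VcondS} depend only on the fixed $V$.

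Next I unfold this scaling on both sides. The $L^{2}$-unitary dilation $(U_{\lambda}\psi)(y)=\lambda^{3/2}\psi(\lambda y)$ with $\lambda=2/(\pi h^{2})$, together with the homogeneity of the relativistic symbol $T_{\beta}$, yields the operator identity
\[
  A\,U_{\lambda}\bigl(H_{\rm C}(\alpha)+\kappa\bigr)U_{\lambda}^{-1}=T_{\beta}(-\ic h\nabla_{y})-V(\hat y),
\]
in which the constant $A\kappa=1$ exactly absorbs the additive $-1$ inside $V$. Unitary invariance and the positive homogeneity of $[\,\cdot\,]_-$ therefore give $\Tr[T_{\beta}(-\ic h\nabla)-V(\x)]_-=A\,\Tr[H_{\rm C}(\alpha)+\kappa]_-$. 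An entirely parallel change of variables $v=(\pi h^{2}/2)w$, $p=(2/(\pi h))q$ in the semiclassical phase-space integral produces the matching identity
\[
  (2\pi h)^{-3}\!\int\!\bigl[\tfrac12 p^{2}-V(v)\bigr]_-\,dvdp=A\,(2\pi)^{-3}\!\int\!\bigl[\tfrac12 q^{2}-|w|^{-1}+\kappa\bigr]_-\,dqdw.
\]

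Substituting both identities into the conclusion of Theorem~\ref{thm:TF-semicl} and dividing by $A$, the normalisations collapse: $A^{-1}\,h^{-2}(2/\pi)^{2}=1$ and $A^{-1}\cdot O(h^{-2+1/10})=O(h^{1/10})$, so
\[
  \Tr[H_{\rm C}(\alpha)+\kappa]_- - (2\pi)^{-3}\!\int\!\bigl[\tfrac12 q^{2}-|w|^{-1}+\kappa\bigr]_-\,dqdw=\mathcal{S}(\alpha)+O(h^{1/10}),
\]
uniformly in $\alpha\in[0,2/\pi]$. As $h$ ranges over $(0,\infty)$, $\kappa=\pi^{2}h^{2}/4$ sweeps all of $(0,\infty)$, so letting $h\to 0^{+}$ establishes the corollary's limit formula.

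The only real work is the bookkeeping in the second paragraph: the dilation must satisfy four independent constraints simultaneously, namely that the kinetic symbols match, that the Coulomb coefficient becomes $z_{1}=2/\pi$, that $\beta^{1/2}h^{-1}z_{1}$ reduces to $\alpha$, and that $A\kappa$ cancels the additive $-1$ of $V$. Their joint compatibility is a consequence of the scaling $(p,\beta)\mapsto(\mu p,\mu^{2}\beta)$ of $T_{\beta}$, and it is the coincidence that the same $\kappa$ appears on \emph{both} sides of the rescaled asymptotic that produces the clean expression of Corollary~\ref{cor:chrarc-S}.
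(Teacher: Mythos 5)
Your proof is correct and uses essentially the same argument as the paper's Lemma~\ref{cor:alt-S}: both rescale $H_{\rm C}(\alpha)+\kappa$ by the dilation $\lambda=\pi/(2\kappa)$ (equivalently $h=2\kappa^{1/2}/\pi$, $\beta=\kappa\alpha^2$) to obtain $T_\beta(-\ic h\nabla)-V(\hat x)$ with $V(x)=\frac{2}{\pi|x|}-1$ and then apply Theorem~\ref{thm:TF-semicl}. Your bookkeeping of the four simultaneous scaling constraints is accurate, and the resulting error $O(h^{1/10})=O(\kappa^{1/20})$ matches the paper's quantitative bound \eqref{eq:est-S}.
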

\begin{remark} Another characterization of the function
  \(\mathcal{S}\) is given in Lemma~\ref{lem:Coulomb} in
  Section~\ref{sect:semi-tf} below.
\end{remark}
\begin{remark}
  The result in Corollary~\ref{cor:chrarc-S} was proved in
  \cite[Theorem 7.4]{sor-thesis}, but only {\it pointwise} and only for
  \(\alpha\in(0,2/\pi)\). 
\end{remark}

\section{Preliminaries}\label{sect:prelim}

\subsection{Analytic tools}\label{sect:Analytic tools}

We recall here the main analytic tools which we use throughout this
paper. We do not prove all of them here but give the standard
references. Various constants are denoted by the same letter $C$
although its value may change from one line to the next.  

Let $p\ge1$, then a complex-valued function $f$ (and only those will
be considered here) is said to be in $L^p(\R^n)$ if the norm $\|f\|_p
= \left(\int |f(x)|^p \,dx\right)^{1/p}$ is finite. We denote by
\(\langle\ ,\ \rangle\) the inner product on \(L^2(\R^n)\); it is
linear in the second and anti-linear in the first entry. For any $1\le
p\le t\le q\le\infty$ we have the inclusion $L^p\cap L^q\subset L^t$,
since by H\"older's inequality $\| f\|_t \le
\|f\|_p^\lambda\|f\|_q^{1-\lambda}$ with $\lambda p^{-1}+(1-\lambda)
q^{-1}=t^{-1}$. We denote by \(\hat f\) the Fourier transform of
\(f\in L^2(\R^n)\), given by \(\hat f(p)=(2\pi)^{-n/2}\int {\rm
  e}^{-{\rm i}x\cdot p} f(x)\,dx\) for Schwartz functions on \(\R^n\), 
and extended by continuity to \(L^2(\R^n)\).

We denote $x_{-}=\min\{x,0\}$, and let $\chi_A$ be the characteristic
function of the set $A$; we write $\chi=\chi_{(-\infty,0]}$ for the
characteristic function of $(-\infty,0]$. We call $\c$ a density
matrix on $L^2(\R^n)$ if it is a trace class operator on $L^2(\R^n)$
satisfying the operator inequality ${\bf 0}\le\c\le {\bf 1}$. The
density of a density matrix $\gamma$ is the $L^1$-function
$\rho_\gamma$ such that
$\Tr(\gamma\theta)=\int\rho_\gamma(x)\theta(x)dx$ for all $\theta\in
C_0^\infty(\R^n)$ considered as a multiplication operator.

We also need an extension to many-particle states. Let $\psi\in
\bigotimes^N L^2(\R^3\times\{-1,1\})$ be an $N$-body wave-function.
Its one-particle density $\rho_\psi$ is defined by 
$$
  \rho_\psi(x) = \sum_{j=1}^N \sum_{s_1=\pm1}\cdots\sum_{s_N=\pm1}\int
  |\psi(x_1,s_1,\ldots,x_N,s_N)|^2\,\delta(x_j-x)\,dx_1\cdots dx_N \,.
$$

The following two inequalities we recall are crucial in many of our
estimates. They serve as replacements for the Lieb-Thirring inequality
\cite{Lieb-Thirring} used in the non-relativistic case.

\begin{thm}[{\bf Daubechies inequality}]\label{Daubechies}{\bf
    One-body case:}  Let $m>0$, $f(u)=
\sqrt{u^2+m^2}-m$, and $F(s)=
\int_0^s [f^{-1}(t)]^n\, dt$, where \(f^{-1}\) denotes the inverse
function of \(f\). Assume that $V\in L^1_{\rm loc}(\mathbb R^n)$, and
let \({}-\D\) be the Laplacian in \(\R^n\). Then
\bea\label{Daub inequBIS} 
   \Tr\big[\sqrt{-\D+m^2}-m + V(\x)\big]_{-} &\ge& {}-C \int
   F\big(|V(x)_{-}|\big)\,dx\,, 
\eea 
where $x_- = \min\{x,0\}$, and $C$ is some positive constant.

\noindent{\bf Many-body case:} Let
$\psi\in\bigwedge^NL^2(\R^3\times\{-1,+1\})$ and let $\rho_{\psi}=\rho$
be its one-particle density. Then
\bea \label{Daub manybodyBIS}
  \Big\langle\psi,\sum_{j=1}^N
  \big[\sqrt{-\Delta_j+m^2}-m\big]\psi\Big\rangle
  \ge \int G[\rho(x)]\,dx\,,
\eea
where (with \(C_0=0.326\))
\bea\label{eq:Daub functions}
  G(\rho) = (3/8)m^4C_0g[(\rho/C_0)^{1/3}m^{-1}]-m\rho\,,
\eea
with \(g(t)=t(1+t^2)^{1/2}(1+2t^2)-\log[t+(1+t^2)^{1/2}]\).
\end{thm}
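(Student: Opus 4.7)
The plan is to prove the one-body inequality \eqref{Daub inequBIS} first by a coherent-state phase-space argument, then deduce the many-body inequality \eqref{Daub manybodyBIS} from it by Legendre duality, exploiting the Pauli principle in the form $0\le\gamma_\psi\le\mathbf 1$ for the one-particle density matrix of an antisymmetric wavefunction.

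For the one-body inequality, by the variational characterisation
\begin{equation*}
  \Tr\bigl[\sqrt{-\D+m^2}-m+V(\x)\bigr]_{-} = \inf_{0\le\gamma\le\mathbf 1}\Tr[(T+V)\gamma],\qquad T:=\sqrt{-\D+m^2}-m,
\end{equation*}
it suffices to bound the right-hand side from below. Fix a real, radial $\phi\in C_0^\infty(\R^n)$ with $\|\phi\|_2=1$, and form coherent states $\Phi_{u,p}(y)=\phi(y-u)\,{\rm e}^{\ic p\cdot y}$, which give the resolution $(2\pi)^{-n}\int|\Phi_{u,p}\rangle\langle\Phi_{u,p}|\,du\,dp=\mathbf 1$. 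Setting $\tilde\gamma(u,p)=\langle\Phi_{u,p},\gamma\Phi_{u,p}\rangle\in[0,1]$, a standard coherent-state manipulation yields
\begin{equation*}
  \Tr[(T+V)\gamma]\ \ge\ (2\pi)^{-n}\!\int \bigl[f(|p|)+\tilde V(u)\bigr]\tilde\gamma(u,p)\,du\,dp,\qquad f(u)=\sqrt{u^2+m^2}-m,
\end{equation*}
where $\tilde V$ is a smoothing of $V$ at unit scale (essentially $V*|\phi|^2*|\phi|^2$) whose discrepancy with $V$ contributes a lower-order error. The crucial kinetic ingredient is
\begin{equation*}
  \langle\Phi_{u,p},T\Phi_{u,p}\rangle = \int f(|q|)\,|\hat\phi(q-p)|^2\,dq\ \ge\ f(|p|),
\end{equation*}
which is Jensen's inequality for the convex function $q\mapsto f(|q|)$ applied to the probability measure $|\hat\phi(q-p)|^2\,dq$, whose barycenter is $p$ because $\hat\phi$ is radial. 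Optimising $\tilde\gamma$ pointwise in $[0,1]$ gives a lower bound of $(2\pi)^{-n}\int[f(|p|)+V(u)]_{-}\,du\,dp$ (up to the smoothing error), and the layer-cake identity $\int_{\R^n}[f(|p|)+V(u)]_{-}\,dp = -V_n\int_0^{|V(u)_{-}|}[f^{-1}(t)]^n\,dt$ (with $V_n$ the volume of the unit ball) yields \eqref{Daub inequBIS}.

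For the many-body bound, the one-particle density matrix $\gamma_\psi$ of an antisymmetric $\psi\in\bigwedge^N L^2(\R^3\times\{-1,1\})$ satisfies $0\le\gamma_\psi\le\mathbf 1$, $\rho_{\gamma_\psi}=\rho_\psi$, and $\langle\psi,\sum_j[\sqrt{-\D_j+m^2}-m]\psi\rangle=\Tr(T\gamma_\psi)$. Applying \eqref{Daub inequBIS} to $-W$ for an arbitrary $W\ge 0$ gives
\begin{equation*}
  \Tr(T\gamma_\psi)-\int W\rho_\psi\,dx\ \ge\ \Tr[T-W]_{-}\ \ge\ -C\int F(W)\,dx,
\end{equation*}
so $\langle\psi,\sum_jT_j\psi\rangle\ge\int W\rho_\psi\,dx-C\int F(W)\,dx$ for every $W\ge 0$. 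Taking the pointwise supremum over $W\ge 0$ produces $\int\tilde G(\rho_\psi)\,dx$, where $\tilde G(\rho)=\sup_{W\ge 0}\{W\rho-CF(W)\}$ is the Legendre transform of $CF$ on $[0,\infty)$. Since $F'(s)=[f^{-1}(s)]^3$ and $f^{-1}(t)=\sqrt{t^2+2mt}$, the supremum is attained at $W=f((\rho/C)^{1/3})$; evaluating via the substitution $s=m\sinh\theta$ and integration by parts, the $W\rho$ pieces cancel and one is left with $(3/8)m^4C_0\,g[(\rho/C_0)^{1/3}m^{-1}]-m\rho$ (with $C_0$ the optimised constant from the one-body bound), i.e.\ exactly the $G$ in \eqref{eq:Daub functions}.

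The main obstacle is the coherent-state kinetic lower bound $\langle\Phi_{u,p},T\Phi_{u,p}\rangle\ge f(|p|)$: because $T$ is non-local, the textbook non-relativistic trick of absorbing a kinetic localisation error proportional to $\|\nabla\phi\|_2^2$ is not available. The Jensen/convexity argument above is essentially the only place where the relativistic structure genuinely enters the proof; once it is in hand, the rest of the argument is structurally identical to the Schr\"odinger case, and the many-body reduction is standard Legendre duality. A secondary but non-trivial issue is tracking the constants through the Legendre transform in order to recover the explicit formula \eqref{eq:Daub functions} with the specific value $C_0=0.326$; this is pure one-variable calculus but requires careful bookkeeping.
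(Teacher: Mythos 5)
The paper does not actually prove this theorem; it cites Daubechies' original paper \cite{Daubechies} (and \cite{Lieb-Thirring} for $\alpha=0$) immediately after the asymptotic formulas \eqref{behaviour G}. So there is no in-paper proof to compare against, and I evaluate your sketch on its own terms.

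Your reduction of the many-body inequality \eqref{Daub manybodyBIS} to the one-body inequality \eqref{Daub inequBIS} via $0\le\gamma_\psi\le\mathbf 1$ and Legendre duality is correct and standard; modulo the elementary bookkeeping for the explicit $G$ that you flag, that half is fine.

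The one-body argument, however, has a genuine gap, and it is precisely at the step you identify as the ``crucial ingredient''. The Jensen observation $\langle\Phi_{u,p},T\Phi_{u,p}\rangle\ge f(|p|)$ is a statement about the \emph{lower} (covariant) symbol of $T$. But a lower bound on $\Tr[(T+V)\gamma]$ over $0\le\gamma\le\mathbf1$ requires the Berezin identity in the form
\[
\Tr\big[(T+V)\gamma\big]=(2\pi)^{-n}\int (T+V)_{U}(u,p)\,\langle\Phi_{u,p},\gamma\Phi_{u,p}\rangle\,du\,dp,
\]
where $(T+V)_U$ is the \emph{upper} (contravariant) symbol; it is the constraint $0\le\langle\Phi_{u,p},\gamma\Phi_{u,p}\rangle\le 1$ that is then exploited. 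If you instead pair the lower symbol of $T+V$ with the upper symbol of $\gamma$, as your displayed inequality implicitly does, you have no pointwise control on $\gamma_U$ and the inequality cannot be closed. Worse, convexity gives the upper symbol the \emph{opposite} sign relation: from $f*|\hat\phi|^2\ge f$ one gets that the operator with upper symbol $f(|p|)$, namely $(f*|\hat\phi|^2)(-\ic\nabla)$, dominates $T$ from \emph{above}, so $T_U\le f(|p|)$, which is the wrong direction for your chain of inequalities. In the non-relativistic case one knows $T_U(p)=|p|^2-\|\nabla\phi\|_2^2$ explicitly and must then balance the localization error against the smoothing of $V$ — but that balancing produces a \emph{semi-classical} estimate, not the scale-free Lieb--Thirring-type bound with a fixed constant that \eqref{Daub inequBIS} asserts. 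Coherent states, by themselves, do not yield Cwikel--Lieb--Rozenblum or Daubechies bounds; the known proofs (including Daubechies' original one) proceed via a Birman--Schwinger / Cwikel-type counting argument, which is where the specific constant $C$ (and hence $C_0=0.326$) actually comes from. Your Jensen observation is a nice remark about the relativistic lower symbol and does usefully simplify semi-classical coherent-state arguments (where the paper in fact uses Gaussian coherent states), but it cannot carry the weight you place on it here.
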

The asymptotic behaviour of $G$ for small, respectively large $\rho$
is given by 
\be
   \label{behaviour G}
   G(\rho) \underset{\rho\to0}{\sim} (3/10m)C_0^{-2/3}\rho^{5/3}\, , \quad
   G(\rho) \underset{\rho\to\infty}{\sim} (3/4)C_0^{-1/3}\rho^{4/3}\,. 
\ee
By a simple scaling, and using  the definition of $T$ and
\eqref{behaviour G}, respectively, we see that 
\bea\label{eq:Daub usefull} 
  \lefteqn{\Tr\big[\sqrt{-\alpha^{-2}\D+m^2\alpha^{-4}}-m\alpha^{-2} +
  V(\x)\big]_-}
  \\
  &&\qquad \ge {}-Cm^{n/2}\int |V(x)_{-}|^{1+n/2}\,dx-C\alpha^{n}\int
  |V(x)_{-}|^{1+n}\,dx\,,
  \nonumber
\eea
and
\bea  \label{Daub manybodyBIS2}
  \Big\langle\psi,\sum_{j=1}^N
  \big[\sqrt{-\alpha^{-2}\Delta_j+m^2\alpha^{-4}}
  -m\alpha^{-2}\big]\psi\Big\rangle 
  \ge {}C\int \min\{m^{-1}\rho(x)^{5/3},\alpha^{-1}\rho(x)^{4/3}\}
  \,dx\,.
\eea
Both \eqref{eq:Daub usefull} and \eqref{Daub manybodyBIS2} also holds
for \(\alpha=0\), where we let
$\sqrt{-\alpha^{-2}\Delta+m^2\alpha^{-4}}-m\alpha^{-2} = -\Delta/2m$,
when $\alpha=0$. The original proofs of the inequalities \eqref{Daub
  inequBIS} and \eqref{Daub manybodyBIS} can be found in
\cite{Daubechies} (for \(\alpha=0\), in \cite{Lieb-Thirring}).

\begin{thm}[{\bf Lieb-Yau inequality}]\label{thm:L-Y} Let \(n=3\). Let $C>0$ and
$R>0$ and let
\be\label{H_{CR}} 
  H_{C,R} = \sqrt{-\D} - \frac{2}{\pi|{\hat x}|}  - C/R\,.
\ee
Then, for any density matrix $\gamma$ and any function $\theta$ with
support in $B_R = \{x\,|\,|x|\le R\}$ we have that 
\be 
  \Tr\big[\bar{\theta}\gamma \theta H_{C,R}\big] \ge -4.4827 \,C^4
  R^{-1} \{3/(4\pi R^3)  
  \int |\theta(x)|^2\, dx\}\,.
\ee
\end{thm}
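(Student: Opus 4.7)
The strategy reduces the operator-trace inequality to a Lieb--Thirring-type bound for the Chandrasekhar operator with critical Coulomb attraction, which is the original content of the Lieb--Yau inequality \cite{Lieb-Yau}. The argument has three main stages.

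First, I would use the variational principle to reduce to a sum of negative eigenvalues. By cyclicity of the trace, $\Tr[\bar\theta\gamma\theta H_{C,R}]=\Tr[\gamma A]$ with $A:=\theta H_{C,R}\bar\theta$, self-adjoint on its natural form domain. Since $0\le\gamma\le\mathbf{1}$,
\begin{equation*}
   \Tr[\gamma A]\;\ge\;\Tr[A]_-\,,
\end{equation*}
where $[A]_-$ denotes the operator negative part. The task is therefore to bound the sum of negative eigenvalues of the localized Hamiltonian $A$ from below.

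Second, I would use the unitary dilation $(U\psi)(x)=R^{-3/2}\psi(x/R)$, which satisfies $U^{-1}\sqrt{-\D}U=R^{-1}\sqrt{-\D}$ and $U^{-1}|x|^{-1}U=R^{-1}|x|^{-1}$, so that $U^{-1}H_{C,R}U=R^{-1}H_{C,1}$ and $U^{-1}\theta U$ is multiplication by $\theta(Rx)$, supported in $B_1$. Since $\int|\theta|^2\,dx$ scales as $R^3$, the inequality is reduced to the case $R=1$ with $\supp\theta\subset B_1$.

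Third, set $H_0:=\sqrt{-\D}-(2/\pi)|x|^{-1}$, which is non-negative by Herbst's inequality \cite{Herbst}. Using the Herbst ground-state representation in a Birman--Schwinger analysis, Lieb and Yau \cite{Lieb-Yau} establish the sharp CLR-type estimate
\begin{equation*}
   \Tr[H_0-W]_-\;\ge\;-L\int W(x)_+^4\,dx
\end{equation*}
for every non-negative $W$, with an explicit constant $L$. Combined with the variational identity $\Tr[\theta B\theta]_-=\inf\{\Tr[\sigma B]:0\le\sigma\le\theta^2\}$ and a suitable comparison potential built from $|\theta|^2$ and the constant $C$, this yields the claimed bound; the coefficient $4.4827$ emerges by tracking the volume factor $|B_1|=4\pi/3$, which reproduces the normalization $\tfrac{3}{4\pi R^3}\int|\theta|^2\,dx$ after undoing the dilation.

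\emph{Main obstacle.} The critical Coulomb coupling $2/\pi$ is exactly the threshold of self-adjointness for $\sqrt{-\D}-\alpha|x|^{-1}$, so standard Daubechies--Lieb--Thirring bounds for $\sqrt{-\D}-V$ fail here: treating $(2/\pi)|x|^{-1}+C/R$ as a single perturbation makes $V_+^4$ non-integrable at the origin, while using Herbst's inequality to cancel the Coulomb singularity annihilates exactly the kinetic energy needed to dominate the constant $C/R$. The Lieb--Yau argument circumvents this by working in the Herbst ground-state representation of $H_0$, which preserves just enough residual positivity of $\sqrt{-\D}$ to run the Birman--Schwinger analysis and deliver the explicit constant $L\approx 3\cdot 4.4827/(4\pi)$.
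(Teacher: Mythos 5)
The paper does not give a proof of this theorem; it is recalled as a known result from \cite{Lieb-Yau} (the paper states at the start of Section~\ref{sect:Analytic tools} that references are given in place of proofs), so there is no paper-internal argument to compare with — what follows is a comparison against the Lieb--Yau argument your sketch is implicitly targeting.

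Your first two steps are correct and routine: the reduction $\Tr[\bar\theta\gamma\theta H_{C,R}] \geq \Tr[\theta H_{C,R}\bar\theta]_-$ via $0\le\gamma\le\mathbf{1}$, and the unitary dilation to $R=1$. The third step, however, is where the entire difficulty lives and where your sketch breaks down. You attribute to \cite{Lieb-Yau} a Lieb--Thirring-type bound $\Tr[H_0-W]_-\geq -L\int W_+^4\,dx$ at the critical coupling $2/\pi$; such a Hardy--Lieb--Thirring inequality was proved only much later (Ekholm--Frank, Frank--Lieb--Seiringer) and is not the route Lieb and Yau take. More to the point, even if one grants such a bound, it does not yield the theorem. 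The crux is that $\gamma'=\bar\theta\gamma\theta$ need not satisfy $\gamma'\leq\mathbf{1}$, so the particle number $\Tr\gamma'=\int|\theta(x)|^2\gamma(x,x)\,dx$ can be arbitrarily large, and the naive bound $\Tr[\gamma'H_{C,R}]\geq -(C/R)\Tr\gamma'$ is useless. Your variational identity $\Tr[\theta B\bar\theta]_-=\inf\{\Tr[\sigma B]:0\leq\sigma\leq|\theta|^2\}$ likewise fails to constrain $\Tr\sigma$, since the multiplication operator $|\theta|^2$ is not trace class; the ``suitable comparison potential built from $|\theta|^2$ and $C$'' is never constructed and cannot be a single fixed $W$ independent of $\gamma$. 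The actual Lieb--Yau mechanism is to trade the residual kinetic energy of $|p|-2/(\pi|x|)$ against the number of particles localized in $B_R$, via an explicit kernel estimate for the associated Birman--Schwinger operator (their Theorems~9--10), and the constant $4.4827$ is the result of a numerical optimization inside that estimate, not of tracking the volume factor $4\pi/3$. That mechanism — not a Lieb--Thirring bound — is what your sketch needs and omits.
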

Note that when $\theta=1$ on $B_R$ then the term inside the brackets
$\{ \}$ equals 1. 

We will need the following new operator inequality. The proof can be
found in Appendix~\ref{sect:App}. 
\begin{thm}[{\bf Critical
    Hydrogen inequality}]\label{thm:new-critical}
 Let \(n=3\). For any
  \(s\in[0,1/2)\) there exists constants \(A_s, B_s>0\) such that
\be\label{eq:new-critical}
    \sqrt{-\D}-\frac{2}{\pi|{\hat x}|}\geq A_s(-\Delta)^{s}-B_s\,.
\ee
\end{thm}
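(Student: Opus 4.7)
The goal is the operator inequality
\[
\sqrt{-\Delta}-\frac{2}{\pi|\hat x|}\;\geq\; A_{s}(-\Delta)^{s}-B_{s}
\]
for some $A_{s},B_{s}>0$ and $s\in[0,1/2)$. The case $s=0$ is immediate: by Herbst's inequality $\sqrt{-\Delta}\geq \frac{2}{\pi|\hat x|}$ one can take $A_{0}=B_{0}$ so that the right hand side equals zero. I therefore focus on $s\in(0,1/2)$.

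\textbf{Step 1 (Reformulation).} Denote $T=\sqrt{-\Delta}-\frac{2}{\pi|\hat x|}\geq 0$ (Herbst). It suffices to show that $(-\Delta)^{s}$ is form-bounded by $T+B_{s}$, i.e.\ that there exists $C_{s}<\infty$ with
\[
\bigl\langle f,(-\Delta)^{s}f\bigr\rangle\;\leq\; C_{s}\bigl\langle f,(T+1)f\bigr\rangle \qquad\text{for all } f\in H^{1/2}(\mathbb{R}^{3}).
\]
This is equivalent to the boundedness of $(-\Delta)^{s/2}(T+1)^{-1/2}$ on $L^{2}(\mathbb{R}^{3})$, and yields the claim with $A_{s}=C_{s}^{-1}$, $B_{s}=1$.

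\textbf{Step 2 (Spectral truncation in momentum).} For any $K>0$ the elementary scalar estimate
\(
t^{s}\leq K^{s}+K^{s-1/2}\,t^{1/2}
\)
(valid for $t\geq 0$, $s\in(0,1/2)$, obtained by splitting $t\lessgtr K$) gives the operator inequality $(-\Delta)^{s}\leq K^{s}+K^{s-1/2}\sqrt{-\Delta}$. Hence for any $A_{s}>0$,
\[
A_{s}(-\Delta)^{s}-B_{s}\;\leq\; A_{s}K^{s-1/2}\sqrt{-\Delta}+\bigl(A_{s}K^{s}-B_{s}\bigr).
\]
So I am reduced to producing, for $K$ sufficiently large (so that $A_{s}K^{s-1/2}$ is small) and $B_{s}$ sufficiently large, a bound of the type
\[
T\;\geq\;\varepsilon\sqrt{-\Delta}-C,\qquad \text{with } \varepsilon=A_{s}K^{s-1/2},\; C=B_{s}-A_{s}K^{s}.
\]

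\textbf{Step 3 (Absorbing the Coulomb singularity).} The naive use of Herbst fails because $(1-\varepsilon)\sqrt{-\Delta}-\frac{2}{\pi|\hat x|}$ is unbounded from below for any $\varepsilon>0$. The way around this, which is the heart of the proof, is to exploit that the ``gain'' $\varepsilon\sqrt{-\Delta}$ on the right hand side is only needed on momenta $|p|\lesssim K^{1/2}$, whereas the supercritical Coulomb deficit from weakening Herbst's constant lives on \emph{all} scales, and therefore to use a quadratic-form decomposition of $\frac{2}{\pi|\hat x|}$ against the spectral projection $\chi(-\Delta\leq K)$. One writes
\[
\tfrac{2}{\pi|\hat x|}\;\leq\;(1+\delta)\,\chi_{K}\tfrac{2}{\pi|\hat x|}\chi_{K}+(1+\delta^{-1})\,\bar\chi_{K}\tfrac{2}{\pi|\hat x|}\bar\chi_{K},
\]
bounds the high-momentum piece via Herbst $\bar\chi_{K}\tfrac{2}{\pi|\hat x|}\bar\chi_{K}\leq\bar\chi_{K}\sqrt{-\Delta}$, and the low-momentum piece by $\chi_{K}\tfrac{2}{\pi|\hat x|}\chi_{K}\leq K^{1/2}\chi_{K}\leq K^{1/2}$ (since $\chi_K$ commutes with $\sqrt{-\Delta}$ and Herbst then applies). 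Choosing $\delta=\delta(K)$ small with $\delta K^{s-1/2}\to 0$ as $K\to\infty$ allows the ``bad'' constant $K^{1/2}(1+\delta)$ and the cross loss $(1+\delta^{-1})$ to be absorbed into $B_{s}$ and $A_{s}K^{s-1/2}\sqrt{-\Delta}$ respectively.

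\textbf{Step 4 (Closing the estimate).} Combining Steps 2--3, for $K$ large and $s\in(0,1/2)$ one obtains
\[
T\;\geq\;A_{s}K^{s-1/2}\sqrt{-\Delta}\;-\;C\bigl(K,A_{s},\delta\bigr)\;\geq\;A_{s}(-\Delta)^{s}-B_{s}
\]
with $B_{s}=C(K,A_{s},\delta)+A_{s}K^{s}$ finite, once the parameters $K,A_{s},\delta$ are tuned so that the cross terms produced in Step 3 are nonnegative after the momentum truncation.

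\textbf{Main obstacle.} The essential difficulty, and what prevents any purely perturbative or interpolation argument from succeeding, is the sharpness of Herbst's constant $2/\pi$: no scalar multiple $(1-\varepsilon)\sqrt{-\Delta}$ dominates $\frac{2}{\pi|\hat x|}$. The proof therefore must exploit the precise scaling mismatch between $(-\Delta)^{s}$ (degree $2s$) and the Coulomb potential (degree $-1$), and use $s<1/2$ in a quantitative way—this is why the inequality degenerates as $s\to 1/2$ (one expects $A_{s}\to 0$ or $B_{s}\to\infty$). All other steps—scalar interpolation, momentum truncation, IMS-type splitting of $|\hat x|^{-1}$—are standard.
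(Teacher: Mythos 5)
Your proposal has a fundamental gap. The logical reduction in Step 2 is fine as a \emph{sufficient} condition: since $(-\Delta)^{s}\leq K^{s}+K^{s-1/2}\sqrt{-\Delta}$, proving $T\geq A_{s}K^{s-1/2}\sqrt{-\Delta}-(B_{s}-A_{s}K^{s})$ would indeed imply the theorem. But that target inequality, $T\geq\varepsilon\sqrt{-\Delta}-C$ with $\varepsilon>0$ and $C<\infty$, is \emph{false}. Apply it to the dilated trial function $f_{\lambda}(x)=\lambda^{3/2}f(\lambda x)$: both $\langle f_{\lambda},Tf_{\lambda}\rangle$ and $\langle f_{\lambda},\sqrt{-\Delta}f_{\lambda}\rangle$ scale like $\lambda$, while $\|f_{\lambda}\|^{2}$ is scale-invariant, so letting $\lambda\to\infty$ forces $T\geq\varepsilon\sqrt{-\Delta}$, i.e.\ $(1-\varepsilon)\sqrt{-\Delta}\geq\frac{2}{\pi|\hat x|}$, which contradicts the sharpness of Herbst's constant $2/\pi$ that you yourself cite. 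So the statement you reduce to cannot hold, and no amount of tuning $K,\delta$ fixes this.

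One can also see the failure directly in Step 3. Your splitting gives
\begin{equation*}
T\;\geq\;\sqrt{-\Delta}\,\chi_{K}-\delta^{-1}\sqrt{-\Delta}\,\bar\chi_{K}-(1+\delta)K^{1/2}\,,
\end{equation*}
because on high momenta the cross-term penalty $(1+\delta^{-1})$ in front of $\bar\chi_{K}\sqrt{-\Delta}\bar\chi_{K}$ necessarily exceeds $1$, so the coefficient of $\sqrt{-\Delta}$ on $\{|p|>K^{1/2}\}$ is $1-(1+\delta^{-1})=-\delta^{-1}<0$. You end up with a \emph{negative}, unbounded-below multiple of $\sqrt{-\Delta}$ on high momenta, which is worse than the trivial bound $T\geq 0$ there, and certainly cannot dominate $(-\Delta)^{s}$, which is also unbounded on high momenta. (You also misidentify which regime needs the gain: the scalar inequality $t^{2s}\leq K^{s}+K^{s-1/2}t$ uses $K^{s}$ for $|p|\leq K^{1/2}$ and $K^{s-1/2}|p|$ for $|p|>K^{1/2}$, so the gain $\varepsilon\sqrt{-\Delta}$ is needed precisely on the \emph{high}-momentum region where the splitting loses.)

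The paper's proof takes an entirely different, non-perturbative route. Instead of truncating, it applies the Schwarz inequality directly to the Fourier representation of the Coulomb form, with a weight $\big((|p|^{2}+|p|^{t})/(|q|^{2}+|q|^{t})\big)^{1/2}$, $3<2t<5$. The inner integral $\int|p-q|^{-2}(|q|^{2}+|q|^{t})^{-1}dq$ is computed exactly using the Lieb--Loss convolution identity, and the leading coefficient $k_{2}=\pi^{3}$ cancels the critical constant $2/\pi$ on the nose, leaving a manifestly subcritical remainder of the form $-A|p|^{t-1}+B$. This is the only place the sharp constant enters, and it is crucial: the decomposition works because the Schwarz weight is engineered so that after the exact computation the critical term is eliminated, not merely controlled. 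Your truncation strategy cannot replicate this because it necessarily pays a multiplicative penalty on at least one side of the split, and there is no slack to pay it. To repair your argument you would need a different decomposition or, as in the paper, an exact Fourier-space computation that preserves the critical constant.
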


We also use the following standard notation for the Coulomb energy,
$$ D(f) = D(f,f) = \mfr{1}{2} \int \overline{f(x)}\,|x-y|^{-1} f(y)\, dx dy\,.
$$

\begin{thm}[{\bf Hardy-Littlewood-Sobolev inequality}] There exists a
  constant $C$ such that 
\be \label{Hardy-Littlewood-Sobolev} 
    D(f)\le C\,\| f \|_{6/5}^2\,.
\ee
\end{thm}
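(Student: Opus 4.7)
The plan is to derive the Hardy--Littlewood--Sobolev inequality from the Sobolev embedding \(\dot H^1(\R^3)\hookrightarrow L^6(\R^3)\) via Fourier analysis and duality, which is particularly clean in three dimensions. The key observation is that \(D(f)\), up to a constant, equals \(\|(-\D)^{-1/2}f\|_2^2\), so the inequality reduces to showing that \((-\D)^{-1/2}\) maps \(L^{6/5}(\R^3)\) continuously into \(L^2(\R^3)\), which is the dual statement of Sobolev.

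First I would rewrite \(D(f)\) in Fourier variables. Since the distributional Fourier transform of \(|x|^{-1}\) on \(\R^3\) equals \(c_0|k|^{-2}\) for an explicit constant \(c_0>0\), Plancherel's theorem gives, for Schwartz functions \(f\),
$$
2D(f)=\int\overline{f(x)}\,\big(|\cdot|^{-1}*f\big)(x)\,dx
=c_0\int_{\R^3}\frac{|\hat f(k)|^2}{|k|^2}\,dk
=c_0\big\|(-\D)^{-1/2}f\big\|_2^2.
$$
By density of Schwartz functions in \(L^{6/5}(\R^3)\) and the bound to be derived, this identity extends to all \(f\in L^{6/5}(\R^3)\). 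Next, the classical Sobolev inequality \(\|u\|_6\le C_1\|(-\D)^{1/2}u\|_2\) for \(u\in\dot H^1(\R^3)\), applied to \(u=(-\D)^{-1/2}g\) with \(g\in L^2(\R^3)\), yields \(\|(-\D)^{-1/2}g\|_6\le C_1\|g\|_2\). Dualizing this bounded linear map (using \((L^6)^*=L^{6/5}\) and the formal self-adjointness of the Riesz potential \((-\D)^{-1/2}\)) gives
$$
\big\|(-\D)^{-1/2}f\big\|_2\le C_1\|f\|_{6/5}
$$
for all \(f\in L^{6/5}(\R^3)\). Substituting into the Fourier representation of \(D(f)\) produces \(D(f)\le (c_0C_1^2/2)\|f\|_{6/5}^2\), which is the claim.

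The only technical point is the computation of the Fourier transform of \(|x|^{-1}\) as a tempered distribution, together with the density argument needed to extend the identity for \(2D(f)\) beyond Schwartz functions; both are standard. An alternative route bypassing Fourier analysis is to observe that \(|x|^{-1}\in L^{3,\infty}(\R^3)\) and apply the weak-type Young convolution inequality to get \(\||\cdot|^{-1}*f\|_6\le C\|f\|_{6/5}\) directly, then conclude by H\"older's inequality. Either approach is short; obtaining the sharp constant, due to Lieb, requires conformal invariance and rearrangement arguments, but the sharp constant is not used anywhere in this paper.
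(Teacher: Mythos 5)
Your argument is correct, but note that the paper does not actually prove this theorem: it is stated as a known fact, with a citation to Lieb's sharp-constant paper and to Lieb--Loss. So you have supplied a proof where the paper relies on a reference, and the comparison is with the standard literature rather than with anything in the paper.

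Both routes you sketch are sound. The Fourier/Sobolev-duality route correctly identifies $2D(f)$ with $c_0\|(-\Delta)^{-1/2}f\|_2^2$ for Schwartz $f$, then uses the boundedness of $(-\Delta)^{-1/2}:L^2\to L^6$ (Sobolev) and its formal self-adjointness to get $(-\Delta)^{-1/2}:L^{6/5}\to L^2$ by duality, and finally extends to all of $L^{6/5}$ by density; that is exactly the right order of operations, since the Fourier identity itself is a priori only valid on a dense class. The alternative via $|x|^{-1}\in L^{3,\infty}(\R^3)$, the weak Young inequality $L^{6/5}*L^{3,\infty}\subset L^6$, and H\"older is equally valid and avoids the distributional Fourier transform of $|x|^{-1}$, at the cost of invoking Lorentz-space convolution estimates. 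What the Sobolev-duality route buys is a proof from a single classical input (the $\dot H^1\hookrightarrow L^6$ embedding) using only Plancherel; what the weak-Young route buys is that it works verbatim in all dimensions and for all Riesz potentials without tracking the exact Fourier transform. Neither gives the sharp constant, which, as you correctly note, requires Lieb's rearrangement and conformal-invariance argument and is not used in the paper.
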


The sharp constant $C$ has been found by Lieb \cite{Lieb:sob}; see
also \cite{Lieb-Loss}. It can be shown by Fourier transformation that
$f\mapsto\sqrt{D(f)}$ is a norm. This fact will play a role in the
proof of the upper bound in our main Theorem~\ref{main theorem}. 

In order to localise the relativistic kinetic energy we shall use the
equivalent of the IMS-formula for the operator $-\D/2m$. In the
sequel, as before,
$\sqrt{-\alpha^{-2}\Delta+m^2\alpha^{-4}}-m\alpha^{-2} = -\Delta/2m$,
when $\alpha=0$.  
\begin{thm}[{\bf Relativistic IMS formula}] \label{IMS} Let
  $(\theta_u)_{u\in\mathcal M}$ be a family of positive bounded
  \(C^1\)-functions on $\R^3$ with bounded derivatives, and let $d\mu$
  be a positive measure on 
  $\mathcal M$ such that $\int_{\mathcal M} \theta_u(x)^2\,  
  d\mu(u)=1$ for all $x\in\R^3$. Then for any $f\in H^{1/2}(\R^3)$,
\bea \label{eq:IMS}
     \lefteqn{(f, \big(\sqrt{-\alpha^{-2}\D + m^2\alpha^{-4}}
       -m\alpha^{-2}\big) 
    f)}\\ \quad&=& \int_{\mathcal M} (\theta_u f, 
    \big(\sqrt{-\alpha^{-2}\D + m^2\alpha^{-4}}-m\alpha^{-2}\big) 
    \theta_u f)\, d\mu(u) - (f,L^{}f)\,,\nonumber
\eea
where the operator $L$ is of the form 
\begin{align} \label{IMS-error1} 
   L&=\int_{\mathcal M} L_{\theta_u}\,d\mu(u)\,, 
\end{align}
with \(L_{\theta_u}\) the bounded operator with kernel
\begin{align}
   L_{\theta_{u}}(x,y)&= (2\pi)^{-2}m^2\alpha^{-3} |x-y|^{-2}
   K_2(m\alpha^{-1}|x-y|)  
    \big[\theta_u(x)-\theta_u(y)\big]^2\,.
  \label{IMS-error2} 
\end{align}
Here, $K_2$ is a modified Bessel function of the second kind. For
\(\alpha=0\), \(L_{\theta_u}\) is multiplication by
\((\nabla\theta_{u})^2/2m\),  where
$\sqrt{-\alpha^{-2}\Delta+m^2\alpha^{-4}}-m\alpha^{-2} = -\Delta/2m$,
when $\alpha=0$. 
\end{thm}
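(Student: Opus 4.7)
The strategy is to reduce the relativistic IMS formula to a pointwise algebraic identity via the Lévy--Khintchine representation of the relativistic kinetic energy as a non-local quadratic form.

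\textbf{Step 1 (integral kernel representation).} For $\alpha>0$ and $f\in H^{1/2}(\R^3)$, I would first establish
\begin{equation*}
  \bigl(f,\bigl(\sqrt{-\alpha^{-2}\Delta+m^2\alpha^{-4}}-m\alpha^{-2}\bigr)f\bigr)
  = \tfrac{1}{2}\int\int k_{\alpha,m}(x-y)\,|f(x)-f(y)|^{2}\,dx\,dy,
\end{equation*}
where $k_{\alpha,m}(z)=\frac{m^{2}}{2\pi^{2}\alpha^{3}}\,|z|^{-2}K_{2}(m\alpha^{-1}|z|)$. This follows from the Lévy--Khintchine representation $\sqrt{\alpha^{-2}|p|^2+m^2\alpha^{-4}}-m\alpha^{-2}=\int(1-\cos(p\cdot z))k_{\alpha,m}(z)\,dz$, which can be verified by combining the subordination identity $\sqrt{s}=\pi^{-1/2}\int_{0}^{\infty}t^{-3/2}(1-e^{-ts})\,dt$ (applied to $s=\alpha^{-2}|p|^{2}+m^{2}\alpha^{-4}$) with Plancherel and the Gaussian integral representation $K_{\nu}(r)=\tfrac{1}{2}(r/2)^{\nu}\int_{0}^{\infty}t^{-\nu-1}e^{-t-r^{2}/(4t)}\,dt$. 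Symmetrising over $x\leftrightarrow y$ yields the factor $\tfrac{1}{2}$.

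\textbf{Step 2 (algebraic identity and recombination).} For any family $(\theta_{u})$ with $\int\theta_{u}^{2}\,d\mu=1$, expanding $|\theta_{u}(x)f(x)-\theta_{u}(y)f(y)|^{2}$ and using $2\theta_{u}(x)\theta_{u}(y)=\theta_{u}(x)^{2}+\theta_{u}(y)^{2}-[\theta_{u}(x)-\theta_{u}(y)]^{2}$ gives the pointwise identity
\begin{equation*}
  |f(x)-f(y)|^{2}
  = \int\big|\theta_{u}(x)f(x)-\theta_{u}(y)f(y)\big|^{2}\,d\mu(u)
  - \mbox{Re}\bigl(\overline{f(x)}f(y)\bigr)\int[\theta_{u}(x)-\theta_{u}(y)]^{2}\,d\mu(u).
\end{equation*}
Substituting this into Step 1, integrating against $\tfrac{1}{2}k_{\alpha,m}(x-y)\,dx\,dy$ and swapping with $d\mu(u)$ by Fubini, the first summand yields $\int(\theta_{u}f,(\sqrt{-\alpha^{-2}\Delta+m^2\alpha^{-4}}-m\alpha^{-2})\theta_{u}f)\,d\mu(u)$ by applying Step 1 in reverse to each $\theta_{u}f$, and the second summand is $-(f,Lf)$ with $L=\int L_{\theta_{u}}\,d\mu(u)$ and $L_{\theta_{u}}$ having kernel $\tfrac{1}{2}k_{\alpha,m}(x-y)[\theta_{u}(x)-\theta_{u}(y)]^{2}=(2\pi)^{-2}m^{2}\alpha^{-3}|x-y|^{-2}K_{2}(m\alpha^{-1}|x-y|)[\theta_{u}(x)-\theta_{u}(y)]^{2}$, matching the stated form (the real symmetric kernel makes $L_{\theta_{u}}$ self-adjoint, so $\mbox{Re}(\overline{f(x)}f(y))$ and $\overline{f(x)}f(y)$ give the same quadratic form).

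\textbf{Step 3 (case $\alpha=0$ and main obstacle).} For $\alpha=0$ the operator $T_{0}=-\Delta/(2m)$ is local and the result reduces to the classical non-relativistic IMS identity, obtained by the direct expansion $|\nabla(\theta_{u}f)|^{2}=\theta_{u}^{2}|\nabla f|^{2}+2\mbox{Re}(\theta_{u}\,\overline{f}\,\nabla\theta_{u}\cdot\nabla f)+(\nabla\theta_{u})^{2}|f|^{2}$ integrated against $d\mu(u)$; the cross term vanishes since $\int\theta_{u}\nabla\theta_{u}\,d\mu(u)=\tfrac{1}{2}\nabla\int\theta_{u}^{2}\,d\mu(u)=0$, leaving $L_{\theta_{u}}=(\nabla\theta_{u})^{2}/(2m)$. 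The main technical obstacle is Step 1: pinning down the explicit Lévy kernel with the correct constant and $K_{2}$ factor, and then justifying Fubini in Step 2, which uses the $C^{1}$-bound $[\theta_{u}(x)-\theta_{u}(y)]^{2}\le C\min(|x-y|^{2},1)$ to tame the $|z|^{-4}$ diagonal singularity of $k_{\alpha,m}$, together with the exponential decay $K_{2}(t)\sim\sqrt{\pi/(2t)}\,e^{-t}$ at infinity, so that the triple integral is absolutely convergent for $f\in H^{1/2}$.
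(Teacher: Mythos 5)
Your proof follows the same route as the paper's: both start from the non-local quadratic form representation $\big(f,(\sqrt{-\Delta+m^2}-m)f\big)=\int\int F(x-y)|f(x)-f(y)|^2\,dx\,dy$ with $F$ given explicitly in terms of $K_2$ (the paper obtains this kernel from the heat kernel for $\sqrt{-\Delta+m^2}$ and cites Lieb--Loss \S 7.12; you obtain the same kernel via subordination plus the Gaussian integral formula for $K_\nu$, which is an equivalent derivation), and then both perform the identical pointwise algebraic recombination of $|f(x)-f(y)|^2$ using $\int\theta_u^2\,d\mu=1$ and $-\tfrac12(\theta_u(x)-\theta_u(y))^2=\theta_u(x)\theta_u(y)-\tfrac12(\theta_u(x)^2+\theta_u(y)^2)$. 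One small correction: the subordination identity should carry a factor $\tfrac12$, namely $\sqrt{s}=\tfrac{1}{2\sqrt{\pi}}\int_0^\infty t^{-3/2}(1-e^{-ts})\,dt$; your stated kernel $k_{\alpha,m}$ and the resulting $L_{\theta_u}$ are nonetheless correct, so this is only a typo in the intermediate formula.
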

A proof (and the definition of \(K_2\), and some of its properties)
can be found in Appendix~\ref{sect:App}. 

The following bound on the localisation error will be crucial.
\begin{thm}[{\bf Localisation error}] \label{IMS-error-est} 
Let \(\Omega\subset\R^3\) and \(\ell>0\). Let \(\theta\) be a
Lipschitz continuous function satisfying \(0\le\theta\le1\),
\(\dist(\Omega^{c},\supp\,\nabla\theta)\ge\ell\), and \(\theta\) is
constant on \(\Omega^c\). 

Then for all \(m>0\), \(\alpha\ge0\) there exists a positive operator
\(Q_\theta\)  such that the following operator inequality holds: 
\be\label{IMS-err-op-est}
  L_\theta\le C\,m^{-1}\|\nabla\theta\|_{\infty}^2\chi_{\Omega}+Q_\theta,
\ee
with
\be \label{est:traceQ}
  \Tr[Q_\theta]\le
  C\,m\alpha^{-2}\ell^{-1}{\rm e}^{-m\alpha^{-1}\ell}
  \|\nabla\theta\|_{\infty}^2|\Omega|\,,
\ee
for a constant \(C>0\), independent of \(m,\alpha,\ell, \theta\), and
\(\Omega\). Here, \(\chi_\Omega\) and $|\Omega|$ are the
characteristic function and the volume, respectively, of the set
\(\Omega\). For \(\alpha=0\), \(Q_{\theta}\equiv0\). 
\end{thm}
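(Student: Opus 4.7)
My plan is to split the kernel of $L_\theta$ into a short-range part (where $|x-y|<\ell$) and a long-range part (where $|x-y|\ge\ell$). The short-range part will be controlled by the multiplication operator $Cm^{-1}\|\nabla\theta\|_\infty^2\chi_\Omega$, while the long-range part is exponentially small by virtue of the decay of $K_2$ and will be absorbed into $Q_\theta$. The case $\alpha=0$ is handled immediately: Theorem~\ref{IMS} gives $L_\theta=|\nabla\theta|^2/(2m)$, and since $\theta$ is constant on $\Omega^c$ we have $\supp\nabla\theta\subset\Omega$, so \eqref{IMS-err-op-est} holds with $Q_\theta\equiv 0$.

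For $\alpha>0$, write $L_\theta=L_\theta^<+L_\theta^\ge$ using the cutoffs $\chi_{\{|x-y|<\ell\}}$ and $\chi_{\{|x-y|\ge\ell\}}$ respectively. For the short-range piece, the geometric hypothesis is crucial: if $x\in\Omega^c$ then the ball $B(x,\ell)$ is disjoint from $\supp\nabla\theta$, so by connectedness of the ball $\theta\equiv\theta(x)$ on $B(x,\ell)$, and hence $[\theta(x)-\theta(y)]^2=0$ whenever $|x-y|<\ell$. By symmetry the same holds if $y\in\Omega^c$, giving the support identity $L_\theta^<=\chi_\Omega L_\theta^<\chi_\Omega$. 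Inserting the Lipschitz bound $[\theta(x)-\theta(y)]^2\le\|\nabla\theta\|_\infty^2|x-y|^2$ cancels the $|x-y|^{-2}$ singularity in the kernel, and the Schur test combined with the explicit integral $\int_0^\infty t^2K_2(t)\,dt=3\pi/2$ yields $\|L_\theta^<\|_{\mathrm{op}}\le Cm^{-1}\|\nabla\theta\|_\infty^2$. The first term of \eqref{IMS-err-op-est} then follows from $L_\theta^<=\chi_\Omega L_\theta^<\chi_\Omega\le\|L_\theta^<\|_{\mathrm{op}}\chi_\Omega$.

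For the long-range piece, the same connectedness argument shows more: writing $\theta=c+\eta$ with $c=\theta|_{\Omega^c}$, the function $\eta$ is in fact supported in $\{\dist(\cdot,\Omega^c)\ge\ell\}$, and combining this with the Lipschitz estimate and $|\eta|\le 1$ gives the pointwise inequality $[\theta(x)-\theta(y)]^2\le\|\nabla\theta\|_\infty^2|x-y|^2[\chi_\Omega(x)+\chi_\Omega(y)]$. Together with the large-argument asymptotics $K_2(z)\sim\sqrt{\pi/(2z)}\,e^{-z}$, this dominates the kernel of $L_\theta^\ge$ by a positive symmetric expression that carries both the $\chi_\Omega$-localization and the exponential decay. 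I take $Q_\theta$ to be the positive self-adjoint operator obtained from this dominating kernel via a Cauchy-Schwarz symmetrization on the bounded region $\Omega$. The trace bound then reduces to estimating a diagonal integral: the factor $\chi_\Omega(x)+\chi_\Omega(y)$ restricts one argument to $\Omega$, producing the $|\Omega|$-dependence, while the remaining Bessel-tail integral $\int_{|z|\ge\ell}K_2(m\alpha^{-1}|z|)\,dz$ has the asymptotics $\sim(\alpha/m)\sqrt{\pi\alpha/(2m\ell)}\,e^{-m\alpha^{-1}\ell}$, which after bookkeeping of the prefactors in $L(x,y)$ produces the claimed $Cm\alpha^{-2}\ell^{-1}e^{-m\alpha^{-1}\ell}$ bound.

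The main technical obstacle is ensuring that $Q_\theta$ is a genuinely \emph{trace-class} positive operator rather than merely bounded: the naive choice $Q_\theta=L_\theta^\ge$ has a kernel that vanishes identically on the diagonal (by the hard cutoff), so the standard formula $\Tr=\int K(x,x)\,dx$ misleadingly gives zero, and in fact the convolution-type piece is not trace-class on $L^2(\R^3)$. The resolution is to exploit the $\chi_\Omega(x)+\chi_\Omega(y)$ factor from the support analysis, which restricts at least one coordinate to the finite-volume set $\Omega$ and turns the relevant operator into a Hilbert-Schmidt (and, after positivity-preserving rearrangement of the cross-term $\{\chi_\Omega,J\}$, a trace-class) operator whose trace is controlled by the diagonal estimate above.
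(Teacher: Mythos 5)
Your high-level plan shares the key structural ideas with the paper's proof (the support identity forcing the kernel of $L_\theta$ to vanish off $\Omega\times\Omega$, the Schur-test bound $\|L\|\le Cm^{-1}\|\nabla\theta\|_\infty^2$, the exponential decay of $K_2$ at range $\ge\ell$), but there is a genuine gap at the crucial step where you pass from a pointwise kernel bound to the operator inequality $L_\theta^\ge\le Q_\theta$. Pointwise domination of a self-adjoint integral kernel by a symmetric nonnegative kernel does \emph{not} give an operator inequality: if $|A(x,y)|\le K(x,y)$ one gets $\langle f,Af\rangle\le\langle|f|,T_K|f|\rangle$, which is not comparable with $\langle f,T_Kf\rangle$ for general complex $f$. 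So ``take $Q_\theta$ to be the positive operator obtained from the dominating kernel'' does not, by itself, give $L_\theta^\ge\le Q_\theta$. You gesture toward the fix (``positivity-preserving rearrangement of the cross-term $\{\chi_\Omega,J\}$''), which is indeed the right idea, but you never carry it out; that is exactly where the real work is.

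What the paper does instead: since $L(x,y)=0$ when both $x,y\in\Omega^c$, one has $L=\chi_\Omega L\chi_\Omega+\chi_\Omega L(1-\chi_\Omega)+(1-\chi_\Omega)L\chi_\Omega$. Applying $\Gamma_1\Gamma_2^*+\Gamma_2\Gamma_1^*\le\Gamma_1\Gamma_1^*+\Gamma_2\Gamma_2^*$ with $\Gamma_1=\varepsilon^{1/2}\chi_\Omega$, $\Gamma_2=\varepsilon^{-1/2}(1-\chi_\Omega)L$ gives
\begin{equation*}
  L\le\chi_\Omega L\chi_\Omega+\varepsilon\chi_\Omega
  +\varepsilon^{-1}(1-\chi_\Omega)L^2(1-\chi_\Omega)\,,
\end{equation*}
and with $\varepsilon=m^{-1}\|\nabla\theta\|_\infty^2$ the first two terms are absorbed into $Cm^{-1}\|\nabla\theta\|_\infty^2\chi_\Omega$, while $Q_\theta:=\varepsilon^{-1}(1-\chi_\Omega)L^2(1-\chi_\Omega)$ is manifestly positive and its trace is literally $\varepsilon^{-1}\int_{x\in\Omega^c,\,y\in\Omega,\,|x-y|>\ell}L(x,y)^2\,dxdy$; the constraint $|x-y|>\ell$ comes for free from the $\ell$-separation hypothesis, so no hard cutoff $\chi_{\{|x-y|\gtrless\ell\}}$ is ever needed. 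Note also that the final structure of the estimate is not ``$Q_\theta$ absorbs all of the long-range part'': $Q_\theta$ contains only the $(1-\chi_\Omega)(\cdot)^2(1-\chi_\Omega)$ piece, while the $\chi_\Omega(\cdot)\chi_\Omega$ and $\varepsilon\chi_\Omega$ pieces are routed into the first term of \eqref{IMS-err-op-est}. Your $|x-y|\lessgtr\ell$ split is not wrong, but it buys nothing and introduces the extra problem that $L_\theta^\ge$ is not a positive operator and you still need the same anticommutator trick on it; it is cleaner to do the $\chi_\Omega$-decomposition on the whole of $L$ at once. To repair your argument, replace the kernel-comparison step with the explicit operator inequality above.
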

A proof can be found in Appendix~\ref{sect:App}. Note that the first
term, $C\,m^{-1}\|\nabla\theta\|_{\infty}^2\chi_{\Omega}$, on the
right side of \eqref{IMS-err-op-est} is similar to the error in the
non-relativistic IMS formula for the operator $-\D/2m$, except in this
case one has $\|\nabla\theta\|_{\infty}^2\chi_{{\rm supp}
  \nabla\theta}/2m$ as the only error. 

When localising, we shall make use of the following.
\begin{thm}[{\bf Partition of \(\R^n\)}] \label{partition}
Consider $\varphi\in C^\infty_0(\mathbb R^n)$ with support in the unit
ball $\{|x|\leq1\}$ and satisfying $\int\varphi(x)^2\,dx=1$. Assume
that $\ell:\R^n\to \R$ is a $C^1$-map satisfying $0<\ell(u)\leq1$ and
$\|\nabla\ell\|_\infty<1$. Let $J(x,u)$ be the Jacobian of the map
$u\mapsto\frac{x-u}{\ell(u)}$, i.e.,
$$
  J(x,u) =
  \ell(u)^{-n}\Big|\det\Big[\frac{(x_i-u_i)\partial_j\ell(u)}{\ell(u)}  
  + \delta_{ij}\Big]_{ij}\Big|\,.
$$
We set
$\varphi_{u}(x)=\varphi\big(\frac{x-u}{\ell(u)}\big)
\sqrt{J(x,u)}\,\ell(u)^{n/2}$.  
Then, for all $x\in \R^n$, 
\begin{equation}\label{eq:philoc}
  \int_{\R^n} \varphi_{u}(x)^2\,\ell(u)^{-n}\, du = 1 \,,
\end{equation}
and for all multi-indices $\eta\in\N^n$ we have
\begin{equation}\label{eq:phiuestimate}
  \|\p^\eta\varphi_{u}\|_\infty \leq \ell(u)^{-|\eta|}
  C_\eta\max_{|\nu|\leq|\eta|}\|\p^\nu\varphi\|_\infty \,,
\end{equation}
where $C_\eta$ depends only on $\eta$.
\end{thm}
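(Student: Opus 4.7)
The identity \eqref{eq:philoc} is set up for a single change of variables. For fixed $x\in\R^n$ I would introduce $y=(x-u)/\ell(u)$ and verify by direct differentiation that
\begin{equation*}
   \frac{\partial y_i}{\partial u_j}
   = -\frac{1}{\ell(u)}\Big[\delta_{ij}+\frac{(x_i-u_i)\partial_j\ell(u)}{\ell(u)}\Big],
\end{equation*}
so that $|\det(\partial y/\partial u)|=J(x,u)$ as defined in the statement. Substituting then produces
\begin{equation*}
   \int\varphi_u(x)^2\,\ell(u)^{-n}\,du
   = \int\varphi\Big(\mfr{x-u}{\ell(u)}\Big)^2 J(x,u)\,du
   = \int\varphi(y)^2\,dy = 1,
\end{equation*}
provided the map $u\mapsto y$ restricts to a diffeomorphism onto a set containing the unit ball $\{|y|\le 1\}$, which is the support of $\varphi$.

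Justifying that change of variables is the one genuinely subtle point. For $|y|\le 1$, solving $y=(x-u)/\ell(u)$ for $u$ (with $x$ fixed) is equivalent to finding a fixed point of the map $F_y(u)=x-\ell(u)y$. Since $\|\nabla\ell\|_\infty<1$, the map $F_y$ is a strict contraction on $\R^n$ with Lipschitz constant $|y|\,\|\nabla\ell\|_\infty\le\|\nabla\ell\|_\infty<1$; the Banach fixed-point theorem supplies a unique $u=u(y)$, which depends smoothly on $y$ by the inverse function theorem (the Jacobian is nonzero, in fact equal to $\pm J(x,u)>0$, as I verify below). This delivers the required global bijection on the relevant region and legitimises the substitution above.

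For the derivative estimate \eqref{eq:phiuestimate}, the key observation is that on the support of $\varphi_u$ one has $|x-u|\le\ell(u)$, and the matrix in the definition of $J$ is then a rank-one perturbation of the identity. Applying $\det(I+v w^T)=1+w\cdot v$ with $v=(x-u)/\ell(u)$ and $w=\nabla\ell(u)$ gives
\begin{equation*}
   J(x,u) \;=\; 1 + \frac{(x-u)\cdot\nabla\ell(u)}{\ell(u)}
   \;\in\; (1-\|\nabla\ell\|_\infty,\,1+\|\nabla\ell\|_\infty)\subset(0,2),
\end{equation*}
so the absolute value is redundant there and $J$ is affine in $x$ with constant gradient $\nabla_x J=\nabla\ell(u)/\ell(u)$. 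Hence $\sqrt{J(x,u)}$ is bounded above and bounded away from zero, and each successive $x$-derivative contributes at most a factor of order $\ell(u)^{-1}$. Combining this via the Leibniz rule with the trivial chain-rule identity $\p^\eta[\varphi((x-u)/\ell(u))]=\ell(u)^{-|\eta|}(\p^\eta\varphi)((x-u)/\ell(u))$ yields
\begin{equation*}
   |\p^\eta\varphi_u(x)|\;\le\; C_\eta\,\ell(u)^{n/2-|\eta|}\max_{|\nu|\le|\eta|}\|\p^\nu\varphi\|_\infty,
\end{equation*}
and since $\ell(u)\le 1$ the extra $\ell(u)^{n/2}$ is absorbed into $C_\eta$, giving \eqref{eq:phiuestimate}. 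The main obstacle is thus not any single estimate but rather the bijectivity check above: it is precisely the hypothesis $\|\nabla\ell\|_\infty<1$ that makes the whole construction go through by preventing degeneracies of the Jacobian and fold singularities of the map.
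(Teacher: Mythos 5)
The paper itself does not prove this statement --- it cites Theorem~22 of \cite{SS} and moves on --- so there is no in-paper argument to compare against. Your change-of-variables proof is the natural one and is correct in all essentials: the computation of $\partial y/\partial u$, the identification of its absolute determinant with $J$, the contraction/fixed-point argument (using precisely $\|\nabla\ell\|_\infty<1$) to justify that $u\mapsto(x-u)/\ell(u)$ is a diffeomorphism onto a neighbourhood of $\supp\,\varphi$, and the rank-one determinant identity together with Leibniz for the derivative bounds.

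Two small points of care. First, your displayed formula for $J$ drops the prefactor $\ell(u)^{-n}$ that is part of its definition; the rank-one identity gives $\det[\delta_{ij}+(x_i-u_i)\partial_j\ell(u)/\ell(u)]=1+(x-u)\cdot\nabla\ell(u)/\ell(u)$, so in fact $J(x,u)=\ell(u)^{-n}\big(1+(x-u)\cdot\nabla\ell(u)/\ell(u)\big)$ on $\supp\,\varphi_u$. Consequently $\sqrt{J(x,u)}\,\ell(u)^{n/2}=\sqrt{1+(x-u)\cdot\nabla\ell(u)/\ell(u)}$ exactly, and $\varphi_u$ carries no residual power of $\ell(u)$; the extra $\ell(u)^{n/2}$ you then ``absorb'' using $\ell\leq1$ is an artifact of this slip and is not really there. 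Your final bound is still valid, just reached by a slightly muddled route. Second, your phrase that $\sqrt{J}$ is ``bounded away from zero'' is correct but quantitatively the lower bound is $\sqrt{1-\|\nabla\ell\|_\infty}$, so the resulting $C_\eta$ actually depends on how close $\|\nabla\ell\|_\infty$ is to $1$; this dependence is suppressed in the theorem as stated (in the application $\|\nabla\ell\|_\infty<1/2$, making $C_\eta$ genuinely universal), and your proof honestly inherits that imprecision rather than introducing a new gap.
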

This is Theorem 22 in \cite{SS}.

We will consider potentials $V:\R^3\to\R$ with Coulomb singularities
of the form $z_k|x-R_k|^{-1}$, $k=1,\ldots,M$, at points
$R_1,\ldots,R_M\in\R^3$ and with charges $0<z_1,\ldots,z_M\leq2/\pi$. 
Recall that (see \eqref{ddefinitionBIS}; replace \({\bf r}\) by \({\bf R}\))
\begin{equation}\label{ddefinition}
  d_{\bf R}(x)=\min\big\{|x-R_k|\ \big|\ k=1,\ldots,M\big\}\,
  \ , \quad {\bf R}=(R_1,\ldots,R_M)\in\R^{3M}\,.
\end{equation}
To treat such potentials we will need the following combination of
Theorems~\ref{Daubechies}  and \ref{thm:L-Y}. The proof can be found
in Appendix~\ref{sect:App}. 
\begin{thm}[{\bf Combined Daubechies-Lieb-Yau inequality}]\label{thm:L-Y-D}
Let \(R_1,\ldots,R_M\in\R^3\), and assume $W\in L^1_{\rm loc}(\R^3)$
satisfies 
\begin{equation}\label{eq:cond-D-L-Y}
   W(x)\geq{}-\frac{\nu}{d_{\bf R}(x)}-C\nu
   m\alpha^{-1}\quad\hbox{when}\quad d_{\bf R}(x)<\alpha m^{-1},
\end{equation}
with $\alpha\nu\leq2/\pi$ and $m>0$, $\alpha\geq0$, and \(d_{\bf R}\)
as in \eqref{ddefinition}. Assume also that the minimal distance
between nuclei satisfies $\min_{k\ne \ell}|R_k-R_\ell|>2\alpha
m^{-1}$. Then 
\bea\label{ineq:ImprovedDLY}\nonumber
  \lefteqn{\Tr\big[\sqrt{-\alpha^{-2}\Delta+m^2\alpha^{-4}}
   -m\alpha^{-2}+W({\hat x})\big]_{-}} 
   &&\\&\geq&
  {}-C\nu^{5/2}\alpha^{1/2}m
  -Cm^{3/2}\!\!\!\!\!\!\!\!
   \int\limits_{d_{\bf R}(x)>\alpha m^{-1}}
   \!\!\!\!\!\!\!\!|W(x)_-|^{5/2}\,dx 
  {}-C\alpha^{3}\!\!\!\!\!\!\!\!\int\limits_{d_{\bf R}(x)>\alpha
    m^{-1}}\!\!\!\!\!\!\!\!|W(x)_-|^{4}\,dx\,,
\eea
where as before $\sqrt{-\alpha^{-2}\Delta+m^2\alpha^{-4}}-m\alpha^{-2}
= -\Delta/2m$, when $\alpha=0$. 
\end{thm}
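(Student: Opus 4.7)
My plan is to localize the operator with a smooth IMS partition of unity adapted to the nuclei and then apply Daubechies~\eqref{eq:Daub usefull} in the complement of the nuclei and Lieb--Yau (Theorem~\ref{thm:L-Y}) in a neighborhood of each nucleus. I would choose $\phi_0^2+\sum_{k=1}^M\phi_k^2=1$ with each $\phi_k$ ($k\ge 1$) smooth, supported in $B(R_k,\alpha m^{-1})$, equal to $1$ on $B(R_k,\alpha(2m)^{-1})$, and with $\phi_0$ supported in the complement of $\bigcup_k B(R_k,\alpha(2m)^{-1})$. The supports of $\phi_1,\ldots,\phi_M$ are pairwise disjoint because $\min_{k\neq\ell}|R_k-R_\ell|>2\alpha m^{-1}$, and $\|\nabla\phi_k\|_\infty\le Cm\alpha^{-1}$. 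Theorem~\ref{IMS} together with Theorem~\ref{IMS-error-est} (with the IMS parameter $\ell$ chosen of order $\alpha m^{-1}$) and the superadditivity $\Tr[A+B]_-\ge\Tr[A]_-+\Tr[B]_-$ reduce \eqref{ineq:ImprovedDLY} to bounding $\Tr[\phi_0(T+W)\phi_0]_-$ and each $\Tr[\phi_k(T+W)\phi_k]_-$ separately, with an IMS error that is exponentially small by \eqref{est:traceQ}.

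For the far region, $\supp\phi_0\subset\{d_{\bf R}\ge\alpha(2m)^{-1}\}$, so the assumption on $W$ gives $W\ge-C'\nu m\alpha^{-1}$ there (bounded). I would then apply Daubechies~\eqref{eq:Daub usefull} to produce the two far-region integrals $-Cm^{3/2}\int_{d_{\bf R}>\alpha m^{-1}}|W_-|^{5/2}\,dx$ and $-C\alpha^3\int_{d_{\bf R}>\alpha m^{-1}}|W_-|^{4}\,dx$, plus contributions from the transition annulus $\alpha(2m)^{-1}\le d_{\bf R}\le\alpha m^{-1}$ which I absorb into the near-nucleus estimate.

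The near-nucleus estimate is the crux. On $\supp\phi_k$ the hypothesis gives $W\ge-\nu|x-R_k|^{-1}-C\nu m\alpha^{-1}$. Using $T\ge\alpha^{-1}\sqrt{-\Delta}-m\alpha^{-2}$ together with the convex identity
\[
\sqrt{-\Delta}-\alpha\nu|\hat x-R_k|^{-1} = \bigl(1-\tfrac{\alpha\nu\pi}{2}\bigr)\sqrt{-\Delta}+\tfrac{\alpha\nu\pi}{2}\bigl[\sqrt{-\Delta}-(2/\pi)|\hat x-R_k|^{-1}\bigr]
\]
and Kato's inequality (which makes the bracketed term non-negative), I obtain $T-\nu|\hat x-R_k|^{-1}\ge\epsilon T-(1-\epsilon)m\alpha^{-2}$ with $\epsilon=1-\alpha\nu\pi/2\in[0,1]$. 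When $\alpha\nu\le 1/\pi$ (so $\epsilon\ge 1/2$), the operator $\epsilon T$ is, via the rescaling $(m,\alpha)\mapsto(m/\epsilon,\alpha/\epsilon)$, again of the form \eqref{def:kinetic energy}; applying Daubechies~\eqref{eq:Daub usefull} to the bounded residual $-C'\nu m\alpha^{-1}\chi_{B(R_k,\alpha/m)}$ and using $|B(R_k,\alpha/m)|\sim(\alpha/m)^{3}$ and the absorption $\nu^{4}\alpha^{2}=\nu^{5/2}(\nu\alpha)^{3/2}\alpha^{1/2}\le C\nu^{5/2}\alpha^{1/2}$ (from $\nu\alpha\le 2/\pi$) yields a per-nucleus bound of order $\nu^{5/2}\alpha^{1/2}m$. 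When $\alpha\nu>1/\pi$ (so $\epsilon<1/2$), I instead apply Theorem~\ref{thm:L-Y} to $\phi_k\gamma\phi_k H_{C'',\alpha/m}$ with $C''$ uniformly bounded, which gives a per-nucleus bound of order $m\alpha^{-2}$; this is dominated by $C\nu^{5/2}\alpha^{1/2}m$ since $(\nu\alpha)^{5/2}\ge\pi^{-5/2}$ in this regime. Summing over $k=1,\ldots,M$ yields the first term of \eqref{ineq:ImprovedDLY}.

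The hardest part will be making the subcritical/near-critical crossover uniform without losing the sharp $\nu^{5/2}$ dependence: the convex-splitting reduction degenerates as $\epsilon\to 0$ at the critical coupling $\alpha\nu=2/\pi$, and the Lieb--Yau bound $m\alpha^{-2}$ is only sharp at that coupling, so the two regimes must dovetail correctly. Additional care is needed to convert the sandwich $\phi_k(T+W)\phi_k$ into a form to which Theorems~\ref{Daubechies} and~\ref{thm:L-Y} apply directly, and to absorb the transition-annulus contribution from the far-region step into this per-nucleus bound.
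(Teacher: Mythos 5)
Your near-nucleus convex-splitting argument is correct and elegant: the operator inequality $T-\nu|\hat x-R_k|^{-1}\geq\epsilon T-(1-\epsilon)m\alpha^{-2}$ with $\epsilon=1-\alpha\nu\pi/2$ follows exactly as you say from splitting $K=\sqrt{-\alpha^{-2}\Delta+m^2\alpha^{-4}}$ into $\epsilon K+(\alpha\nu\pi/2)K$, using $K\geq\alpha^{-1}\sqrt{-\Delta}$ and Kato's inequality on the second piece. The rescaling $(m,\alpha)\mapsto(m/\epsilon,\alpha/\epsilon)$ turns $\epsilon T$ back into an operator of the form $T_\beta$, and your bookkeeping of the Daubechies integrals and the absorptions $\nu^4\alpha^2\leq C\nu^{5/2}\alpha^{1/2}$, $m\alpha^{-2}\leq C\nu^{5/2}\alpha^{1/2}m$ for $\nu\alpha\geq1/\pi$ is accurate. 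This is a genuinely different route to the per-nucleus bound than the paper's Lemma~\ref{lem:int-ineq}, which controls the regularized Coulomb singularity $\frac{2}{\pi}{\rm e}^{-m^2\pi^{-1}|x|^2}/|x|$ directly by the kinetic energy.

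However, there is a genuine gap in the reduction step, and it is precisely the one the paper's case split is designed to avoid. You state that the IMS error is ``exponentially small by \eqref{est:traceQ}.'' This overlooks the first term in \eqref{IMS-err-op-est}: Theorem~\ref{IMS-error-est} gives $L_{\phi_k}\leq Cm^{-1}\|\nabla\phi_k\|_\infty^2\chi_\Omega+Q_{\phi_k}$, and only the $Q_{\phi_k}$-part has exponentially small trace. With your choice $\|\nabla\phi_k\|_\infty\sim m/\alpha$, $\ell\sim\alpha/m$, the bounded part is of order $m\alpha^{-2}\chi_\Omega$, and since the hypothesis \eqref{eq:cond-D-L-Y} constrains you to $\ell\lesssim\alpha m^{-1}$, you cannot make it smaller. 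After sending it through Daubechies on the ball $B(R_k,\alpha/m)$ it contributes $\sim m^{3/2}(m\alpha^{-2})^{5/2}(\alpha/m)^3=m\alpha^{-2}$ per nucleus, which exceeds the target $\nu^{5/2}\alpha^{1/2}m$ by the factor $(\nu\alpha)^{-5/2}$ whenever $\nu\alpha$ is small. So the IMS reduction destroys the sharp $\nu$-dependence exactly in the deeply subcritical regime. Your near-critical fallback to Lieb--Yau has no issue with this (there $m\alpha^{-2}\lesssim\nu^{5/2}\alpha^{1/2}m$), but your subcritical branch does. Note also that you cannot simply drop the IMS and apply the convex split globally: $\epsilon T-(1-\epsilon)m\alpha^{-2}$ has a spatially constant negative residual, so its Daubechies integrals diverge; you need the sandwiching by $\phi_k$ to localize the residual, which is precisely what forces the IMS error back into play.

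The paper sidesteps this by making the crossover before the localization, not after: for $\nu\alpha\leq3/(16\pi M)$ it uses a multiplicative split of the kinetic energy ($\Tr[T+W]_-\geq\tfrac12\Tr[T+2W\chi_{\rm near}]_-+\cdots$) plus the disjointness of the balls (the ``$1/M$'' reduction to a single nucleus), and then Lemma~\ref{lem:int-ineq} together with Daubechies — no IMS, hence no $m\alpha^{-2}$ error. Only for $\nu\alpha\geq3/(16\pi M)$, where $m\alpha^{-2}\lesssim\nu^{5/2}\alpha^{1/2}m$, does it switch to IMS plus Lieb--Yau, which is where your second branch lives. To repair your proof you would have to replace the IMS step in the subcritical branch by a split of this multiplicative type, and then either (i) localize the convex-split residual by restricting Kato's inequality to the ball (which introduces its own error that must be quantified), or (ii) use a naturally localized version of the singularity-absorption like the paper's Lemma~\ref{lem:int-ineq}.
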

Finally, we come to the two inequalities which bound the many-body
ground state energy in terms of a corresponding one-body energy.

\begin{thm}[{\bf Correlation inequality}] \label{thm:correlation}
Let $\rho:\R^3\to\R$ be non-negative with  $D(\rho)<\infty$ and let
$\Phi:\mathbb R^3\to\mathbb R$ be a spherically symmetric,
non-negative function with support in the unit ball such that $\int
\Phi(x)\,dx<\infty$. For $s>0$, let $\Phi_s(x) =
s^{-3}\Phi(x/s)$. Then, for some constant $C$ independent of $N$ and
$s$, we have\footnote{We denote convolution by $*$, i.e., $(f*g)(x) =
  \int f(y) g(x-y)\, dy$. We also abuse notation and write
  $\rho*|x|^{-1}$ instead of $\big(\rho*|\cdot|^{-1}\big)(x)$.} 
\be\label{inequ:correlation} 
  \sum_{1\le i<j\le N} |x_i -x_j|^{-1} 
  \ge \sum_{j=1}^N (\rho * |x|^{-1} * \Phi_s)(x_j) - D(\rho) - C N
  s^{-1} \,. 
\ee
\end{thm}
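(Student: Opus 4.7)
The plan is a smear-and-square argument: replace each point charge $\delta_{x_j}$ by the spherically symmetric density $\Phi_s(\cdot-x_j)$, use positivity of the Coulomb kernel to bring in the reference density $\rho$, and invoke Newton's theorem to dominate the smeared pair interaction by the true one. Without loss of generality (absorbing a factor $(\int\Phi)^2$ into $C$) we may assume $\int\Phi=1$, so that $\Phi_s$ is a probability density.

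First I would establish the pointwise Newton-type bound
\[
  (|\cdot|^{-1}*\Phi_s*\Phi_s)(x)\le|x|^{-1}\qquad\text{for }x\neq0.
\]
By spherical symmetry, $\Phi_s$ is an average of uniform surface measures $\nu_r$ on concentric spheres of radius $r\le s$; Newton's theorem gives $(\nu_r*|\cdot|^{-1})(y)=\min(r^{-1},|y|^{-1})\le|y|^{-1}$, and superposing, then applying twice, yields the displayed bound. Specialising to each pair $(i,j)$,
\[
  \sum_{1\le i<j\le N}|x_i-x_j|^{-1}\ge\sum_{1\le i<j\le N}(|\cdot|^{-1}*\Phi_s*\Phi_s)(x_i-x_j).
\]

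Second, set $\mu:=\sum_{j=1}^N\Phi_s(\cdot-x_j)$. Since $\Phi_s$ is even,
\[
  (|\cdot|^{-1}*\Phi_s*\Phi_s)(x_i-x_j)=\int\!\!\int|w-w'|^{-1}\Phi_s(w-x_i)\Phi_s(w'-x_j)\,dw\,dw'.
\]
Summing this double-convolution identity over $i\neq j$, dividing by two, and separating the diagonal $i=j$ gives
\[
  \sum_{1\le i<j\le N}(|\cdot|^{-1}*\Phi_s*\Phi_s)(x_i-x_j)=D(\mu)-N\,D(\Phi_s),
\]
and the scaling $x\mapsto sx$ in the definition of $D$ shows $D(\Phi_s)=s^{-1}D(\Phi)\le Cs^{-1}$, producing the $-CNs^{-1}$ contribution.

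Third, positivity of the Coulomb kernel as a quadratic form (which follows from $\widehat{|x|^{-1}}\propto|\xi|^{-2}$, as noted in the excerpt), applied to $\mu-\rho$, gives $D(\mu)\ge 2D(\rho,\mu)-D(\rho)$; an unfolding of the convolutions together with evenness of $\Phi_s$ yields
\[
  2D(\rho,\mu)=\int\!\!\int\rho(x)|x-y|^{-1}\mu(y)\,dx\,dy=\sum_{j=1}^N(\rho*|x|^{-1}*\Phi_s)(x_j).
\]
Chaining the three inequalities produces exactly the bound in the theorem. The only delicate step is the first, where spherical symmetry of $\Phi$ is essential for the pointwise Newton bound; everything else is algebraic bookkeeping, positivity of $D(\cdot)$, and one change of variables.
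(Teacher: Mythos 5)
Your proof is correct and follows essentially the same route as the paper's: Newton's theorem (superharmonicity of $|x|^{-1}$ plus spherical symmetry of $\Phi$) for the pointwise bound $|x_i-x_j|^{-1}\ge(\Phi_s*|\cdot|^{-1}*\Phi_s)(x_i-x_j)$, adding and subtracting diagonal terms to reach $D(\mu)-ND(\Phi_s)$ with $\mu=\sum_j\Phi_s(\cdot-x_j)$, then completing the square via $D(\mu-\rho)\ge0$, and finally the scaling $D(\Phi_s)=s^{-1}D(\Phi)$. One small overreach: the opening ``WLOG $\int\Phi=1$ by absorbing $(\int\Phi)^2$ into $C$'' is not actually a reduction, since $\Phi_s$ also appears in the conclusion's term $\rho*|x|^{-1}*\Phi_s$, which rescales linearly in $\int\Phi$ and cannot be hidden in $C$; the Newton step genuinely needs $\int\Phi\le1$, and the paper's proof has the same implicit requirement, so you are faithful to what is actually proved but should not call it a WLOG.
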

The proof can be found in Appendix~\ref{sect:App}.

\begin{thm}[{\bf Lieb's Variational Principle}] \label{Lieb's
    Variational Principle} 
Let $\gamma$ be a density matrix on $L^2(\R^3)$ satisfying $2\,\Tr\c=
2\int \rho_\c(x)\, dx \leq Z$ (i.e., less than or equal to the total
number of electrons) with kernel $\rho_\c(x)=\c(x,x)$. Then  
\be\label{variational principle} 
  E({\bf Z},{\bf R};\a)\le
  2\,\Tr\big[\big(\sqrt{-\a^{-2}\D + \a^{-4}} 
  -\a^{-2} - V({\bf Z},{\bf R},\x)\big)\c\big] + D(2\rho_\gamma) \,.  
\ee 
\end{thm}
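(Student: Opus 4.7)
The statement is an instance of Lieb's variational principle for fermionic ground-state energies, so my plan is to construct an antisymmetric $Z$-particle trial state $\Psi\in\bigwedge^ZL^2(\R^3\times\{-1,1\})$ from $\gamma$ and to upper-bound $\langle\Psi,H({\bf Z},{\bf R};\alpha)\Psi\rangle$ by the right-hand side of \eqref{variational principle}. The first step is a spin doubling: set $\tilde\gamma=\gamma\otimes\mathbf{1}_{\mathbb{C}^2}$ on $L^2(\R^3\times\{-1,1\})$, so that $0\le\tilde\gamma\le\mathbf{1}$, $\Tr\tilde\gamma=2\Tr\gamma\le Z$, and $\rho_{\tilde\gamma}=2\rho_\gamma$. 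Since the one-body operator $h:=\sqrt{-\alpha^{-2}\Delta+\alpha^{-4}}-\alpha^{-2}-V({\bf Z},{\bf R},\x)$ acts as the identity on the spin factor, $\Tr[h\tilde\gamma]=2\Tr[h\gamma]$ and $D(\rho_{\tilde\gamma})=D(2\rho_\gamma)$, and the claim reduces to producing $\Psi$ with $\langle\Psi,H({\bf Z},{\bf R};\alpha)\Psi\rangle\le\Tr[h\tilde\gamma]+D(\rho_{\tilde\gamma})$.

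To construct $\Psi$ I would first augment $\tilde\gamma$ to a density matrix $\tilde\gamma'$ of trace exactly $Z$, by appending $Z-\Tr\tilde\gamma$ mutually orthogonal orbitals supported far from every nucleus $R_k$; both their kinetic and potential contributions can be made arbitrarily small and will disappear in a limit taken at the end. Then, using the spectral theorem together with a Birkhoff/Ky~Fan argument, I would write $\tilde\gamma'=\sum_k p_kP_k$ as a convex combination of rank-$Z$ orthogonal projections $P_k=\sum_{j=1}^Z|u_j^{(k)}\rangle\langle u_j^{(k)}|$. Each $P_k$ produces a Slater determinant $\Psi_k=u_1^{(k)}\wedge\cdots\wedge u_Z^{(k)}$ whose energy is computed by Wick's theorem to be
\begin{equation*}
  \langle\Psi_k,H({\bf Z},{\bf R};\alpha)\Psi_k\rangle=\Tr[hP_k]+D(\rho_{P_k})-X(P_k),
\end{equation*}
where $X(P_k)=\tfrac12\iint|P_k(x,y)|^2|x-y|^{-1}\,dx\,dy\ge0$ is the exchange term.

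Applying Lieb's 1981 variational principle to the Hartree--Fock functional $\mathcal E(\sigma)=\Tr[h\sigma]+D(\rho_\sigma)-X(\sigma)$ then furnishes a rank-$Z$ projection $P_{k^\star}$ satisfying $\mathcal E(P_{k^\star})\le\mathcal E(\tilde\gamma')$; combining this with $X(\tilde\gamma')\ge0$ and $E({\bf Z},{\bf R};\alpha)\le\langle\Psi_{k^\star},H\Psi_{k^\star}\rangle=\mathcal E(P_{k^\star})$ yields $E({\bf Z},{\bf R};\alpha)\le\Tr[h\tilde\gamma']+D(\rho_{\tilde\gamma'})$, which reduces to \eqref{variational principle} after the auxiliary orbitals are sent to infinity. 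The main obstacle is the concavity argument behind Lieb's principle: one must show that modifying two spectral weights of $\sigma$ in opposite directions, subject to $0\le\sigma\le\mathbf{1}$ and $\Tr\sigma=Z$, drives the extremum of $\mathcal E$ to the boundary of the admissible region and hence to a projection. A subsidiary issue is that the non-locality of the relativistic kinetic energy makes the "orbitals at infinity" construction less automatic than in the non-relativistic case, but the exponential decay of the Bessel kernel appearing in Theorem~\ref{IMS} ensures that the cross-terms between the near-nucleus and far-away orbitals are negligible.
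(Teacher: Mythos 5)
The paper does not prove this statement; it cites Lieb~\cite{Lieb2} and simply records the result, so there is no in-paper proof to compare against. On its own terms your proposal has a genuine gap: its central step is circular. After the spin doubling and the augmentation to a trace-$Z$ density matrix $\tilde\gamma'$, you invoke ``Lieb's 1981 variational principle'' to produce a rank-$Z$ projection $P_{k^\star}$ with $\mathcal E(P_{k^\star})\le\mathcal E(\tilde\gamma')$. But that existence statement \emph{is} the content of the theorem you are asked to prove. Deferring it at the end as ``the main obstacle'' --- showing that interpolating two spectral weights drives the Hartree--Fock functional to the boundary of the admissible set --- does not execute it, and that concavity argument carries essentially all the weight. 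As written, the proposal reduces the statement to the same citation the paper gives, only with more machinery around it.

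The extra machinery is in fact a dead end. The Birkhoff/Ky~Fan decomposition $\tilde\gamma'=\sum_k p_kP_k$ and the Wick computation of $\langle\Psi_k,H\Psi_k\rangle$ are introduced but never used in the chain of inequalities, and they could not be used: $D$ is a nonnegative quadratic form, hence convex, so $\sum_k p_k D(\rho_{P_k})\ge D(\rho_{\tilde\gamma'})$ points the wrong way --- one cannot obtain an upper bound by averaging over the decomposition. What Lieb's argument actually does is different: given $\tilde\gamma'$, one perturbs a \emph{single pair} of spectral weights $(\lambda_i,\lambda_j)\mapsto(\lambda_i+t,\lambda_j-t)$ within the admissible set and observes that the Hartree--Fock energy is a quadratic in $t$ whose leading coefficient is $-$ (an exchange integral) $\le0$; thus the restriction to any such segment is concave and the minimum is at an endpoint, allowing one to push all eigenvalues to $0$ or $1$ and arrive at a projection without increasing $\mathcal E$. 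That is the step that must be carried out. Once it is, your concluding chain $E\le\mathcal E(P_{k^\star})\le\mathcal E(\tilde\gamma')\le\Tr[h\tilde\gamma']+D(\rho_{\tilde\gamma'})$ and the removal of the auxiliary far-away orbitals are fine (though note the kinetic contribution of a far-away orbital is controlled by its momentum content, by simply spreading it out, not by any decay of the Bessel kernel --- the non-locality of $T_\beta$ is not the issue there).
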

The factors 2 above are due to the spin degeneracy, see~\cite{Lieb2}.

\subsection{Thomas-Fermi theory }\label{sec:tf}

Consider ${\mathbf z}=(z_1,\ldots,z_M)\in\R_+^M$ and ${\mathbf
  r}=(r_1,\ldots,r_M)\in\R^{3M}$.  Let $0\le\rho\in L^{5/3}(\mathbb
R^3)\cap L^1(\mathbb R^3)$ then the (non-relativistic) Thomas-Fermi
(TF) energy functional, ${\mathcal E}^{{\rm TF}}$, is defined as 
$$ 
  {\mathcal E}^{{\rm TF}}(\rho) = \mfr{3}{10}(3\pi^2)^{2/3} \int
  \rho(x)^{5/3}\, dx - \int V({\mathbf z},{\mathbf r},x)\rho(x)\, dx+
  D(\rho) \,, 
$$
where $V$ is as in (\ref{V}).

By the Hardy-Littlewood-Sobolev inequality the Coulomb energy,
$D(\rho)$, is finite for functions $\rho\in L^{5/3}(\mathbb R^3)\cap
L^1(\mathbb R^3)\subset L^{6/5}(\mathbb R^3)$. Therefore, the
TF-energy functional is well-defined. Here we only state without proof
the properties about TF-theory which we use throughout the paper. The
original proofs can be found in \cite{Lieb-Simon} and \cite{Lieb1}.  

\begin{thm}[{\bf Thomas-Fermi minimizer}] 
For all ${\mathbf z}=(z_1,\ldots,z_M)\in \R_+^M$ and ${\mathbf
  r}=(r_1,\ldots,r_M)\in\R^{3M}$ there exists a unique non-negative
$\rho^{{\rm TF}}({\mathbf z},{\mathbf r},x)$ such that $\int
\rho^{{\rm TF}}({\mathbf z},{\mathbf r},x)\,dx=\sum_{k=1}^Mz_k$ and
$$
  {\mathcal E}^{{\rm TF}}(\rho^{{\rm TF}}) = \inf
  \big\{{\mathcal E}^{{\rm TF}} (\rho)\ \big| \ 0\le\rho\in
  L^{5/3}(\R^3)\cap L^1(\R^3)\big\} \,.
$$
We shall denote by $E^{{\rm TF}}({\mathbf z},{\mathbf r})={\mathcal
  E}^{{\rm TF}}(\rho^{{\rm TF}})$ the TF-energy. Moreover, let
\begin{equation}\label{eq:tfpotgeneral}
  V^{{\rm TF}}({\mathbf z},{\mathbf r},x) = V({\mathbf z},{\mathbf
    r},x) -\rho^{{\rm TF}}({\mathbf z},{\mathbf r},\cdot) * |x|^{-1}\,
\end{equation}
be the TF-potential, then $V^{{\rm TF}}>0$ and $\rho^{{\rm TF}}>0$,
and $\rho^{{\rm TF}}$ is the unique solution in $L^{5/3}(\R^3)\cap
L^1(\R^3)$ to the TF-equation:
\begin{equation}\label{eq:tfeqgeneral} 
   V^{{\rm TF}}({\mathbf z},{\mathbf r},x) =
   \mfr{1}{2}(3\pi^2)^{2/3}
   \rho^{{\rm TF}}({\mathbf z},{\mathbf r},x)^{2/3}\,.
\end{equation}
\end{thm}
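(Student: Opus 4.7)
The plan is to follow the variational method of Lieb--Simon \cite{Lieb-Simon}, adapted to the molecular case. First I would establish that the functional $\mathcal E^{\rm TF}$ is bounded below on the admissible set $\mathcal A=\{\rho\ge0:\rho\in L^{5/3}(\R^3)\cap L^1(\R^3)\}$. The kinetic term is nonnegative, $D(\rho)\ge0$, and the attractive term is controlled by splitting $|x-r_k|^{-1}=g_k+h_k$ with $g_k\in L^{5/2}$ (supported near $r_k$) and $h_k\in L^\infty$; H\"older gives $\int V\rho\le \epsilon\|\rho\|_{5/3}^{5/3}+C_\epsilon\|\rho\|_1$, which combined with an $L^1$ bound (obtained from the interplay of the kinetic and repulsive terms, or by restricting to $\int\rho\le Z+1$ since larger $\int\rho$ makes the energy increase) yields $\mathcal E^{\rm TF}(\rho)\ge -C$.

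Next, I would establish existence through the direct method. Take a minimizing sequence $\rho_n$; the lower bound forces $\|\rho_n\|_{5/3}$ and $D(\rho_n)$ to be uniformly bounded, and by the Hardy--Littlewood--Sobolev inequality the latter controls $\|\rho_n\|_{6/5}$, from which one can extract a subsequence converging weakly in $L^{5/3}\cap L^{6/5}$ to some $\rho^{\rm TF}\ge 0$. The functional $\rho\mapsto\int\rho^{5/3}$ is weakly lower semicontinuous (convex and continuous); since $\sqrt{D(\cdot)}$ is a norm (by Fourier representation), $D$ is weakly lsc; and the linear term $\int V\rho$ is weakly continuous on the relevant space since $V\in L^{5/2}+L^\infty$. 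Thus $\rho^{\rm TF}$ is a minimizer. Uniqueness follows at once from strict convexity of $\rho\mapsto\int\rho^{5/3}$ combined with the (strict) convexity of $D$.

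The Euler--Lagrange analysis is standard: testing with $\rho^{\rm TF}+t(\sigma-\rho^{\rm TF})$ for admissible $\sigma\ge0$ and letting $t\downarrow0$ yields the variational inequalities $\tfrac12(3\pi^2)^{2/3}(\rho^{\rm TF})^{2/3}=V^{\rm TF}-\mu$ on $\{\rho^{\rm TF}>0\}$ and $V^{\rm TF}-\mu\le 0$ on $\{\rho^{\rm TF}=0\}$, where $\mu$ is the Lagrange multiplier for a possible mass constraint; since we are minimizing on the unconstrained set, $\mu=0$. So the TF equation holds in the form claimed, pointwise where $\rho^{\rm TF}>0$, and otherwise with $V^{\rm TF}\le 0$.

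The main obstacle, and the last step, is to upgrade this to $V^{\rm TF}>0$ and $\rho^{\rm TF}>0$ everywhere (Teller's lemma for molecules), which then forces $\int \rho^{\rm TF}=\sum z_k$ via Newton's theorem. The plan is to observe that in the distributional sense
\begin{equation*}
-\Delta V^{\rm TF}=4\pi\Bigl(\sum_{k=1}^M z_k\delta_{r_k}-\rho^{\rm TF}\Bigr),
\end{equation*}
so $V^{\rm TF}$ is superharmonic on the open set $U=\{\rho^{\rm TF}=0\}\setminus\{r_1,\ldots,r_M\}$. On $U$ we also have $V^{\rm TF}\le 0$; but $V^{\rm TF}(x)\to 0$ as $|x|\to\infty$ and $V^{\rm TF}(x)\to+\infty$ as $x\to r_k$, so by the minimum principle applied to $V^{\rm TF}$ on connected components of $U$ (whose boundaries lie in $\{\rho^{\rm TF}>0\}\cup\{\infty\}$, where by the TF equation $V^{\rm TF}\ge 0$), one gets $V^{\rm TF}\ge 0$ on $U$, hence $V^{\rm TF}\equiv 0$ on $U$, and then $\Delta V^{\rm TF}=0$ there — combined with the singularities at $r_k$ this forces $U=\emptyset$, i.e., $\rho^{\rm TF}>0$ everywhere. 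The TF equation then gives $V^{\rm TF}>0$. Integrating $-\Delta V^{\rm TF}$ over a large ball and letting the radius tend to infinity (using $V^{\rm TF}\to 0$ at infinity together with the decay following from $\rho^{\rm TF}\in L^1$) yields $\int\rho^{\rm TF}=\sum_k z_k$.
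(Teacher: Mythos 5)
The paper does not prove this theorem; it is stated without proof and attributed to \cite{Lieb-Simon} and \cite{Lieb1}, so your proposal must be measured against the classical literature rather than against an argument in the present paper. Your sketch follows the standard Lieb--Simon route in broad outline (direct method, strict convexity, Euler--Lagrange, Teller-type maximum principle), but one step is genuinely incomplete. In the passage from $V^{\rm TF}\equiv 0$ on $U$ to $U=\emptyset$, you invoke ``the singularities at $r_k$'' as if they produce an immediate contradiction. They do not: the boundary of a bounded component $U_0$ of $U$ lies in the \emph{closure} of $\{\rho^{\rm TF}>0\}$, not in $\{\rho^{\rm TF}>0\}$ itself, and by continuity $V^{\rm TF}=0$ on $\partial U_0$ (approaching from either side), so there is no boundary-value clash, and if $r_k\notin\overline{U_0}$ the Coulomb singularity plays no role. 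Moreover $V^{\rm TF}$ is \emph{not} harmonic on $\R^3\setminus\{r_k\}$ (it is subharmonic on $\{\rho^{\rm TF}>0\}$), so unique continuation for harmonic functions cannot be used to propagate $V^{\rm TF}\equiv 0$ out of $U_0$. Closing this gap requires a strong maximum principle for the semilinear equation $-\Delta V^{\rm TF}=-4\pi c\,(V^{\rm TF})_+^{3/2}$ (a Vazquez-type argument, which works because $\int_0 (s\,\beta(s))^{-1/2}\,ds=\infty$ for $\beta(s)=s^{3/2}$), or alternatively the Sommerfeld asymptotics $V^{\rm TF}\sim\gamma|x|^{-4}$, either of which forces $V^{\rm TF}\equiv 0$ on all of $\R^3\setminus\{r_k\}$ and hence a contradiction with the poles.

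Two smaller points: the lower bound and $L^1$ control are not quite as you describe. Phrasing the $L^1$ control as ``restrict to $\int\rho\le Z+1$ since larger $\int\rho$ makes the energy increase'' is circular (that monotonicity is precisely what the constrained analysis establishes); the standard route is to compare $V$ with the potential of smeared nuclear charges $\mu=\sum_k\mu_k$ (uniform on small balls), write $-\int V\rho+D(\rho)=D(\rho-\mu)-D(\mu)-\int(V-\mu*|x|^{-1})\rho$, and absorb the last term into the kinetic energy by H\"older; this gives a lower bound with no reference to $\|\rho\|_1$ at all. Also, the decomposition $V\in L^{5/2}+L^\infty$ is insufficient for weak continuity of $\int V\rho$ along the minimizing sequence, since $L^\infty=(L^1)^*$ and you do not control weak $L^1$ convergence; you should instead use $V\in L^{5/2}+L^{6}$ (the far tail $|x-r_k|^{-1}\chi_{|x-r_k|\ge R}$ lies in $L^6(\R^3)$), pairing against the $L^{5/3}$ bound from the kinetic term and the $L^{6/5}$ bound that Hardy--Littlewood--Sobolev gives you from the boundedness of $D(\rho_n)$. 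The remaining ingredients (strict convexity giving uniqueness, the Euler--Lagrange inequalities, integration of $-\Delta V^{\rm TF}$ over large balls to get $\int\rho^{\rm TF}=\sum_k z_k$ once decay is known) are correctly identified.
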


Very crucial for a semi-classical approach is the {\it scaling}
behavior of the TF-potential.  It says that for any positive parameter
$h$, 
\begin{eqnarray}\label{scaling} 
  V^{{\rm TF}}({\mathbf z},{\mathbf r},x) &=&
  h^{4} V^{{\rm TF}}(h^{-3}{\mathbf z},h{\mathbf r},hx)\,,
  \\
  \label{scaling:rho}
  \rho^{{\rm TF}}({\mathbf z},{\mathbf r},x) &=& h^{6}\rho^{{\rm
      TF}}(h^{-3}{\mathbf z},h{\mathbf r},hx)\,,
  \\ 
  \label{scaling:E-TF}
  E^{\rm TF}({\mathbf z},{\mathbf r})&=&
  h^{7}E^{\rm TF}(h^{-3}{\mathbf z},h{\mathbf r})\,.  
\end{eqnarray}
By $h{\mathbf r}$ we mean that each coordinate is scaled by $h$, and
likewise for $h^{-3}{\mathbf z}$ and $hx$. By the TF-equation
\eqref{eq:tfeqgeneral}, the equations \eqref{scaling} and
\eqref{scaling:rho} are obviously equivalent. Notice that the
Coulomb-potential (the potential $V$ in \eqref{V}) has the claimed
scaling behavior. The rest follows from the uniqueness of the solution
of the TF-energy functional. 

We shall now establish the crucial estimates that we need about the
TF-potential. For each $k=1,\ldots,M$ we define the function
\begin{equation}\label{eq:Wdefinition}
  W_k({\mathbf z},{\mathbf r},x)=V^{\rm TF}({\mathbf z},{\mathbf r},x)
  -z_k|x-r_k|^{-1}\,.
\end{equation}
The function $W_k$ can be continuously extended to $x=r_k$. 
 
The first estimate in the next theorem is very similar to
a corresponding estimate in \cite{Ivrii-Sigal} (recall that 
the function \(d_{\bf r}\) was defined in \eqref{ddefinitionBIS}). 
\begin{thm}[{\bf Estimate on $V^{\rm TF}$}]\label{thm:tfestimate}
Let ${\mathbf z}=(z_1,\ldots,z_M)\in \R_+^M$ and
${\mathbf r}=(r_1,\ldots,r_M)\in \R^{3M}$.
For all multi-indices $\eta\in\mathbb N^3$ and all $x$ with $d_{\bf
  r}(x)\ne0$ we have 
\begin{equation}\label{eq:tfdf}
  \left|\partial^\eta_x\V({\mathbf z},{\mathbf r},x)\right|\leq C_\eta
  \min\{d_{\bf r}(x)^{-1}, d_{\bf r}(x)^{-4}\}\,
  d_{\bf r}(x)^{-|\eta|}\,, 
\end{equation}
where $C_\eta>0$ is a constant which depends on $\eta$, $z_1,\ldots,z_M$,
and $M$.

Moreover, for $|x-r_k|<r_{\min}/2$, where $r_{\min}=\min_{k\ne
  \ell}|r_k-r_\ell|$, we have  
\begin{equation}\label{eq:westimate}
  {}-C\leq W_k({\mathbf z},{\mathbf r},x)
  \leq Cr_{\min}^{-1}+C \,,
\end{equation}
where the constants $C>0$ here depend on $z_1,\ldots,z_M$, and $M$. 
\end{thm}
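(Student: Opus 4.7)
The plan is to exploit the semilinear elliptic equation satisfied by $V^{\rm TF}$ off the nuclei. Combining \eqref{eq:tfeqgeneral} with $V^{\rm TF}=V-\rho^{\rm TF}*|\cdot|^{-1}$ yields, on $\R^3\setminus\{r_1,\ldots,r_M\}$,
\[
   -\Delta V^{\rm TF}(x)=-4\pi a^{3/2}\bigl(V^{\rm TF}(x)\bigr)^{3/2},\qquad a=2(3\pi^2)^{-2/3},
\]
together with the monotonicity $0\le V^{\rm TF}\le V$ (from $\rho^{\rm TF}\ge 0$) and the decay $V^{\rm TF}(x)\to 0$ as $|x|\to\infty$, a consequence of neutrality $\int\rho^{\rm TF}=\sum z_k$. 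I would first establish the pointwise bound on $V^{\rm TF}$ itself, then extract the derivative bounds by elliptic rescaling, and finally handle the $W_k$ estimate separately.

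\textbf{Pointwise bound on $V^{\rm TF}$.} The ``small $d_{\bf r}$'' half $V^{\rm TF}(x)\le Cd_{\bf r}(x)^{-1}$ is immediate from $V^{\rm TF}\le V\le(\sum_j z_j)/d_{\bf r}(x)$. For the ``large $d_{\bf r}$'' half I would compare with Sommerfeld's explicit solution $\phi(x)=A_0|x-x_0|^{-4}$, where $A_0$ is the unique constant making $\phi$ a classical solution of the same semilinear equation away from a reference point $x_0$. Choose a radius $R$ (depending only on $z_1,\ldots,z_M,M$) so that $B_R(x_0)\supset\{r_1,\ldots,r_M\}$ and $V^{\rm TF}\le\phi$ on $\partial B_R(x_0)$. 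A nonlinear maximum-principle argument applied to $w=V^{\rm TF}-\phi$ outside $B_R(x_0)$---at any putative positive interior maximum, $\Delta w\le 0$ while the common equation forces $-\Delta w<0$---yields $V^{\rm TF}\le\phi$ there, hence $V^{\rm TF}(x)\le Cd_{\bf r}(x)^{-4}$ for $d_{\bf r}(x)$ large. Continuity of $V^{\rm TF}$ handles the intermediate annular region.

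\textbf{Derivative bounds.} For $|\eta|\ge 1$, fix $x$ with $d_{\bf r}(x)>0$, set $s=\tfrac12 d_{\bf r}(x)$, and define $\widetilde V(y)=V^{\rm TF}(x+sy)$ on $B_1(0)$. Then $\widetilde V$ solves $-\Delta\widetilde V=-4\pi a^{3/2}s^2\widetilde V^{3/2}$ and by the previous step $\|\widetilde V\|_{L^\infty(B_1)}\le C\min(s^{-1},s^{-4})=:M$, so that $s^2\|\widetilde V\|_\infty^{1/2}\le C\min(s^{3/2},1)$ is uniformly bounded. Standard interior Calder\'on--Zygmund and Schauder estimates applied inductively to the (smooth, positive) solution give $|\partial^\eta\widetilde V(0)|\le C_\eta M$ for $|\eta|\le 3$; rescaling yields
\[
   |\partial^\eta V^{\rm TF}(x)|=s^{-|\eta|}\bigl|\partial^\eta\widetilde V(0)\bigr|\le C_\eta\,d_{\bf r}(x)^{-|\eta|}\min\bigl(d_{\bf r}(x)^{-1},d_{\bf r}(x)^{-4}\bigr).
\]

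\textbf{The $W_k$ bound.} Write $W_k(x)=\sum_{j\ne k}z_j/|x-r_j|-(\rho^{\rm TF}*|\cdot|^{-1})(x)$. For $|x-r_k|<r_{\min}/2$ each $|x-r_j|\ge r_{\min}/2$ when $j\ne k$, so $\sum_{j\ne k}z_j/|x-r_j|\le Cr_{\min}^{-1}$; since $\rho^{\rm TF}\ge 0$ this supplies the upper bound $W_k\le Cr_{\min}^{-1}+C$. For the lower bound, $\rho^{\rm TF}*|\cdot|^{-1}$ is uniformly bounded on $\R^3$: split the convolution at $|x-y|=1$ and apply H\"older to the near part (using $\rho^{\rm TF}\in L^{5/3}(\R^3)$ and $|\cdot|^{-1}\in L^{5/2}(B_1)$) together with $\|\rho^{\rm TF}\|_1$ on the far part, both norms being controlled in terms of $z_1,\ldots,z_M,M$ through Thomas-Fermi theory. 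The main obstacle is arranging the Sommerfeld comparison uniformly in ${\bf r}$: the threshold $R$ requires a uniform decay statement $V^{\rm TF}(x)\to 0$ as $|x-x_0|\to\infty$ with rate depending only on $z_1,\ldots,z_M,M$, which can be obtained either by iterating the comparison on concentric shells or from a-priori Thomas-Fermi energy bounds combined with the scaling \eqref{scaling}.
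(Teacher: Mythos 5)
Your overall plan---derive the pointwise bound $V^{\rm TF}(x)\le C\min\{d_{\bf r}(x)^{-1},d_{\bf r}(x)^{-4}\}$ from a Sommerfeld-type comparison, then extract derivative bounds by rescaling and interior elliptic estimates, and handle $W_k$ via boundedness of $\rho^{\rm TF}*|\cdot|^{-1}$---is the standard one and matches the approach in~\cite{SS}, to which the paper defers for this theorem. The easy half $V^{\rm TF}\le V\le Cd_{\bf r}^{-1}$, the rescaling step, and both $W_k$ bounds are all essentially right.

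The genuine gap is in your Sommerfeld comparison. You place the barrier $\phi(y)=A_0|y-x_0|^{-4}$ at a single reference point $x_0$ and ask for a radius $R$, ``depending only on $z_1,\dots,z_M,M$'', such that $B_R(x_0)$ contains all nuclei. This is not possible: the nuclear configuration ${\bf r}$ is arbitrary, and its diameter can be as large as one likes, so any such $R$ must depend on ${\bf r}$. The resulting constant in $V^{\rm TF}\le Cd_{\bf r}^{-4}$ would then also depend on ${\bf r}$, contradicting the theorem (and Remark~\ref{rem:uniform-in-z_k}, which records that the constants depend only on an upper bound for the $z_k$ and on $M$). Concretely, for a diatomic configuration with $|r_1-r_2|=L$ large and $x$ at distance $\approx 1$ from $r_1$ but $\approx L$ from $x_0$, the global barrier $A_0|x-x_0|^{-4}\approx A_0L^{-4}$ is far too small to dominate $V^{\rm TF}(x)$, yet the theorem only claims $V^{\rm TF}(x)\le C$ there. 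Your two suggested repairs do not close this: iterating on concentric shells around a fixed $x_0$ does not help because the nuclei are not radially arranged about $x_0$, and the TF scaling \eqref{scaling} is a global dilation that leaves the \emph{shape} of the configuration (and hence the diameter-to-separation ratio) invariant, so it cannot produce uniformity in ${\bf r}$.

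The fix is to run the comparison \emph{locally}, centered at the evaluation point rather than at a fixed $x_0$. Fix $x$ with $d_{\bf r}(x)=d>0$ and $d'<d$, and use a smooth barrier on $B_{d'}(x)$ that blows up at the boundary, e.g. $\psi(y)=B\,(d')^4\big((d')^2-|y-x|^2\big)^{-4}$. A direct computation gives $\Delta\psi\le C B^{-1/2}(d')^2\,c^{-1}\cdot c\,\psi^{3/2}$, so for $B$ a fixed numerical constant one has $\Delta\psi\le c\,\psi^{3/2}$ on $B_{d'}(x)$; since $V^{\rm TF}$ solves $\Delta V^{\rm TF}=c\,(V^{\rm TF})^{3/2}$ there (no nuclei in $B_{d'}(x)$) and $\psi\to\infty$ at $\partial B_{d'}(x)$ while $V^{\rm TF}$ is bounded, the comparison principle for the increasing nonlinearity $t\mapsto ct^{3/2}$ yields $V^{\rm TF}\le\psi$ on $B_{d'}(x)$, whence $V^{\rm TF}(x)\le B(d')^{-4}$; let $d'\uparrow d$. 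This gives $V^{\rm TF}(x)\le Bd_{\bf r}(x)^{-4}$ with $B$ absolute, no decay-at-infinity input and no dependence on ${\bf r}$. With this in hand, your rescaled interior Schauder argument (which only needs $s^2\|\widetilde V\|_\infty^{1/2}\le C$, a consequence of the corrected pointwise bound) produces the derivative estimates, and your $W_k$ argument (using $\rho^{\rm TF}=c'(V^{\rm TF})^{3/2}\le C\min\{d_{\bf r}^{-3/2},d_{\bf r}^{-6}\}$ to control $\|\rho^{\rm TF}\|_{5/3}$ uniformly in ${\bf r}$) then goes through as written.
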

\begin{cor}[{\bf Estimate on $\rho^{\rm
      TF}*|x|^{-1}*(\delta_0-\Phi_t)$}]\label{estimate related to
    correlation} 
Let $\Phi:\mathbb R^3\to\mathbb R$ be a spherically symmetric,
positive function with support in the unit ball and integral 1, and
for $t>0$, let $\Phi_t(x) = t^{-3}\Phi(x/t)$. 
If $\rho^{{\rm TF}} (x)= \rho^{{\rm TF}}({\mathbf z},{\mathbf r},x)$ then
\be \label{correl:upper bound}
    0\,\le \,\rho^{{\rm TF}} * |x|^{-1} - \rho^{{\rm TF}} * |x|^{-1} * 
    \Phi_t \,\le \, 
    \left\{\begin{array}{cl} C\,t
    \min\{d_{\bf r}(x)^{-1/2}, d_{\bf r}(x)^{-2}\}\,
    & \text{for }\ d_{\bf r}(x) \ge 2t\\C\,
    t^{1/2}\,& \text{for }\ d_{\bf r}(x)<2t\end{array}\right. 
\ee    
with the function \(d_{\bf r}\) from \eqref{ddefinitionBIS}, and some
constant $C>0$ depending on $z_1,\ldots,z_M$, and $M$.
\end{cor}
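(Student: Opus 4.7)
I plan to realise the difference as a single convolution
\[
  f(x):=\rho^{\rm TF}*|x|^{-1}(x)-\rho^{\rm TF}*|x|^{-1}*\Phi_t(x)=(\rho^{\rm TF}*g)(x),
  \qquad g(z):=|z|^{-1}-|z|^{-1}*\Phi_t(z),
\]
and to exploit the structure of the radial, compactly supported kernel \(g\). By Newton's theorem applied to the radial probability density \(\Phi_t\), \(g\) vanishes outside \(\overline{B_t}\), satisfies \(0\le g(|z|)\le |z|^{-1}\) on \(B_t\), and (by a direct spherical-shells computation) \(\int g(|z|)\,dz=\tfrac{2\pi}{3}\bigl(\int\Phi(w)|w|^2\,dw\bigr)t^2\le Ct^2\). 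The lower bound \(f\ge 0\) is then immediate from \(g,\rho^{\rm TF}\ge 0\) (equivalently, \(\rho^{\rm TF}*|x|^{-1}\) is superharmonic and \(\Phi_t\) is a radial probability measure).

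For the outer region \(d_{\bf r}(x)\ge 2t\) I would use the pointwise bound
\(\rho^{\rm TF}(y)\le C\min\{d_{\bf r}(y)^{-3/2},d_{\bf r}(y)^{-6}\}\), obtained by raising to the \(3/2\)-power the \(|\eta|=0\) case of Theorem~\ref{thm:tfestimate} applied to \(\V\) and invoking the TF-equation \eqref{eq:tfeqgeneral}. On \(B_t(x)\) we have \(d_{\bf r}(y)\ge d_{\bf r}(x)/2\), so this majorises to the \(y\)-independent constant \(C\min\{d_{\bf r}(x)^{-3/2},d_{\bf r}(x)^{-6}\}\) on the support of \(g(|x-\cdot|)\). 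Combined with \(\int g\le Ct^2\),
\[
  f(x)\le Ct^2\min\{d_{\bf r}(x)^{-3/2},d_{\bf r}(x)^{-6}\}.
\]
Since \(t/d_{\bf r}(x)\le 1/2\) one has \(t^2 d_{\bf r}(x)^{-3/2}\le\tfrac12 t\, d_{\bf r}(x)^{-1/2}\) and \(t^2 d_{\bf r}(x)^{-6}\le\tfrac12 t\, d_{\bf r}(x)^{-5}\), which together yield the claimed \(Ct\min\{d_{\bf r}(x)^{-1/2},d_{\bf r}(x)^{-2}\}\) (the two minima coincide for \(d_{\bf r}(x)\le 1\), and \(d_{\bf r}(x)^{-5}\le d_{\bf r}(x)^{-2}\) for \(d_{\bf r}(x)\ge 1\)).

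For the inner region \(d_{\bf r}(x)<2t\), the case \(t\ge r_{\min}/12\) is trivially handled by \(f\le 2\|\rho^{\rm TF}*|x|^{-1}\|_\infty\le C\le Ct^{1/2}\). For \(t<r_{\min}/12\) I fix the nearest nucleus \(r_k\); then \(B_t(x)\subset B_{3t}(r_k)\subset B_{r_{\min}/2}(r_k)\) avoids the other nuclei, so \(d_{\bf r}(y)=|y-r_k|\) and \(\rho^{\rm TF}(y)\le C|y-r_k|^{-3/2}\) on \(B_t(x)\). Substituting \(w=x-y\), \(a=r_k-x\) with \(|a|<2t\), H\"older's inequality with \(p=5/2\), \(q=5/3\) gives
\[
  f(x)\le C\int_{|w|\le t}\frac{g(|w|)}{|a+w|^{3/2}}\,dw
  \le C\Big(\int_{|w|\le t}g(|w|)^p\,dw\Big)^{1/p}\Big(\int_{|w|\le t}|a+w|^{-3q/2}\,dw\Big)^{1/q}.
\]
The first factor is \(\le Ct^{(3-p)/p}\) using \(g(|w|)\le|w|^{-1}\) (convergent since \(p<3\)); the second is \(\le Ct^{(3-3q/2)/q}\) using \(B_t(a)\subset B_{3t}(0)\) (convergent since \(q<2\)). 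Their product equals \(Ct^{3/p+3/q-5/2}=Ct^{1/2}\).

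The main obstacle is this last H\"older balance: the singularity \(|y-r_k|^{-3/2}\) lies in \(L^q_{\rm loc}\) only for \(q<2\), while the kernel \(g(|w|)\sim|w|^{-1}\) near \(0\) forces \(p<3\); the conjugacy \(1/p+1/q=1\) just barely accommodates both constraints, and the product of exponents it produces is precisely \(3(1/p+1/q)-5/2=1/2\), explaining why the rate \(t^{1/2}\) (and no better power) appears in the stated inner bound.
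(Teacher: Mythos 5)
Your proof is essentially correct and takes a genuinely different route from the paper's. The paper first proves the gradient bound $|\nabla(\rho^{\rm TF}*|\cdot|^{-1})|\le C\min\{d_{\bf r}^{-1/2},d_{\bf r}^{-2}\}$, handles the outer region by a first-order Taylor estimate, and handles the inner region by deriving a \emph{global} H\"older estimate $|\rho^{\rm TF}*|\cdot|^{-1}(x)-\rho^{\rm TF}*|\cdot|^{-1}(y)|\le C|x-y|^{1/2}$ from the gradient bound (integrating along the segment, with a sum over nuclei) and then averaging it against $\Phi_t$. You instead work with the convolution kernel $g=|z|^{-1}-|z|^{-1}*\Phi_t$, using Newton's theorem to control its support, sign, and total mass $\int g\le Ct^2$, and with the pointwise density bound $\rho^{\rm TF}\le C\min\{d_{\bf r}^{-3/2},d_{\bf r}^{-6}\}$ obtained from the TF equation and Theorem~\ref{thm:tfestimate}. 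Your outer-region estimate $f\le Ct^2\min\{d_{\bf r}^{-3/2},d_{\bf r}^{-6}\}$ is in fact sharper than needed, and the conversion using $t/d_{\bf r}\le 1/2$ is correct. Your inner-region H\"older argument with $(p,q)=(5/2,5/3)$ balances exactly to the exponent $1/2$, as you observe. In short, yours is a kernel-side argument where the paper's is a function-side (regularity of the averaged potential) argument; both rely ultimately on the TF estimates of Theorem~\ref{thm:tfestimate}, but through different intermediates.

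There is, however, one genuine issue concerning uniformity of the constant. For $d_{\bf r}(x)<2t$ you split at $t\ge r_{\rm min}/12$ and use the trivial bound $f\le 2\|\rho^{\rm TF}*|\cdot|^{-1}\|_\infty\le C\le C't^{1/2}$; the final step requires $t$ to be bounded below, and the only lower bound available is $t\ge r_{\rm min}/12$, which makes $C'$ depend on $r_{\rm min}$. The corollary asserts (and Remark~\ref{rem:uniform-in-z_k} re-emphasizes) that $C$ depends only on $z_1,\ldots,z_M$ and $M$, so this introduces a dependence not allowed by the statement. The fix is to drop the case split entirely: since $d_{\bf r}(z)^{-3/2}=\max_j|z-r_j|^{-3/2}\le\sum_j|z-r_j|^{-3/2}$, the same H\"older argument gives, for every $j$ and every $t$,
\[
  \int_{|w|\le t} g(|w|)\,|w-(x-r_j)|^{-3/2}\,dw
  \le \|g\|_{L^{5/2}(B_t)}\,\Bigl(\int_{|w|\le t}|w-(x-r_j)|^{-5/2}\,dw\Bigr)^{3/5}\le Ct^{1/2},
\]
where for $|x-r_j|\le 2t$ one encloses the singularity in $B_{3t}$, and for $|x-r_j|>2t$ one uses $|w-(x-r_j)|\ge|x-r_j|/2$ to get the even smaller bound $Ct^2|x-r_j|^{-3/2}\le Ct^{1/2}$. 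Summing over $j$ gives $f(x)\le CMt^{1/2}$ with $C$ uniform in the nuclear positions. With this modification your proof is complete.
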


For the proof of \eqref{eq:tfdf} and \eqref{eq:westimate} we refer
to~\cite{SS}. (Note that in \cite{SS} it is claimed that
\(W_k({\mathbf z},{\mathbf r},x)\ge0\). This is not correct, but the
proof in \cite{SS} does give that \(W_k({\mathbf z},{\mathbf r},x)\ge{}-C\).)
The proof of \eqref{correl:upper bound} can be found in
Appendix~\ref{sect:App}.  
\begin{remark}\label{rem:uniform-in-z_k}
As is seen from the proofs in \cite{SS} and in
Appendix~\ref{sect:App}, the constants in Theorem~\ref{thm:tfestimate}
and Corollary~\ref{estimate related to correlation} only depend on
\(z_0>0\) when \(z_1,\ldots,z_M\in(0,z_0]\). 
\end{remark}

The relation of Thomas-Fermi theory to semi-classical analysis is that
the semi-classical density of a gas of non-interacting
(non-relativistic) electrons moving in the Thomas-Fermi potential $\V$
is simply the Thomas-Fermi density. More precisely, the semi-classical
approximation to the density of the projection onto the eigenspace
corresponding to the negative eigenvalues of the Hamiltonian
$-\frac{1}{2}\Delta-\V$ is 
\bea\label{eq:sc=tf-density}
  2 \!\!\!\!\!\!\!\!\!\!\!\!\!\!
  \int\limits_{\frac{1}{2}p^2-\V({\mathbf z},{\mathbf r},x)\le0}\,
  \!\!\!\!\!\!\!\!\!\!\!\!\!\!\!\!
  \frac{dp}{(2\pi)^3} \,= \,2^{3/2} (3\pi^2)^{-1} (\V)^{3/2}({\mathbf
  z},{\mathbf r},x) \,= \,\rho^{\rm TF}({\mathbf z},{\mathbf r},x)\,.
\eea
Here the factor two on the very left is due to the spin degeneracy.
Similarly, the semi-classical approximation to the energy of the gas,
i.e., to the sum of the negative eigenvalues of
$-\frac{1}{2}\Delta-\V$, is  
\bea\label{eq:sc=tf}\nonumber
  2\int \big[\mfr{1}{2}p^2-\V({\mathbf z},{\mathbf r},x)\big]_-\,
  \frac{dxdp}{(2\pi)^3}&=&{} -{\mfr {4\sqrt {2}}{15{\pi }^{2}}}\int
  \,\V({\mathbf z},{\mathbf r},x)^{5/2}\,dx\\ &= &E^{{\rm TF}}({\mathbf
    z},{\mathbf r}) + D\big(\rho^{{\rm TF}}({\mathbf z}, {\mathbf
    r},\cdot)\big)\,.
\eea
Since (by Theorem~\ref{thm:tfestimate}) the Thomas-Fermi potential
$\V({\bf z}, {\bf  r}, \cdot)$ in \eqref{eq:tfpotgeneral} satisfies
\eqref{eq:VcondD} and  \eqref{eq:VcondS} 
(uniformly for
\(z_1,\ldots,z_M\in(0,2/\pi]\); see Remark~\ref{rem:uniform-in-z_k}),
Theorem~\ref{thm:TF-semicl} 
implies that the density given in \eqref{eq:sc=tf-density} and the
energy given in 
\eqref{eq:sc=tf} are the leading order terms also for the {\it
  relativistic} gas, i.e., for the 
operator $T_{\b}(-\ic h\nabla)-V^{\rm TF}$, \(0\le\b\le h^2\), with
\(T_\b\) as in \eqref{def:kinetic energy}. That the Thomas-Fermi
energy is correct to leading order for $T_{h^2}(-\ic h\nabla)-V^{\rm
  TF}$ was proved in \cite{sor}. Theorem~\ref{thm:TF-semicl}
establishes the first correction---the Scott correction---to the
leading order. 

  \section{Proof of the relativistic Scott correction for the
    molecular ground state energy} 
  \label{main proof}

In this section we prove Theorem~\ref{main theorem}. Except for the
correlation inequality we proceed in exactly the same manner as in the
non-relativistic case \cite{SS}. In \cite{SS} correlations were
controlled by the Lieb-Oxford inequality \cite{Lieb-Oxford}. Applying
this inequality, correlations can be estimated by the integral $\int
\rho^{4/3}$ involving the electronic density $\rho$. Using the
non-relativistic Lieb-Thirring inequality such an integral can be seen
to be of lower order than the total energy. In the present
relativistic case the Daubechies inequality \eqref{Daub manybodyBIS}
{\it a priori} only allows us to conclude that the integral
$\int\rho^{4/3}$ is of the same order as the total energy. We
therefore follow a different strategy.

\begin{proof}[Proof of Theorem~\ref{main theorem} {\rm ({\bf
      Lower bound})}]
Let $\psi$ be a (normalised) ground state wave function and let
$s>0$. We will use the correlation inequality
\eqref{inequ:correlation} with $\rho(x)=\rho^{\rm TF}({\bf Z},{\bf
  R},x)$. Let $\Phi_s$ be a function as in
Theorem~\ref{thm:correlation}. We shall choose $s=Z^{-5/6}$.  

As above (see \eqref{ddefinitionBIS}) we have $d_{\bf
  r}(x)=\min\big\{|x-r_k|\ \big|\ k=1,\ldots,M\big\}$. Note that for
the physical positions of the nuclei we then have 
$$
  d_{\bf R}(x)=\min\big\{|x-R_k|\ \big|\
  k=1,\ldots,M\big\}=Z^{-1/3}d_{\bf r}(Z^{1/3}x)\,.
$$

{F}rom the estimate in \eqref{correl:upper bound} with $t=Z^{1/3}s$ we
obtain from the Thomas-Fermi scaling \eqref{scaling:rho} that
$$
  \big|\rho^{\rm TF}({\bf Z},{\bf R},\cdot)*|x|^{-1}
  -\rho^{\rm TF}({\bf Z},{\bf R},\cdot)*|x|^{-1}*\Phi_s(x)\big|
  \leq C Z^{3/2} s (g(x)+Z^{1/6})\,,
$$
where
$$
  g(x)=
  \left\{\begin{array}{ll}
        (2s)^{-1/2} &\hbox{if } d_{\bf R}(x)<2s\\
        d_{\bf R}(x)^{-1/2}&\hbox{if } 2s\leq d_{\bf R}(x)\leq Z^{-1/3}\\
        0&\hbox{if } Z^{-1/3}<d_{\bf R}(x)
    \end{array}\,.
  \right.
$$
We find from the correlation estimate (Theorem~\ref{thm:correlation})
that 
\begin{eqnarray} 
  \lefteqn{\big\langle \psi,H({\bf Z},{\bf R};\alpha)\psi\big\rangle 
  }
  \nonumber\\
  &\ge&\sum_{j=1}^{Z} \big\langle \psi,\big[\sqrt{-\a^{-2}\D_j +\a^{-4}}
  -\a^{-2} - V({\bf Z},{\bf R},\x_j)\big]
  \psi\big\rangle 
  \nonumber\\
  && {}+ \,\big\langle\psi,\sum_{j=1}^{Z} 
  \big(\rho^{{\rm TF}}({\bf Z},{\bf R},\cdot) * |x|^{-1} * \Phi_s)(\x_j)
  \psi\big\rangle
  - D\big(\rho^{{\rm TF}}({\bf Z},{\bf R},\cdot)\big) - C s^{-1}Z
  \nonumber\\
  &\ge&2\,\Tr\big[\sqrt{-\a^{-2}\D +\a^{-4}} -\a^{-2} - V^{{\rm TF}}({\bf
  Z},{\bf R},\x)-C Z^{3/2} s g(\x)\big]_- 
  - D\big(\rho^{{\rm TF}}({\bf Z},{\bf R},\cdot)\big)
  \nonumber\\     
  &&{}-CsZ^{8/3}
  - Cs^{-1}Z\,.\label{eq:HZR-lowerbound}
\end{eqnarray}

To control the error term with $g$ above we shall use the combined
Daubechies-Lieb-Yau inequality (Theorem~\ref{thm:L-Y-D}) to estimate
$$
  \varepsilon\,\Tr\big[\sqrt{-\a^{-2}\D +\a^{-4}} -\a^{-2} -
  V^{{\rm TF}}({\bf Z},{\bf R},\x)-C \varepsilon^{-1}
  Z^{3/2} s
  g(\x)\big]_-
$$
for some $0<\varepsilon<1$ which we will choose to be
$\varepsilon=Z^{-1/2}$. We use Theorem~\ref{thm:L-Y-D} with $m=1$ and
$\nu=\max_kZ_k$. Then by assumption $\nu\alpha\leq2/\pi$. We must also
check that the assumption \eqref{eq:cond-D-L-Y} is satisfied, i.e.,
that for $d_{\bf R}(x)<\alpha$ we have 
$$
  {}-V^{{\rm TF}}({\bf Z},{\bf R},x)-C \varepsilon^{-1}Z^{3/2} s 
  g(x)\geq
  {}-\frac{\nu}{d_{\bf R}(x)}-C\nu\alpha^{-1}\,.
$$
This follows from the definition of $g$ and the estimate on the TF
potential in \eqref{eq:westimate} together with the Thomas-Fermi
scaling 
\eqref{scaling} if
$$
  \mfr12\a< s <C(\nu Z^{-1})^2(Z\alpha)^{-2}\varepsilon^2Z,\qquad
   r_{\min}^{-1}+1<C(\nu Z^{-1})(Z\a)^{-1}Z^{2/3},
$$
which, for $Z$ large enough, is a consequence of the assumptions in
the theorem and the choices of $\varepsilon$ and $s$. Note, in
particular, that $\nu Z^{-1}=\max_k{z_k}\geq M^{-1}$ (since \(\sum_k
z_k=1\)) and by assumption $Z\a \leq\min_k{2/(\pi z_k)}\leq
2M/\pi$. The constants $C$ above depend only on $z_1,\ldots,z_M$, and
$M$. 

According to the Thomas-Fermi estimate \eqref{eq:tfdf}, the
Thomas-Fermi scaling \eqref{scaling}, the definition of $g$, and the
choices of $s$ and $\varepsilon$ we have  
$$
  V^{{\rm TF}}({\bf Z},{\bf R},x)+C \varepsilon^{-1}Z^{3/2} s
  g(x)\leq C\min\{d_{\bf R}(x)^{-4},Zd_{\bf R}(x)^{-1}\}\,.
$$
Thus the combined Daubechies-Lieb-Yau inequality gives, since $\nu\leq
Z$ and $Z\alpha\leq2M/\pi$, that 
\begin{eqnarray*}
  \lefteqn{\varepsilon\,\Tr\big[\sqrt{-\a^{-2}\D +\a^{-4}} -\a^{-2} -
  V^{{\rm TF}}({\bf Z},{\bf R},\x)-C \varepsilon^{-1}
  Z^{3/2} s g(\x)]_-}\\
  &\geq&{}-C \varepsilon Z^2-C\varepsilon \int
  \big(\min\{d_{\bf R}(x)^{-4},Zd_{\bf R}(x)^{-1}\}\big)^{5/2}dx\\&&
  {}-C\varepsilon \a^3\int_{d_{\bf R}(x)>\a}\big(\min\{d_{\bf R}(x)^{-4},
  Zd_{\bf R}(x)^{-1}\}\big)^{4}dx\\
  &\geq&{}-C \varepsilon
  (Z^2+Z^{7/3})\geq{}-C\varepsilon Z^{7/3}\,.
\end{eqnarray*}

We return to the main term in \eqref{eq:HZR-lowerbound}. Using the
Thomas-Fermi scaling property \eqref{scaling} and replacing $x$ by
$Z^{-1/3}x$ we arrive at
\begin{eqnarray*}
  \lefteqn{\Tr\big[\sqrt{-\a^{-2}\D +\a^{-4}} -\a^{-2} - V^{{\rm
        TF}}({\bf Z},{\bf R},\x)\big]_- }
  \\
  &=&Z^{4/3}\kappa^{-1}
  \,\Tr\big[\sqrt{-\beta^{-1} h^2\D+\beta^{-2}}-\beta^{-1} 
  - \kappa V^{{\rm TF}}({\bf
  z},{\bf r},\x)]_{-}\,, 
\end{eqnarray*}
where we have chosen
\begin{equation}\label{eq:kappabetahchoice}
  \kappa=\min_k\frac{2}{\pi z_k}\geq Z\alpha\,,\qquad
  h=\kappa^{1/2} Z^{-1/3}\,,\qquad\beta=Z^{4/3}\alpha^2\kappa^{-1}\,.
\end{equation}
We shall use $\beta$ and $h$ as the semi-classical parameters 
when we apply Theorem~\ref{thm:TF-semicl}. It is therefore
important that $\beta\leq h^2$. This follows since
$\beta^{-1} h^2=(Z\alpha)^{-2}\kappa^2\geq1$. Note also that
\(2/\pi\le\kappa\le2M/\pi\) since \(z_k\le1\), \(k=1,\ldots,M\), and
\(\sum_kz_k=1\). 

Putting this together with the estimate above into
\eqref{eq:HZR-lowerbound} we obtain (using the inequality
$\Tr[X+Y]_-\geq\Tr[X]_-+\Tr[Y]_-$ for operators $X$ and $Y$ bounded
from below (with a common core), and the choices of $\varepsilon$ and
$s$) that 
\begin{eqnarray*}
  \lefteqn{\big\langle \psi,H({\bf Z},{\bf R};\alpha)\psi\big\rangle 
  }&&\\ &\geq&
  2(1-\varepsilon)Z^{4/3}\kappa^{-1}\,\Tr\big[\sqrt{-\beta^{-1}h^2\D
    +\beta^{-2}} -\beta^{-1} - \kappa V^{{\rm TF}}({\bf z},{\bf
    r},\x)\big]_-\\ &&{}-C\varepsilon Z^{7/3}-CsZ^{8/3}
  - Cs^{-1}Z- D\big(\rho^{{\rm TF}}({\bf Z},{\bf R},\cdot)\big)\\
  &\geq&2Z^{4/3}\kappa^{-1}\,\Tr\big[\sqrt{-\beta^{-1}h^2\D +\beta^{-2}}
  -\beta^{-1} - \kappa V^{{\rm TF}}({\bf z},{\bf r},\x)\big]_- 
  \\&&{}- D\big(\rho^{{\rm TF}}({\bf Z},{\bf R},\cdot)\big)-CZ^{2-1/6}\,.
\end{eqnarray*}
Now we apply the semi-classical result for potentials with
Coulomb-like singularities from Theorem~\ref{thm:TF-semicl} to
\(\kappa V^{{\rm TF}}({\bf z},{\bf r},\cdot)\) (recall that
\(2/\pi\le\kappa\le2M/\pi\) which ensures that the constants in
\eqref{eq:VcondD} and \eqref{eq:VcondS} are uniform in \(\kappa\)),
and the calculation in \eqref{eq:sc=tf}. 
Then 
\beax
  \lefteqn{2Z^{4/3}\kappa^{-1}\,\Tr\big[\sqrt{-\beta^{-1}h^2\D
  +\beta^{-2}} -\beta^{-1} - \kappa V^{{\rm TF}}({\bf z},{\bf
  r},\x)\big]_- 
  }\\
  &=&Z^{7/3}\Big(E^{{\rm TF}}({\mathbf z},{\mathbf r}) +
  D\big(\rho^{{\rm TF}}({\mathbf z}, {\mathbf r},\cdot)\big) \Big) 
  +2 \sum_{k=1}^M Z_k^2{\mathcal S}(Z_k\alpha)+ {\mathcal
    O}(Z^{2-1/30}) 
  \\
  &=& E^{{\rm TF}}({\bf Z},{\bf R}) + D\big(\rho^{{\rm TF}}({\bf
    Z},{\bf R},\cdot)\big) + 2\sum_{k=1}^M Z_k^2{\mathcal
    S}(Z_k\alpha)+ {\mathcal O}(Z^{2-1/30}) \,. 
\eeax 
Note here that $\kappa$ cancels in the leading semi-classical term and
in the Scott-term (the term with $\mathcal S$). 
Also, \(2/\pi\le\kappa\le2M/\pi\) ensures that the error is uniform in
\(\kappa\). 
Here we have again used the TF scaling $E^{{\rm TF}}({\bf Z},{\bf
  R})=Z^{7/3}E^{{\rm TF}}({\mathbf z}, {\mathbf r})$ and $D\big(\rho^{{\rm
    TF}}({\bf Z},{\bf R},\cdot)\big)=Z^{7/3}D\big(\rho^{{\rm TF}}({\mathbf z},
{\mathbf r},\cdot)\big)$. This finishes the proof of the lower bound.
\end{proof}
\begin{proof}[Proof of Theorem~\ref{main theorem} {\rm ({\bf
      Upper bound})}]
The starting point now is Lieb's variational principle,
Theorem~\ref{Lieb's Variational Principle}. By a simple rescaling the
variational principle states that for any density matrix $\gamma$ on
$L^2(\R^3)$ with $2\,\Tr\gamma\leq Z$ we have 
\begin{eqnarray*}
  E({\bf Z},{\bf R};\alpha)&\leq& 2Z^{4/3}\,\Tr\big[\big(
        \sqrt{-\alpha^{-2}Z^{-2}\Delta+\alpha^{-4}Z^{-8/3}}-\alpha^{-2}Z^{-4/3}
      -V({\bf z},{\bf r},\x)\big)\gamma\big]\\&&
  {}+ Z^{7/3}D(2 Z^{-1}\rho_\gamma)\,.
\end{eqnarray*}
As for the lower bound we bring the TF-potential 
into play:
\begin{eqnarray}
  Z^{-4/3}E({\bf Z},{\bf R};\alpha)&\leq&2\,\Tr\big[
    \big(\sqrt{-\alpha^{-2}Z^{-2}\Delta+\alpha^{-4}Z^{-8/3}}-\alpha^{-2}Z^{-4/3}
      -\V({\bf z},{\bf r},\x)\big)\gamma\big]
  \nonumber\\
  &&{}+ZD\big(2Z^{-1}\rho_\gamma-\rho^{\rm TF}({\bf z},{\bf r},\cdot)\big)
  -ZD\big(\rho^{\rm TF}({\bf z},{\bf r},\cdot)\big)\nonumber\\
  &=&2\kappa^{-1}\,\Tr\big[
    \big(\sqrt{-\beta^{-1}h^2\Delta+\beta^{-2}}-\beta^{-1}
      -\kappa \V({\bf z},{\bf r},\x)\big)\gamma\big]
  \nonumber\\
  &&{}+ZD\big(2Z^{-1}\rho_\gamma-\rho^{\rm TF}({\bf z},{\bf r},\cdot)\big)
  -ZD\big(\rho^{\rm TF}({\bf z},{\bf r},\cdot)\big)\,,\label{eq:upper1}
\end{eqnarray}
where $\kappa$, $h$, and $\beta$ are chosen as in
\eqref{eq:kappabetahchoice} in the proof of the lower bound.  Note
that with this choice of $h$ and $\kappa$ we have from
\eqref{eq:tfeqgeneral} that
$$
  2^{1/2}(3\pi^2h^3)^{-1}(\kappa\V({\bf z},{\bf r},x))^{3/2}
  =Z\rho^{\rm TF}({\bf z},{\bf r},x)/2\,.
$$
We now choose a density matrix $\widetilde{\gamma}$ 
according to Theorem~\ref{thm:TF-semicl} with $V(x)=\kappa \V({\bf
  z},{\bf r},x)$. 

Since $\int\rho^{\rm TF}({\bf z},{\bf r},x)\,dx=\sum_{k=1}^Mz_k=1$ we
see from \eqref{eq:rhogammaint} that $2\,\Tr\widetilde\gamma\leq
Z(1+CZ^{-1/3-1/15})$ (recall that \(\kappa^{-1}\le\pi/2\)). Thus if we
define $\gamma=(1+CZ^{-1/3-1/15})^{-1}\widetilde{\gamma}$ we see that
the condition $2\,\Tr\gamma\leq Z$ is satisfied. 

Using the Hardy-Littlewood-Sobolev and \eqref{eq:rhogamma6/5}
inequalities we find that 
$$
  ZD\big(2Z^{-1}\rho_{\widetilde\gamma}-\rho^{\rm TF}({\bf z},{\bf
    r},\cdot)\big) \leq
  CZ^{-1}\big\|\rho_{\widetilde\gamma}-Z\rho^{\rm TF}({\bf z},{\bf 
    r},\cdot)/2\big\|_{6/5}^2 \leq CZ^{2/3-4/15}\,,
$$
and thus 
\begin{eqnarray}
  ZD\big(2Z^{-1}\rho_{\gamma}-\rho^{\rm TF}({\bf z},{\bf
    r},\cdot)\big) &\leq&
  C(1+CZ^{-1/3-1/15})^{-2}Z
  D\big(2Z^{-1}\rho_{\widetilde\gamma}-\rho^{\rm TF}({\bf z},{\bf
    r},\cdot)\big)
  \nonumber\\&&{}+
  CZ^{1/3-2/15}D\big(\rho^{\rm TF}({\bf z},{\bf
      r},\cdot)\big)\leq C Z^{2/3-4/15}\,,\label{eq:upper2}
\end{eqnarray}
where we have used the triangle inequality for $\sqrt{D}$, and
that $D\big(\rho^{\rm TF}({\bf z},{\bf r},\cdot)\big)\leq C$.

Finally, if we use \eqref{TF-Scott-trial} and \eqref{eq:sc=tf} we get
as for the lower bound that
\begin{eqnarray*}
  \lefteqn{2Z^{4/3}\kappa^{-1}\,\Tr\big[
  \big(\sqrt{-\beta^{-1}h^2\Delta+\beta^{-2}}-\beta^{-1}
  -\kappa \V({\bf z},{\bf r},\x)\big)\widetilde\gamma\big]}&&\\
  &\leq&E^{{\rm TF}}({\bf Z},{\bf R}) + D\big(\rho^{{\rm TF}}({\bf
  Z},{\bf R},\cdot)\big)
  + 2\sum_{j=1}^M Z_k^2{\mathcal S}(Z_k\alpha)+ {\mathcal O}(Z^{2-1/30})\,.
\end{eqnarray*}
Since $E^{\rm TF}({\bf Z},{\bf R})\geq {}-C Z^{7/3}$ and 
$D\big(\rho^{\rm TF}({\bf Z},{\bf R},\cdot)\big)\ge0$ we see that the
same estimate holds for $\widetilde\gamma$ replaced by $\gamma$. This
together with $D\big(\rho^{\rm TF}({\bf Z},{\bf R},\cdot)\big)
=Z^{7/3}D\big(\rho^{\rm TF}({\bf z},{\bf r},\cdot)\big)$,
\eqref{eq:upper1}, and \eqref{eq:upper2} finishes the proof of the
upper bound.
\end{proof}
The function \(\mathcal{S}\) is continuous and non-increasing, and
\(\mathcal{S}(0)=1/4\), 
according to Theorem~\ref{thm:TF-semicl}. 

This finishes the proof of Theorem~\ref{main theorem}. 

\section{Relativistic semi-classics for potentials with Coulomb-like
  singularities}
\label{sect:semi-tf}

In this section we prove Theorem~\ref{thm:TF-semicl}. The theorem will
follow from using Theorem~\ref{corollary} below (a rescaled version of
the local semi-classical results for {\it regular} potentials in
Theorem~\ref{local semi-classics} in Section~\ref{Local semi-classical
  estimates} below). We localise (Theorem~\ref{IMS}) the operator
using multi-scale analysis (Theorem~\ref{partition}), and control the
localisation errors (Theorem~\ref{thm:L-Y-D}). Near the 
singularities of the potential, we compare with the Coulomb
potential. To be able to do this, we first prove a Scott-corrected
semi-classical result for a localised relativistic Hydrogen operator
(Lemma~\ref{lem:Coulomb} below). The ingredients of the proof of the
latter are the same (rescaled semi-classics, localisation and
multi-scale analysis, and estimating localisation errors).

\begin{thm}[{\bf Rescaled semi-classics}] \label{corollary}
Let $n\geq 3$ and let $\phi\in C^{n+4}_0(\R^n)$ be supported in a ball
$B_\ell$ of radius $\ell>0$. Let $V\in C^3(\overline{B_\ell})$ be a
real potential, and let $T_\b(p)=\sqrt{\b^{-1}p^2+\b^{-2}}-\b^{-1}$ be
the kinetic energy. Let $H_\b=T_\b(-\ic h\nabla)+V(\hat{x})$, $h>0$,
and $\sigma_\b(v,q) = T_\b(q) + V(v)$. Then for all $h, \beta, f>0$
with $\beta f^2\le1$, we have 
\begin{equation} \label{eq:phiHphilf}
  \left|\Tr[\phi H_\b\phi]_{-} - (2\pi h)^{-n}\int \phi(v)^2
    \s_\b(v,q)_{-} \,dv dq \right| 
  \le C h^{-n+6/5} f^{n+4/5}\ell^{n-6/5} \,,
\end{equation}
where the constant $C$ is independent of $\b$ and depends only on 
\begin{equation}\label{eq:phivdependence}
  \sup_{|\eta|\le n+4}\|\ell^{|\eta|}\p^\eta\phi\|_{\infty}
  \quad\hbox{  and }\quad 
  \sup_{|\eta|\le3}\|f^{-2}\ell^{|\eta|}\p^\eta V\|_{\infty} \,.
\end{equation}
Moreover, there exists a density matrix $\gamma$ such that 
\begin{equation}
  \Tr[\gamma\phi H_\b\phi]\leq (2\pi h)^{-n}\int \phi(v)^2
  \s_\b(v,q)_{-} \,dv dq  
  +C h^{-n+6/5} f^{n+4/5}\ell^{n-6/5}\label{eq:gammaproplf} \,.
\end{equation}
The density  $\rho_\gamma$ satisfies
\begin{equation}\label{eq:rhogammaproplf1}
  \left|\rho_\gamma(x) - (2\pi h)^{-n}\omega_n 
  |V_-|^{n/2} (2+\b|V_-|)^{n/2}(x)\right| 
  \leq Ch^{-n+9/10}f^{n-9/10}\ell^{-9/10}
\end{equation}
for (almost) all $x\in B_\ell$, and
\bea\label{eq:rhogammaproplf2}
  \lefteqn{\left|\int\phi(x)^2\rho_\gamma(x)\,dx
           -(2\pi h)^{-n}\omega_n\int\phi(x)^2 
           |V_-|^{n/2} (2+\b |V_-|)^{n/2}(x)\,dx\right| 
           }\nonumber
  \\ & \leq & Ch^{-n+6/5}f^{n-6/5}\ell^{n-6/5} \,,\hspace{7cm}
\eea
where $\omega_n$ is the volume of the unit ball $B_1$ in $\R^n$.
The constants $C>0$ in the above estimates again depend on
the parameters as in \eqref{eq:phivdependence}.
\end{thm}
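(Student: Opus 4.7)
The plan is to deduce Theorem~\ref{corollary} from Theorem~\ref{local semi-classics} by a straightforward unitary rescaling that brings the support scale $\ell$, the potential scale $f^2$, and the kinetic prefactor $\b^{-1}$ all to unit size, so that the hypotheses of Theorem~\ref{local semi-classics} apply with constants depending only on the quantities in \eqref{eq:phivdependence}.

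Concretely, I would introduce the unitary $U_\ell:L^2(\R^n)\to L^2(\R^n)$ given by $(U_\ell\psi)(v)=\ell^{n/2}\psi(\ell v)$, and set $\tilde\phi(v)=\phi(\ell v)$, $\tilde V(v)=f^{-2}V(\ell v)$. Then $|\p^\eta\tilde\phi(v)|=\ell^{|\eta|}|\p^\eta\phi(\ell v)|$ and $|\p^\eta\tilde V(v)|=f^{-2}\ell^{|\eta|}|\p^\eta V(\ell v)|$, so both $\tilde\phi$ and $\tilde V$ are supported in $B_1$ and have bounded derivatives controlled exactly by \eqref{eq:phivdependence}, uniformly in $\ell,f,h,\b$. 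The kinetic operator rescales as $U_\ell T_\b(-\ic h\nabla_x)U_\ell^*=T_\b(-\ic (h/\ell)\nabla_v)$, and the identity $f^{-2}T_\b(fq)=T_{\b f^2}(q)$ (checked directly from the definition of $T_\b$) then yields
\begin{equation*}
   f^{-2}\,U_\ell\bigl(\phi H_\b\phi\bigr)U_\ell^*
   \,=\,\tilde\phi\bigl[T_{\tilde\b}(-\ic \tilde h\nabla_v)+\tilde V(\hat v)\bigr]\tilde\phi,
\end{equation*}
with $\tilde h:=h/(\ell f)$ and $\tilde\b:=\b f^2\le1$. Since $U_\ell$ is unitary and $[\cdot]_-$ is homogeneous of degree one, $\Tr[\phi H_\b\phi]_-=f^2\Tr[\tilde\phi(T_{\tilde\b}-\tilde V)\tilde\phi]_-$.

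I would then apply Theorem~\ref{local semi-classics} (unit-scale, regular potential, $\tilde\b\le 1$) to the rescaled operator with semi-classical parameter $\tilde h$, obtaining
\begin{equation*}
   \Bigl|\Tr[\tilde\phi\tilde H_{\tilde\b}\tilde\phi]_-
     -(2\pi\tilde h)^{-n}\!\int\tilde\phi(v)^2\,[T_{\tilde\b}(q)+\tilde V(v)]_-\,dvdq\Bigr|
   \le C\tilde h^{-n+6/5}.
\end{equation*}
Multiplying by $f^2$ gives the left-hand side of \eqref{eq:phiHphilf}, and the bookkeeping $f^2\tilde h^{-n+6/5}=f^{n+4/5}\ell^{n-6/5}h^{-n+6/5}$ produces the claimed error. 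For the integral on the right I change variables $v=x/\ell$, $q=p/f$, noting $T_{\tilde\b}(q)+\tilde V(v)=f^{-2}\sigma_\b(x,p)$ and collecting the Jacobians $\ell^{-n}f^{-n}$ against $(2\pi\tilde h)^{-n}=(\ell f)^{n}(2\pi h)^{-n}$; all factors of $\ell,f$ cancel and what remains is exactly $(2\pi h)^{-n}\int\phi(v)^2\sigma_\b(v,q)_-\,dvdq$.

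For the density-matrix statements \eqref{eq:gammaproplf}--\eqref{eq:rhogammaproplf2}, I would take the trial density matrix $\tilde\gamma$ supplied by Theorem~\ref{local semi-classics} at unit scale and set $\gamma=U_\ell^*\tilde\gamma U_\ell$, so that $\rho_\gamma(x)=\ell^{-n}\rho_{\tilde\gamma}(x/\ell)$. The same change of variables used above shows that the unit-scale semi-classical density $(2\pi\tilde h)^{-n}\omega_n|\tilde V_-|^{n/2}(2+\tilde\b|\tilde V_-|)^{n/2}$ equals $\ell^n$ times the target density in \eqref{eq:rhogammaproplf1}, so the pointwise error $\tilde h^{-n+9/10}$ becomes $\ell^{-n}\tilde h^{-n+9/10}=f^{n-9/10}\ell^{-9/10}h^{-n+9/10}$ after division by $\ell^n$; analogously the weighted integral bound $\tilde h^{-n+6/5}$ transforms into $f^{n-6/5}\ell^{n-6/5}h^{-n+6/5}$, giving \eqref{eq:rhogammaproplf2}, and multiplying the trial-state bound by $f^2$ yields \eqref{eq:gammaproplf}.

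There is no real obstacle here beyond bookkeeping: the content is entirely in Theorem~\ref{local semi-classics}. The only point requiring care is to verify that the constants at unit scale depend only on the seminorms in \eqref{eq:phivdependence}, which is immediate from the definitions of $\tilde\phi$ and $\tilde V$; and to keep track of the joint scaling $\tilde h=h/(\ell f)$, $\tilde\b=\b f^2$, $[\,\cdot\,]_-$-homogeneity factor $f^2$, and Jacobians $\ell^n,f^n$ consistently in each of \eqref{eq:phiHphilf}, \eqref{eq:gammaproplf}, \eqref{eq:rhogammaproplf1}, \eqref{eq:rhogammaproplf2}.
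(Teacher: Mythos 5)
Your proposal is correct and takes essentially the same route as the paper's own proof: the same unitary rescaling (the paper's $U$ is your $U_\ell^*$), the same identity $f^{-2}T_\b(fq)=T_{\b f^2}(q)$, and the same reduction to Theorem~\ref{local semi-classics} at scale $\tilde h = h/(\ell f)$, $\tilde\b=\b f^2$, followed by undoing the change of variables. The only blemish is a typo where you wrote $(T_{\tilde\b}-\tilde V)$ instead of $(T_{\tilde\b}+\tilde V)$; the surrounding bookkeeping makes clear this is not a substantive error.
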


\begin{proof} We introduce the unitary scaling operator
$(U\psi)(x)=\ell^{-n/2}\psi(\ell^{-1}x)$. Then
$$ 
  U^* \phi \big[T_\b(-\ic h\nabla) + V(\hat{x})\big] \phi \,U = 
  f^2 \phi_\ell \big[T_{\beta f^2}(-\ic h f^{-1}\ell^{-1} \nabla) +
  V_{f,\ell}(\x)\big] \phi_\ell\,,
$$
where $\phi_\ell(x) = \phi(\ell x)$, and $V_{f,\ell}(x) = f^{-2}V(\ell
x)$. Thus, 
$$ 
  \Tr[\phi H_\b\phi]_{-} = f^2\,\Tr\Big[\phi_\ell \big[T_{\beta f^2}(-\ic
  h f^{-1}\ell^{-1} \nabla) + V_{f,\ell}(\x)\big] \phi_\ell\Big]_{-}\,.
$$
Note that $\phi_\ell$ and $V_{f,\ell}$ are supported  in a ball of
radius $1$ and that for all multi-indices $\eta$,
$$ 
  \| \p^\eta \phi_\ell \|_\infty = \|\ell^{|\eta|}\p^\eta \phi\|_\infty
  \quad{\rm{ and }}\quad \| \p^\eta V_{f,\ell} \|_\infty =
  f^{-2}\|\ell^{|\eta|}\p^\eta V\|_\infty\,. 
$$
Let $\s_{f,\ell,\b}(u,q) = T_{\b f^2}(q) + V_{f,\ell}(u)$. By
Theorem~\ref{local semi-classics} in Section~\ref{Local semi-classical
  estimates} below there is a constant $C$ depending on
the parameters as in (\ref{eq:phivdependence}) so that, as long as \(\beta
f^2\le1\), 
\be 
  \left|\Tr[\phi H_\b\phi]_- - (2\pi h f^{-1} \ell^{-1})^{-n} f^2
  \int \phi_\ell(u)^2 \s_{f,\ell,\beta}(u,q)_-\, du dq \right|
  \le C f^2 (hf^{-1}\ell^{-1})^{-n+6/5} \,.
\ee
A simple change of variables gives 
$$
  (2\pi h f^{-1} \ell^{-1})^{-n} f^2 \int \phi_\ell(u)^2
  \s_{f,\ell,\beta}(u,q)_-\, du dq 
  = (2\pi h)^{-n} \int \phi(v)^2 \s_\b(v,q)_{-} \,dv dq \,,
$$
and we have proved \eqref{eq:phiHphilf}. 

Now, let $\gamma_{f,\ell,\beta}$ be the density matrix for 
$\phi_\ell \big[T_{\beta f^2}(-\ic h f^{-1}\ell^{-1} \nabla) +
V_{f,\ell}(\x)\big] \phi_\ell$,  
which according to Lemma \ref{lm:upperbound} satisfies
\beax 
  f^2\lefteqn{\,\Tr\Big[\phi_\ell \big[T_{\beta f^2}(-\ic h
  f^{-1}\ell^{-1} \nabla) + V_{f,\ell}(\x)\big]
  \phi_\ell\gamma_{f,\ell,\beta}\Big]}\\      
  &\le& (2\pi hf^{-1}\ell^{-1})^{-n} f^2\int \phi_\ell(u)^2
  \s_{f,\ell,\beta}(u,q)_-\, du dq + C (hf^{-1}\ell^{-1})^{-n+6/5} \,,  
\eeax
\beax
  \lefteqn{\left|\r_{\gamma_{f,\ell,\beta}}(x) - (2\pi
           hf^{-1}\ell^{-1})^{-n}\omega_n |V_{f,\ell}(x)_-|^{n/2} 
           (2+\b f^2|V_{f,\ell}(x)_-|)^{n/2}(x)\right|
          }
  \\& \le & C (hf^{-1}\ell^{-1})^{-n+9/10}\,,\hspace{9cm}
\eeax
\beax
  \lefteqn{\left|\int \phi_\ell(x)^2\r_{\gamma_{f,\ell,\beta}}(x)\, dx - 
  (2\pi hf^{-1}\ell^{-1})^{-n} \omega_n \int \phi_{\ell}(x)^2
  |V_{f,\ell}(x)_-|^{n/2} (2+\b f^2|V_{f,\ell}(x)_-|)^{n/2}\,dx\right| 
  }\\
  &\le&C (hf^{-1}\ell^{-1})^{-n+6/5}\,.\hspace{10.8cm}
\eeax
The density matrix $\gamma=U\gamma_{f,\ell,\beta}U^*$, whose density is
$\rho_\gamma(x) = \ell^{-n} 
\r_{\gamma_{f,\ell,\beta}}(x/\ell)$, then satisfies the properties in 
\eqref{eq:gammaproplf}--\eqref{eq:rhogammaproplf2}.
\end{proof}

\noindent{\bf Multi-scale Analysis.} The rescaled semi-classics of
Theorem~\ref{corollary} will be used in balls of varying radius. 
This idea goes back to Ivri{\u\i} \cite{Ivrii-Sigal,bookIvrii}. We
introduce a variable scale $\ell$ and a corresponding family of
localisation functions \(\{\varphi_u\}_{u\in\R^3}\), which will also
be used in the proof of Theorem~\ref{thm:TF-semicl}. 
\begin{defn}[{\bf Scale for multi-scale analysis}]
Let $0<\ell_0<1$ be a parameter
that we shall choose explicitly below, and let
\(r_1,\ldots,r_M\in\R^3\). 
Define 
\begin{equation}\label{eq:ldefinition}
  \ell(x)=\mfr{1}{2}\Bigl(1+\sum_{k=1}^M(|x-r_k|^2
  +\ell_0^2)^{-1/2}\Bigr)^{-1}\,. 
\end{equation}
\end{defn}
Note that $\ell$ is a smooth function (due to $\ell_0$) with 
\begin{equation}\label{eq:bound nabla ell}
   0<\ell(x)<1/2\,\quad\hbox{and}\quad\|\nabla\ell\|_\infty<1/2\,.
\end{equation}
Note also that in terms of the function $d\equiv d_{\bf r}$ from
\eqref{ddefinitionBIS} we have 
\begin{equation}\label{eq:elldcom}
  \mfr{1}{2}(1+M)^{-1}\ell_0\leq
  \mfr{1}{2}(1+M(d(x)^2+\ell_0^2)^{-1/2})^{-1}
  \leq\ell(x)\leq\mfr{1}{2}(d(x)^2+\ell_0^2)^{1/2}\,.
\end{equation}
In particular, we have 
\begin{equation}\label{eq:elldcom2}
  \ell(x)\geq\mfr{1}{2}(1+M)^{-1}\min\{d(x),1\} \,.
\end{equation}
We fix a localisation function $\varphi\in C_0^\infty(\R^3)$ with
support in $\{|x|<1\}$ and such that
$\int\varphi(x)^2\,dx=1$. According to Theorem~\ref{partition} we can
find a corresponding family of functions $\varphi_u\in
C_0^\infty(\R^3)$, $u\in\R^3$, where $\varphi_u$ is supported in the
ball $\{|x-u|<\ell(u)\}$, with the properties that
\begin{equation}\label{eq:phiuprop}
  \int_{\R^3}\varphi_u(x)^2\ell(u)^{-3}\,du\,=\,1\quad\hbox{and}\quad
  \|\partial^{\eta}\varphi_u\|_\infty\,\leq\, C\ell(u)^{-|\eta|}\,,
\end{equation} 
for all multi-indices $\eta$, where $C>0$ depends only on $\eta$ and
$\varphi$. For $d(u)>2\ell_0$ we have $\ell(u)\leq \sqrt{5}d(u)/4$ and
hence for all $x$ with $|x-u|<\ell(u)$ we have (note that $d(u)\le
d(x)+|x-u|$ and $\sqrt{5}/4<1$) that
\begin{equation}\label{eq:ell and d}
  \ell(u)< d(u)\quad\hbox{and}\quad d(u)\leq C d(x)\,.
\end{equation}

As a first step towards the Scott correction for Coulomb-type
potentials we deal with the Scott correction for a single relativistic
Hydrogen atom. This method for proving the existence of a Scott
correction in the semi-classical expansion of the sum of eigenvalues
of an operator with a (homogeneous) singular potential without
explicitly knowing the eigenvalues was first used by
Sobolev~\cite{Sobolev} when studying (non-relativistic) operators with
magnetic fields. 

\begin{lemma}[{\bf Scott-corrected localised Hydrogen}]
  \label{lem:Coulomb}
There exists a non-increasing function ${\mathcal
  S}:[0,2/\pi]\to\mathbb{R}$, with \(\mathcal{S}(0)=1/4\), such that,
if $\phi_r(x)=\phi(x/r)$, $r\in(0,\infty)$, with $\phi\in C^7(\R^3)$, 
$0\le\phi\le1$, satisfying \(\sqrt{1-\phi}\in C^{1}(\R)\) and
$$
 \phi(x)=\left\{\begin{array}{lcr}1&\mbox{for}&|x|
 \le1\\0&\mbox{for}&|x|\ge2\end{array} 
 \right. \,,
$$
then there exists $C>0$ depending only on $\phi$ such that for all
\(\alpha\in[0,2/\pi]\) and \(r\in(0,\infty)\),
\be\label{eq:local Hydro} 
   \left|\Tr[\phi_r H_{\rm C}(\alpha) \phi_r]_- - (2\pi)^{-3} \int\phi_r(v)^2
   \big[\mfr{1}{2} q^2 - |v|^{-1}\big]_- \, dv dq  -
   {\mathcal S}(\alpha) \right| < C r^{-1/10} \,,
\ee
where
\begin{align}\label{def:H_C}
  H_{\rm C}(\alpha)=
  \begin{cases}
     \sqrt{-\alpha^{-2}\D+\alpha^{-4}}-\alpha^{-2}
     -|{\x}|^{-1}\ , & \alpha\in(0,2/\pi] \\
     {}-\frac12\D -|{\x}|^{-1} \ ,& \alpha=0
   \end{cases}\,.
\end{align}
\end{lemma}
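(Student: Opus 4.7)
The quantity to control is
$$
D(r,\alpha):=\Tr[\phi_r H_{\rm C}(\alpha)\phi_r]_-
-(2\pi)^{-3}\!\int\phi_r(v)^2\big[\tfrac{1}{2}q^2-|v|^{-1}\big]_-\,dv\,dq,
$$
and the plan is to show that $D(r,\alpha)$ is Cauchy in $r$ with rate $O(r^{-1/10})$, uniformly for $\alpha\in[0,2/\pi]$, then to \emph{define} $\mathcal{S}(\alpha):=\lim_{r\to\infty}D(r,\alpha)$. I apply the relativistic IMS formula (Theorem~\ref{IMS}) to $\phi_r H_{\rm C}(\alpha)\phi_r$ with the partition $\{\varphi_u\}$ from Theorem~\ref{partition} built from the scale function $\ell(x)$ of \eqref{eq:ldefinition} with $M=1$, $r_1=0$, and a fixed small $\ell_0\in(0,1)$. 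For cells $u$ with $|u|\ge\ell_0$, where $\ell(u)$ is comparable to $|u|$ and $V(x)=-|x|^{-1}$ is smooth on the support of $\varphi_u$, I apply the rescaled semi-classical estimate Theorem~\ref{corollary} to the local operator $\varphi_u\phi_r H_{\rm C}(\alpha)\varphi_u\phi_r$ with parameters $h=1$, $\beta=\alpha^2$, local length $\ell=\ell(u)$, and potential amplitude $f^2=1/|u|$; the hypothesis $\beta f^2\le 1$ reduces to $|u|\ge\alpha^2$, which holds since $\alpha<1$. The per-cell error is $O(\ell(u)^{9/5}|u|^{-19/10})$, and integrated against $\ell(u)^{-3}\,du$ over an annulus $\{R\le|u|\le R'\}$ it totals $O(R^{-1/10}-R'^{-1/10})$, producing exactly the exponent in the statement. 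Localisation errors from Theorem~\ref{IMS-error-est} contribute an operator perturbation of size $\ell(u)^{-2}$ per cell (absorbed into the local potential) plus an exponentially small trace-class tail, and the inner region $\{|x|\le 2\ell_0\}$ is bounded absolutely by the combined Daubechies-Lieb-Yau inequality Theorem~\ref{thm:L-Y-D} (with $\nu=m=1$, so $\nu\alpha\le 2/\pi$) by a constant independent of $r$. Because $\phi_r\equiv 1$ on the inner region once $r\ge 2\ell_0$, the inner contributions to $D(r,\alpha)$ and $D(r',\alpha)$ cancel exactly, and the difference of the outer bulk contributions is absorbed into the $r^{-1/10}$ rate; this yields the Cauchy property and \eqref{eq:local Hydro}.

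\emph{Monotonicity and $\mathcal{S}(0)=1/4$.} The map $\alpha\mapsto T_\alpha(p)=\sqrt{\alpha^{-2}p^2+\alpha^{-4}}-\alpha^{-2}$ (with $T_0(p)=\tfrac12 p^2$) is pointwise non-increasing in $\alpha\ge 0$, which one sees by direct differentiation. Consequently $\phi_r H_{\rm C}(\alpha)\phi_r$ is non-increasing in $\alpha$ as a form, $\Tr[\phi_r H_{\rm C}(\alpha)\phi_r]_-$ is non-increasing, and since the semi-classical integral is $\alpha$-independent, $\mathcal{S}$ is non-increasing in the limit. For $\mathcal{S}(0)$, I first establish by the same multi-scale machinery that the limit can equivalently be computed via the constant regularisation $+\kappa$ in place of the compactly supported cutoff $\phi_r$ (this is the identity of Corollary~\ref{cor:chrarc-S}):
$\mathcal{S}(0)=\lim_{\kappa\to0^+}\bigl(\Tr[H_{\rm C}(0)+\kappa]_- -(2\pi)^{-3}\!\int[\tfrac12 q^2-|v|^{-1}+\kappa]_-\,dvdq\bigr).$
Using the explicit Hydrogen eigenvalues $E_n=-1/(2n^2)$ with multiplicity $n^2$, the trace equals $-\tfrac{N}{2}+\kappa\tfrac{N(N+1)(2N+1)}{6}$ with $N=\lfloor(2\kappa)^{-1/2}\rfloor$, while the semi-classical integral equals $-\tfrac13(2\kappa)^{-1/2}$; a careful expansion and cancellation of the divergent $(2\kappa)^{-1/2}$ terms yields $\mathcal{S}(0)=1/4$.

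\emph{Main obstacle.} The principal difficulty is making the estimate uniform in $\alpha\in[0,2/\pi]$: at the critical endpoint $\alpha=2/\pi$ the Daubechies-Lieb-Yau bound is marginal and needs the critical Hydrogen inequality Theorem~\ref{thm:new-critical} to extract a coercive fractional-Laplacian term near the singularity, whereas at $\alpha=0$ the exponential tails of Theorem~\ref{IMS-error-est} degenerate and must be replaced by the local-gradient form of the IMS error; producing a single estimate valid across both regimes, and correctly accounting for the relativistic parameter $\beta f^2$ in the rescaled semi-classics, is the technical crux.
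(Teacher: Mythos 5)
Your decomposition is set up in the right spirit, but there is a genuine gap that makes the claimed error bound fail: you apply the multi-scale partition to the \emph{unrescaled} operator $\phi_r H_{\rm C}(\alpha)\phi_r$ with $h=1$ and the scale function $\ell(u)$ of \eqref{eq:ldefinition} taken with a \emph{fixed} $\ell_0$, and then claim $\ell(u)\sim|u|$ and a total error $O(R^{-1/10}-R'^{-1/10})$. But $\ell(u)$ is uniformly bounded by $1/2$: from \eqref{eq:ldefinition} one has $\ell(u)\sim|u|$ only on $\ell_0\lesssim|u|\lesssim 1$, while $\ell(u)\to 1/2$ as $|u|\to\infty$. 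The support of $\phi_r$ has radius $2r\gg 1$, so most cells lie in $\{|u|\ge1\}$ where the cell size does \emph{not} track the natural scale of the Coulomb potential. Recomputing the integrated per-cell error there with $\ell(u)\asymp1$ and $f(u)=|u|^{-1/2}$ gives
$$
\int_{1<|u|<2r} f(u)^{19/5}\ell(u)^{9/5}\,\ell(u)^{-3}\,du
\;\asymp\;\int_{1<|u|<2r}|u|^{-19/10}\,du
\;\asymp\; r^{11/10},
$$
which diverges as $r\to\infty$ instead of decaying like $r^{-1/10}$. Your Cauchy argument then collapses. Note also that the scale map in Theorem~\ref{partition} is required to satisfy $\ell\le1$, so you cannot simply replace $\ell(u)$ by a function that grows like $|u|$.

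The paper's proof resolves exactly this point by first conjugating by the dilation $U$ at scale $R$ (so that the operator becomes $R^{-1}\big(T_\beta(-\ic h\nabla)-|\x|^{-1}-\cdots\big)$ with $h=R^{-1/2}$, $\beta=\alpha^2 R^{-1}$), choosing $\ell_0=h^2=R^{-1}$, and then doing the multi-scale analysis on the bounded region $r/3R<|u|<5/2$, where indeed $\ell(u)\asymp|u|$ throughout. This rescaling is what makes the semi-classical parameter small and turns the per-cell errors into a convergent integral, yielding the $r^{-1/10}$ rate. If you insist on working at $h=1$, you must instead apply Theorem~\ref{corollary} on a cell near $|u|\sim s$ with $\ell\sim s$, which is only possible after rescaling the cell to unit size — equivalently, the global rescaling of the paper. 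A secondary issue: your verification of $\beta f^2\le1$ gives $|u|\ge\alpha^2$, not $|u|\ge\ell_0$; for a fixed small $\ell_0<\alpha^2$ this would leave a range of cells uncovered by Theorem~\ref{corollary}. Finally, the appeal to the ``identity of Corollary~\ref{cor:chrarc-S}'' for $\mathcal{S}(0)=1/4$ is circular at this stage of the paper (that corollary is a consequence of the present lemma and of Theorem~\ref{thm:TF-semicl}); you would need to carry out the pushed-up comparison independently, which the paper does later in Lemma~\ref{cor:alt-S}.
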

As emphasised in Remark~\ref{rem:non-rel-in-s-c}, the function in the
semi-classical integral in \eqref{eq:local Hydro} is the 
{\it non-relativistic} energy. See also Lemma~\ref{cor:alt-S}
below for an alternative description of the function \(\mathcal{S}\).
\begin{remark}
 A result similar to the one in Lemma~\ref{lem:Coulomb} was proved in
 \cite[Theorem 7.1]{sor-thesis}, but without uniform control in
 \(\alpha\) and only for
 \(\alpha\in(0,2/\pi)\). 
\end{remark}
\begin{proof}[Proof of Lemma~\ref{lem:Coulomb}] 
We fix $\alpha\in[0,2/\pi]$ and write $H_{\rm C}=H_{\rm C}(\alpha)$.
We define for $r>0$ 
\be  
   S_r=\Tr\big[\phi_r H_{\rm C}(\alpha)
  \phi_r\big]_- 
   - (2\pi)^{-3} \int\phi_r(v)^2 \big[\mfr{1}{2}q^2 -
   |v|^{-1}\big]_- \, dv dq \,. 
\ee
We will show that $S_r$ has a limit as $r\to\infty$.

Let $R>2r$. We estimate the difference between $\Tr[\phi_R H_{\rm C}
\phi_R]_-$ and $\Tr[\phi_r H_{\rm C} \phi_r]_-$ semi-classically. The
leading semi-classical term involves the relativistic energy which is
then compared to the non-relativistic energy. Below all constants will
depend only on \(\phi\) and in particular {\it not} on
$\alpha\in[0,2/\pi]$.

Denote $\psi_r=\sqrt{1-\phi_r^2}$. By the relativistic IMS formula
\eqref{eq:IMS}, 
$$ 
  H_{\rm C} = \phi_r H_{\rm C}\phi_r + \psi_r H_{\rm C} \psi_r -
  L_{\phi_r}-L_{\psi_r}\,, 
$$ 
where $ L_{\phi_r}$ and $L_{\psi_r}$ are given by \eqref{IMS-error1}
and \eqref{IMS-error2}  
(\(\mathcal M=\{1,2\}\)). We multiply with $\phi_R$ and get that
$$
  \phi_R H_{\rm C}\phi_R = \phi_r H_{\rm C} \phi_r + 
  \phi_R \psi_rH_{\rm C}\psi_r\phi_R -
  \phi_R(L_{\phi_r}+L_{\psi_r})\phi_R \,. 
$$
We have used that $\phi_R\phi_r=\phi_r$ since $R>2r$. Now, let
$\gamma_R=\chi(\phi_R H_{\rm C} \phi_R)$ be the projection onto the
negative part of $\phi_R H_{\rm C} \phi_R$.  Then, by the variational
principle and Theorem~\ref{IMS-error-est} (with $m=1$, $\ell=r$,
$\Omega=B(0,3r)$, and $\theta=\phi_r$ and \(\psi_r\), respectively), 
\bea\label{eq:firstLoc}\nonumber 
  \lefteqn{\Tr[\phi_R H_{\rm C}\phi_R]_-}
  \\\nonumber &=& \Tr[\gamma_R\phi_r H_{\rm C} \phi_r] +
  \Tr[\gamma_R \phi_R \psi_r H_{\rm C} \psi_r \phi_R] \,-\Tr[\gamma_R
  \phi_R(L_{\phi_r}+L_{\psi_r})\phi_R]
  \\
  &\ge&\Tr[\gamma_R \phi_r (H_{\rm C}-Cr^{-2}) \phi_r]
  +\Tr[\gamma_R \phi_R \psi_r(H_{\rm C}-Cr^{-2}\phi_{3r})\psi_r\phi_R]
  \\\nonumber &&\,{}- Cr^{-2}\,.  
\eea 
Here, \(C\) is independent of \(\alpha\).  We treat the part of the
localisation error coming from the first term in
\eqref{eq:firstLoc}. We split $H_{\rm C} = (1-\varepsilon) H_{\rm C} +
\varepsilon H_{\rm C}$ for some \(0<\varepsilon<1\) to be chosen and
use the second term to control the error term. 

By Theorem~\ref{thm:L-Y-D} (with \(M=1\), \(R_0=0\), $d(x)=|x|$,
\(m=1\) and $\nu=1$),
\beax \label{Lieb-Yau1}
  \lefteqn{\Tr[\gamma_R\phi_r(\varepsilon H_{\rm
      C}-Cr^{-2})\phi_r]}\nonumber\\ 
  &=&
  \varepsilon\,\Tr\big[(\phi_r\gamma_R\phi_r)\big\{\sqrt{-\alpha^{-2}\D 
  +\alpha^{-4}}-\alpha^{-2}-  
  (|\hat{x}|^{-1}+Cr^{-2}\varepsilon^{-1})\big\}\big]
  \\&
  \geq& {}- C\varepsilon\Big(\alpha^{1/2}+\!\!\!\!
  \int\limits_{\alpha<|x|<2r}\!\!\!\!(|x|^{-1}
  +C\varepsilon^{-1}r^{-2})^{5/2}\,dx+ 
  \alpha^3\!\!\!\!\!\!\int\limits_{\alpha<|x|<2r}\!\!\!\!(|x|^{-1}
  +C\varepsilon^{-1}r^{-2})^4\,dx\Big) 
  \\
  &\geq&{}- C\varepsilon\Big(1+r^{1/2}
  +\varepsilon^{-5/2}r^{-2}+\varepsilon^{-4}r^{-5})\,,
\eeax
assuming $\varepsilon^{-1}r^{-2}\leq C\alpha^{-1}$ and using that
$\alpha\leq2/\pi$. We may choose $\varepsilon=r^{-1}$ if we assume
that $r>1$ (note that then indeed $\varepsilon^{-1}r^{-2}=r^{-1}<1\leq
2\alpha^{-1}/\pi$). We then obtain 
$$
  \Tr[\gamma_R\phi_r(\varepsilon H_{\rm C}-Cr^{-2})\phi_r]\geq {}-Cr^{-1/2}.
$$

As a result, we have shown that 
\beax 
  \lefteqn{\Tr[\phi_R H_{\rm C}\phi_R]_-}
  \\
  &\geq& (1-\varepsilon)\,\Tr[\gamma_R\, \phi_r H_{\rm C} \phi_r] + 
  \Tr[\gamma_R \phi_R \psi_r(H_{\rm C}-Cr^{-2}\phi_{3r})\psi_r\phi_R]
  - C r^{-1/2}
  \\
  &\geq& \Tr[\phi_r H_{\rm C} \phi_r]_- + 
  \Tr[\phi_R \psi_r(H_{\rm C}-Cr^{-2}\phi_{3r})\psi_r\phi_R]_-
  - C r^{-1/2}\,.
\eeax

We will treat the term $\Tr[\phi_R\psi_r(H_{\rm
  C}-Cr^{-2}\phi_{3r})\psi_r\phi_R]_-$ by our semi-classical estimates
in Section~\ref{Local semi-classical estimates} below. We first
rescale. Define the unitary scaling operator
$(U\varphi)(x)=R^{-3/2}\varphi(R^{-1}x)$. Then 
\begin{align}\label{eq:unitary scaling}\nonumber
  \widetilde{H}_{\rm C}\,
  :&= \,U^* (H_{\rm C}-Cr^{-2}\phi_{3r})U \,
  \\&= \,R^{-1}
  \big(\sqrt{-\alpha^{-2}\D + R^2\alpha^{-4}} - R\alpha^{-2}
   - |\x|^{-1}-C R r^{-2}\phi_{3r/R}(\x)
  \big) \nonumber
  \\&=\, R^{-1}\big( T_{\b}(-\ic h\nabla)-|\x|^{-1} -C R
  r^{-2}\phi_{3r/R}(\x)\big)
\end{align}
with $\beta=\alpha^{2}R^{-1}\,(<R^{-1})$ and $h=R^{-1/2}$. Let
$\phi_{R,r}=\phi_R\psi_r=\phi_R\sqrt{1-\phi^2_r}$ and 
$\psi(x)=\phi_{R,r}(Rx)$ (see \eqref{def:kinetic energy} for
\(T_\beta\)). In this way, $\phi_{R}\psi_r(H_{\rm
  C}-Cr^{-2}\phi_{3r})\psi_r\phi_{R}$ and $\psi \widetilde{H}_{\rm
  C}\psi$ are unitarily equivalent.

Now, let $\ell$ and $\varphi_u$ be the functions in
\eqref{eq:ldefinition} and \eqref{eq:phiuprop}, respectively, when
$M=1$, $r_1=0$, and $\ell_0=h^2=R^{-1}$. 
By another relativistic IMS localisation we get that
\beax 
   \psi \widetilde{H}_{\rm C} \psi &=&R^{-1}
   \!\!\!\! \!\!\!\! 
   \int\limits_{r/3R<|u|<5/2} \!\!\!\! \!\!\!\!
   \psi\varphi_u 
   \big(T_{\b}(-\ic
   h\nabla)-|\x|^{-1}-C R r^{-2}\phi_{3r/R}(\x)\big)\varphi_u \psi
   \,\ell(u)^{-3}\, du 
   \\
  &&{}-R^{-1}\!\!\!\! \!\!\!\!
   \int\limits_{r/3R<|u|<5/2} \!\!\!\! \!\!\!\! \psi
   L_{\varphi_u}\psi \,\ell(u)^{-3} \, du\,. 
\eeax
We have used that $\psi\varphi_u=0$ for $|u|\not\in[r/3R,5/2]$,
 since $\ell(u)\le\frac12(|u|^2+\ell_0^2)^{1/2}$ (see \eqref{eq:elldcom})
and $\supp\,\psi\subset\{r/R\le|x|\le2\}$,
$\supp\,\varphi_u\subset\{|x-u|\le\ell(u)\}$.

Concerning $L_{\varphi_u}$, Theorem~\ref{IMS-error-est} with
$\ell=\ell(u)/2$, $m=R$, and $\Omega=\Omega_u=\{|x-u|\le 3\ell(u)/2\}$
gives 
$$ 
  L_{\varphi_u} \le C R^{-1}\ell(u)^{-2}\chi_{\Omega_u} + Q_{\varphi_u}\,,
$$
with
\bea\label{eq:Tr Q}
  \Tr[Q_{\varphi_u}] \le C R\alpha^{-1}\ell(u)^{-1}{\rm
    e}^{-\alpha^{-1}R\ell(u)/2}\,. 
\eea
Here we have used \eqref{eq:phiuprop}. 

Notice that if the supports of \(\varphi_u\) and
\(\varphi_{u'}\) overlap then \(|u-u'|\le\ell(u)+\ell(u')\) and thus
\begin{align}\label{eq:overlap two phi-u's}
  \ell(u')\le\ell(u)+\|\nabla\ell\|_{\infty}(\ell(u)+\ell(u'))\,.
\end{align}
Therefore, since \(\|\nabla\ell\|_\infty<1/2\), we have that
\(\ell(u')\le C\ell(u)\) and thus \(\ell(u)^{-1}\le
C\ell(u')^{-1}\). Similarly, \(\ell(u)\le C\ell(u')\), and so
\(\chi_{\Omega_u}\leq\chi_{\{|x-u|\le C\ell(u')\}}\) if the supports
of \(\varphi_u\) and \(\varphi_{u'}\) overlap.

Using this and \eqref{eq:phiuprop} 
we get for all $x\in\R^3$
\bea \label{eq:trick to join errors}\nonumber
  \int
  \ \big(\ell(u)^{-2}\chi_{\Omega_u}(x)\big)\,\ell(u)^{-3}\,du
  &=&\int
  \big(\ell(u)^{-2}\chi_{\Omega_u}(x)\big)\,\big(
  \int
   \varphi_{u'}^2(x)\,\ell(u')^{-3}\,du'\big)\,\ell(u)^{-3}\,du  
  \\&\le& C\int
  \ \varphi_{u'}(x)\ell(u')^{-2}\varphi_{u'}(x)\,\ell(u')^{-3}\,du'.
\eea
Rewriting the last integral with $u$ as integration variable we get
\beax
  \psi \widetilde{H}_{\rm C} \psi&\ge&
  R^{-1}\int
  \psi\varphi_u\big(T_\b(-\ic
  h\nabla)-|\x|^{-1}-Ch^{2}\ell(u)^{-2}\big)\varphi_u\psi\,\ell(u)^{-3}\,du 
  \\&&{}-R^{-1}\int
  \psi Q_{\varphi_u}\psi\,\ell(u)^{-3}\,du.
\eeax
Here we have also used that $R r^{-2}\phi_{3r/R}(x)\leq
Ch^{2}\ell(u)^{-2}$ for $x$ in the support of $\varphi_u$. This is a
consequence of $\ell(u)\leq \mfr12 |u|+\mfr12 \ell_0\leq \mfr12
|x|+\mfr12\ell(u)+\ell_0$ for $x$ in the support of $\varphi_u$ which
implies that $\ell(u)\leq |x|+\ell_0\leq C r/R$ for $x$ in the support
of $\varphi_u$ and $\phi_{3r/R}$. 

We will now use Theorem~\ref{corollary} (with
\(\phi=\psi\varphi_u\), \(\ell=\ell(u)\),
\(B_{\ell}=\{|x-u|\le\ell(u)\}, f=f(u)=|u|^{-1/2}\)) on  
$$
  \psi\varphi_u\big(T_\b(-\ic
  h\nabla)-|\x|^{-1}-Ch^{2}\ell(u)^{-2}\big)\varphi_u\psi\,,
$$
for \(u\) fixed with \(|u|\in[r/3R,5/2]\). We claim that 
\be \label{est:derivatives}
  \|\p_x^{\eta} (\psi\varphi_u)\|_\infty \le C_{\eta} \ell(u)^{-|\eta|}\,
  \text{ for all } \eta\in\N^3\,.
\ee
This follows from \eqref{eq:phiuprop}, \eqref{eq:ell and d}, and the
estimate $|\p^{\eta}\psi(x)| \le C_{\eta}|x|^{-|\eta|}$. It suffices
to check the latter for $1\le |x|\le 2$ and $r/R \le |x| \le 2r/R$,
due to the support properties of \(\psi\). 
Furthermore, for \(r>3\), \(|x|^{-1}+Ch^{2}\ell(u)^{-2}\)
is smooth (as  a function of \(x\)) on \(B_{\ell}\) (use
\eqref{eq:elldcom}, \(\ell_0=R^{-1}\),  and \(|u|\ge r/3R\)), and
satisfies
\be \label{est:derivativesV}
  \sup_{|x-u|<\ell(u)}\big|\p_x^{\eta}(|x|^{-1}+Ch^{2}\ell(u)^{-2})
  \big| 
  \le C_{\eta} f(u)^2\ell(u)^{-|\eta|}\, 
  \text{ for all } \eta\in\N^3\,,
\ee
with $f(u)=|u|^{-1/2}$. For the Coulomb potential, 
this is trivial. For the other term, only the statement
for \(\eta=0\) is non-trivial; it follows from \eqref{eq:elldcom2},
\(h=R^{-1/2}\), and \(|u|\ge r/3R\). Finally, the condition
$f(u)^2\beta\le 1$ is also satisfied (when \(r\ge3\)), since \(|u|\ge
r/3R\) and \(\beta<R^{-1}\).  

{F}rom Theorem~\ref{corollary} we conclude that
\begin{align*}
  \Tr[\phi_R &\psi_r(H_{\rm C}-Cr^{-2}\phi_{3r})\psi_r\phi_R]_- = 
  \Tr[\psi \widetilde{H}_{\rm C}\psi]_-
  \\
  &\ge \,R^{-1}(2\pi h)^{-3} \!\!\!\! \!\!\!\!\!\!\!
  \int\limits_{r/3R<|u|<5/2} \!\!\!\! \!\!\!\!
 \psi(v)^2\varphi_u(v)^2 \,
  \big[T_\beta(q) -
  |v|^{-1}-Ch^2\ell(u)^{-2}]_-
  \,\ell(u)^{-3}\, du dv dq 
  \\
  &\quad {}- C R^{-1}h^{-2+1/5} \!\!\!\! \!\!\!\!
  \int\limits_{r/3R<|u|<5/2} \!\!\!\! \!\!\!\!
  f(u)^{4-1/5}\ell(u)^{-1-1/5}\, du
  \\
  &\ \  {}-\,R^{-1} \!\!\!\! \!\!\!\!
  \int\limits_{r/3R<|u|<5/2} \!\!\!\! \!\!\!\!
  \Tr\big[\psi Q_{\varphi_u}\psi\big]  
  \,\ell(u)^{-3}\, du\,.
\end{align*}
Integrating the semi-classical error using \(f(u)=|u|^{-1/2}\),
\eqref{eq:elldcom2}, and \(R>r\) 
gives the lower bound $-CR^{-1} h^{-2+1/5} (R/r)^{1/10}={}-C 
r^{-1/10}$. 

{F}rom \eqref{eq:Tr Q} it follows, using \eqref{eq:elldcom2},
\(\alpha\leq2/\pi\), and \(R>r\), that
\begin{eqnarray*}
 R^{-1} \!\!\!\! \!\!\!\!\!\!
  \int\limits_{r/3R<|u|<5/2} \!\!\!\! \!\!\!\!\!\!
  \Tr\big[\psi Q_{\varphi_u}\psi\big]  
  \,\ell(u)^{-3}\, du
  &\le& C\!\!\!\! \!\!\!\!\!\!
  \int\limits_{r/3R<|u|<5/2} \!\!\!\! \!\!\!\!\!\!
  \alpha^{-2}{\rm e}^{-\alpha^{-1}R\ell(u)/2}\ell(u)^{-3}\,du\\&
  \le& C r^{-1}{\rm e}^{-r/8}\,.
\end{eqnarray*}
Since \(\supp\,\varphi_u\subset\{v\,|\,|u-v|\le \ell(u)\}\) and 
$|u|\leq5/2$ we have \(|v|\le |u|+\ell(u)\le C\ell(u)\) on
\(\supp\,\varphi_u\). Using this, integrating in \(u\) (using
\eqref{eq:phiuprop}), we get 
\beax
 \lefteqn{  R^{-1}(2\pi
   h)^{-3}\!\!\!\! \!\!\!\!\!\!\!
  \int\limits_{r/3R<|u|<5/2}
   \!\!\!\! \!\!\!\!  
  \psi(v)^2\varphi_u(v)^2 \,
  \big[T_\beta(q) -
  |v|^{-1}-Ch^2\ell(u)^{-2}]_-
  \,\ell(u)^{-3}\, du dv dq} \\
 &\ge&
  R^{-1}\frac{1}{(2\pi h)^3}
  \int 
 \psi(v)^2\,
  \big[\sqrt{\beta^{-1}q^2+\beta^{-2}}-\beta^{-1}-
  |v|^{-1}-Ch^2|v|^{-2}\,]_-
  \, dv dq\,.
\eeax

In order to compare this latter relativistic semi-classical expression
with the non-relativistic semi-classical one we use the inequality
$|x_- - y_-|\le |x-y|$ and a Taylor expansion of $\sqrt{t^2+1}-1$
to arrive at
\begin{eqnarray}\label{eq:semiclassicalcomp}
  \lefteqn{\int\Big|\big[\mfr{1}{2} q^2 -a\big]_-
  -[\sqrt{\beta^{-1}q^2+\beta^{-2}}-\beta^{-1}-a-b]_-\Big|\,dq}
  &&\nonumber\\
  &\hspace{1cm}\leq&
  C\beta (\beta(a+b)^2+2(a+b))^{7/2}+Cb(\beta(a+b)^2+2(a+b))^{3/2}
\end{eqnarray}
for all $a,b>0$. This gives, using $h^2=R^{-1}$ and $\beta<R^{-1}$, that
\begin{eqnarray}\label{eq:change s-c}\nonumber
  \lefteqn{
  \left|\int \psi(v)^2 \,\Big(\big[\mfr{1}{2} q^2 -  
  |v|^{-1}\big]_- - \big[\sqrt{\beta^{-1}q^2+\beta^{-2}}-\beta^{-1} -
  |v|^{-1}
  -Ch^2|v|^{-2}
  \big]_-\Big)\, dvdq \right|
  }\hspace{5truecm}&&
  \\&
  \le& C R^{-1}\int\limits_{r/R<|v|< 2}
  |v|^{-7/2}\, dv\ \le\ C(Rr)^{-1/2}\,, 
\end{eqnarray}
since \(R>r\geq1\). 

Thus undoing the scaling we arrive at
\begin{eqnarray*}
  R^{-1}(2\pi h)^{-3}\!\!\!\! \!\!\!\!\!\!\!
  \int\limits_{r/3R<|u|<5/2}
  \!\!\!\! \!\!\!\!  
  \lefteqn{\psi(v)^2\varphi_u(v)^2 \,
  \big[T_\beta(q) -
  |v|^{-1}-Ch^2\ell(u)^{-2}]_-
  \,\ell(u)^{-3}\, du dv dq} \\
  &\ge&(2\pi)^{-3} \int \phi_{R,r}(v)^2\,
  \big[\tfrac12q^2 - |v|^{-1}]_- \, dv dq 
  - C r^{-1/10}\,.
\end{eqnarray*}
Summarizing, we have proved that there exists a constant
\(C=C(\phi)\), independent of \(\alpha\in[0,2/\pi]\),
 such that for \(r\) large enough, and \(R>2r\), 
\bea\label{eq:lower bound TF-Scott}\nonumber
  \lefteqn{\Tr[\phi_R H_{\rm C}\phi_R]_-}
  \\&\ge&
  \Tr[\phi_r H_{\rm C} \phi_r]_- + 
  (2\pi)^{-3} \int \phi_{R,r}(v)^2\,
  \big[\tfrac12q^2 - |v|^{-1}]_- \, dv dq 
  - C r^{-1/10}\,.
\eea

Next, we want to bound $\Tr[\phi_R H_{\rm C}\phi_R]_-$ from above by
$\Tr[\phi_r H_{\rm C}\phi_r]_-$  by constructing a density matrix. To
this end, we first set $\gamma_r=\chi(\phi_r H_{\rm C} \phi_r)$. Then
we let $\widetilde{\gamma}_u$ be the density matrix we get when we use
Theorem~\ref{corollary} for the rescaled operator
$\psi\varphi_u\widetilde{H}_{\rm C}\varphi_u\psi$ (now with
\(\widetilde{H}_{\rm C}=U^*H_{\rm C}U\) with \(U\) as in
\eqref{eq:unitary scaling}), for fixed \(u\) with
\(|u|\in[r/3R,5/2]\), and set $\gamma_u=U\varphi_u\widetilde{\gamma}_u
\varphi_u U^*$. Finally, we define 
\be 
  \gamma = \phi_r\gamma_r\phi_r +
  \!\!\!\! \!\!\!\!
  \int\limits_{r/3R<|u|<5/2} \!\!\!\! \!\!\!\!
  \psi_r \,\gamma_u \,\psi_r \,\ell(u)^{-3} \,du\,.
\ee
Since \({\bf 0}\le\widetilde\gamma\le{\bf 1}\) and
\(\int\varphi_u^2(x)\ell(u)^{-3}\,du=1\), 
$$ 
  {\bf 0} \le \int \gamma_u\,\ell(u)^{-3}\,du \le {\bf 1}\,,
$$
and so we see, by multiplication with $\psi_r$ on both sides, that
${\bf 0} \le \gamma \le{\bf 1}$. Also, \(\gamma\) is trivially trace 
class. 
By the variational principle we obtain that
\beax 
  \Tr[\phi_R H_{\rm C}\phi_R]_- &\le& \Tr[\phi_R H_{\rm C}
  \phi_R\gamma] 
  \\
  &=&\Tr\big[\phi_R\phi_r H_{\rm C}\phi_r\phi_R\,
  \chi(\phi_r H_{\rm C} \phi_r)\big] 
  \\
  &&+\,
  \!\!\!\! \!\!\!\!
  \int\limits_{r/3R<|u|<5/2} \!\!\!\! \!\!\!\!
    \Tr[\psi_r
  \phi_R H_{\rm C} \phi_R \psi_r\gamma_u]\, \ell(u)^{-3} \,du 
  \\
  &\le& \Tr[\phi_r H_{\rm C}\phi_r]_- + 
  \!\!\!\! \!\!\!\!
  \int\limits_{r/3R<|u|<5/2} \!\!\!\! \!\!\!\!
  \Tr[\psi\varphi_u \widetilde{H}_{\rm C} \varphi_u\psi
  \widetilde{\gamma}_u]\, \ell(u)^{-3}  
  \,du\,.
  \\  
\eeax
Here we have used that $\phi_R\phi_r H_{\rm C}\phi_r\phi_R=\phi_r
H_{\rm C}\phi_r$, since \(R>2r\), and again scaled the operators
inside the trace in the integrand. Using Theorem~\ref{corollary} we
can bound the integral from above by 
\beax
   \lefteqn{ 
   R^{-1} (2\pi h)^{-3} \int\psi(v)^2 \varphi_u(v)^2 
   \big[T_\b(q)-|v|^{-1}\big]_- \ell(u)^{-3}\, du dv dq
   }
   \\
   && \qquad\quad+ \ C  R^{-1} h^{-2+1/5} 
   \!\!\!\! \!\!\!\!
   \int\limits_{r/3R<|u|<5/2} \!\!\!\! \!\!\!\!
   f(u)^{4-1/5}\ell(u)^{-1-1/5}\, du\,.
\eeax
As in the case of the lower bound, the error term is bounded by
$Cr^{-1/10}$. 

Integrating with respect to $u$ in the semi-classical expression
above, changing back coordinates, and using
\eqref{eq:change s-c}, we conclude that
\bea\label{eq:upper bound TF-Scott}
  \Tr[\phi_R H_{\rm C}\phi_R]_{-}
  \le
  \Tr[\phi_r H_{\rm C}\phi_r]_{-}
  +(2\pi)^{-3}\int \phi_{R,r}(v)^2 \big[\tfrac12q^2 -
  |v|^{-1}\big]_- \,dv dq
  + Cr^{-1/10}\,.
\eea

Combining \eqref{eq:lower bound TF-Scott} and \eqref{eq:upper bound
  TF-Scott} we have shown that for $R>2r$, 
\beax 
  &&
  \!\!\!\!\!\!\!\!\!\!\!\!\!\!\!\!
  \big|S_R-S_r\big|\\
  &\le&\Big|\Tr[\phi_{R}H_{\rm C}\phi_{R}]_{-}- \Tr[\phi_{r}H_{\rm
    C}\phi_{r}]_{-} 
  +\, (2\pi)^{-3}\int
  \big(\phi_{r}(v)^2-\phi_{R}(v)^2\big)\,\big[\mfr{1}{2} q^2 - 
  |v|^{-1}\big]_- \,dvdq \Big|
  \\
  &\le&C r^{-1/10}\,.  
\eeax 
Hence, $\{S_n\}_{n\in\mathbb N}$ is a Cauchy-sequence and with
$\mathcal{S}=\mathcal{S}(\alpha)$ the limiting value we have
$$
  \big|S_r-\mathcal{S}\big|\leq Cr^{-1/10}\,.
$$
This proves \eqref{eq:local Hydro}. That \(\mathcal{S}\) is
non-increasing follows from the fact that \(T_{\alpha^2}(p)\) (see
\eqref{def:kinetic energy}) is decreasing in \(\alpha\). Finally, that
\(\mathcal{S}(0)=1/4\) is a well-known fact \cite{SS}.
\end{proof}

\begin{proof}[Proof of Theorem \ref{thm:TF-semicl}] 
Using the combined Daubechies-Lieb-Yau inequality 
(see Theorem~\ref{thm:L-Y-D}) with $\alpha=\beta^{1/2}h^{-1}(\leq 1)$
and $m=h^{-2}$ we may assume that $h$ is bounded by some
constant, which we may choose small depending on $M$ and $r_0$, using
that \(z_k\le 2/\pi\), \(k=1,\ldots,M\), and that \(\mathcal{S}\) is a
bounded function (since it is non-increasing; see Lemma~\ref{lem:Coulomb}).

In order to control the region close to and far away from all the
nuclei we introduce localisation functions $\theta_\pm\in C^{1}(\R)$
with the properties that \(0\le\theta_{\pm}\le1\) and 
\begin{enumerate}
  \item $\theta_-^2+\theta_+^2=1$,
  \item $\theta_-(t)=1$ if $t<1$ and $\theta_-(t)=0$ for $t>2$. 
\end{enumerate}
Let $0<r<r_0/4$ and $0<r_0<R$
and define $\Phi_\pm(x)=\theta_\pm(d(x)/R)$ and
$\phi_\pm(x)=\theta_\pm(d(x)/r)$ (with \(d=d_{\bf r}\) as in
\eqref{ddefinitionBIS}).
We choose (assuming $h$ is small enough)
\begin{equation}\label{eq:defrR}
  r=\delta^{-1} h^2\quad\hbox{and}\quad
  R=\left\{\begin{array}{ll}Ch^{-1},&\hbox{if } \mu=0\\R_\mu,&\hbox{if
      } \mu\ne0 
\end{array}\right. ,
\end{equation}
where $\delta=h^{}<1/2$ and $R_\mu=C\mu^{-1}$ is chosen such that
${}-V(x)\geq0$ for $d(x)\geq R_\mu$ (see \eqref{eq:VcondD}). We will
keep writing $\delta$ and $R$ in the calculations below to show why
these choices are optimal. Clearly,
$\Phi_-^2+\Phi_+^2=1$, $\phi_-^2+\phi_+^2=1$, and
$\phi_-^2+\Phi_-^2\phi_+^2+\Phi_+^2=1$. 
Note also that 
$$
  \phi_-(x)=\sum_{k=1}^M\theta_{r,k}(x)\quad\hbox{ with
  }\quad\theta_{r,k}(x)=\theta_-(|x-r_k|/r)\,. 
$$
{\bf Step 1: Lower bound on the quantum energy.}

By the relativistic IMS formula \eqref{eq:IMS} and
Theorem~\ref{IMS-error-est} with \(m=h^{-2}\),
$\alpha=\beta^{1/2}h^{-1}(\leq 1)$, and either $\ell=R$,
$\Omega=\{d(x)\le3R\}$, and \(\theta=\Phi_{\pm}\) respectively, or
$\ell=r$, $\Omega=\{d(x)\le3r\}$, and $\theta=\theta_{r,k}$,
$k=1,\ldots,M$, or $\theta=\phi_+$ respectively, we find that
\bea\label{eq:first TF-loc}  \nonumber
  \lefteqn{T_\b(-\ic h\nabla) -V(\x)}&&
  \\&=&\nonumber\Phi_+\big(T_\b(-\ic h\nabla)
  -V(\x)\big)\Phi_+ + \Phi_-\big(T_\b(-\ic h\nabla)
  -V(\x)\big)\Phi_-
  - L_{\Phi_-} - L_{\Phi_+}
  \\&=& \sum_{k=1}^M\theta_{r,k}\big(T_\b(-\ic h\nabla)
  -V(\x)\big)\theta_{r,k} 
  +\Phi_-\phi_+\big(T_\b(-\ic h\nabla) -V(\x)\big)\phi_+\Phi_-
  \nonumber 
  \\&&{}+
  \Phi_+\big(T_\b(-\ic h\nabla)
  -V(\x)\big)\Phi_+-\Phi_-(\,\sum_{k=1}^{M}L_{\theta_{r,k}}
  + L_{\phi_+})\Phi_- - L_{\Phi_-} -  
  L_{\Phi_+}\,,
\eea
with 
\bea \label{eq:good trace}
  L_{\Phi_\pm} &\le& {C} h^2
  \|\nabla\Phi_{\pm}\|_{\infty}^2\,\chi_{\{d(x)\le 3R\}} +
  Q_{\Phi_\pm}\,,\\
  \Tr[Q_{\Phi_\pm}]&\le& C\beta^{-1}R^{-1}
  {\rm e}^{-(\beta^{1/2}h)^{-1}R}\,\|\nabla\Phi_{\pm}\|_{\infty}^2 
  \,\big|\{d(x)\le 3R\}\big|\,, 
\eea
and (with, by abuse of notation,
\(L_{\phi_{-}}=\sum_{k=1}^{M}L_{\theta_{r,k}}\)) 
\bea \label{eq:good trace2}
  L_{\phi_\pm} &\le& {C} h^2
  \|\nabla\phi_{\pm}\|_{\infty}^2\,\chi_{\{d(x)\le 3r\}} +
  Q_{\phi_\pm}\,,\\
  \Tr[Q_{\phi_\pm}]&\le& C\beta^{-1}r^{-1}{\rm
    e}^{-(\beta^{1/2}h)^{-1}r}\,\|\nabla\phi_{\pm}\|_{\infty}^2
  \,\big|\{d(x)\le 3r\}\big|\,. 
\eea
Using \(\big|\{d(x)\le 3R\}\big|\le 36\pi M R^3\),
\(\|\nabla\Phi_{\pm}\|_{\infty}\le CR^{-1}\) (and the corresponding
estimates for $r$ and $\phi_\pm$), \(\beta\le h^2\), and \(h\) small,
it follows that 
\beax
   \Tr[Q_{\Phi_\pm}]\le C h^{2}R^{-2}{\rm e}^{-h^{-2} R/2}\leq C_Nh^N,\quad
   \Tr[Q_{\phi_\pm}]\le C h^{2}r^{-2}{\rm e}^{-h^{-2} r/2}\leq C_Nh^N\,,
\eeax 
for any $N>0$ by the choices \eqref{eq:defrR}.

Hence we have that
\bea\label{eq:TF-loc-2}  \nonumber
  \lefteqn{\Tr\big[T_\b(-\ic h\nabla) -V(\x)\big]_-}&&
  \\&\geq& \sum_{k=1}^M\Tr\big[\theta_{r,k}\big(T_\b(-\ic h\nabla)
  -V(\x)-Ch^2r^{-2}\big)\theta_{r,k}\big]_-\nonumber\\&& 
  {}+\Tr\big[\Phi_-\phi_+\big(T_\b(-\ic h\nabla)
  -V(\x)-Ch^2r^{-2}\chi_{\{d(x)\le 3r\}} 
  -Ch^2R^{-2} \big)\phi_+\Phi_-\big]_-  \nonumber
  \\&&{}+
  \Tr\big[\Phi_+\big(T_\b(-\ic h\nabla)
  -V(\x)-Ch^2R^{-2}\chi_{\{d(x)\le 3R\}}
  \big)\Phi_+\big]_--Ch^{-2+1/10}\,.
\eea
Each of the first three terms above will be compared to the
corresponding semi-classical expression.
We shall treat the three terms by different methods. The
first term will be calculated using the Scott correction for Hydrogen
in Lemma~\ref{lem:Coulomb}.
The second term will be computed using the local rescaled
semi-classics in 
Theorem~\ref{corollary}. The last term is an
error term which we will treat first.

\noindent{\bf Control of the third term in \eqref{eq:TF-loc-2}.}

We use the Daubechies inequality \eqref{eq:Daub usefull} with
$m=h^{-2}$ and $\alpha=\beta^{1/2}h^{-1}(\leq 1)$. In the case $\mu=0$
we obtain, using the choice \eqref{eq:defrR} of $R$,
\begin{eqnarray}
   \lefteqn{\Tr\big[\Phi_+(T_\b(-\ic h\nabla)
   -V(\x)-Ch^2R^{-2}\chi_{\{d(x)\le 3R\}} )\Phi_{+}\big]_{-}}\nonumber\\
   &\ge& {} 
   -Ch^{-3}M\int\limits_{|x|>R}|x|^{-15/2}\,dx
   \nonumber
   {}-CM\int\limits_{|x|>R}|x|^{-12}\,dx-Ch^2R^{-2}-Ch^8R^{-5}
   \nonumber\\&\ge&
   {}-C\big(h^{-3}R^{-9/2}+R^{-9}+h^2R^{-2}-h^8R^{-5}\big)\geq
   {}-Ch^{3/2}\,.\label{eq:vtfouter} 
 \end{eqnarray}
The case $\mu\ne0$ gives a smaller error since \({}-V\ge0\) on the
support of \(\Phi_{+}\) in this case.

\noindent{\bf Control of the first term in \eqref{eq:TF-loc-2}.}

Using
\eqref{eq:VcondS} and \eqref{eq:defrR} we have 
\begin{eqnarray*}
  \lefteqn{\sum_{k=1}^M\Tr\big[\theta_{r,k}\big(T_\b(-\ic h\nabla)
  -V(\x)-Ch^2r^{-2}\big)\theta_{r,k}\big]_-}\\ 
  &\geq& \sum_{k=1}^M\Tr\left[\theta_r\big(T_\b(-\ic
    h\nabla) -z_k|\x|^{-1}-C\delta^2h^{-2}\big)
  \theta_r\right]_-\,,
\end{eqnarray*}
where we have written $\theta_r(x)=\theta_-(|x|/r)$. 
We have used here that 
\begin{equation}\label{eq:Crminest}
  Cr_{\min}^{-1}+C\leq Cr_0^{-1}+C\leq C\delta^2h^{-2}\,.
\end{equation}
It is this relation which sets a lower bound on $\delta$.
We will control the error using the combined Daubechies-Lieb-Yau
inequality in Theorem~\ref{thm:L-Y-D} with $m=h^{-2}$ and
$\alpha=\beta^{1/2}h^{-1}(\le1)$. 
Note that $m\alpha^{-1}=\beta^{-1/2}h^{-1}\geq h^{-2}$. 
Thus using Theorem~\ref{thm:L-Y-D} we find, for all density matrices
$\gamma$ and all $\e\geq \delta^2$, that
$$
  \e\,\Tr\left[\gamma\big(\theta_r(T_\b(-\ic h\nabla) -
  z_k|\x|^{-1}-C\e^{-1}\delta^2h^{-2})\theta_r\big)\right]
  \geq{}-C(\e\delta^{-1/2}+\e^{-3/2}\delta^2   
  +\e^{-3}\delta^5)h^{-2}.
$$
Thus for all density matrices $\gamma$ and all $\e\geq \delta^2$ we have 
\begin{equation}
  C\delta^2h^{-2}\,\Tr[\gamma\theta_r^2]\leq
  \e\,\Tr\left[\gamma\theta_r(T_\b(-\ic 
  h\nabla) -
  z_k|\x|^{-1})\theta_r\right]+C(\e\delta^{-1/2}+\e^{-3/2}\delta^2 
  +\e^{-3}\delta^5)h^{-2}\,.
\end{equation}
Hence 
\begin{eqnarray}\label{eq:phi_-error}
  \lefteqn{\sum_{k=1}^M\Tr\big[\theta_{r,k}\big(T_\b(-\ic h\nabla)
  -V(\x)-Ch^2r^{-2}\big)\theta_{r,k}\big]_-}&&\\\nonumber
  &\geq&(1-\e)\sum_{k=1}^M\Tr\big[\theta_r(T_\b(-\ic
  h\nabla) -
  z_k|\x|^{-1})\theta_r\big]_{-} - C(\e\delta^{-1/2}+\e^{-3/2}\delta^2  
  +\e^{-3}\delta^5)h^{-2}\,.
\end{eqnarray}
For the corresponding semi-classical expression we 
have from \eqref{eq:VcondS} and \eqref{eq:Crminest} (using
\(\delta<1/2\) and \(|x_{-}-y_{-}|\le |x-y|\)) that
\begin{eqnarray}
  \left|(2\pi
    h)^{-3}\!\!\!\int\phi_-(v)^2\big[\tfrac12p^2-V(v)\big]_-\,dvdp
  -\sum_{k=1}^M(2\pi 
  h)^{-3}\!\!\!\int\theta_r(v)^2\big[\tfrac12p^2
  - z_k|v|^{-1}\big]_-\,dvdp\right|\nonumber\\   
  \leq C\delta^{1/2}h^{-2}.\label{eq:phi_-scerror}
\end{eqnarray}
A simple rescaling applied to the local Hydrogen result in 
Lemma~\ref{lem:Coulomb} gives that 
\begin{eqnarray}
  \lefteqn{\biggl|\Tr\big[\theta_r(T_\b(-\ic
  h\nabla) - z_k|\x|^{-1})\theta_r\big]_--(2\pi
  h)^{-3}\!\!\!\int\theta_r(v)^2\big[\tfrac12p^2-z_k|v|^{-1}\big]_-\,dvdp} 
  \hspace{3.5cm}&&\nonumber\\&&-z_k^2 
  h^{-2}\mathcal{S}(\beta^{1/2}h^{-1}z_k)\biggr|
  \leq C h^{-2}(h^{-2}r)^{-1/10}=Ch^{-2}\delta^{1/10}.\label{eq:phi_-S}
\end{eqnarray}
Combining \eqref{eq:phi_-error}, \eqref{eq:phi_-scerror}, and
\eqref{eq:phi_-S}, using that $\mathcal{S}$ is a bounded function
(since it is non-increasing; see Lemma~\ref{lem:Coulomb}), that
\(\delta<1/2\), and that
$$
  (2\pi
  h)^{-3}\int\theta_r(v)^2\big[\tfrac12p^2-z_k|v|^{-1}\big]_-\,dvdp\leq 
  Ch^{-3}r^{1/2}=C h^{-2}\delta^{-1/2}\,,
$$
and choosing $\e=\delta$, we conclude that 
\begin{eqnarray}
  \Tr\big[\phi_-\big(T_\b(-\ic h\nabla)
    -V(\x)-Ch^2r^{-2}\big)\phi_-\big]_-&\geq&(2\pi
  h)^{-3}\!\!\!\int\phi_-(v)^2\big[\tfrac12q^2-V(v)\big]_-\,dvdq\nonumber
  \\&&\!\!\!\!\!\!\!\!\!\!\!\!\!\!\!\!
  {}+
  h^{-2}\sum_{k=1}^{M}z_k^2\mathcal{S}(\beta^{1/2}h^{-1}z_k)-C\delta^{1/10}h^{-2}\!. 
  \label{eq:vtfscott} 
\end{eqnarray}

\noindent{\bf Control of the second term in \eqref{eq:TF-loc-2}.}

Here we use the local rescaled semi-classics in
Theorem~\ref{corollary}. Before we apply our semi-classical estimates
on the support of $\Phi_-\phi_+$ we localise using the functions
$\varphi_u$ from (\ref{eq:phiuprop}) for general $M$ and with
\(\ell(u)\) as in \eqref{eq:ldefinition}, with $\ell_0=r/4$. From
\eqref{eq:defrR} and the choice of \(\delta\) it follows that
\(\ell_0<1\) for \(h\) small enough. If $x$ is in the support of
$\Phi_-\phi_+$ and in the support of $\varphi_u$ then
$d(u)>r/2=2\ell_0$ since (using \eqref{eq:elldcom}) 
$$
  r\leq d(x)\leq d(u)+\ell(u)<d(u)+\max\{d(u),\ell_0\}\,,
$$
and also $d(u)\leq 2R+1$ since $\ell(u)<1/2$.
Using again the relativistic IMS localisation \eqref{eq:IMS} we thus
have 
\begin{eqnarray}\label{eq:localise in TF}
  \lefteqn{
    \Phi_-\phi_+\big(T_\b(-\ic h\nabla)
    -V(\x)-Ch^2r^{-2}\chi_{\{d(x)\le3r\}}
    -Ch^2R^{-2} \big)\phi_+\Phi_-} 
  &&\nonumber\\
  &&\begin{array}{ll}\\\!\!\!\!\!\!=\!\!\!\!\!\!\!\!
    \displaystyle\int\limits_{r/2< 
      d(u)<2R+1}\!\!\!\!\!\!\!\!\!\!\!\!\!\!\!
  \Phi_-\phi_+\varphi_u\big(T_\b(-\ic h\nabla)-V(\x)-Ch^2r^{-2}\chi_{\{d(x)\le3r\}}
    -Ch^2R^{-2} \big)\varphi_u 
  \phi_+\Phi_-\,\ell(u)^{-3}\,du
  \end{array}\nonumber\\
  &&\quad\,-\,\!\!\!\!\!\!\!\!
  \int\limits_{r/2<d(u)<2R+1}\!\!\!\!\!\!\!\!\!\!\!\!\!\!\!
  \Phi_-\phi_+L_{\varphi_u}\phi_+\Phi_-  
  \ell(u)^{-3}\,du\,. 
\end{eqnarray}

Concerning $L_{\varphi_u}$, Theorem~\ref{IMS-error-est} with
$\ell=\ell(u)/2$ and $\Omega=\Omega_u=\{|x-u|\le 3\ell(u)/2\}$ (and
$m=h^{-2}$ and $\alpha=\beta^{1/2}h^{-1}(\leq 1)$) gives,
using \eqref{eq:phiuestimate}, that
\begin{align}\label{eq:another error}
 L_{\varphi_u} \le C h^2\ell(u)^{-2}\chi_{\Omega_u} + Q_{\varphi_u}\,,
\end{align}
with
\bea\label{eq:Tr Q2}
  \Tr[Q_{\varphi_u}] \le C \beta^{-1}{\rm
    e}^{-(\beta^{1/2}h)^{-1}\ell(u)}\, 
  \leq C_Nh^N
\eea
for all $N>0$ as a consequence of \(\ell(u)\ge(1+M)^{-1}r/8\) and
\(\beta\le h^2\). Thus
\beax
  \Tr\Big[\!\!\!\!\!\!\!\!\!\!\!\!\!
  \int\limits_{r/2< d(u)<2R+1}\!\!\!\!\!\!\!\!\!\!\!\!
  \Phi_{-}\phi_+Q_{\varphi_{u}}\phi_+\Phi_{-}\ell(u)^{-3}\,du\Big]
  &\le& C_N h^{N}\ \text{ for all } N>0\,.
\eeax

By the same arguments as in the proof of Lemma~\ref{lem:Coulomb} above
(see \eqref{eq:overlap two phi-u's} and \eqref{eq:trick to join
  errors}) we join the new localisation error term (from
\eqref{eq:localise in TF}, \eqref{eq:another error}) with the previous
localisation errors from \eqref{eq:good trace} and \eqref{eq:good
  trace2}. Since $\ell(u)\leq \max\{d(u),r/4\}$ we have $R^{-2}\leq C
\ell(u)^{-2}$ for $d(u)\leq 2R+1$ (and $h$ small enough when
\(\mu=0\); for \(\mu\neq0\), use \(\ell(u)<1/2\)) and, by
\eqref{eq:ell and d} (valid on the support of \(\varphi_u\) when
\(d(u)>r/2=2\ell_0\)), 
$$
  r^{-2}\chi_{\{d(x)\le3r\}}(x)\varphi_u(x)^2\leq C\ell(u)^{-2}\varphi_u(x)^2\,.
$$ 
This way, we have proved that
\begin{eqnarray}
  \lefteqn{\Tr\big[\Phi_-\phi_+\big(T_\b(-\ic h\nabla)
  -V(\x)-Ch^2r^{-2}\chi_{\{d(x)\le3r\}}
  -Ch^2R^{-2} \big)\phi_+\Phi_-\big]_- } 
  &&\label{eq:tfphilow}\\\nonumber
  &\geq&\int\limits_{r/2< d(u)<2R+1}\!\!\!\!\!\!\!\!\!\!\!\!
  \Tr\big[\phi_+\varphi_u\left(T_\b(-\ic h\nabla) - V(\x) - 
  Ch^2\ell(u)^{-2}
  \right)\varphi_u\phi_+\big]_- \,
  \ell(u)^{-3}\,du \,\,-\,C h^{-2+1/10}\,.
\end{eqnarray}
Note that there is no need to write $\Phi_-$ on the right side, since
in general $\Tr(\Phi A\Phi)_-\geq \Tr A_-$ for any self-adjoint
operator $A$ and any function $0\le\Phi\le1$.

For \(u\) such that $d(u)>r/2=2\ell_0$ and \(d(u)<2R+1\) we have from 
\eqref{eq:VcondD} and \eqref{eq:defrR} that
\begin{align*}
  \sup_{|x-u|\leq\ell(u)}|\partial^\eta (V(x)-Ch^2\ell(u)^{-2})|&\leq C
  f(u)^2\ell(u)^{-|\eta|}\hbox{ for }|\eta|\leq3\,,\\ 
  \|\partial^\eta(\phi_+\phi_u)\|_\infty&\leq
  C_\eta\ell(u)^{-|\eta|}\hbox{ for }|\eta|\leq7\,, 
\end{align*}
with 
\begin{equation}\label{eq:definition-f}
  f(u)= \left\{\begin{array}{ll}
      d(u)^{-1/2} &\hbox{ if }\mu\ne0\\ 
  \min\{d(u)^{-1/2},d(u)^{-3/2}\} &\hbox{ if }\mu=0
  \end{array}\right. .
\end{equation}
We have also used
that $d(u)\geq \delta^{-1}h^2/2\geq h^2$ and $\min\{1,d(u)\}\leq
C\ell(u)$. 

We are therefore in a position to use the rescaled semi-classics in
Theorem~\ref{corollary} on the ball $\{|x-u|\leq\ell \}$ with
$\ell=\ell(u)$, $f=f(u)$, and $\phi=\phi_+\phi_u$ for each $u$ with 
$r/2\le d(u)\le 2R+1$. Note in particular that $\beta f^2(u)\le\beta
d(u)^{-1}\leq 2\beta/ r=2\beta \delta h^{-2}\leq 2\delta\leq 1$ .
We conclude that for all $u$ with
$r/2\le d(u)\le 2R+1$,
\begin{eqnarray}
  \lefteqn{\Big|\Tr\big[\phi_+\varphi_u\big(T_\b(-\ic
  h\nabla)-V(\x)-Ch^2\ell(u)^{-2}\big)\varphi_u\phi_+\big]_-}&& 
  \nonumber\\&&{}-(2\pi h)^{-3}\int\phi_+(v)^2\varphi_u(v)^2
  \big[\sqrt{\beta^{-1}q^2+\beta^{-2}}-
  \beta^{-1}-V(v)-Ch^2\ell(u)^{-2}\big]_-\, dvdq\Big|
  \nonumber\\&\label{eq:v-tfsclow}
  \leq& C h^{-2+1/5} f(u)^{4-1/5}\ell(u)^{2-1/5}\,.
\end{eqnarray}

The semi-classical integral may be estimated using \eqref{eq:semiclassicalcomp}
\begin{eqnarray}
  \lefteqn{\left|\int\big[\sqrt{\beta^{-1}q^2+\beta^{-2}}-
  \beta^{-1}-V(v)-Ch^2\ell(u)^{-2}\big]_-\,dq
  -\int\big[\mfr{1}{2}q^2-V(v)\big]_-\,dq\right|}&&
  \nonumber\\&\leq&
  Ch^2\big(|V(v)|+h^2\ell(v)^{-2}\big)^{7/2}
  +Ch^2\ell(v)^{-2}\big(|V(v)|+h^2\ell(v)^{-2}\big)^{3/2}\nonumber\\
  &\leq&
  C\big(h^2|V(v)|^{7/2}+h^2\ell(v)^{-2}|V(v)|^{3/2}+(h\ell(v)^{-1})^5\big)\,,
  \label{eq:scintegrals} 
\end{eqnarray}
for $v$ in the support of $\varphi_u$, since then we have $\ell(v)\leq
3\ell(u)/2$ (see \eqref{eq:bound nabla ell}) and $|V(v)|\leq
Cd(v)^{-1}\leq Cd(u)^{-1}\leq Ch^{-2}$ (see \eqref{eq:VcondD},
\eqref{eq:ell and d} and \eqref{eq:elldcom}). 
We have also used that $\beta\leq h^2$ and that, by \eqref{eq:elldcom}
and \eqref{eq:defrR} (\(\ell_0=r/4\)), 
$h^2\ell(v)^{-2}\leq C h^2r^{-2}=C \delta^{2}h^{-2}\leq Ch^{-2}$.

Combining \eqref{eq:tfphilow}, \eqref{eq:v-tfsclow}, and
\eqref{eq:scintegrals} (remembering that $d(u)\leq C d(v)$ if $v$ is in
the support of $\varphi_u$ and $d(u)>r/2=2\ell_0$)
we find, using \eqref{eq:phiuprop}, \eqref{eq:VcondD}, and
\eqref{eq:definition-f}, that
\begin{eqnarray}
  \lefteqn{\Tr\big[\Phi_-\phi_+\big(T_\b(-\ic h\nabla)
  -V(\x)-Ch^2r^{-2}\chi_{\{d(x)\le3r\}}
  -Ch^2R^{-2} \big)\phi_+\Phi_-\big]_- } \nonumber
  &&\\
  &\geq&(2\pi h)^{-3}\int
  \phi_+(v)^2\big[\mfr{1}{2}q^2-V(v)\big]_-\,dq\,dv
  -C \!\!\!\!\!\!\!\!\!\!\int\limits_{r/2<d(u)<2R+1} \!\!\!\!\!\!\!\!\!\!
  h^{-2+1/5}f(u)^{19/5}\ell(u)^{-6/5}\, du \nonumber\\
  &&{}-C\!\!\!\!\!\!\!\!\!\!\!\!\!\int\limits_{C^{-1}r< d(v)< 2R+2} \!\!\!\!\!\!\!\!\!\!
  h^{-3}(h^2
  d(v)^{-7/2}+h^2\ell(v)^{-2}f(v)^{3}+(h\ell(v)^{-1})^5)\,
  dv
  -Ch^{-2+1/10}\,.\label{eq:vtf2ndterm} 
\end{eqnarray}

If $\mu\ne0$ the error term in \eqref{eq:vtf2ndterm} is controlled as
follows:
\begin{align}\nonumber
  &\int\limits_{C^{-1}r<d(v)<
      2R+2}\!\!\!\!\!\!\!\!\!\!\big(h^{-2+1/5}f(v)^{19/5}\ell(v)^{-6/5}+ 
  h^{-1}d(v)^{-7/2}+h^{-1}\ell(v)^{-2}f(v)^{3}+h^2\ell(v)^{-5}\big)\,
  dv
  \\\nonumber&\qquad\leq\ \, 
  C \!\!\!\!\!\int\limits_{C^{-1}r<|v|<
    2R+2}\!\!\!\!\!\!\!\!\!\!\big(h^{-2+1/5}|v|^{-19/10}\min\{1,|v|\}^{-6/5}+ 
  h^{-1}|v|^{-7/2}\nonumber
  \\&{}\qquad\qquad\qquad\qquad\qquad\qquad\qquad\nonumber
  +h^{-1}\min\{1,|v|\}^{-2}
  |v|^{-3/2}+h^{2}\min\{1,|v|\}^{-5}\,\big)\,dv
  \\&\qquad\leq\ \, 
  C\big(h^{-2+1/5}(R^{11/10}+r^{-1/10})+h^{-1}r^{-1/2}
  +h^{-1}R^{3/2}+h^2R^3+h^2r^{-2}\big)
  \nonumber
 \\&\qquad\leq\ \, 
  Ch^{-2+1/10}\,,
  \label{eq:mune0} 
\end{align}
with the choices \eqref{eq:defrR} where $R=R_\mu$ is a constant. 

If $\mu=0$ we get instead 
\begin{align}
  \int\limits_{C^{-1}r<d(v)<
      2R+2}\!\!\!\!\!\!\!\!\!\!\big(&h^{-2+1/5}f(v)^{19/5}\ell(v)^{-6/5}+ 
  h^{-1}d(v)^{-7/2}+h^{-1}\ell(v)^{-2}f(v)^{3}+h^2\ell(v)^{-5}\big)\,
  dv
  \nonumber\\ 
  &\leq \, 
  C \big(h^{-2+1/5}r^{-1/10}+h^{-1}r^{-1/2}+h^2R^3+h^2r^{-2}\big)
  \leq \, Ch^{-2+1/10}.\label{eq:mu=0} 
\end{align}
If we insert the last two estimates \eqref{eq:mu=0} and
\eqref{eq:mune0} into \eqref{eq:vtf2ndterm} and then together with
\eqref{eq:vtfouter} and \eqref{eq:vtfscott} into \eqref{eq:TF-loc-2}
we arrive at a lower bound on the quantum energy corresponding to one
direction in \eqref{TF-Scott}. 

\noindent{\bf Step 2: Upper bound on the quantum energy.}

We obtain an upper bound on the quantum energy by choosing the density 
matrix
\begin{equation}\label{eq:gammaup}
  \gamma =
  \sum_{k=1}^M\theta_{r,k}\gamma_k\theta_{r,k}
  \ +   \!\!\!\!\!\! \int\limits_{d(u)<2R+1}\!\!\!\!\!\!
  \phi_+ \varphi_u \gamma_u
  \varphi_u \phi_+\,\ell(u)^{-3}\, du \,,
\end{equation}
where $\gamma_k$, $k=1,\ldots,M$, are the density matrices
$$ 
  \gamma_k=\chi\big(\theta_{r,k}\big(T_\b(-\ic h\nabla)
  -z_k|\x-r_k|^{-1}\big)\theta_{r,k}\big)
$$
and $\gamma_u$, $u\in\R^3$, are the density matrices given 
in Theorem~\ref{corollary} for the potential
$V$ with $B_\ell$
being the ball centered at $u$, $\ell=\ell(u)$, $f=f(u)$ (see
\eqref{eq:definition-f}), and 
$\phi=\phi_+\varphi_u$. Since 
\begin{equation}\label{eq:thetasum}
  \sum_{k=1}^M\theta_{r,k}^2+\phi_+^2=\phi_-^2+\phi_+^2=1\,,
\end{equation}
we immediately see from \eqref{eq:phiuprop} that $\gamma$ is a
density matrix. 

Using this density matrix as a trial state we obtain from
Theorem~\ref{corollary} that
\begin{eqnarray}
  \lefteqn{\Tr\big[T_\b(-\ic h\nabla)-V(\x)\big]_-}&&\nonumber\\ &\le& 
  \sum_{k=1}^M\Tr\big[\theta_{r,k}\big(T_\b(-\ic h\nabla)
  -V(\x)\big)\theta_{r,k}\big]_-\nonumber\\&& 
  {}+(2\pi h)^{-3}\!\!\!\!\!\!
  \int\limits_{d(u)<2R+1}\!\!\!\!\!
  \phi_+(v)^2\varphi_u(v)^2
  \big[\sqrt{\beta^{-1}q^2+\beta^{-2}}-
  \beta^{-1}-V(v)\big]_-\ell(u)^{-3}\,dvdqdu\nonumber\\
  &&{}+Ch^{-2+1/5}\!\!\!\!\!\!\!\!\!\!
  \int\limits_{r/2<d(u)<2R+1}\!\!\!\!\!\!\!\!\!\!
  f(u)^{19/5}\ell(u)^{-6/5}\,du\,,
  \label{eq:SCupper1}
\end{eqnarray}
where we have used the fact that $\phi_+$ and $\varphi_u$ have
overlapping supports only if $d(u)> r/2$. The last error term is
estimated by $Ch^{-2+1/10}$ as in the lower bound. 

Using that $\sqrt{\beta^{-1}q^2+\beta^{-2}}-\beta^{-1}\leq
\mfr{1}{2}q^2$ and the normalization of $\varphi_u$
\eqref{eq:phiuprop} we find that
\begin{eqnarray*}
  \lefteqn{(2\pi h)^{-3}\!\!\!\!\!
  \int\limits_{d(u)<2R+1}\!\!\!\!\!\phi_+(v)^2\varphi_u(v)^2
  \big[\sqrt{\beta^{-1}q^2+\beta^{-2}}-
  \beta^{-1}-V(v)\big]_-\ell(u)^{-3}\,dvdqdu}&&\\&\leq&
  (2\pi h)^{-3}\int\phi_+(v)^2
  \big[\mfr{1}{2}q^2-V(v)\big]_-\, dvdq\\
  &&{}-(2\pi h)^{-3}\!\!\!\!\!\int\limits_{d(u)>2R+1}\!\!\!\!\!
  \phi_+(v)^2\varphi_u(v)^2
  \big[\mfr{1}{2}q^2-V(v)\big]_-\ell(u)^{-3}\, dvdqdu\\
  &\leq&(2\pi h)^{-3}\int\phi_+(v)^2
  \big[\mfr{1}{2}q^2-V(v)\big]_-\,
  dvdq +Ch^{-3}\!\!\!\!\int\limits_{d(v)>2R}\!\!\!\!\!|V(v)_-|^{5/2}\,dv\,.
\end{eqnarray*}
If $\mu\ne0$ the last term vanishes by the choice of \(R=R_{\mu}\). If
$\mu=0$ it may be estimated using \eqref{eq:VcondD} and
\eqref{eq:defrR} as 
$$
  Ch^{-3}\int\limits_{d(v)>2R}|V(v)_-|^{5/2}dv\leq C\,.
$$
Together with \eqref{eq:phi_-scerror}, \eqref{eq:phi_-S},
\eqref{eq:thetasum}, and \eqref{eq:SCupper1} this gives the 
proof of \eqref{TF-Scott-trial}, and therefore finishes the proof of 
\eqref{TF-Scott}. 

\noindent{\bf Step 3: Properties of the density.}

We will now show that the density matrix $\gamma$ in
\eqref{eq:gammaup} satisfies the two requirements
\eqref{eq:rhogammaint} and \eqref{eq:rhogamma6/5}. 

The density of $\gamma$ is 
\begin{equation}\label{eq:some-dens}
  \rho_\gamma(x)=\sum_{k=1}^M\theta_{r,k}(x)^2\rho_k(x)
  +\int_{d(u)<2R+1}\varphi_u^2(x)\phi_+^2(x)\rho_u(x)\ell(u)^{-3}\,du\,,
\end{equation}
where $\rho_k$ for $k=1,\ldots,M$ is the density of the density matrix
$\gamma_k$ and $\rho_u$ for $u\in\R^3$ is the density for $\gamma_u$. 
We first control the $6/5$-norm and the $1$-norm of $\theta_{r,k}^2\rho_k$. 
If $\beta^{1/2}h^{-1}<1/2$ we use the combined Daubechies-Lieb-Yau
inequality  (Theorem~\ref{thm:L-Y-D})
with $\alpha=\beta^{1/2}h^{-1}\leq1/2$, $\nu=2z_k$, and $m=h^{-2}$ to
obtain that 
\begin{eqnarray*}
  0\geq\Tr\big[\theta_{r,k}\gamma_k\theta_{r,k}(T_\beta(-{\rm
    i}h\nabla)-z_k|{\hat x}-r_k|^{-1})\big]
  &\geq& \mfr12\,\Tr\big[\theta_{r,k}\gamma_k\theta_{r,k}T_\beta(-{\rm
    i}h\nabla)\big] 
  \\&&{}-Cz_k^{5/2}h^{-2}-Ch^{-3}z_k^{5/2}r^{1/2}-Cz_k^4h^2\\&\geq&
  \mfr12\,\Tr\big[\theta_{r,k}\gamma_k\theta_{r,k}T_\beta(-{\rm
    i}h\nabla)\big]-Ch^{-2}\delta^{-1/2}\,, 
\end{eqnarray*}
where the constant $C$ depends on $z_k$. Hence we have that
\begin{equation}\label{eq:kinTrace}
  \Tr\big[T_\beta(-{\rm
    i}h\nabla)\theta_{r,k}\gamma_k\theta_{r,k}\big]\leq 
  Ch^{-2}\delta^{-1/2}=Ch^{-5/2}\,. 
\end{equation}
Using \eqref{Daub manybodyBIS} with $\alpha=\beta^{1/2}h^{-1}\leq1$
and $m=h^{-2}$, \eqref{eq:kinTrace} implies that
\begin{eqnarray}\label{eq:first 6/5}\nonumber
  \int (\theta_{r,k}^2\rho_k)^{6/5}&\leq& Ch^{-36/25}
  \left(\int_{h^2(\theta_{r,k}^2\rho_k)^{1/3}\leq \beta^{-1/2}h}
    h^2(\theta_{r,k}^2\rho_k)^{5/3}\right)^{18/25}r^{21/25} 
  \\&&{}+C\left(\int_{h^2(\theta_{r,k}^2\rho_k)^{1/3}>
      \beta^{-1/2}h}\beta^{-1/2}h(\theta_{r,k}^2\rho_k)^{4/3}\right)^{18/20} 
  r^{3/10}\nonumber\\&\leq&
  Ch^{-36/25}h^{-9/5}h^{21/25}+Ch^{-9/4}h^{3/10}\leq Ch^{-12/5}\,,
\end{eqnarray}
where we have used that $r=h$ and that \(h\) is bounded above by a
constant. Likewise we find
$$
  \int \theta_{r,k}^2\rho_k\leq C h^{-3/2}\,.
$$

The case when $1/2\leq \beta^{1/2} h^{-1}\leq 1$ is more
complicated. We have to treat the region within the radius $r_-=h^2$
from the nucleus $z_k$ differently. Let
$\widetilde\theta_\pm(x)=\theta_\pm(|x-r_k|/h^2)$. Using the
relativistic IMS formula (Theorem~\ref{IMS}) and
Theorem~\ref{IMS-error-est} with $\ell=h^2/2$, $m=h^{-2}$,
$\alpha=\beta^{1/2}h^{-1}$, and $\Omega=\{|x-r_k|<3h^2\}$ we find that  
\begin{eqnarray*}
  \lefteqn{0\geq\Tr\big[\theta_{r,k}\gamma_k\theta_{r,k}
  (T_\beta(-{\rm i}h\nabla)-z_k|{\hat x}-r_k|^{-1})\big]}&&\\ 
  &\geq&
  \Tr\big[\widetilde\theta_-\gamma_k\widetilde\theta_-
  (T_\beta(-{\rm i}h\nabla)-z_k|{\hat x}-r_k|^{-1}-Ch^{-2})\big]\\&&{} 
  +\Tr\big[\theta_{r,k}\widetilde\theta_+\gamma_k\theta_{r,k}
  \widetilde\theta_+(T_\beta(-{\rm i}h\nabla)-z_k|{\hat x}-r_k|^{-1}-
  h^{-2}\chi_{\Omega})\big]-Ch^{-2}\,.
\end{eqnarray*}
To treat the first term we use the inequality (see \eqref{eq:new-critical})
$$
  \sqrt{-\D}-\frac{2}{\pi|{\hat x}|}\geq A_s(-\Delta)^{s}-B_s\,,
$$
which holds for all $0\leq s<1/2$ and $A_s,B_s>0$ being constants
depending only on $s$. Hence, using that $h$ is bounded above by a
constant and that $1\le\beta^{-1/2}h\leq2$ we get
\begin{eqnarray*}
  0\geq\Tr\big[\widetilde\theta_-\gamma_k\widetilde\theta_-
  (T_\beta(-{\rm i}h\nabla)-z_k|{\hat x}-r_k|^{-1}-Ch^{-2})\big]
  &\geq&
  \Tr\big[\widetilde\theta_-\gamma_k\widetilde\theta_-(A_s(-\Delta)^{s}-
  C_sh^{-2})\big]\,.
\end{eqnarray*}
We appeal to the standard (Daubechies)-Lieb-Thirring inequality
$$
  \Tr\big[(-\Delta)^{s}\widetilde\theta_-\gamma_k\widetilde\theta_-\big]
  \geq c\int (\widetilde\theta_-^2\rho_k)^{(3+2s)/3}\,,
$$
which holds for all $s\in(0,3)$. We obtain
that (with
all constants depending on $0<s<1/2$)
\begin{eqnarray*}
  \Tr\big[\widetilde\theta_-\gamma_k\widetilde\theta_-
  (T_\beta(-{\rm i}h\nabla)-z_k|{\hat x}-r_k|^{-1}-Ch^{-2})\big]
  &\geq&c\int (\widetilde\theta_-^2\rho_k)^{(3+2s)/3}-Ch^{-2}\int
  (\widetilde\theta_-^2\rho_k)\\
  &\geq&(c/2)\int
  (\widetilde\theta_-^2\rho_k)^{(3+2s)/3}-Ch^{(4s-3)/s}\,. 
\end{eqnarray*}
Using the Daubechies inequality (Theorem~\ref{Daubechies}) we find as
above that  
\begin{eqnarray*}
  \lefteqn{\Tr\big[\theta_{r,k}\widetilde\theta_+\gamma_k\theta_{r,k}
  \widetilde\theta_+(T_\beta(-{\rm i}h\nabla)-z_k|{\hat x}-r_k|^{-1}-
  h^{-2}\chi_{\Omega})\big]}&&\\&\geq&
  c\,\Tr\big[\theta_{r,k}\widetilde\theta_+\gamma_k\theta_{r,k}\widetilde\theta_+ 
  T_\beta(-{\rm i}h\nabla)\big]-Ch^{-5/2}\,.
\end{eqnarray*} 
By choosing $s$ sufficiently close to $1/2$ and using that $h$ is
bounded by a constant we conclude that  
$$
0\geq c\int
  (\widetilde\theta_-^2\rho_k)^{(3+2s)/3}+c\,\Tr\big[\theta_{r,k}
  \widetilde\theta_+\gamma_k\theta_{r,k}\widetilde\theta_+
  T_\beta(-{\rm i}h\nabla)\big]-Ch^{-5/2}\,.
$$
As above it follows from this, choosing $s$ sufficiently close to
$1/2$, that we still have
\begin{equation}\label{eq:thetarkrho}
  \int (\theta_{r,k}^2\rho_k)^{6/5}\leq Ch^{-12/5},\quad
  \int \theta_{r,k}^2\rho_k\leq C h^{-3/2}\,.
\end{equation}
Using that $r=h$ and that from \eqref{eq:VcondD} $|V(x)|\leq C
d(x)^{-1}$ we also have 
\begin{equation}\label{eq:thetarV}
  \int (h^{-3}\theta_{r,k}^2|V_-|^{3/2})^{6/5}\leq Ch^{-12/5}\,,\quad
  \int h^{-3}\theta_{r,k}^2|V_-|^{3/2}\leq C h^{-3/2}\,.
\end{equation}

We move to the second term in \eqref{eq:some-dens}.
By the rescaled semi-classics (Theorem~\ref{corollary}) 
we have on the support of $\varphi_u\phi_+$ that (for \(f(u)\), see
\eqref{eq:definition-f}) 
$$
  \left|\rho_u(x)-2^{1/2}(3\pi^2)^{-1} h^{-3}|V(x)_-|^{3/2}\right|
  \leq Ch^{-2-1/10}f(u)^{21/10}\ell(u)^{-9/10}+Ch^{-2}|V(x)_-|^{3/2}\,,
$$
where we have used that on the support of $\varphi_u\phi_+$ we have
$|V(x)|\leq C d(u)^{-1}\leq Cr^{-1} \leq C h^{-1}\leq
Ch\beta^{-1}$, since $d(u)\geq r/2$ if $\varphi_u\phi_+$ is non-vanishing.  
We moreover have on the support of $\varphi_u\phi_+$ that 
$|V(x)_-|^{3/2}\leq C f(u)^3\leq Cf(u)^{21/10}\ell(u)^{-9/10} $.  For
$r/2<d(u)\leq 1$ this is because $\ell(u)^{-1}\geq
d(u)^{-1}=f(u)^{2}\geq f(u)$ (see \eqref{eq:ell and d}) and for
$d(u)>1$ we simply use that $\ell(u)\leq 1$ and $f(u)\leq1$.  Hence
\begin{equation}\label{eq:local-dens}
  \big\|\varphi_u^2\phi_{+}^{2}\big(\rho_\gamma-2^{1/2}(3\pi^2)^{-1}
      h^{-3}|V_-|^{3/2}\big)\big\|_{6/5} 
  \leq
  Ch^{-2-1/10}f(u)^{21/10}\ell(u)^{8/5}\,,
\end{equation}
Using \eqref{eq:gammaup} and \eqref{eq:thetasum}, we have that
\begin{eqnarray}\label{eq:long 6/5}
  \lefteqn{\big\|\rho_\gamma-2^{1/2}(3\pi^2)^{-1}
      h^{-3}|V_-|^{3/2}\big\|_{6/5}}&&\\ \nonumber
  &\leq&
  \sum_{k=1}^M\big(\|\theta_{r,k}^2\rho_k\|_{6/5}+Ch^{-3}
  \|\theta_{r,k}^2|V_-|^{3/2}\|_{6/5}\big)\\&& \nonumber
  +\int_{d(u)<2R+1}
  \big\|\varphi_u^2\phi_+^2\big(\rho_u-2^{1/2}(3\pi^2)^{-1}
  h^{-3}|V_-|^{3/2}\big)\big\|_{6/5}\ell(u)^{-3}du
  \\&&+\int_{d(u)>2R+1} \nonumber
  Ch^{-3}\big\|\varphi_u^2\phi_+^2|V_-|^{3/2}\big\|_{6/5}\ell(u)^{-3}du\,. 
\end{eqnarray}
The last term is non-zero only in the case $\mu=0$ in which case it is
easily seen by \eqref{eq:VcondS} and \eqref{eq:defrR} to be bounded by
$Ch^{-3/2}$. Thus, combining \eqref{eq:first 6/5}--\eqref{eq:local-dens},
\eqref{eq:long 6/5} implies that
\begin{eqnarray*}
  \big\|\rho_\gamma-2^{1/2}(3\pi^2)^{-1}
  h^{-3}|V_-|^{3/2}\big\|_{6/5}\leq Ch^{-2}
  +Ch^{-2-1/10}\int_{C^{-1}r<d(u)<2R+1}f(u)^{21/10}\ell(u)^{-7/5}du\,.
\end{eqnarray*}
The last integral is easily seen to be bounded and we arrive at 
\eqref{eq:rhogamma6/5}. 

To control the integral of the density we estimate
\begin{eqnarray*}
  \lefteqn{\left|\int\rho_\gamma(x)\,dx-2^{1/2}(3\pi^2)^{-1} h^{-3}\int
  |V(x)_-|^{3/2}\,dx\right|}&&\\
  &\leq&\sum_{k=1}^M\big(\|\theta_{r,k}^2\rho_k\|_{1}+Ch^{-3}
  \|\theta_{r,k}^2|V_-|^{3/2}\|_{1}\big)\\&& 
  +\int_{d(u)<2R+1}\left|\int
    \varphi_u^2\phi_+^2\big(\rho_u(x)-2^{1/2}(3\pi^2)^{-1} 
    h^{-3}|V(x)_-|^{3/2}\big)\,dx\right|\ell(u)^{-3}\,du
  \\&&+\int_{d(u)>2R+1}
  Ch^{-3}\big\|\varphi_u^2\phi_+^2|V_-|^{3/2}\big\|_{1}\ell(u)^{-3}\,du\,.
\end{eqnarray*}
As before the last term is bounded by $Ch^{-3/2}$.
For the middle term we again see from the rescaled
semi-classics (Theorem~\ref{corollary}) that 
\begin{eqnarray*}
  \lefteqn{\left|\int \varphi_u^2\phi_+^2\big(\rho_u(x)-2^{1/2}(3\pi^2)^{-1}
    h^{-3}|V(x)_-|^{3/2}\big)dx\right|}&&\\
  &\leq& Ch^{-2+ 1/5}
  f(u)^{9/5}\ell(u)^{9/5}+Ch^{-1}\int\varphi_u^2(x)\phi_+^2(x)|V(x)_-|^{5/2}dx\\
  &\leq& Ch^{-2+ 1/5}
  f(u)^{9/5}\ell(u)^{9/5}+Ch^{-1} f(u)^5 \ell(u)^3\,,
\end{eqnarray*}
where we have used that $\beta\leq h^2$.
The estimate (\ref{eq:rhogammaint}) follows since both integrals
$$
\int f(u)^{9/5}\ell(u)^{9/5}\ell(u)^{-3} du\quad \hbox{and} \quad \int f(u)^5 du 
$$
are bounded (recall that \(f(u)\) is given in \eqref{eq:definition-f}).
This finishes the proof of Theorem~\ref{thm:TF-semicl}, except for the
continuity of the function \(\mathcal{S}\) from
Lemma~\ref{lem:Coulomb}. We will need a lemma to prove this. This
lemma also gives an alternative characterization of the function
$\mathcal S$.
\begin{lemma}[{\bf Scott-corrected pushed-up Hydrogen}]\label{cor:alt-S}
Let \(\mathcal{S}:[0,2/\pi]\to\R\) be the function from
Lemma~\ref{lem:Coulomb}. Then there exists a constant \(C>0\) such
that, for all \(\alpha\in[0,2/\pi]\) and \(\kappa\in(0,1]\), 
\bea\label{eq:est-S}  \nonumber
  \lefteqn{\Big|\,\displaystyle
  \Tr\big[\sqrt{-\alpha^{-2}\Delta+\alpha^{-4}}-\alpha^{-2}-|{\hat
    x}|^{-1} 
  +\kappa\big]_-} 
  \\&&\qquad\qquad
  \displaystyle{}-(2\pi)^{-3}\int\big[\mfr12
  p^2-|v|^{-1}+\kappa\big]_-\,dpdv
 -\mathcal S(\alpha)\Big|
    \le C\kappa^{1/20}\,.
\eea
Here, as before, 
$\sqrt{-\alpha^{-2}\Delta+\alpha^{-4}}-\alpha^{-2} =
-\Delta/2$, 
when $\alpha=0$.
\end{lemma}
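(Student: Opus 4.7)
My plan is to deduce the lemma from Theorem~\ref{thm:TF-semicl} by applying it to a rescaled version of $H_{\rm C}(\alpha)+\kappa$ in which the Coulomb coefficient is arranged to sit at the critical value $2/\pi$. Specifically, choose $\lambda=1/\kappa$, set $\mu=(2/\pi)\lambda=2/(\pi\kappa)$, and let $U$ be the unitary dilation $(Uf)(x)=\lambda^{-3/2}f(x/\lambda)$. A direct calculation using $U^*(-\Delta)U=\lambda^{-2}(-\Delta)$ and $U^*|\hat{x}|^{-1}U=\lambda^{-1}|\hat{y}|^{-1}$ gives
\[
   \mu\,U^*\bigl(H_{\rm C}(\alpha)+\kappa\bigr)U \,=\, T_{\beta}(-\ic h\nabla)-V(\hat{y}),
   \qquad V(y):=\tfrac{2}{\pi}|y|^{-1}-\tfrac{2}{\pi},
\]
where $\beta=\alpha^2/\mu$ and $h=\sqrt{\mu}/\lambda=\sqrt{2\kappa/\pi}$. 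Two key bookkeeping facts make this the right rescaling: $\beta/h^2=\alpha^2(\pi/2)^2\le1$ precisely because $\alpha\le2/\pi$, and $\beta^{1/2}h^{-1}z_1=\alpha$ with $z_1=2/\pi$, so the argument of $\mathcal{S}$ produced by Theorem~\ref{thm:TF-semicl} comes out to be exactly $\alpha$.

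Next I would verify the hypotheses of Theorem~\ref{thm:TF-semicl} with $M=1$, $r_1=0$, and $\mu_{\rm thm}=2/\pi$. Since $V+\mu_{\rm thm}=(2/\pi)|y|^{-1}$, condition \eqref{eq:VcondD} holds with constants $C_\eta$ that are independent of $\kappa$ and $\alpha$; condition \eqref{eq:VcondS} reduces to $|V-(2/\pi)|y|^{-1}|=2/\pi$, a bounded constant. The constraint $r_{\min}>r_0>0$ is vacuous since $M=1$, and $h=\sqrt{2\kappa/\pi}$ lies below any prescribed smallness threshold once $\kappa$ is small. The theorem therefore yields
\[
   \Bigl|\Tr[T_\beta(-\ic h\nabla)-V]_- - (2\pi h)^{-3}\!\!\int[\tfrac12 p^2-V(v)]_-\,dv\,dp - h^{-2}\bigl(\tfrac{2}{\pi}\bigr)^2\mathcal{S}(\alpha)\Bigr|\le Ch^{-2+1/10}.
\]

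The last step is to undo the rescaling and identify each piece. By unitarity, $\Tr[T_\beta(-\ic h\nabla)-V]_-=\mu\,\Tr[H_{\rm C}(\alpha)+\kappa]_-$. The change of variables $v=\tilde v/\lambda$, $p=\sqrt{\mu}\,\tilde p$, combined with the identity $(2\pi h)^{-3}\mu^{5/2}\lambda^{-3}=(2\pi)^{-3}\mu$, converts the phase-space integral into $\mu(2\pi)^{-3}\!\int[\tfrac12 p^2-|v|^{-1}+\kappa]_-\,dv\,dp$, while the Scott term simplifies via $h^{-2}(2/\pi)^2=(\lambda^2/\mu)(2/\pi)^2=\mu$ to become $\mu\mathcal{S}(\alpha)$. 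Dividing the displayed inequality by $\mu$ therefore produces exactly \eqref{eq:est-S}, with error bounded by $\mu^{-1}Ch^{-2+1/10}=C\lambda^{19/10}\mu^{-39/20}=C\lambda^{-1/20}=C\kappa^{1/20}$. For $\kappa$ bounded away from $0$ the statement is automatic from the boundedness of $\mathcal{S}$ (established in Lemma~\ref{lem:Coulomb}) and of the remaining terms, and this range can be absorbed into the constant $C$; the only delicate point in the whole argument is that the Coulomb coefficient of $V$ is the fixed number $2/\pi$ rather than a $\kappa$-dependent quantity, which is exactly what keeps the constants in \eqref{eq:VcondD} from degenerating as $\kappa\to 0$.
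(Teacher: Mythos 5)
Your proposal is correct and follows essentially the same strategy as the paper: apply Theorem~\ref{thm:TF-semicl} to a scaled operator and read off the Scott term, with the charge set to $2/\pi$ so that $\beta^{1/2}h^{-1}z_1=\alpha$. The only cosmetic difference is that you normalize the constant at infinity to $2/\pi$ (giving $h=\sqrt{2\kappa/\pi}$) rather than to $1$ as in the paper (which uses $V(x)=\frac{2}{\pi|x|}-1$ and $h=2\kappa^{1/2}/\pi$); both satisfy the hypotheses with $\mu\ne0$ and yield the same $\kappa^{1/20}$ error.
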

\begin{proof}[Proof of Lemma~\ref{cor:alt-S}] 
A simple rescaling, changing $x\to\kappa^{-1}\pi x/2$, gives
$$
  \Tr\big[\sqrt{-\alpha^{-2}\Delta+\alpha^{-4}}-\alpha^{-2}-|{\hat
    x}|^{-1} 
  +\kappa\big]_-=\kappa\,
  \Tr\big[\sqrt{-\beta^{-1}h^2\Delta
  +\beta^{-2}}-\beta^{-1}-\frac{2}{\pi|{\hat x}|}
  +1\big]_-\,,
$$
where 
$\beta=\kappa\alpha^2$ and $h=2\kappa^{1/2}/\pi$.
We have $\beta\leq h^2$.

The semi-classical integral may be rewritten in the same fashion,
$$
  (2\pi)^{-3}\int\big[\mfr12
  p^2-|v|^{-1}+\kappa\big]_-dpdv=\kappa\,(2\pi h)^{-3}\int\big[\mfr12
  p^2-\frac{2}{\pi|v|}+1\big]_-\,dpdv\,.
$$
Since the potential $V(x)=\frac{2}{\pi|x|}-1$ satisfies the 
assumptions of Theorem~\ref{thm:TF-semicl} we see that there exists a
constant \(C>0\) such that
\beax
  \lefteqn{\Big|\Tr\big[\sqrt{-\beta^{-1}h^2\Delta
  +\beta^{-2}}-\beta^{-1}-\frac{2}{\pi|{\hat x}|}
  +1\big]_-}
  \\&-& (2\pi h)^{-3}\int\big[\mfr12
  p^2-\frac{2}{\pi|v|}+1\big]_-\,dpdv 
  -h^{-2}\frac{4}{\pi^2}\mathcal{S}(\alpha)\Big|
  \le C\,h^{-2+1/10}\,.
\eeax
Using that $h=2\kappa^{1/2}/\pi$ gives \eqref{eq:est-S}.
\end{proof}
We can now, using the alternative characterization of the function
\(\mathcal{S}\) in Lemma~\ref{cor:alt-S}, finish the proof of
Theorem~\ref{thm:TF-semicl}.

\noindent{\bf Step 4: Continuity of the function \(\mathcal{S}\).}

We recall that
\begin{align}
  T_{\beta}(p)=
  \begin{cases}
    \ \sqrt{\beta^{-1}p^2+\beta^{-2}}-\beta^{-1} \ ,& \beta>0 \\
    \qquad\  \frac12p^2 \ ,\qquad\quad\, & \beta=0
  \end{cases}\,.
\end{align}
It suffices to prove continuity of
\[
  \Tr\big[T_{\alpha^2}(-{\rm i}\nabla)-|{\hat x}|^{-1}+\kappa\big]_{-}
  =
  \Tr\big[\sqrt{-\alpha^{-2}\D+\alpha^{-4}}-\alpha^{-2}-|{\hat
  x}|^{-1}+\kappa\big]_{-}
\] 
at all \(\alpha_0\in[0,2/\pi]\), for any \(\kappa\in(0,1]\)
fixed. Then continuity of \(\mathcal{S}\) follows from
\eqref{eq:est-S} by uniform
convergence as \(\kappa\to0\).

We first prove the continuity at \(\alpha_0=0\).

Let \(\chi_{>}=\chi_{|p|\ge\lambda}\),
\(\chi_{<}=\chi_{|p|\le\lambda}\) for some \(\lambda>0\) to be chosen
below. Note that \((\Gamma_1-\Gamma_2)(\Gamma_1-\Gamma_2)^*\ge0\)
implies
\(\Gamma_1\Gamma_2^*+\Gamma_2\Gamma_1^*\le\Gamma_1\Gamma_1^*+\Gamma_2\Gamma_2^*\). 
Using this with \(\Gamma_1=\varepsilon^{1/2}\chi_{<}|{\hat x}|^{-1/2}\),
\(\Gamma_2=\varepsilon^{-1/2}\chi_{>}|{\hat x}|^{-1/2}\) for some
\(\varepsilon>0\) which we choose later, we get the operator inequality
\bea\label{eq:anotherC-S}
  \lefteqn{T_{\alpha^{2}}(\hat{p})-|{\hat x}|^{-1}+\kappa}
  \\&\ge&
  \chi_{>}\big(T_{\alpha^{2}}(\hat{p})-(1+\varepsilon^{-1})|{\hat
    x}|^{-1}+\kappa\big)\chi_{>} 
  +\chi_{<}\big(T_{\alpha^{2}}(\hat{p})-(1+\varepsilon)|{\hat
    x}|^{-1}+\kappa\big)\chi_{<}\,.
  \nonumber
\eea

Here and in the sequel we write \(T_{\alpha^{2}}(\hat{p})\) for the operator
\(T_{\alpha^{2}}(-{\rm i}\nabla)\) (and similarly for other operators).
Since \(T_{\alpha^{2}_1}\ge T_{\alpha^{2}_2}\) for \(\alpha_1\le \alpha_2\), and
\(T_{\alpha^{2}}(p)\ge\alpha^{-1}|p|-\alpha^{-2}\), \eqref{eq:anotherC-S}
implies that, if \(\alpha\in(0,A]\) for some \(A>0\), then for all
$\varepsilon>0$ 
\begin{eqnarray}
  T_{\alpha^{2}}(\hat{p})-|{\hat x}|^{-1}+\kappa&\ge&
  \chi_{>}\big(A^{-1}|\hat{p}|-A^{-2}
  -(1+\varepsilon^{-1})|{\hat x}|^{-1}+\kappa\big)\chi_{>} 
  \nonumber\\&&{}+\chi_{<}\big(T_{\alpha^{2}}(\hat{p})-(1+\varepsilon)|{\hat
    x}|^{-1}+\kappa\big)\chi_{<}\,. \label{eq:momentumsep}
\end{eqnarray}
Since \(|\hat{p}|-2/(\pi|{\hat x}|)\ge0\), we have that
\beax
  \frac12A^{-1}|\hat{p}|-(1+\varepsilon^{-1})|{\hat x}|^{-1}\ge0\,,
\eeax
if \(A\le \varepsilon/(2\pi)\), and now assuming 
$\varepsilon\leq1$. Furthermore,  for
\(\lambda\ge2A^{-1}\) we have that 
\beax
  \chi_{>}\big(\frac12A^{-1}|\hat{p}|-A^{-2}\big)\chi_{>}\ge0\,.
\eeax
This implies that, if \(\varepsilon\leq 1\), \(\lambda\ge2A^{-1}\),
\(\alpha\in(0,A]\) and \(A\le \varepsilon/(2\pi)\), then
by (\ref{eq:momentumsep})
\begin{eqnarray}\label{eq:lowmomentum}
  T_{\alpha^{2}}(\hat{p})-|{\hat x}|^{-1}+\kappa&\ge&
  \chi_{<}\big(T_{\alpha^{2}}(\hat{p})-(1+\varepsilon)|{\hat
    x}|^{-1}+\kappa\big)\chi_{<}\,. 
\end{eqnarray}

Since, by a Taylor-expansion, \(T_{\alpha^{2}}(p)\ge T_{0}(p)-(\alpha
p^2)^2/8\), and since \(\chi_{<}=\chi_{|p|\le\lambda}\), we have that,
still for \(\alpha\in(0,A]\), 
\bea\label{eq:lowMomentaGone}
  T_{\alpha^{2}}(\hat{p})-|{\hat x}|^{-1}+\kappa\ge
  \chi_{<}\big(T_{0}(\hat{p})-\alpha^2\lambda^4/8-(1+\varepsilon)|{\hat x}|^{-1} 
  +\kappa\big)\chi_{<}\,. 
\eea
Let
\beax
  \gamma_{\alpha,\kappa}=\chi_{(-\infty,0]}
  \big(T_{\alpha^{2}}(\hat{p})-|{\hat x}|^{-1}+\kappa\big)\,.  
\eeax
Then \eqref{eq:lowMomentaGone} and the fact that \(T_0\ge
T_{\alpha^{2}}\) imply that,
for \(\alpha\in(0,A]\), 
\bea\label{eq:firstBigIneq}\nonumber
  \Tr\big[T_0(\hat{p})-|{\hat x}|^{-1}+\kappa\big]_{-}
  &\ge&\Tr\big[T_{\alpha^{2}}(\hat{p})-|{\hat x}|^{-1}+\kappa\big]_{-}
  =\Tr\big[\gamma_{\alpha,\kappa}\big(T_{\alpha^{2}}(\hat{p})-|{\hat
    x}|^{-1}+\kappa\big)\big] 
  \\&\geq&\Tr\big[\gamma_{\alpha,\kappa}\chi_{<}\big(T_{0}(\hat{p})-\alpha^2\lambda^4/8 
  -(1+\varepsilon)|{\hat x}|^{-1}+\kappa\big)\chi_{<}\big]\,.
\eea 

If \(\kappa\in(0,1]\),
\(\alpha\in(0,A]\), $\lambda\geq 2A^{-1}$, and $A\leq 1/(2\pi)$ we
will show the a priori estimate
\bea\label{eq:boundedGammas}
  \Tr\big[\gamma_{\alpha,\kappa}\chi_{<}\big]\le C\kappa^{-3/2} \quad
  \text{ and } \quad \Tr\big[\gamma_{\alpha,\kappa}\chi_{<}\,|{\hat
  x}|^{-1}\chi_{<}\big]\le C\kappa^{-1/2}\,.  
\eea 
The combined Daubechies-Lieb-Yau inequality \eqref{ineq:ImprovedDLY}
gives that for positive constants \(C_1, C_2\) such that \(\alpha\le
2/(C_1\pi)\), we have 
\beax 
  \Tr\big[T_{\alpha^{2}}(\hat{p})-C_1|{\hat
  x}|^{-1}+C_2\kappa\big]_{-} &\ge& {}-C\alpha^{1/2}-C\int_{|x|<
  C\kappa^{-1}}\big(|x|^{-1}+\kappa\big)^{5/2}\,dx
  \\&&{}-C\alpha^{3}\int_{\alpha<|x|<
  C\kappa^{-1}}\big(|x|^{-1}+\kappa\big)^4\,dx \ge{}-C\kappa^{-1/2}\,.
\eeax 
If $\alpha\in (0,A]$ and $A\leq 1/(2\pi)$ then \(\alpha\leq 4/(5\pi)\)
and hence we obtain from
\eqref{eq:lowmomentum} with $\varepsilon=1$ that
\beax 
  0&\ge&
  \Tr\big[T_{\alpha^{2}}(\hat{p})-|{\hat x}|^{-1}+\kappa\big]_{-} 
  =\Tr\big[\gamma_{\alpha,\kappa}\big(T_{\alpha^{2}}(\hat{p})-|{\hat 
  x}|^{-1}+\kappa\big)\big]\\
  &\geq&\Tr\big[\gamma_{\alpha,\kappa}\chi_<\big(T_{\alpha^{2}}(\hat{p})-2|{\hat 
  x}|^{-1}+\kappa\big)\chi_<\big]\\
  &=&\Tr\big[\chi_<\gamma_{\alpha,\kappa}\chi_<\big(T_{\alpha^{2}}(\hat{p})
  -\frac{5/2}{|{\hat x}|}+\frac12\kappa\big)\big]
  +\frac12\Tr\big[\gamma_{\alpha,\kappa}\chi_<|{\hat
  x}|^{-1}\chi_<\big]+\frac{\kappa}{2}\Tr\big[\gamma_{\alpha,\kappa}\chi_<\big] 
  \\&\ge&{}-C\kappa^{-1/2}
  +\frac12\Tr\big[\gamma_{\alpha,\kappa}\chi_<|{\hat 
  x}|^{-1}\chi_<\big]+\frac{\kappa}{2}\Tr\big[\gamma_{\alpha,\kappa}\chi_<\big]. 
\eeax 
This gives \eqref{eq:boundedGammas}.

Choose \(\lambda=2A^{-1}\), \(A=\epsilon/(2\pi)\). We combine
\eqref{eq:firstBigIneq} and \eqref{eq:boundedGammas} and use the
variational principle to conclude that for
\(\alpha\in(0,\epsilon/(2\pi)]\), \(\epsilon<1\), and \(\kappa\in(0,1]\), 
\beax
  \Tr\big[T_0(\hat{p})-|{\hat x}|^{-1}+\kappa\big]_{-}
  &\ge&
  \Tr\big[T_{\alpha^{2}}(\hat{p})-|{\hat x}|^{-1}+\kappa\big]_{-}
  \\&\ge&\Tr\big[\big(\chi_{<}\gamma_{\alpha,\kappa}\chi_{<}\big)
      \big(T_{0}(\hat{p})-|{\hat x}|^{-1}+\kappa\big)\big]
  -C\kappa^{-3/2}(\alpha^2\varepsilon^{-4}+\varepsilon)
  \\&\ge& \Tr\big[T_{0}(\hat{p})-|{\hat x}|^{-1}+\kappa\big]_{-}
  -C\kappa^{-3/2}(\alpha^2\varepsilon^{-4}+\varepsilon)\,.
\eeax
Finally choose \(\epsilon=\alpha^{2/5}\); then \(\alpha\le
(2\pi)^{-5/3}\) implies that \(\alpha\in(0,\epsilon/(2\pi)]\) and
\(\epsilon<1\).

Therefore we have proved that for
any \(\alpha\le (2\pi)^{-5/3}\) and \(\kappa\in(0,1]\), 
\beax
  \Tr\big[T_0(\hat{p})-|{\hat x}|^{-1}+\kappa\big]_{-}
  &\ge&
  \Tr\big[T_{\alpha^{2}}(\hat{p})-|{\hat x}|^{-1}+\kappa\big]_{-}
  \\&\ge& \Tr\big[T_{0}(\hat{p})-|{\hat x}|^{-1}+\kappa\big]_{-}
  -C\kappa^{-3/2}\alpha^{2/5}\,,
\eeax
which proves continuity from the right of
\(\Tr\big[T_{\alpha^{2}}(\hat{p})-|{\hat x}|^{-1}+\kappa\big]_{-}\) at
\(\alpha_0=0\) for any \(\kappa\in(0,1]\) fixed. (Notice that the
above has not been optimized in \(\kappa\).)

We now prove the continuity at any \(\alpha_0\in(0,2/\pi)\). Note
first that, for \(0<\alpha_1\le\alpha_2\),
\bea\label{eq:ineqKinEnergy} 
  T_{\alpha^{2}_1}(p)\ge T_{\alpha^{2}_2}(p)\ge
  (\alpha_2^{-1}\alpha_1)^2\, T_{\alpha^{2}_1}(p)\,. 
\eea
Assume first that \(\alpha>\alpha_0\), and let \(\gamma_{\alpha,\kappa}\)
be defined as above. Then, using
\eqref{eq:ineqKinEnergy} and the variational principle,
\beax
  \Tr[T_{\alpha^{2}}(\hat{p})-|{\hat x}|^{-1}+\kappa\big]_{-} 
  &\le&\Tr\big[T_{\alpha^{2}_0}(\hat{p})-|{\hat
    x}|^{-1}+\kappa\big]_{-} 
  \le\Tr\big[\gamma_{\alpha,\kappa}\big(T_{\alpha^{2}_0}(\hat{p})-|{\hat x}|^{-1}+\kappa\big)\big] 
  \\&\le&\Tr\big[\gamma_{\alpha,\kappa}\big(T_{\alpha^{2}}(\hat{p})-|{\hat x}|^{-1}+\kappa\big)\big] 
  +[(\alpha\alpha_0^{-1})^2-1]\,\Tr\big[\gamma_{\alpha,\kappa}T_{\alpha^{2}}(\hat{p})\big]
  \\&=&  \Tr[T_{\alpha^{2}}(\hat{p})-|{\hat x}|^{-1}+\kappa\big]_{-}
   +[(\alpha\alpha_0^{-1})^2-1]\,\Tr\big[\gamma_{\alpha,\kappa}T_{\alpha^{2}}(\hat{p})\big]\,.
\eeax
It remains to show that 
\([(\alpha\alpha_0^{-1})^2-1]\,\Tr\big[\gamma_{\alpha,\kappa}T_{\alpha^{2}}(\hat{p})\big]\to0\)
as \(\alpha\to\alpha_0\). 
For this, it obviously suffices to show that
\(\Tr\big[\gamma_{\alpha,\kappa}T_{\alpha^{2}}(\hat{p})\big]\) is
uniformly bounded for, say, \(\alpha\in(\alpha_0,A]\) for some
\(A\in(\alpha_0,2/\pi)\). But this follows as in the proof of
\eqref{eq:boundedGammas}. This proves continuity from the right of 
\(\Tr\big[T_{\alpha^{2}}(\hat{p})-|{\hat x}|^{-1}+\kappa\big]_{-}\) at 
\(\alpha_0\in(0,2/\pi)\). To prove continuity from the
left, assume \(\alpha<\alpha_0\), and let \(\gamma_{\alpha_0,\kappa}\) be
defined as above. Then, by 
\eqref{eq:ineqKinEnergy} and the variational principle,
\beax
  \Tr\big[T_{\alpha^{2}}(\hat{p})-|{\hat x}|^{-1}+\kappa\big]_{-}
  &\ge&
  \Tr\big[T_{\alpha^{2}_0}(\hat{p})-|{\hat x}|^{-1}+\kappa\big]_{-}
  =\Tr\big[\gamma_{\alpha_0,\kappa}\big(T_{\alpha^{2}_0}(\hat{p})-|{\hat x}|^{-1}+\kappa\big)\big] 
  \\&=&\Tr\big[\gamma_{\alpha_0,\kappa}\big(T_{\alpha^{2}}(\hat{p})-|{\hat x}|^{-1}+\kappa\big)\big]
  +\Tr\big[\gamma_{\alpha_0,\kappa}(T_{\alpha^{2}_0}(\hat{p})-T_{\alpha^{2}}(\hat{p}))\big]
  \\&\ge& \Tr\big[T_{\alpha^{2}}(\hat{p})-|{\hat x}|^{-1}+\kappa\big]_{-}
  +[1-(\alpha_0\alpha^{-1})^2]\,\Tr\big[\gamma_{\alpha_0,\kappa}T_{\alpha^{2}_0}(\hat{p})\big]\,. 
\eeax
As before, the last trace is finite by arguments as in the proof of
\eqref{eq:boundedGammas} (since \(\alpha_0<2/\pi\)). This proves
continuity from the left, and 
therefore, continuity, of
\(\Tr\big[T_{\alpha^{2}}(\hat{p})-|{\hat x}|^{-1}+\kappa\big]_{-}\) at 
\(\alpha_0\in(0,2/\pi)\).

Finally we prove the continuity at \(\alpha_0=2/\pi\). Here, arguments
as in the proof of \eqref{eq:boundedGammas} are no longer at our
disposal. Therefore, let \(\epsilon>0\), and let
\(\gamma_{\alpha_0,\kappa}\) be defined as above, and choose
\(\phi_1,\ldots,\phi_N\in C_{0}^{\infty}(\mathbb R^3)\),
\((\phi_i,\phi_j)=\delta_{i,j}\), such that
\bea\label{ineq:firstCrit}
  \lefteqn{\Tr\big[\gamma_{N}\big(T_{\alpha^{2}_0}(\hat{p})-|{\hat
      x}|^{-1}+\kappa\big)\big]} 
  \\&\le&
  \Tr\big[\gamma_{\alpha_0,\kappa}\big(
  T_{\alpha^{2}_0}(\hat{p})-|{\hat x}|^{-1}+\kappa\big)\big] 
  +\epsilon/2
  =\Tr\big[T_{\alpha^{2}_0}(\hat{p})-|{\hat x}|^{-1}+\kappa\big]_{-}  +
  \epsilon/2\,,
  \nonumber
\eea
for \(\gamma_N(x,y)=\sum_{j=1}^N\phi_j(x)\overline{\phi_j(y)}\). 
This is possible since the operator is defined as the Friedrichs
extension from \(C_0^{\infty}(\R^3)\). 
(Here, both \(N\) and the \(\phi_j\)'s depend, of course, on
\(\epsilon\)). Recall that \(\gamma_N\) is finite dimensional and
\(\phi_j\in C_{0}^{\infty}(\mathbb{R}^3)\). 
Using this,  
\eqref{eq:ineqKinEnergy}, and the variational principle gives that
(for any \(\alpha\in(\alpha_0/2,\alpha_0)\)),
\bea\label{ineq:bisCrit}  \nonumber
  \lefteqn{\Tr\big[\gamma_{N}\big(T_{\alpha^{2}_0}(\hat{p})-|{\hat
      x}|^{-1}+\kappa\big)\big]
  =\Tr\big[\gamma_N
  T_{\alpha^{2}_0}(\hat{p})\big]+\Tr\big[\gamma_N\big(-|{\hat x}|^{-1}+\kappa\big)\big]}
  \\&\ge& \Tr\big[\gamma_N\big(T_{\alpha^{2}}(\hat{p})-|{\hat x}|^{-1}+\kappa\big)\big]
  +[(\alpha_0^{-1}\alpha)^2-1]\,\Tr\big[\gamma_NT_{\alpha^{2}}(\hat{p})\big]  \nonumber
  \\&\ge& \Tr\big[T_{\alpha^{2}}(\hat{p})-|{\hat x}|^{-1}+\kappa\big]_{-}+
  [(\alpha_0^{-1}\alpha)^2-1]\,\Tr\big[\gamma_NT_{\alpha^{2}_0/2}(\hat{p})\big]\,.
\eea
Choose now \(\delta>0\) such that
\bea\label{ineq:secondCrit}
  \alpha\in(\alpha_0-\delta,\alpha_0)\cap(\alpha_0/2,\alpha_0)
  \quad\Rightarrow\quad
  [(\alpha_0^{-1}\alpha)^2-1]\,\Tr\big[\gamma_NT_{\alpha^{2}_0/2}(\hat{p})\big]>{}-\epsilon/2\,.
\eea
Then, combining \eqref{ineq:firstCrit}, \eqref{ineq:bisCrit}, and \eqref{ineq:secondCrit}
(and using \eqref{eq:ineqKinEnergy} again) we
have proved that, for all \(\epsilon>0\) there exists \(0<\delta<\alpha_0/2\)
such that
\beax
  \lefteqn{\alpha\in(\alpha_0-\delta,\alpha_0)}
  \\&\Rightarrow& 
  \Tr\big[T_{\alpha^{2}}(\hat{p})-|{\hat x}|^{-1}+\kappa\big]_{-}
  \ge  \Tr\big[T_{\alpha^{2}_0}(\hat{p})-|{\hat x}|^{-1}+\kappa\big]_{-}
  \ge  \Tr\big[T_{\alpha^{2}}(\hat{p})-|{\hat x}|^{-1}+\kappa\big]_{-} - \epsilon\,.
\eeax
This proves the continuity from the left of
\(\Tr\big[T_{\alpha^{2}}(\hat{p})-|{\hat x}|^{-1}+\kappa\big]_{-}\) at
\(\alpha_0=2/\pi\), and therefore finishes the proof that
\(\mathcal{S}:[0,2/\pi]\to\R\) is continuous.

This completes the proof of Theorem~\ref{thm:TF-semicl}. 
\end{proof}

\section{Local relativistic semi-classical estimates using new coherent states}
\label{Local semi-classical estimates}

In this section we study the sum and the density of the negative eigenvalues of 
the localised Hamiltonian $\phi H_\b\phi$, with $\phi$ compactly supported and 
$H_\b=T_\b(-\ic h\nabla)+V(\hat{x})$. Here, \(T_\b\) is given by
\eqref{def:kinetic energy}, and \(V\) is a (sufficiently) regular
potential (see below for details). For the most part we suppress the index 
$\b$ but all estimates, in particular the constants C, will be uniform
in $\b\in[0,1]$.

We first recall the definition and the main properties of the coherent
states (operators) introduced in~\cite{SS}, where all proofs can be
found. These coherent states are denoted by $\cG_{u,q}$. Let
$1/a>h>0$. The kernel of $\cG_{u,q}$ is given by
\begin{equation}\label{new coherent states} 
  {\cG}_{u,q}(x,y) = (\pi h)^{-n/2} 
  {\rm e}^{-a\left(\frac{x+y}{2}-u\right)^2 
  +\ic q(x-y)/h -\frac{1}{4h^2a}(x-y)^2} \,.
\end{equation}
A first important property of these operators is their completeness.
\begin{lemma}[{\bf Completeness of new coherent states}]
The coherent operators $\cG_{u,q}$ satisfy 
\bea\label{eq:formula-coherent}
  \int {\cG}_{u,q}^2 \,\frac{dq}{(2\pi h)^n} = G_b(\x-u)\,,\quad
  \int {\cG}_{u,q}^2 \,\frac{du}{(2\pi h)^n} = G_b(-\ic h\nabla-q)\,,
\eea
where $\x$ denotes the operator multiplication by the position
variable $x$. Here $G_b(v)=(b/\pi)^{n/2}{\rm e}^{-bv^2}$ with
$b=2a/(1+h^2a^2)$. Note that $G_b$ has integral 1 and hence
\bea\label{eq:complete}
  \int {\cG}_{u,q}^2 \,\frac{dudq}{(2\pi h)^n} = {\bf 1} \,.
\eea
\end{lemma}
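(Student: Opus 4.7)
The plan is to compute the kernel of $\cG_{u,q}^2$ in closed form by a single Gaussian integration in the composition variable, and then obtain each identity in \eqref{eq:formula-coherent} by carrying out the remaining Gaussian integral (in $q$ or in $u$, respectively) and matching coefficients.

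First I would write $\cG_{u,q}^2(x,y)=\int \cG_{u,q}(x,z)\,\cG_{u,q}(z,y)\,dz$ and substitute the kernel from \eqref{new coherent states}. The $q$-dependent phases telescope to $\ic q(x-y)/h=2\ic q\cdot t/h$ in the variables $s=(x+y)/2$, $t=(x-y)/2$, $\tilde z=z-s$. In these variables the exponent separates into a quadratic form in $\tilde z$ with coefficient $-A$, where $A:=a/2+1/(2ah^2)$, a linear cross term $-2aw\tilde z$ with $w:=s-u$, and quadratic forms in $w$ and $t$. Completing the square in $\tilde z$ and performing the resulting $n$-dimensional Gaussian integral yields
\[
  \cG_{u,q}^2(x,y)=(\pi h)^{-n}(\pi/A)^{n/2}
  \exp\bigl(-b\,(s-u)^2-At^2+2\ic q\cdot t/h\bigr),
\]
in which the coefficient of $w^2$ collapses from $-2a$ to $-2a+a^2/A=-2a/(1+a^2h^2)=-b$. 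An elementary check also gives the dual identity $A=1/(bh^2)$; this is what links the $t^2$-Gaussian in the kernel to the momentum-space Gaussian for $G_b$.

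For the first identity, integrating over $q$ via $\int e^{2\ic q\cdot t/h}\,dq=(\pi h)^n\delta(t)=2^n(\pi h)^n\delta(x-y)$ forces $t=0$ and thereby eliminates the $-At^2$ factor; tracking the constants and using $\pi/A=\pi b h^2$ reduces the overall prefactor to $(b/\pi)^{n/2}$, so that
\[
  (2\pi h)^{-n}\!\int \cG_{u,q}^2(x,y)\,dq=(b/\pi)^{n/2}e^{-b(x-u)^2}\delta(x-y),
\]
which is precisely the kernel of the multiplication operator $G_b(\x-u)$.

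For the second identity I would instead perform the Gaussian $\int e^{-b(s-u)^2}\,du=(\pi/b)^{n/2}$; combining with the prefactor and using $Ab=1/h^2$ gives $(\pi h)^{-n}(\pi/A)^{n/2}(\pi/b)^{n/2}=1$, so
\[
  (2\pi h)^{-n}\!\int \cG_{u,q}^2(x,y)\,du=(2\pi h)^{-n}e^{-At^2+2\ic q\cdot t/h}.
\]
To identify this as the kernel of $G_b(-\ic h\nabla-q)$ I would compute that kernel directly by Fourier inversion, obtaining $(2\pi h)^{-n}e^{\ic(x-y)\cdot q/h}e^{-(x-y)^2/(4bh^2)}$; since $A=1/(bh^2)$ and $x-y=2t$, the two expressions coincide. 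Finally \eqref{eq:complete} follows from either formula, as $\int G_b(\x-u)\,du$ is multiplication by the constant $\int G_b(x-u)\,du=1$. The only real obstacle is arithmetic bookkeeping—one must verify from completing the square that the coefficient of $w^2$ becomes exactly $-b$ and that the dual relation $A=1/(bh^2)$ holds; once this is in hand, both identities follow from standard Gaussian calculus.
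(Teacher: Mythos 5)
Your computation is correct: you compose the two Gaussian kernels, complete the square in the intermediate variable, and verify that the coefficient of $(s-u)^2$ collapses to $-b$ and that the dual relation $A=a/2+1/(2ah^2)=1/(bh^2)$ holds, after which each identity in \eqref{eq:formula-coherent} follows from one more Gaussian (or Fourier $\delta$) integration and a matching of constants. The paper itself does not give a proof of this lemma but defers to \cite{SS}; your argument is exactly the standard Gaussian calculus used there, so there is nothing to add.
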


We shall consider operators of the form 
\begin{equation}\label{eq:formf}
  \int {\cG}_{u,q}\,f({\widehat A}_{u,q})\,{\cG}_{u,q}\,dudq\,, 
\end{equation}
where $f:\R\to\R$ is any polynomially bounded real function. As we
shall see in the next theorem the integrand above is a trace class
operator for each $(u,q)$. The integral above is to be understood in
the weak sense, i.e., as a quadratic form. We shall consider
situations where the integral defines bounded or unbounded operators.  
\begin{thm}[{\bf Trace identity}]\label{thm:traceidentity} 
Let $f:\R\to\R$ and $V:\R^n\to\R$  be polynomially bounded,
real-valued measurable functions and let
$$\hat{A}=B_0 + B_1{\x} - \ic h B_2\nabla$$ 
be a first order self-adjoint differential operator\footnote{The
operator $\hat{A}$ is essentially self-adjoint on Schwartz functions on $\R^n$.}
with $B_0\in\R$, $B_{1,2}\in\R^n$. Then  
${\cG}_{u,q}\,f(\hat{A}) \,{\cG}_{u,q}\,V({\x})$ is a
trace class operator (when extended from $C_0^\infty(\R^n)$) and
\begin{eqnarray*}
  {\Tr}\big[{\cG}_{u,q}\,f(\hat{A}) \,{\cG}_{u,q}\,V({\x})\big]
  &=& \int f(B_0+B_1v+B_2p)\,G_b(u-v)G_b(q-p)
  G_{(h^2b)^{-1}}(z)
  \\
  &&\qquad\qquad\qquad\qquad\qquad
  \times\,V(v+h^2ab(u-v)+z)\,dvdpdz \,.
\end{eqnarray*}
In particular, $\Tr\big[{\cG}_{u,q}^2\big]=1$.
\end{thm}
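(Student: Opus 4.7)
My plan is to reduce the statement to a multi-dimensional complex Gaussian integral by using the Fourier/spectral representation of $f(\hat A)$ together with the fact that $\hat A$ is essentially self-adjoint on Schwartz functions and linear in the canonical pair $(\x,\hat p)$, where $\hat p=-\ic h\nabla$. As a first step, I read off from \eqref{new coherent states} that $\cG_{u,q}$ is Hermitian ($\cG_{u,q}(y,x)=\overline{\cG_{u,q}(x,y)}$) and Hilbert--Schmidt, so that $\cG_{u,q}f(\hat A)\cG_{u,q}V(\x)$ is trace class when $f,V$ are polynomially bounded (the Gaussian decay of $\cG_{u,q}$ dominates any polynomial growth). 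The special case $\Tr[\cG_{u,q}^2]=1$ will drop out immediately: passing to sum--difference variables $s=(x+y)/2-u$, $t=x-y$ in $\int|\cG_{u,q}(x,y)|^2\,dx\,dy$ decouples the integral into two Gaussians whose normalisations multiply to $1$.

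By the spectral theorem and Fourier inversion it suffices to prove the identity for $f(\lambda)=e^{\ic t\lambda}$, $t\in\R$, and then integrate against $\hat f(t)/\sqrt{2\pi}$. Because $[B_1\cdot\x,B_2\cdot\hat p]=\ic h\,B_1\cdot B_2$ is central, the Baker--Campbell--Hausdorff formula (in its symmetric form) gives $e^{\ic t\hat A}=e^{\ic tB_0}\,e^{\ic tB_1\cdot\x/2}\,e^{\ic tB_2\cdot\hat p}\,e^{\ic tB_1\cdot\x/2}$, so that
\begin{equation*}
  e^{\ic t\hat A}(x,y)\,=\,e^{\ic tB_0}\,e^{\ic tB_1\cdot(x+y)/2}\,\delta\bigl(y-x-thB_2\bigr).
\end{equation*}
Inserting this kernel and two copies of $\cG_{u,q}$ into the trace and collapsing the $\delta$ leaves a $2n$-dimensional Gaussian integral with linear phase in the remaining position variables and quadratic part inherited from the product of the two $\cG_{u,q}$ kernels. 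Introducing new variables $v$ (the centered mean position), $p$ (the Fourier dual of the internal displacement) and $z$ (the residual spread between the two kernels), and completing the square, diagonalises the quadratic form with widths $b=2a/(1+h^2a^2)$ in the $(v,p)$-block and $(h^2 b)^{-1}$ in the $z$-block. The linear phase collapses to $\exp\bigl(\ic t(B_0+B_1\cdot v+B_2\cdot p)\bigr)$, and Fourier inversion in $t$ then replaces it by $f(B_0+B_1\cdot v+B_2\cdot p)$; the shift $v+h^2ab(u-v)$ in the argument of $V$ is exactly the minimiser of the completed square in the residual position variable.

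The main obstacle is the bookkeeping in this Gaussian integral: one must simultaneously complete squares in two coupled quadratic forms -- one from the product $\cG_{u,q}(\cdot,\cdot)\cG_{u,q}(\cdot,\cdot)$ and one arising from the translation implemented by the BCH factorisation -- and verify that the resulting widths are precisely $G_b$ and $G_{(h^2 b)^{-1}}$, including the delicate balance between the two scales $a$ and $1/(h^2a)$ built into $\cG_{u,q}$. Once the matching is established, Fubini together with the polynomial growth of $f$ and $V$ and the rapid Gaussian decay justify the exchange of the $t$-integration with the trace, completing the proof; taking $f=V=1$ then recovers $\int G_b(u-v)G_b(q-p)G_{(h^2 b)^{-1}}(z)\,dv\,dp\,dz=1$ and, in particular, $\Tr[\cG_{u,q}^2]=1$, consistent with the direct computation above.
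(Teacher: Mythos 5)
The paper does not actually present a proof of this theorem; it is recalled from \cite{SS}, ``where all proofs can be found''. So your proposal is a genuine re\-derivation, and one can only assess it on its own merits.

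Your strategy is sound and is the natural one for this statement: write $f(\hat A)=(2\pi)^{-1/2}\int\hat f(t)e^{\ic t\hat A}\,dt$, compute the explicit kernel of $e^{\ic t\hat A}$ via the symmetric BCH factorisation, insert everything into the trace, and do the Gaussian integrals. All the concrete steps you do carry out are correct: $\cG_{u,q}$ is Hermitian and Hilbert--Schmidt; the computation of $\Tr[\cG_{u,q}^2]=1$ in sum--difference coordinates $s=(x+y)/2-u$, $t=x-y$ is exact; and the kernel $e^{\ic t\hat A}(x,y)=e^{\ic tB_0}e^{\ic tB_1\cdot(x+y)/2}\delta(y-x-thB_2)$ is right (the BCH $t^2$-phase cancels once one writes the multiplicative phase in terms of $(x+y)/2$). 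The trace-class assertion is best made a touch more precise: factor $\cG_{u,q}f(\hat A)\cG_{u,q}V(\hat x)=[\cG_{u,q}f(\hat A)][\cG_{u,q}V(\hat x)]$ and observe that each bracket is Hilbert--Schmidt because the Gaussian decay of $\cG_{u,q}$ in both position and momentum dominates any polynomial; the product of two HS operators is trace class.

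The genuine gap is exactly what you flagged as ``bookkeeping'' and then declined to do; in this theorem the bookkeeping \emph{is} the proof, since the specific widths $b$, $(h^2b)^{-1}$ and the affine shift $v+h^2ab(u-v)$ emerge only there and are not read off the coherent-state kernel by inspection. Two of your structural comments are also slightly off. First, the shift $v+h^2ab(u-v)$ is \emph{not} ``the minimiser of the completed square in the residual position variable''; what actually happens is that after the $\delta$-collapse and the $t^2$-terms cancel, one is left with a coupled quadratic form $\tfrac{c_1}{2}(\mu^2+\nu^2)+c_2\mu\nu$ with $c_1=a+(h^2a)^{-1}$ and $c_2=a-(h^2a)^{-1}$ in the position arguments $\mu=y-u$ (feeding $f$) and $\nu=x-u$ (feeding $V$); the shear $\nu=(1-h^2ab)\mu+\zeta$ diagonalises it, the $\mu^2$-coefficient becomes exactly $b$, the $\zeta^2$-coefficient becomes exactly $(h^2b)^{-1}$ because $c_1^2-c_2^2=4/h^2$, and $(1-h^2ab)\mu+u+\zeta=v+h^2ab(u-v)+\zeta$ with $v=u+\mu$. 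Second, the variable $p$ is not ``the Fourier dual of the internal displacement'' in any direct sense; it is a dummy variable introduced to implement the Gaussian smoothing of $f$ in the direction $B_2$, i.e.\ it comes from rewriting the $t^2$-damping $e^{-t^2B_2^2/(4b)}$ as $\int e^{\ic tB_2p}G_b(q-p)\,dp$, and similarly the $B_1$-direction smoothing is absorbed into the $(v,z)$ pair. One also has to verify that the two independent $t^2$-dampings (one from the $s$-integral, one from the coupled $\mu$-integral) combine with the $G_b$-width of the residual Gaussian to produce precisely the $G_b(u-v)G_b(q-p)$ pairing stated in the theorem; this is a nontrivial algebraic identity ($b=2/(h^2c_1)$ etc.) rather than an immediate consequence of ``completing the square''. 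Until these computations are done, the stated formula with its specific coefficients is not established. None of this indicates a wrong idea; it indicates that the core Gaussian calculation still needs to be carried out.
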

We shall also need the following extension of this theorem, where we
however only give an estimate on the trace. 
\begin{thm}[{\bf Trace estimates}] \label{trace formula} Let
  $f,\hat{A}$ be as in the previous theorem. Let moreover $\phi\in
  C^{n+4}(\mathbb R^n)$ be a bounded, real function with all
  derivatives up to order $n+4$ bounded, and let $V,F\in C^2(\R^n)$ be
  real functions with bounded second derivatives. Then, for $h<1$,
  $1<a< 1/h$ and $b=2a/(1+h^2a^2)$ we have, with $\s(u,q)=F(q)+V(u)$,
  that\footnote{The operator ${\cG}_{u,q}\,f(\hat{A}) \,{\cG}_{u,q}\,
    \phi(\hat{x}) \left(F(-\ic h\nabla) + V({\x})\right)\phi(\hat{x})$
    is originally defined on, say, $C_0^\infty(\R^n)$, but it is part
    of the claim of the theorem that it extends to a trace class
    operator on all of $L^2(\R^n)$.}
\begin{eqnarray*} 
  \lefteqn{{\Tr}\big[{\cG}_{u,q}\,f(\hat{A}) 
  \,{\cG}_{u,q}\,
  \phi(\x) \big(F(-\ic h\nabla) + V({\x})\big)\phi(\x) \big]
  }
  \\&=&
  \int f(B_0+B_1v+B_2p)\,G_b(u-v)G_b(q-p)
  \\
  &&\quad\times\,\Big[\big(\phi(v+h^2ab(u-v))^2 + E_1(u,v)\big)
  \s(v+h^2ab(u-v),p+h^2ab(q-p))\\&&{}
  \qquad\qquad\qquad\qquad\qquad\qquad\qquad
  \qquad\qquad\qquad\qquad\qquad\quad\ 
  + E_2(u,v;q,p)\Big]\, dv dp \,,
\end{eqnarray*}
with $\|E_1\|_\infty,\|E_2\|_\infty\le C h^2b$, where $C$ depends only
on 
$$ 
  \sup_{|\nu|\le n+4} \|\p^\nu \phi\|_{\infty}\,,\ 
  \sup_{|\nu|=2}\|\p^\nu
  V\|_\infty\,,\ \hbox{and}\ \sup_{|\nu|=2}\|\p^\nu
  F\|_\infty\,.
$$
(Note that the assumption $1<a<1/h$ implies $1<b<1/h$.) 
\end{thm}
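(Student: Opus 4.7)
The plan is to split the sandwiched operator into a position-only and a momentum-involving piece, apply Theorem~\ref{thm:traceidentity} (respectively, its momentum-space analog) to each, and absorb the gap to the claimed shifted-argument formula into Taylor-expansion remainders of size $O(h^2b)$. The geometric reason such remainders appear is that the auxiliary variable $z$ in Theorem~\ref{thm:traceidentity} is integrated against the narrow Gaussian $G_{(h^2b)^{-1}}$, so $\int |z|^2\,G_{(h^2b)^{-1}}(z)\,dz = O(h^2b)$ is the natural scale at which smooth phase-space functions can be replaced by their values at the Gaussian center.

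I first decompose
\[
\phi(\hat{x})\big(F(-\ic h\nabla)+V(\hat{x})\big)\phi(\hat{x}) = \phi(\hat{x})^2 V(\hat{x}) + \phi(\hat{x})F(-\ic h\nabla)\phi(\hat{x}).
\]
For the first summand, Theorem~\ref{thm:traceidentity} applies directly with $W:=\phi^2V$; under the stated hypotheses $W\in C^2$ with $\|D^2W\|_\infty$ controlled by the listed norms, so the exact integral representation of that theorem holds. A second-order Taylor expansion of $\phi^2(w+z)V(w+z)$ in $z$ around $w:=v+h^2ab(u-v)$ kills the linear term by the symmetry of $G_{(h^2b)^{-1}}$ and bounds the quadratic remainder by $C\|D^2(\phi^2V)\|_\infty|z|^2$. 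Integration then gives the leading $\phi^2(w)V(w)$ together with contributions of size $O(h^2b)$ absorbable into $E_1$ and $E_2$.

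For the $\phi F(-\ic h\nabla)\phi$ piece I run the analogous computation in momentum representation. Writing the trace via the kernel $K(x,y)$ of $\mathcal{G}_{u,q}f(\hat{A})\mathcal{G}_{u,q}$ and the Fourier representation of $F(-\ic h\nabla)$ gives
\[
\Tr[\mathcal{G}_{u,q}f(\hat{A})\mathcal{G}_{u,q}\phi F(-\ic h\nabla)\phi] = (2\pi h)^{-n}\int F(p)\Big[\int\phi(x)\phi(y)K(x,y)e^{\ic p(y-x)/h}\,dxdy\Big]dp.
\]
The bracketed Wigner-type object inherits an explicit Gaussian structure in $p$ from the Gaussian form of $\mathcal{G}_{u,q}$, extending to general $f$ via the Fourier representation $f(\hat{A})=\int\hat{f}(t)e^{\ic t\hat{A}}\,dt$ exactly as in the proof of Theorem~\ref{thm:traceidentity}. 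Completing the square recasts the result as an integral over $(v,p,w)$ with a narrow Gaussian $G_{(h^2b)^{-1}}(w)$ now smearing the momentum shift and with shifted-position $\phi^2$ factors; a second-order Taylor expansion of $F$ in $w$ around $p+h^2ab(q-p)$ then yields the $F$-part of the leading term and an $O(h^2b)$ remainder.

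The main technical obstacle is the momentum-space kernel computation, namely identifying the Gaussian structure of $\int\phi(x)\phi(y)K(x,y)e^{\ic p(y-x)/h}\,dxdy$ so that it factorizes, up to an $O(h^2b)$ cross term, into $\phi^2$ at a shifted position times a Gaussian in $p$. This is the natural dual of the calculation behind Theorem~\ref{thm:traceidentity} and produces $E_1(u,v)$ (from the position-space Taylor remainder acting on $\phi^2$) and $E_2(u,v;q,p)$ (from the momentum-space Taylor remainder acting on $F$), each bounded in $L^\infty$ by $Ch^2b$ with $C$ depending only on the norms listed in the statement.
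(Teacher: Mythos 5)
The paper does not itself prove Theorem~\ref{trace formula} --- all coherent-state results in Section~\ref{Local semi-classical estimates} are deferred to \cite{SS} --- so your attempt cannot be matched against an in-text proof; judged on its own, the proposal has two genuine gaps. For the $\phi^2V$ piece, controlling the Taylor remainder by $\|D^2(\phi^2V)\|_\infty$ is illegitimate: $D^2(\phi^2V)=(D^2\phi^2)V+2\,D\phi^2\otimes DV+\phi^2D^2V$ involves $V$ and $\nabla V$, neither of which is assumed bounded (only $D^2V$ is), so the constant you would obtain depends on $\|V\|_\infty$ and $\|\nabla V\|_\infty$, which the theorem explicitly excludes from the list of allowed dependencies. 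This is precisely why the conclusion is phrased as $(\phi^2(w)+E_1)\,\sigma(w,p')+E_2$ rather than a single additive $O(h^2b)$ error: one must expand $\phi^2(w+z)$ and $V(w+z)$ \emph{separately}, keeping the Gaussian-smeared correction to $\phi^2$ as a multiplicative factor $E_1(u,v)$ on $V(w)$, and route only what is controlled by $\|D^2V\|_\infty$ into $E_2$; the Leibniz cross term of order $h^2b\,\nabla\phi^2(w)\cdot\nabla V(w)$ then has to be dealt with explicitly, and your outline offers no mechanism for it.

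The $\phi F\phi$ piece is the heart of the theorem and you leave it as an unproved assertion. After Fourier-representing $F(-\ic h\nabla)$, the quantity $\int\phi(x)\phi(y)K(x,y)\,e^{\ic p(y-x)/h}\,dx\,dy$ contains $\phi$ at two \emph{different} points, so this is not simply the ``natural dual'' of Theorem~\ref{thm:traceidentity}, where only the coherent-state Gaussian sandwiches $f(\hat{A})$ and no extra multiplier is present. You would need to redo the Gaussian integrations carrying the additional $\phi(x)\phi(y)$ factor, show that upon completing the square it collapses to $\phi^2$ at the shifted centre $v+h^2ab(u-v)$ plus an $O(h^2b)$ correction $E_1'(u,v)$, and then verify that $E_1'$ \emph{agrees} with the $E_1$ produced by the $\phi^2V$ piece: any mismatch multiplies $F(p')$, which is unbounded, and hence cannot be absorbed into a uniformly bounded $E_2$. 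The proposal does not even flag that this matching is required, let alone perform the Gaussian computation that would establish it.
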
 
We will use the above theorem to prove an upper bound on the sum of
eigenvalues of the operator $F(-\ic h\nabla) + V({\x})$, in the case
when $F(q)=T_\beta(q)$ from \eqref{def:kinetic energy} with
$\b\in[0,1]$ (equal to $\sqrt{\b^{-1}q^2+\b^{-2}}-\b^{-1}$ for
$\b\in(0,1]$, and to \(\frac12q^2\) when \(\beta=0\)).
This is done in Lemma~\ref{lm:upperbound} below by
constructing a trial density matrix on the form (\ref{eq:formf}).  

To prove a lower bound on the sum of the negative eigenvalues one
approximates the Hamiltonian $F(-\ic h\nabla) + V(\x)$ by an operator
also represented on the form (\ref{eq:formf}).

\begin{thm}[{\bf Coherent states
    representation}]\label{thm:coherentrepresentation} 
Consider functions $F,V\in C^3(\R^n)$, for which all second and third
derivatives are bounded. Let $\s(u,q)=F(q)+V(u)$, then for
$a<1/h$ and $b=2a/(1+h^2a^2)$ we have  the representation (as
quadratic forms on $C^\infty_0(\R^n)$), 
$$
  F(-\ic h\nabla) + V(\x) = \int \,{\cG}_{u,q}\widehat H_{u,q}
  {\cG}_{u,q}\frac{du dq}{(2\pi h)^n} + {\mathbf E}\,,
$$
with the operator-valued symbol
\begin{equation}\label{eq:coherentrepresentation}
  \widehat H_{u,q}= \s(u,q)+\mfr{1}{4b}\D \s(u,q) 
  + \p_u \s(u,q)({\x}-u) + \p_q \s(u,q)(-ih\nabla -q)\,.
\end{equation}
The error term, ${\mathbf E}$, is a bounded operator with
$$
  \|\mathbf E\|\leq Cb^{-3/2}\sum_{|\nu|=3}\|\partial^\nu
  \s\|_\infty+Ch^2b\sum_{|\nu|=2}\|\partial^\nu\s\|_\infty \,.
$$
\end{thm}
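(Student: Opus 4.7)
The plan is to verify the identity as a quadratic form on $C_0^\infty(\R^n)$ and then to extract the operator-norm bound on $\mathbf{E}$ from uniform estimates on matrix elements. I decompose $\widehat H_{u,q} = H^{(0)}_{u,q} + H^{(1)}_{u,q}$, where $H^{(0)}_{u,q} = \sigma(u,q) + \frac{1}{4b}\D\sigma(u,q)$ is scalar and $H^{(1)}_{u,q} = \p_u\sigma(u,q)(\x - u) + \p_q\sigma(u,q)(-\ic h\nabla - q)$ is linear, and handle the two contributions separately.

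For the scalar contribution I use the completeness identities \eqref{eq:formula-coherent} to rewrite
\begin{equation*}
  \int\sigma(u,q)\,\cG_{u,q}^2\,\frac{du\,dq}{(2\pi h)^n} \,=\, (V*G_b)(\x) + (F*G_b)(-\ic h\nabla),
\end{equation*}
where $*$ denotes Gaussian convolution. A Taylor expansion to second order with integral remainder, combined with the moments $\int w_iw_j G_b(w)\,dw = \delta_{ij}/(2b)$ and $\int|w|^3 G_b(w)\,dw \le C b^{-3/2}$, gives $V(\x) + F(-\ic h\nabla) + \frac{1}{4b}\D\sigma(\x,-\ic h\nabla) + R_0$ with $\|R_0\| \le C b^{-3/2}\sum_{|\nu|=3}\|\p^\nu\sigma\|_\infty$ (the bound for the momentum piece coming from the functional calculus for $-\ic h\nabla$). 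Thus $\sigma$ alone already produces the target plus an unwanted $+\frac{1}{4b}\D\sigma$ that must be cancelled elsewhere.

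The crucial step is to show that the linear part of $\widehat H_{u,q}$, together with the explicit $\frac{1}{4b}\D\sigma$ in $H^{(0)}_{u,q}$, cancels this unwanted term modulo the claimed remainder. I differentiate the kernel in \eqref{new coherent states} directly to obtain the algebraic identities
\begin{align*}
  (\x - u)\,\cG_{u,q} &= \tfrac{1}{2a}\,\p_u\cG_{u,q} \,-\, \tfrac{\ic h}{2}\,\p_q\cG_{u,q},\\
  (-\ic h\nabla - q)\,\cG_{u,q} &= \tfrac{\ic h}{2}\,\p_u\cG_{u,q} \,+\, \tfrac{1}{2a}\,\p_q\cG_{u,q},
\end{align*}
together with their adjoints. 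Inserting these into $\cG_{u,q}H^{(1)}_{u,q}\cG_{u,q}$ and integrating by parts in $(u,q)$ transfers the $\p_u,\p_q$ onto $\p_u\sigma,\p_q\sigma$, generating second-derivative expressions in $\sigma$. A careful accounting using $b = 2a/(1+h^2a^2)$ shows that the net contribution of the linear part cancels the $+\frac{1}{4b}\D\sigma$ arising in Step~2, up to a remainder coming from the $\ic h$-terms above: these produce commutators of $\cG_{u,q}$ with $\x$ and with $-\ic h\nabla$ (reflecting the off-diagonal $(x-y)$-behaviour of the kernel, of width $h\sqrt{a}$) which scale as $C h^2 b\sum_{|\nu|=2}\|\p^\nu\sigma\|_\infty$. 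The explicit scalar $\frac{1}{4b}\D\sigma$ piece of $H^{(0)}_{u,q}$ contributes $\frac{1}{4b}\D\sigma(\x,-\ic h\nabla)$ by a second application of the same completeness argument, up to strictly lower-order corrections.

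The main obstacle will be this operator-valued integration by parts: $\cG_{u,q}$ does not commute with $\x$ or $-\ic h\nabla$, and the $\ic h$-commutators must be tracked carefully to produce exactly the $h^2b$-term in the error rather than something larger. To pass from the quadratic-form identity to an operator-norm bound on $\mathbf E$, I estimate $\langle\psi,\mathbf E\psi\rangle$ for $\psi\in C_0^\infty(\R^n)$ termwise using the resolution of identity $\int\|\cG_{u,q}\psi\|^2\,du\,dq/(2\pi h)^n = \|\psi\|^2$ from \eqref{eq:complete} and the analogous bounds for $(\x-u)\cG_{u,q}\psi$ and $(-\ic h\nabla-q)\cG_{u,q}\psi$, which are $O(b^{-1})\|\psi\|^2$ by the Gaussian widths encoded in \eqref{eq:formula-coherent}. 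A Cauchy--Schwarz estimate applied to the Taylor remainders then yields the claimed bound on $\|\mathbf E\|$, and density extends the identity to all of $L^2(\R^n)$; the constraint $1 < a < 1/h$ (equivalently $1<b<1/h$) is used to ensure that both $b^{-3/2}$ and $h^2b$ are controlled and to balance them in the final estimate.
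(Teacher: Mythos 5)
The paper does not contain a proof of this theorem: it is imported verbatim from \cite{SS} (``\ldots{}introduced in~\cite{SS}, where all proofs can be found''), so your reconstruction has to stand on its own. Your overall strategy---split $\widehat H_{u,q}$ into a scalar and a linear part, treat the scalar part by the completeness identity plus a Gaussian Taylor expansion, and handle the linear part by integration by parts in $(u,q)$ using the exact first-order differential identities for $(\x-u)\cG_{u,q}$ and $(-\ic h\nabla-q)\cG_{u,q}$---is the right one and the algebraic identities you wrote are correct. The final norm-bound step via the resolution of identity is also fine.

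However, the bookkeeping of the $\tfrac{1}{4b}\D\s$-terms is off by a factor of $2$, and the three sentences in your key paragraph are mutually inconsistent. First you say the linear part \emph{together with} the explicit $\tfrac{1}{4b}\D\s$ in $H^{(0)}$ cancels the Taylor $\tfrac{1}{4b}\D\s$; then you say the linear part alone cancels the Taylor $\tfrac{1}{4b}\D\s$; then you say the explicit $\tfrac{1}{4b}\D\s$ additionally ``contributes $\tfrac{1}{4b}\D\s$''. These three statements cannot all hold. What actually happens is the following. The symmetrised identity is
\begin{equation*}
   \cG_{u,q}(\x_j-u_j)\cG_{u,q}
   \,=\, \mfr{1}{4a}\,\p_{u_j}\big[\cG_{u,q}^{\,2}\big]
   \,+\,\mfr{\ic h}{4}\Big[(\p_{q_j}\cG_{u,q})\cG_{u,q}-\cG_{u,q}(\p_{q_j}\cG_{u,q})\Big]\,,
\end{equation*}
and $1/(4a)=(1+h^2a^2)/(2b)$, so after integration by parts in $u$ (and similarly in $q$ for the $F$-part) the linear part produces, to leading order, $-\tfrac{1}{2b}\D\s(\x,-\ic h\nabla)$; the commutator contributes the $O(h^2b\,\|\p^2\s\|_\infty)$ piece. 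Equivalently, a direct kernel computation gives $\cG_{u,q}(\x-u)\cG_{u,q}=(1-h^2ab)\big(\tfrac{x+y}{2}-u\big)\cG_{u,q}^2$, and $\int\nabla V(x-w)\cdot w\,G_b(w)\,dw=-\tfrac{1}{2b}\D V(x)+O(b^{-3/2}\|\p^3\s\|_\infty)$, yielding the same $-\tfrac{1}{2b}\D\s$. This $-\tfrac{1}{2b}\D\s$ must absorb \emph{both} the Taylor-generated $+\tfrac{1}{4b}\D\s$ from the $\s(u,q)$-term \emph{and} the $+\tfrac{1}{4b}\D\s$ that the explicit scalar in \eqref{eq:coherentrepresentation} produces through the completeness identity. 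As your accounting stands---linear cancels only the Taylor piece, explicit piece survives---you would be left with a residual $+\tfrac{1}{4b}\D\s$, which is of size $b^{-1}\sum_{|\nu|=2}\|\p^\nu\s\|_\infty$ and is \emph{not} absorbed by the claimed error $Cb^{-3/2}\sum_{|\nu|=3}\|\p^\nu\s\|_\infty + Ch^2b\sum_{|\nu|=2}\|\p^\nu\s\|_\infty$. Restoring the factor $\tfrac{1}{2b}$ makes the terms cancel exactly, and the remaining errors are precisely the commutator piece ($h^2b$) and the cubic Gaussian moments ($b^{-3/2}$), as claimed.
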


Let us recall our convention that $x_-=\min\{x,0\}$ and that
$\chi=\chi_{(-\infty,0]}$ denotes the characteristic function of
$(-\infty,0]$.

The next theorem is the main result of this section.
\begin{thm}[{\bf Local relativistic semi-classics}] \label{local semi-classics}
For $n\ge3$, let $\phi\in C_0^{n+4}(\mathbb R^n)$ be supported in
a ball $B_1\subset\R^n$ of radius $1$ and let $V\in
C^3(\overline{B_1})$ be a real function. Let $0\le\beta\le1$, $h>0$, and
let $\sigma_\beta(u,q)=T_\beta(q)+V(u)$ and \(H_\b=T_\beta(-\ic
h\nabla) + V(\hat{x})\) 
with
$T_\beta(q)=\sqrt{\beta^{-1}q^2+\beta^{-2}}-\beta^{-1}$ for
\(\beta\in(0,1]\) and \(T_0(q)=\frac12q^2\).

Then 
\bea\label{local semicl}
  \Big|{\Tr}\big[\phi H_\b
  \phi\big]_{-}-(2\pi h)^{-n}\int \,\phi(u)^2 
  \sigma_\beta(u,q)_-\,du dq\Big|
  \leq Ch^{-n+6/5} \,. \nonumber
\eea
The constant $C>0$ here depends only on $\|\phi\|_{C^{n+4}}$,
$\|V\|_{C^3}$,\footnote{We use the convention that
  $\|\psi\|_{C^p}=\sup_{|\nu|\leq p} \|\partial^\nu\psi\|_\infty$.}
and the dimension $n$, but not on $\beta\in[0,1]$.
\end{thm}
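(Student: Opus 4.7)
The plan is to follow the coherent state scheme of \cite{SS}, specialised to the relativistic symbol $\sigma_\beta(u,q)=T_\beta(q)+V(u)$. The coherent-state parameter $b=2a/(1+h^2a^2)$ will be chosen at the end to be $b\sim h^{-4/5}$, which balances the two contributions to the operator error $\|\mathbf E\|\le C(b^{-3/2}+h^2 b)$ in Theorem~\ref{thm:coherentrepresentation} at the common size $h^{6/5}$. Uniformity in $\beta\in[0,1]$ is automatic once one notes that $T_\beta$ satisfies $|\partial^\eta T_\beta(q)|\le C_\eta(1+|q|)^{1-|\eta|}$ uniformly in $\beta$, so the constants in Theorems~\ref{thm:coherentrepresentation}, \ref{thm:traceidentity} and~\ref{trace formula} are $\beta$-independent on the momentum-bounded region that is relevant after restriction through the spectral projection.

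For the upper bound on $\Tr[\phi H_\beta\phi]_{-}$, I use the variational principle with the trial density matrix
$\gamma_{\mathrm{tr}}=(2\pi h)^{-n}\int \mathcal{G}_{u,q}\,M(u,q)\,\mathcal{G}_{u,q}\,du\,dq$, where $0\le M\le 1$ is a smooth cut-off of $\{\sigma_\beta<0\}$. Completeness~\eqref{eq:complete} makes $\gamma_{\mathrm{tr}}$ automatically a density matrix, and Theorem~\ref{trace formula} (with $F=T_\beta$ and the given $V$) evaluates $\Tr[\gamma_{\mathrm{tr}}\phi H_\beta\phi]$ explicitly as a phase-space integral. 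After performing the Gaussian convolutions and absorbing the $O(h^2 b)$ error terms $E_1,E_2$, this equals the desired $(2\pi h)^{-n}\int \phi(u)^2\sigma_\beta(u,q)_{-}\,du\,dq$ up to $O(h^{-n+6/5})$, with constants controlled by $\|\phi\|_{C^{n+4}}$, $\|V\|_{C^3}$, and the $\beta$-uniform bounds on $T_\beta$.

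For the lower bound, set $\gamma=\chi_{(-\infty,0]}(\phi H_\beta\phi)$, so that $\Tr[\phi H_\beta\phi]_{-}=\Tr[\gamma\phi H_\beta\phi]$. Substituting Theorem~\ref{thm:coherentrepresentation} gives
\[
\Tr[\gamma\phi H_\beta\phi] = (2\pi h)^{-n}\!\int\Tr\bigl[\mathcal{G}_{u,q}\phi\gamma\phi\mathcal{G}_{u,q}\widehat H_{u,q}\bigr]du\,dq+\Tr[\gamma\phi\mathbf E\phi].
\]
An a~priori rank estimate $\Tr[\gamma]\le C h^{-n}$ obtained from the Daubechies inequality (Theorem~\ref{Daubechies}) applied to $\phi H_\beta\phi$ bounds the last term by $\|\mathbf E\|\,\Tr[\gamma]=O(h^{-n+6/5})$. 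Inside the first term the operator-valued symbol $\widehat H_{u,q}$ is affine in $\hat x-u$ and $-ih\nabla-q$ with scalar part $\sigma_\beta(u,q)+(4b)^{-1}\Delta\sigma_\beta(u,q)$; by cyclicity and Theorem~\ref{trace formula} applied with $f(t)=t$ and $\hat A=\widehat H_{u,q}$, the affine drift averages out modulo an $O(h^2b)$ remainder by the evenness of the Gaussians $G_b$, and the $(4b)^{-1}\Delta\sigma_\beta$ correction is an $O(h^{2}b)$ contribution after the same trace formula is used. What remains is $\sigma_\beta(u,q)\Tr[\mathcal{G}_{u,q}\phi\gamma\phi\mathcal{G}_{u,q}]$, which, using $0\le\mathcal{G}_{u,q}\phi\gamma\phi\mathcal{G}_{u,q}\le \mathcal{G}_{u,q}\phi^2\mathcal{G}_{u,q}$, is bounded below by $\sigma_\beta(u,q)_{-}\Tr[\mathcal{G}_{u,q}\phi^2\mathcal{G}_{u,q}]$. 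Theorem~\ref{thm:traceidentity} (with $f=1$, $V=\phi^2$) identifies $\Tr[\mathcal{G}_{u,q}\phi^2\mathcal{G}_{u,q}]$ with $\phi(u)^2$ up to smoothing corrections of the same order, and integration produces the required matching semiclassical expression with remainder $O(h^{-n+6/5})$.

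The principal obstacle will be the lower bound for the integrand: because $\widehat H_{u,q}$ contains a genuinely unbounded affine piece in $(\hat x,-ih\nabla)$, one cannot invoke a scalar operator inequality. The resolution combines cyclicity with Theorem~\ref{trace formula} and the even parity of the Gaussians $G_b$ to convert the affine drift into a bounded remainder of size $h^2b\sim h^{6/5}$, exactly the scale dictated by the coherent-state operator error $\|\mathbf E\|$; the $\beta$-uniformity of every step rests on the uniform control of the second and third momentum derivatives of $T_\beta$.
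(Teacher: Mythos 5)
Your high-level set-up (new coherent states with $b=2a/(1+h^2a^2)\sim h^{-4/5}$, balancing $b^{-3/2}$ against $h^2b$, and Daubechies as the a~priori control) matches the paper, but both core mechanisms are replaced by versions that do not actually deliver $h^{-n+6/5}$.

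For the upper bound, your trial state $\gamma_{\rm tr}=(2\pi h)^{-n}\int\cG_{u,q}\,M(u,q)\,\cG_{u,q}\,du\,dq$ with a \emph{scalar} cut-off $M$ of $\{\sigma_\b<0\}$ is the classical Lieb coherent-state trial state. Its error is dominated by the phase-space shell $\{|\sigma_\b|\lesssim b^{-1/2}\}$: after the Gaussian smoothing of $M$ one picks up an extra $\int|\sigma_\b|\,|\,\widetilde M-\chi_{\{\sigma_\b<0\}}\,|\sim b^{-1/2}\cdot b^{-1/2}=b^{-1}$, which against $h^2b$ balances at $b\sim h^{-1}$ and gives only $O(h^{-n+1})$. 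The paper instead takes $\gamma=\int\cG_{u,q}\,\chi[\hat h_{u,q}]\,\cG_{u,q}\frac{du\,dq}{(2\pi h)^n}$ with the \emph{operator-valued} affine symbol $\hat h_{u,q}=\s(u,q)+\tfrac{1}{4b}\D\s(u,q)+\nabla\s(u,q)\cdot(\x-u,-\ic h\nabla-q)$; the linear drift in the argument of $\chi$ together with the $\tfrac{1}{4b}\D\s$ term cancels the leading boundary-layer contribution and drops the shell error to $b^{-3/2}$, which is what allows $b=h^{-4/5}$ and an $O(h^{-n+6/5})$ bound. In the same spirit, your remark that the $\frac{1}{4b}\D\s$ term ``is an $O(h^2b)$ contribution'' is off: it is of size $b^{-1}=h^{4/5}$, not $h^{6/5}$, and it is not an error to be discarded but precisely the correction needed to effect the cancellation.

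For the lower bound, the proposed treatment of the affine drift (``cyclicity plus Theorem~\ref{trace formula} plus even parity of the Gaussians'') does not close: the weight $\phi\gamma\phi$ with $\gamma=\chi(\phi H_\b\phi)$ is not a multiplication operator, so Theorems~\ref{thm:traceidentity} and \ref{trace formula} do not apply to $\Tr[\phi\gamma\phi\,\cG_{u,q}\hat H_{u,q}\cG_{u,q}]$, and the parity argument that would kill $\int\cG_{u,q}(\x-u)\cG_{u,q}$ no longer does so when the state depends on the same variables through the coherent states. The paper resolves this differently: it uses $[x+y]_-\geq[x]_-+[y]_-$ to write $\phi H_\b\phi\geq\int\phi\cG_{u,q}\hat H^{(\e)}_{u,q}\cG_{u,q}\phi+\phi(\e T_\b - C(b^{-3/2}+h^2b))\phi$, then brings the negative part inside to get $\Tr[\phi H_\b\phi]_-\geq\int\Tr\big[\phi\cG_{u,q}[\hat H^{(\e)}_{u,q}]_-\cG_{u,q}\phi\big]\frac{du\,dq}{(2\pi h)^n}-Ch^{-n}(b^{-3/2}+h^2b)$. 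Since $[\hat H_{u,q}]_-\leq 0$ is again a function of the affine first-order operator $\hat H_{u,q}$, Theorem~\ref{thm:traceidentity} applies with $f=[\cdot]_-$ and produces an explicit, estimable phase-space integral; the $\e$-splitting plus Daubechies absorbs the $\|\mathbf E\|$ term without needing an a~priori count of eigenvalues. (Your rank bound $\Tr[\gamma]\leq Ch^{-n}$ is also suspect as stated, since $\phi H_\b\phi$ can have infinitely many small negative eigenvalues; what is actually controllable via Daubechies is $\Tr[\phi\gamma\phi]=\Tr[\gamma\phi^2]\leq Ch^{-n}$, which would suffice for your trace-norm estimate but does not rescue the drift argument.)
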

The important property for our method to work is that the second and
third order derivatives of the kinetic energy function $T_\beta(q)$
are bounded uniformly in $q$ and $\beta$. Thus the error term above is 
independent of $\beta\in[0,1]$, and in particular the same as for the 
non-relativistic case, \({}-h^2\D/2+V\), which corresponds to the limit
$\b\to0$. We 
prove upper and lower bounds and start with the lower bound.

\begin{lemma}[{\bf Lower bound on ${\Tr}[\phi
    H_\b\phi]_-$}]\label{local semi-classics-lower} 
Under the same conditions as in Theorem \ref{local semi-classics},
$$ 
  {\Tr}[\phi H_\b\phi]_- 
  \ge (2\pi h)^{-n}\int \,\phi(u)^2 \s_\b(u,q)_- \, du dq
  -Ch^{-n+6/5} \,.
$$
The constant $C>0$ here depends only on $\|\phi\|_{C^{n+4}}$,
$\|V\|_{C^3}$, and the dimension $n$, but not on $\b\in[0,1]$.
\end{lemma}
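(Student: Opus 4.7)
The plan is to adapt the operator-valued coherent-state scheme of \cite{SS} to the relativistic setting, with the key observation that $T_\beta\in C^\infty(\R^n)$ has bounded second and third derivatives uniformly in $\beta\in[0,1]$ (a direct calculation: $T_\beta''(q)=\beta/(1+\beta^2q^2)^{3/2}\le\beta\le1$ and $|T_\beta'''(q)|\le 3\beta^2\le 3$), so that the constants in Theorems~\ref{thm:traceidentity}, \ref{trace formula}, and \ref{thm:coherentrepresentation} are all $\beta$-uniform.

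First, I apply Theorem~\ref{thm:coherentrepresentation} to $H_\beta$ with coherent-state parameter $a$ chosen so that $b:=2a/(1+h^2a^2)\sim h^{-4/5}$; this balances the two contributions to $\|\mathbf{E}\|\le C(b^{-3/2}+h^2b)\le Ch^{6/5}$. Setting $\gamma=\chi_{(-\infty,0]}(\phi H_\beta\phi)$ we have $\Tr[\phi H_\beta\phi]_-=\Tr[\phi H_\beta\phi\gamma]$ with $\mathbf{0}\le\gamma\le\mathbf{1}$; a routine Daubechies-type bound (Theorem~\ref{Daubechies} with $m=h^{-2}$ and $\alpha=\sqrt{\beta}/h\le1$) applied to a shifted potential yields the a priori estimate $\Tr[\gamma]\le Ch^{-n}$, so the $\mathbf{E}$-contribution is $\le\|\mathbf{E}\|\|\phi\|_\infty^2\Tr[\gamma]\le Ch^{-n+6/5}$. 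Substituting the coherent-state representation gives
\[
\Tr[\phi H_\beta\phi]_-=\int\Tr\bigl[\cG_{u,q}\widehat{H}_{u,q}\cG_{u,q}\,\phi\gamma\phi\bigr]\frac{du\,dq}{(2\pi h)^n}+O(h^{-n+6/5}).
\]

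It remains to analyze the integrand. Split $\widehat{H}_{u,q}=\sigma_\beta(u,q)+\frac{1}{4b}\Delta\sigma_\beta(u,q)+L_{u,q}$, where $L_{u,q}=\partial_u\sigma_\beta\cdot(\hat{x}-u)+\partial_q\sigma_\beta\cdot(-\ic h\nabla-q)$ is first-order self-adjoint. A direct use of Theorem~\ref{thm:traceidentity} with $f$ the identity shows that $L_{u,q}$ has vanishing Gaussian moment at its own center (the linear function $B_0+B_1v+B_2p$ integrated against Gaussians centered at $(u,q)$ returns $B_0+B_1u+B_2q=0$), and its contribution is then controlled via Theorem~\ref{trace formula}, producing only a subleading error. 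The scalar piece is lower-bounded by the pointwise inequality
\[
\sigma_\beta(u,q)\Tr[\cG_{u,q}\phi\gamma\phi\cG_{u,q}]\ge(\sigma_\beta)_-(u,q)\,\Tr[\cG_{u,q}\phi^2\cG_{u,q}],
\]
using $\gamma\le\mathbf{1}$; evaluating $\Tr[\cG_{u,q}\phi^2\cG_{u,q}]=\phi^2(u)+O(b^{-1})$ via the explicit Gaussian-convolution formula in Theorem~\ref{thm:traceidentity} recovers the target $(2\pi h)^{-n}\int\phi(u)^2(\sigma_\beta)_-(u,q)\,du\,dq$ at leading order.

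The main obstacle is that the naive $O(b^{-1}h^{-n})=O(h^{-n+4/5})$ error arising from the $\frac{1}{4b}\Delta\sigma_\beta$ correction and from the Gaussian smearing of $\phi^2$ is worse than the claimed $O(h^{-n+6/5})$. Getting the sharper estimate requires tracking a cancellation between the Laplacian correction in $\widehat{H}_{u,q}$ and the second-order term in the Taylor expansion of $\phi^2*G_b$, as is carried out in \cite{SS}; this cancellation is precisely the reason why the representation in Theorem~\ref{thm:coherentrepresentation} includes the $\frac{1}{4b}\Delta\sigma$ term in the first place. In the relativistic setting the argument goes through unchanged thanks to the $\beta$-uniform $C^3$ bounds on $T_\beta$ noted above.
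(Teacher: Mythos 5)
Your overall blueprint (coherent-state representation with $b\sim h^{-4/5}$, the $\beta$-uniform $C^3$ bounds on $T_\beta$, and the Gaussian-moment cancellation that upgrades the naive $O(b^{-1})$ error to $O(b^{-3/2}+h^2b)$) is the right one and matches the paper's. However, there is a genuine gap in the way you dispose of the remainder $\mathbf{E}$ of Theorem~\ref{thm:coherentrepresentation}.

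You set $\gamma=\chi_{(-\infty,0]}(\phi H_\beta\phi)$ and claim ``a routine Daubechies-type bound \ldots yields $\Tr[\gamma]\le Ch^{-n}$,'' then estimate the $\mathbf{E}$-contribution by $\|\mathbf{E}\|\,\|\phi\|_\infty^2\,\Tr[\gamma]$. This fails for two reasons. First, with the closed half-line the projection $\gamma$ contains the kernel of $\phi H_\beta\phi$, which contains $\ker\phi$ and is infinite-dimensional, so $\Tr[\gamma]=\infty$. Second, even replacing $\gamma$ by the projection onto the \emph{strictly} negative spectral subspace (or by $\Tr[\phi\gamma\phi]$), the desired bound is not ``routine'': the Daubechies inequality \eqref{Daub inequBIS} controls the \emph{sum} $\sum_j|\lambda_j|$ of the negative eigenvalues, not their number, and the negative eigenvalues of $\phi H_\beta\phi$ accumulate at $0$. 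Applying Daubechies to a shifted potential gives only $\#\{\lambda_j\le-\delta\}\le C\delta^{-1}h^{-n}$, which diverges as $\delta\to0$. So there is no immediate bound $\Tr[\gamma]\le Ch^{-n}$, and without it the estimate $|\Tr[\gamma\phi\mathbf{E}\phi]|\le\|\mathbf{E}\|\Tr[\phi\gamma\phi]$ does not close.

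The paper sidesteps this entirely by not introducing $\gamma$ at all. Instead it writes $T_\beta=(1-\varepsilon)T_\beta+\varepsilon T_\beta$, applies the coherent-state representation to the modified symbol $\widetilde\sigma=(1-\varepsilon)T_\beta+V$, and uses the elementary inequality $\Tr[X+Y]_-\ge\Tr[X]_-+\Tr[Y]_-$ to split off a genuine Schr\"odinger-type trace
\[
\Tr\bigl[\phi\bigl(\varepsilon T_\beta(-\ic h\nabla)-C(b^{-3/2}+h^2b)\bigr)\phi\bigr]_-\,,
\]
which \emph{is} controlled by the (one-body) Daubechies inequality \eqref{eq:Daub usefull} without any trace-counting. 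Choosing $\varepsilon\sim b^{-3/2}+h^2b$ balances the loss of kinetic energy against the constant term and produces the $O(h^{-n+6/5})$ error. The remaining piece $\Tr[\phi\int\cG_{u,q}[\widehat H^{(\varepsilon)}_{u,q}]_-\cG_{u,q}\phi\,\frac{du\,dq}{(2\pi h)^n}]$ is then bounded below by pushing $[\,\cdot\,]_-$ inside the integral and invoking Theorem~\ref{thm:traceidentity}. If you wish to keep your framework, you would have to prove the a priori bound $\Tr[\phi\gamma\phi]\le Ch^{-n}$, and the natural way to do that is again a Daubechies-type argument using a positive fraction of the kinetic energy --- i.e.\ you would re-derive the $\varepsilon$-trick by another name. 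The cancellation you flag at the end (between $\tfrac{1}{4b}\Delta\widetilde\sigma$ and the second Gaussian moments) is indeed the same one the paper carries out via \eqref{eq:Gauss-integrals}, so that part of your sketch is consistent; the real gap is the treatment of $\mathbf{E}$.
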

\begin{proof} 
Since $\phi$ has support in the ball $B_1$ we may assume
without loss of generality that $V\in C^3_0(\R^n)$ with the
support in a ball $B_2$ of radius $2$
and that the norm $\| V\|_{C^3}$ refers to the supremum over all 
of $\R^n$. We shall not explicitly follow how the 
error terms depend on $\|\phi\|_{C^{n+4}}$ and $\|V\|_{C^3}$. 
All constants denoted by $C$ depend on $\|\phi\|_{C^{n+4}}$,
$\|V\|_{C^3}$, and the dimension $n$ but, in particular, not on $\b$.

We use the Daubechies inequality (Theorem~\ref{Daubechies}) to control
various error estimates. Since $T_\b(q)\ge T_1(q)$ for $\beta\in[0,1]$
we may use it with $\beta=1$. Then, uniformly in $\beta\in[0,1]$,
$$ 
  {\Tr}[\phi H_\b\phi]_- \ge C\|\phi\|_\infty^2\int\limits_{u\in B_1}
  \, \s_1(u,q)_- 
  \,\frac{du dq}{(2\pi h)^n} \geq {}-C h^{-n}\,.
$$
Consider some fixed $0<\tau<1$ (independent of $h$ and $\beta$).
If $h\geq \tau$ then we get that
$$ 
  {\Tr}[\phi H_\b\phi]_- \ge \int \, \phi(u)^2 \s_\b(u,q)_-
  \,\frac{du dq}{(2\pi h)^n} - C\tau^{-6/5}h^{-n+6/5}\,.
$$
We are therefore left with considering $h<\tau$. 

If we now use the inequality $[x+y]_-\ge [x]_- + [y]_-$, which we will
do frequently without further mentioning, and Theorem
\ref{thm:coherentrepresentation}, we have that  
\begin{eqnarray}
  {\Tr}[\phi H_\b\phi]_-&\geq& {\Tr}\left[\int \phi 
  \,{\cG}_{u,q}\widehat{H}_{u,q}^{(\varepsilon)}
  {\cG}_{u,q}\phi\,\frac{du dq}{(2\pi h)^n}\right]_-\nonumber \\
  &&{}+ {\Tr}\big[\phi\big(\varepsilon \sqrt{-\beta^{-1}h^2\Delta +
       \beta^{-2}} - 
  \varepsilon \beta^{-1} - C(b^{-3/2}+h^2b)\big)
  \phi\big]_- \,. \label{eq:LTerror}
\end{eqnarray}
Here, $0<\varepsilon<1/2$ and
$$
  \widehat H_{u,q}^{(\varepsilon)}=\widetilde{\s}(u,q)+\mfr{1}{4b}\D
  \widetilde{\s}(u,q)+\p_u\widetilde{\s}(u,q)({\x}-u) + \p_q
  \widetilde{\s}(u,q)(-ih\nabla -q)
$$
with $\widetilde{\s}(u,q)=(1-\varepsilon)T_\b(q)+V(u)$. The second
trace can be estimated from below using the Daubechies inequality
(Theorem~\ref{Daubechies}) with
\(\alpha=\beta^{1/2}h^{-1}\), \(m=h^{-2}\). Then
\begin{eqnarray}\lefteqn{
  {\Tr}\big[\phi\big(\varepsilon \sqrt{-\beta^{-1}h^2\Delta +
        \beta^{-2}} - 
  \varepsilon \beta^{-1} - C(b^{-3/2}+h^2b)\big)\phi\big]_-
  }\nonumber\\
  &=&\varepsilon\,{\Tr}\big[\phi\big(\sqrt{-\beta^{-1}h^2\Delta +
        \beta^{-2}} - 
  \beta^{-1} - C\varepsilon^{-1}(b^{-3/2}+h^2b)\big)\phi\big]_-
  \nonumber
  \\\nonumber
  &\ge&{}-C\varepsilon h^{-n}\int_{B_1}
  \big(\varepsilon^{-1}(b^{-3/2}+h^2b)\big)^{1+n/2}\,dx
  \\&&{}-C\varepsilon\beta^{n/2}h^{-n}\int_{B_1}
  \big(\varepsilon^{-1}(b^{-3/2}+h^2b)\big)^{n}\,dx\,.
  \label{Derror}
\end{eqnarray}
We shall eventually choose
$\varepsilon=\frac{1}{4}(b^{-3/2}+h^2b)$. Note that then
$\varepsilon<1/2$, and that the bound in (\ref{Derror}) is
${}-Ch^{-n}(b^{-3/2}+h^2b)$, uniformly for \(\beta\in[0,1]\). 

By bringing the negative part inside in \eqref{eq:LTerror} we obtain
the lower bound, 
$$
  {\Tr}[\phi H_\b\phi]_-\geq \int {\Tr}\Big[\phi 
    \,{\cG}_{u,q}\big[\widehat H_{u,q}^{(\varepsilon)}\big]_-
    {\cG}_{u,q}\phi\Big]\,\frac{du dq}{(2\pi h)^n}  
  - C h^{-n}(b^{-3/2}+h^2b)\,.
$$

We first consider the integral over $u$ outside the ball $B_2$ of
radius \(2\), where $V=0$. Using Theorem \ref{thm:traceidentity} (with
\(f(t)=[t]_{-}\), and 
$V$ replaced by $\phi^2$) and $\int \phi^2\le C$, we get that this
part of the integral is 
\begin{eqnarray*}
  \lefteqn{
  (1-\varepsilon)\displaystyle\int\limits_{u\not\in 
  B_2}\left[T_{\b}(q)+ (n+(n-1)\beta q^2)/[4b(1+\beta q^2)^{3/2}]
  + q\cdot(p-q)/\sqrt{1+\beta q^2}\,\right]_- 
  }\\
  &&\qquad\quad\times\,G_b(q-p)G_b(u-v)
  G_{(h^2b)^{-1}}(z)\phi(v+h^2ab(u-v)+z)^2  
  \,dvdpdz\, 
  \frac{du dq}{(2\pi h)^n}
  \\
  &\geq&(1-\varepsilon) \int\limits_{z\in B_1}\phi(z)^2  
  \int\limits_{u\not\in B_2} G_b(u-v)
  G_{(h^2b)^{-1}}(v+h^2ab(u-v)-z)\, du dv dz
  \\
  &&\qquad\qquad\qquad\qquad\qquad\times  \int \big[T_{\b}(q) +
  q\cdot(p-q)/\sqrt{1+\beta  
    q^2}\,\big]_-   
  G_b(q-p) \,\frac{dq dp}{(2\pi h)^n}
  \\
  &=&(1-\varepsilon) \int\limits_{z\in B_1}\phi(z)^2
  \int\limits_{(1-h^2ab)u\not\in B_2-v} G_b(u) G_{(h^2b)^{-1}}(v-z)\,
  du dv dz
  \\
  &&\qquad\qquad\qquad\qquad\qquad\qquad\qquad\times  \int
  \big[T_{\b}(q) + q\cdot p/\sqrt{1+\beta 
    q^2}\,\big]_-   
  G_b(p) \,\frac{dq dp}{(2\pi h)^n}\,.
\eeax
The integration over $u,v$ is obviously bounded by 1. In fact, the
$u$-integration can be shown to be exponentially small, i.e., less
than $C \,{\rm e}^{-Cb}$, but this will not be necessary.

The domain of integration for the variables $q,p$ is contained in the
set $\{(q,p)\,|\, |q|\le 2|p|\}$. Then,
\beax 
  \lefteqn{\int \big[T_{\b}(q) + q\cdot p/\sqrt{1+\beta q^2}\,\big]_-  
  \,G_b(p) \,dq dp
  }
  \\&\ge& {}-C\int\limits_{|q|<2|p|} \frac{|q||p|}{\sqrt{1+\b q^2}}
  \,G_b(p)\,dqdp 
  \ge{}-C\int |p|^{n+2} G_b(p)\,dp \,= \,{}-C\,b^{-(n+2)/2}\,.
\eeax
It follows that the integral over $u\not\in B_2$ is bounded from below
by \({}-Ch^{-n}b^{-3/2}\), since \(b>1\).

For the integral over $u\in B_2$ we use
Theorem~\ref{thm:traceidentity} as before.
This time, expanding $\phi^2$ to second order in \(z\) at the point
$z=0$ and using the crucial fact (which we shall use without mentioning
later) that, for any \(\lambda>0\),
\bea\label{eq:Gauss-integrals-one}
  \int x_j \,G_\lambda(x) \,dx = 0\,,\quad\int
  |x|^{m}\,G_\lambda(x)\,dx=C\, \lambda^{-m/2}\,, 
\eea
implies that
\begin{eqnarray}
  {\Tr}[\phi H_\b\phi]_-\geq&\displaystyle\int\limits_{u\in B_2}&
  \!\!\!\!\!\!
  \big[\phi(v+h^2ab(u-v))^2 + C h^2 b\big]
  G_b(u-v)G_b(q-p) \nonumber
  \\
  &&\times\left[H_{u,q}^{(\varepsilon)}(v,p)\right]_-\,\frac{du
    dq}{(2\pi h)^n}dv dp  
    -Ch^{-n}(b^{-3/2}+h^2b)\label{eq:phiHphi}\,,
\end{eqnarray}
where
$$
  H_{u,q}^{(\varepsilon)}(v,p) =
  \widetilde{\s}(u,q)+\mfr{1}{4b}\Delta\widetilde{\s}(u,q) +
  \partial_u\widetilde{\s}(u,q)(v-u) 
  + \partial_q \widetilde{\s}(u,q)(p-q)\,.
$$
The rest of the proof is simply an estimate of the integral in
\eqref{eq:phiHphi}. This analysis is an elementary but tedious exercise
in calculus. For the convenience of the reader it is given in detail in
Appendix~\ref{app:B} below. 
\end{proof}
\begin{lemma}[{\bf Construction of a trial density
    matrix}]\label{lm:upperbound} Under the same conditions as in
  Theorem \ref{local semi-classics} there exists a density matrix
  $\gamma$ on $L^2(\R^n)$ such that 
\bea\label{eq:lemmaupper}
  {\Tr}\big[\phi (T_\b(-\ic h\nabla) + V(\hat{x}))\phi\gamma\big]\le 
 \int \, \phi(u)^2 \s_\b(u,q)_-\, \frac{du dq}{(2\pi h)^n} 
  + Ch^{-n+6/5} \,.
\eea
Moreover, the density $\rho_\c$ of $\gamma$ satisfies
\begin{equation}\label{eq:rhogammaprop1}
  \left|\rho_\gamma(x)-(2\pi h)^{-n}\omega_n \,
  |V_-|^{n/2}(2+\beta|V_-|)^{n/2}(x)\right|\leq Ch^{-n+9/10}\,,
\end{equation}
for (almost) all $x\in B_1$ and
\begin{equation}\label{eq:rhogammaprop2}
  \left|\int\phi(x)^2\rho_\gamma(x)\,dx-(2\pi h)^{-n}\omega_n\int\phi(x)^2
  \,|V_-|^{n/2}(2+\beta|V_-|)^{n/2}(x)\,dx\right|\leq Ch^{-n+6/5}\,,
\end{equation}
where $\omega_n$ is the volume of the unit ball $B_1$ in $\R^n$.
The constants $C>0$ in the above estimates depend only on
$n, \|\phi\|_{C^{n+4}}$, and $\|V\|_{C^3}$, but not on
$\beta\in[0,1]$. 
\end{lemma}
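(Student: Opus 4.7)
The plan is to construct the trial density matrix out of the operator-valued coherent states:
\[
  \gamma := \int_{\{\s_\b(u,q)\leq 0\}} \cG_{u,q}^2 \,\frac{du\,dq}{(2\pi h)^n}, \qquad \s_\b(u,q) = T_\b(q) + V(u).
\]
The completeness relation \eqref{eq:complete} gives $\mathbf{0}\le\gamma\le\mathbf{1}$ at once, and $\gamma$ is trace class because $V$ is bounded with support in $B_1$ and $T_\b(q)$ dominates a multiple of $\min(|q|,|q|^2)$ uniformly in $\b\in[0,1]$. To evaluate the trace, I would use cyclicity and Theorem~\ref{trace formula} (with $f\equiv 1$, $F=T_\b$, exploiting that $T_\b$ has uniformly bounded second and third derivatives on $\b\in[0,1]$) to express $\Tr[\cG_{u,q}\phi H_\b\phi\cG_{u,q}]$ as a phase-space integral of $G_b(u-v)G_b(q-p)\bigl[(\phi^2(\tilde v)+E_1)\s_\b(\tilde v,\tilde p)+E_2\bigr]$, with $\tilde v, \tilde p$ shifted by $h^2ab$ and $|E_1|,|E_2|\le Ch^2 b$. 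Taylor-expanding $\phi^2(\tilde v)\s_\b(\tilde v,\tilde p)$ to fourth order around $(u,q)$ and using the vanishing of odd Gaussian moments produces $\phi(u)^2\s_\b(u,q)$ plus a second-order symbol correction of order $b^{-1}\D_{u,q}(\phi^2\s_\b)$ and an $O(b^{-2}+h^2 b)$ remainder.

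The leading semi-classical term thus recovered, integrated against $\chi_{\{\s_\b\le 0\}}$, equals $(2\pi h)^{-n}\int\phi^2\s_{\b,-}\,du\,dq$. The principal obstacle is then the surviving second-order correction: by the divergence theorem it reduces to the boundary integral $(4b)^{-1}(2\pi h)^{-n}\int_{\{\s_\b=0\}}\phi^2|\nabla\s_\b|\,dS$, which is of size $h^{-n}b^{-1}$ and cannot be cancelled simply by shifting the semi-classical cutoff. Following the strategy of the lower bound, Lemma~\ref{local semi-classics-lower}, I would absorb this into a Daubechies reservoir: split $T_\b = (1-\e)T_\b + \e T_\b$ and use Theorem~\ref{thm:coherentrepresentation} to write $\phi H_\b\phi = A + \phi\mathbf{E}\phi + \e\phi T_\b\phi$, with $A = \int\phi\cG_{u,q}\widehat H^{(\e)}_{u,q}\cG_{u,q}\phi\,du\,dq/(2\pi h)^n$ and $\|\mathbf{E}\|\le C(b^{-3/2}+h^2 b)$. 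Replacing $\gamma$ if necessary by the negative spectral projector of $A$, the variational principle gives $\Tr[\phi H_\b\phi\gamma] \le \Tr[A]_- + \|\mathbf{E}\|\Tr\gamma$ with $\Tr\gamma = O(h^{-n})$. Computing $\Tr[A]_-$ by the same phase-space scheme as in the lower bound and choosing $b = h^{-4/5}$ to balance $b^{-3/2}$ against $h^2 b$ at $h^{6/5}$ then yields \eqref{eq:lemmaupper}.

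For the density, an explicit Gaussian computation gives $\cG_{u,q}^2(x,x) = G_b(x-u)$, independent of $q$. Solving $T_\b(q)\le -V(u)$ directly (for $\b\in(0,1]$; take the $\b\to 0$ limit to recover the non-relativistic case) shows that $|\{q:T_\b(q)\le-V(u)\}| = \omega_n |V(u)_-|^{n/2}(2+\b|V(u)_-|)^{n/2}$, so $\rho_\gamma = (2\pi h)^{-n}(G_b * f)$ with $f = \omega_n|V_-|^{n/2}(2+\b|V_-|)^{n/2}$. The pointwise bound \eqref{eq:rhogammaprop1} is then a modulus-of-continuity estimate for $G_b * f - f$, using that $f$ is at least $C^{1,1/2}$ near $\{V=0\}$ (the worst case, $n=3$). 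The integrated bound \eqref{eq:rhogammaprop2} is tighter: moving the Gaussian onto the smoother $\phi^2\in C^{n+4}$ via Fubini and expanding $G_b*\phi^2$ with symmetric moments gains an extra power of $b^{-1}$ after integration by parts, producing the $h^{-n+6/5}$ rate. Any minor modifications required by the spectral-projection refinement of $\gamma$ above affect $\rho_\gamma$ only by the same order as the errors already present, and thus do not spoil the bounds.
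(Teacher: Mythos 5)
Your construction is not the paper's, and the discrepancy is fatal at the required accuracy. You propose the scalar cutoff $\gamma=\int_{\{\s_\b(u,q)\le0\}}\cG_{u,q}^2\,\frac{dudq}{(2\pi h)^n}$, whereas the paper takes $\gamma=\int\cG_{u,q}\,\chi[\hat h_{u,q}]\,\cG_{u,q}\,\frac{dudq}{(2\pi h)^n}$ with the \emph{operator-valued} affine symbol $\hat h_{u,q}=\s(u,q)+\tfrac{1}{4b}\D\s(u,q)+\nabla\s(u,q)\cdot(\x-u,-\ic h\nabla-q)$ inside the cutoff (for $u\in B_2$). That change of cutoff is not cosmetic: when one applies Theorem~\ref{trace formula} to the paper's $\gamma$, the indicator becomes $\chi[h_{u,q}(v,p)]$, whose argument equals $\s(v,p)+\xi_{v,p}(u-v,q-p)+O(\cdot)$ with $\int\xi_{v,p}(u,q)G_b(u)G_b(q)\,du\,dq=0$ (eq.~\eqref{eq:Gauss-integrals}). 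This zero-mean counterterm is precisely what kills the $O(h^{-n}b^{-1})$ boundary contribution you (correctly) identify as the obstacle. Without it, no choice of $b$ works: $b^{-1}\le h^{6/5}$ forces $h^2b\ge h^{4/5}$.

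Your proposed repair does not close the gap. First, the reservoir split $T_\b=(1-\e)T_\b+\e T_\b$ is useful for the \emph{lower} bound because the positive $\e T_\b$ is used through Daubechies to absorb the sign-indefinite $\mathbf E$; for the \emph{upper} bound the same split just adds an uncontrolled positive term $\e\,\Tr[\phi T_\b\phi\gamma]\ge0$ to $\Tr[\phi H_\b\phi\gamma]$, which you omit from your claimed inequality $\Tr[\phi H_\b\phi\gamma]\le\Tr[A]_-+\|\mathbf E\|\Tr\gamma$. Second, and more fundamentally, switching to $\gamma=\chi(A)$ for the energy destroys the explicit density identity $\rho_\gamma=(2\pi h)^{-n}G_b*f$ that you rely on for \eqref{eq:rhogammaprop1}--\eqref{eq:rhogammaprop2}. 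There is no single density matrix in your argument that simultaneously satisfies the energy bound and has the claimed density, and the closing remark that the spectral-projection refinement ``affects $\rho_\gamma$ only by the same order'' is not substantiated and is, in fact, not true: the two operators are not close in any norm controlling the density pointwise.

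Finally, even staying with your original scalar-cutoff $\gamma$, the density estimates fail on their own. You do correctly compute $\rho_\gamma=(2\pi h)^{-n}G_b*f$ with $f=\omega_n|V_-|^{n/2}(2+\b|V_-|)^{n/2}$, but for $n=3$ this $f$ is only $C^{1,1/2}$ near $\{V=0\}$ (and $C^2$ at best elsewhere, since $V\in C^3$), so the mollification error $|G_b*f-f|$ at scale $b^{-1/2}$ is of order $b^{-3/4}$ near the zero set and $b^{-1}$ otherwise. With $b=h^{-4/5}$ these are $h^{3/5}$ and $h^{4/5}$, both strictly larger than the required $h^{9/10}$ in \eqref{eq:rhogammaprop1}. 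The paper's $\gamma$ avoids this as well: its $\rho_\gamma$ (see \eqref{eq:rhointegral}) carries the sharper Gaussian $G_{(h^2b)^{-1}}$ (width $\sim h^{3/5}$, not $b^{-1/2}\sim h^{2/5}$), and the same $\xi$-cancellation, together with freezing $|p|$ at $p_0$ as in \eqref{eq:kappa estim}, removes the $O(b^{-1})$ leading error. So the missing ingredient in your proposal is exactly the operator-valued cutoff $\chi[\hat h_{u,q}]$; once it is in place, the scalar versus operator distinction resolves the energy bound, the density bound, and the consistency between the two.
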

It is convenient to introduce the function
\be 
\label{eta}
    \eta(t)=n\int_0^{\infty} \chi[T_\b(p)+t]\,|p|^{n-1}\,d|p| 
    = |t_-|^{n/2}(2+\b|t_-|)^{n/2}\,.
\ee
(Recall that $\chi$ is the characteristic function of $\R_-$.)
\begin{proof} We will occasionally drop the index $\beta$ in $H_\b$ and
$\s_\b$. It is important to realize, however, that all estimates are
uniform in $\beta$. We first note that since $T_1(p)\leq T_\b(p)\leq
T_0(p)=p^2/2$ we have that 
\begin{eqnarray}\label{eq:uniformsigma}
  |p|-C\leq \s(v,p)\leq\frac12p^2+C.
\end{eqnarray}

Let us start by choosing some fixed $0<\tau<1$. For $h\ge
\tau$  and for some $C>0$ we have by (\ref{eq:uniformsigma}) that 
$$
  \int \phi(u)^2 \s(u,q)_-\,\frac{du dq}{(2\pi h)^n}
  + C\tau^{-6/5} h^{-n+6/5}\geq 0 \,,
$$
{and} that for any $s>0$,
$$ 
  (2\pi h)^{-n}\eta(V(x)) \leq C\tau^{-s}h^{-n+s}\,.
$$  
If $h\geq\tau$ we may therefore use $\gamma=0$, and $s=9/10$
and $s=6/5$ for \eqref{eq:rhogammaprop1} and \eqref{eq:rhogammaprop2},
respectively. From now on we assume that $h<\tau$ and, if necessary,
that $\tau$ is small enough depending only on $\phi$ and $V$. Also,
as for the lower bound, we may assume that $V\in C_0^3(\R^n)$ with
support in the ball $B_{3/2}$ concentric with $B_1$ and of radius
$3/2$. 

In analogy to the previous proof for the lower bound we define now for
each $(u,q)$ an operator $\hat{h}_{u,q}$ by 
$$
  \hat{h}_{u,q}=\left\{\begin{array}{cl}
  \s(u,q)+ \frac{1}{4b}\Delta\s(u,q) 
  + \nabla\s(u,q)\cdot({\x}-u,-\ic h\nabla -q)&\hbox{ if } u\in B_2\\
  0&\hbox{ if } u\not\in B_2
  \end{array}\right.\,.
$$
The corresponding function is
$$
  {h}_{u,q}(v,p)=\left\{\begin{array}{cl}
  \s(u,q) + \frac{1}{4b}\Delta\s(u,q) 
  + \nabla\s(u,q)\cdot(v-u,p-q)&\hbox{ if } u\in B_2\\
  0&\hbox{ if } u\not\in B_2
  \end{array}\right.\,.
$$
As for the lower bound we shall choose $a=h^{-4/5}$; then $a<h^{-1}$.
In fact, we will assume that
$(1-h^2ab)\geq 1/2$. Recall here 
that $b=2a/(1+h^2a^2)$ (i.e., in particular $a\leq b\leq 2a$).

Similar to \eqref{eq:atildeapp} (for \(\varepsilon=0\)) we have for
$u\in B_2$ that
\begin{eqnarray}
  \lefteqn{\big|h_{u,q}(v,p)-\s(v,p)-\xi_{v,p}(u-v,q-p)\big|}\nonumber\\
  &&\leq C|u-v|(b^{-1}+|u-v|^2)+C|q-p|(b^{-1}+|q-p|^2)\,,\label{eq:happ}
\end{eqnarray}
where
$$
 \xi_{v,p}(u,q) = \mfr{1}{4b}\Delta\s(v,p) 
                - \mfr{1}{2}\sum_{i,j}\p_i\p_j T_\b(p)q_i q_j
                - \mfr{1}{2}\sum_{i,j}\p_i\p_j V(v)u_iu_j\,.
$$

Recalling that $\chi$ is the characteristic function of $\R_-$ we define
\be \label{trial density}
   \c = \int {\cG}_{u,q}\,\chi\big[\hat{h}_{u,q} \big]\,
             {\cG}_{u,q}\,\frac{dudq}{(2\pi h)^n} \,.
\ee 
Since ${\bf 0}\le\chi\big[\hat{h}_{u,q}\big]\le\mathbf1$ it follows
from \eqref{eq:complete} that ${\bf 0}\le\c\le{\bf 1}$.

We now calculate ${\Tr}[\c\phi H_\b\phi]={\Tr}\big[\c\phi (T_\b(-\ic h\nabla) +
V(\x))\phi\big]$. {F}rom Theorem \ref{trace formula} we have that
\begin{eqnarray}\label{eq:first-sc-int-in app}\nonumber
  \lefteqn{
  {\Tr}\big[\c\phi(T_\b(-\ic h\nabla) + V(\hat{x}))\phi\big]}\\ 
  &=&\!\!\!\!
  \int \,\chi[{h}_{u,q}(v,p)]\,G_b(u-v)G_b(q-p)
  \Big[E_2(u,v;q,p) +
  \\
  &&\,\big(\phi(v+h^2ab(u-v))^2 + E_1(u,v)\big)
  \s(v+h^2ab(u-v),p+h^2ab(q-p))\Big] \,\frac{dudq}{(2\pi h)^n} dv 
  dp\,, \nonumber
\end{eqnarray}
where $E_1,E_2$ are functions such that
$\|E_1\|_\infty+\|E_2\|_\infty\leq Ch^2b$. The rest of the proof of 
(\ref{eq:lemmaupper}) is a tedious, but elementary analysis of this
integral. A detailed analysis is presented in Appendix~\ref{app:B}
below. 

It remains to estimate the density $\rho_\c$ together with
$\int\phi(x)^2\rho_\c(x)\,dx$. By Theorem~\ref{thm:traceidentity} and
\eqref{trial density}, $\gamma$ is easily seen to be a trace class
operator with density
\begin{equation}\label{eq:rhointegral}
  \rho_\gamma(x)
  =\int\chi\big[h_{u+v,q+p}(v,p)\big]G_b(u)G_b(q)
  G_{(h^2b)^{-1}}(x-v-h^2abu)\,dvdp\, 
  \frac{dudq}{(2\pi h)^n} \,.
\end{equation}

The proof of \eqref{eq:rhogammaprop1} and \eqref{eq:rhogammaprop2}
again relies on a detailed analysis of this integral. As for the
estimate on the energy above this analysis is an exercise
in calculus. Although it is still elementary this analysis is
more complicated than in the case of the energy.
For the convenience of the reader the analysis is given in detail in
Appendix~\ref{app:B} below. 
\end{proof}

\begin{appendix}

\section{Various Proofs} 
\label{sect:App}

In this appendix we collect proofs of various results mentioned
in Section~\ref{sect:prelim}.
\begin{proof}[Proof of Theorem~\ref{thm:new-critical} {\rm ({\bf
      Operator inequality critical Hydrogen})}]
Let \(f\in\mathcal{S}(\R^3)\) and \(t>0\) (to be chosen below). By
Schwarz' inequality, 
\bea\label{initial-ineq}\nonumber
  \frac{2}{\pi}\int_{\R^3}\frac{|f(x)|^2}{|x|}\,dx
  &=&\frac{1}{\pi^3}\int_{\R^3}\int_{\R^3}\frac{\overline{{\hat
        f}(p)}{\hat
      f}(q)}{|p-q|^2}\Big(\frac{|p|^2+|p|^t}{|q|^2+|q|^t}\Big)^{1/2} 
   \Big(\frac{|q|^2+|q|^t}{|p|^2+|p|^t}\Big)^{1/2}\,dpdq
  \\&\le& \frac{1}{\pi^3}\int_{\R^3}\int_{\R^3}\frac{|{\hat
      f}(p)|^2}{|p-q|^2}
    \,\frac{|p|^2+|p|^t}{|q|^2+|q|^t}\,dpdq\,.
\eea
We first compute the integral in \(q\). Since \((|q|^2+|q|^t)^{-1}\le
|q|^{-2}-|q|^{t-4}+|q|^{2t-6}\) we get
\beax
  \int_{\R^3}\frac{1}{|p-q|^2}\,\frac{1}{|q|^2+|q|^t}\,dq
  \le \int_{\R^3}\frac{1}{|p-q|^2}\,(|q|^{-2}-|q|^{t-4}+|q|)^{2t-6}\,dq\,.
\eeax
Note \cite[5.10 (3)]{Lieb-Loss} that, for \(0<\tau,\sigma<n\), with
\(0<\tau+\sigma<n\),
\bea\label{eq:LL}
  \int_{\R^n}|y-z|^{\tau-n}|z|^{\sigma-n}\,dz=
  \frac{c_{n-\tau-\sigma}c_{\tau}c_{\sigma}}{c_{\tau+\sigma}c_{n-\tau}c_{n-\sigma}}
  \,|y|^{\tau+\sigma-n}\,,  
\eea
where \(c_{\tau}=\pi^{-\tau/2}\Gamma(\tau/2)\). In particular, if \(n=3\), 
then 
\bea\label{eq:LLspec}
  \int_{\R^3}|y-z|^{-2}|z|^{-r}\,dz=k_r\,|y|^{1-r}\text{ for } r\in(1,3)\,,
\eea
with
\bea\label{def:k(r)}
   k_r=\pi^2\frac{\Gamma\big(\frac{r-1}{2}\big)\Gamma\big(\frac{3-r}{2}\big)}
  {\Gamma\big(\frac{4-r}{2}\big)\Gamma\big(\frac{r}{2}\big)}\,.
\eea
It follows that, for \(3<2t<5\), 
\bea\label{first-int}
  \int_{\R^3}\frac{1}{|p-q|^2}\,\frac{|p|^2+|p|^t}{|q|^2+|q|^t}\,dq 
 &\le&
 k_2\,|p|+(k_2-k_{4-t})\,|p|^{t-1}
 \\&&{}+(k_{6-2t}-k_{4-t}\big)\,|p|^{2t-3}+k_{6-2t}\,|p|^{3t-5}\,.\nonumber
\eea
We see from \eqref{def:k(r)} that $k$ is symmetric with respect to
$r=2$. Using $\Gamma(1+z) = 
z\Gamma(z)$ in the denominator in \eqref{def:k(r)} with $z=1-r/2$
and the relation $\Gamma(z)\Gamma(1-z) = \pi/ \sin(\pi z)$ (for $0<z<1$) in the
denominator and numerator we obtain 
$$ k_r = -\pi^2\, \frac{\tan(\pi r/2)}{1-r/2}\,,
$$
which shows that $k$ is
decreasing on $(1,2)$ and increasing on $(2,3)$.

Hence from \eqref{first-int}, choosing further \(t>5/3\), we find, for
positive constants \(A_{(t-1)/2}\), \(B_{(t-1)/2}\), that
\bea\label{second-int}
  \int_{\R^3}\frac{1}{|p-q|^2}\,\frac{|p|^2+|p|^t}{|q|^2+|q|^t}\,dq 
  &\le&
  k_2\,|p|-\pi^3 A_{(t-1)/2}|p|^{t-1}+\pi^3B_{(t-1)/2}\,.
\eea
Since \(k_2=\pi^3\), this and \eqref{initial-ineq} implies that
\bea
  \frac{2}{\pi}\int_{\R^3}\frac{|f(x)|^2}{|x|}\,dx
  &\le& \int_{\R^3}|\hat{f}(p)|\big(|p|-A_{(t-1)/2}|p|^{t-1}+B_{(t-1)/2}\big)\,dp\,,
\eea
which implies the operator inequality,  for all \(t\in(5/3,2)\), 
\bea
  \sqrt{-\D}-\frac{2}{\pi|\x|}&\ge& A_{(t-1)/2}(-\D)^{(t-1)/2}-B_{(t-1)/2}\,.
\eea
Choosing \(t=2s+1\) proves \eqref{eq:new-critical} for
\(s\in(1/3,1/2)\). For \(s\in[0,1/3]\), \eqref{eq:new-critical}
follows from the existence 
of positive constants \(A_{(\tau-1)/2}\), \(B_{(\tau-1)/2} \), given \(\tau\in[1,5/3],
t\in(5/3,2)\) and positive constants \(A_{(t-1)/2}, B_{(t-1)/2}\), such that
\beax A_{(t-1)/2}|p|^{t-1}- B_{(t-1)/2}&\ge& A_{(\tau-1)/2}|p|^{\tau-1}-A_{(\tau-1)/2}\,.
\eeax
\end{proof}

\noindent{\bf Integral representation for the relativistic kinetic energy.}
We shall here give a self-contained presentation of the integral
formulas for the relativistic kinetic energy. The relativistc kinetic
energy will be given in terms of the modified Bessel functions of the
second kind, $K_\nu$. To identify the modified Bessel functions we 
use that \cite[9.6.23]{Abramowitz-Stegun}
\begin{equation}\label{eq:K0}
  K_0(t)=\int_1^\infty \frac{{\rm e}^{-w t}}{\sqrt{w^2-1}}\,dw\,,\quad t>0\,,
\end{equation}
and the recursion relation \cite[9.6.28]{Abramowitz-Stegun}
\begin{equation}\label{eq:K-recursion}
  K_{\nu+1}(t)=-t^{\nu}\frac{d}{dt}(t^{-\nu}K_\nu(t))\,,\quad t>0\,.
\end{equation}
We emphasise that we use these properties only as {\it definitions} of the
Bessel functions, and derive all other properties of these
functions that we need. Note that $K_{\nu}:\R_+\to\R$ are smooth functions.

Consider the function $G^m_n\in L^1(\R^n)$ (the Yukawa potential) whose
Fourier transform is
$$
  \widehat G^m_n(\xi)=(2\pi)^{-n/2}(|\xi|^2+m^2)^{-1}\,.
$$
Using that $v^{-1}=\int_0^\infty {\rm e}^{-uv}\,du$ we get from the
Fourier transform of Gaussian functions the following integral
representation for $G$, 
\begin{equation}\label{eq:G-int-representation}
  G^m_n(z)=\int_0^\infty (4\pi u)^{-n/2} {\rm e}^{-m^2u-|z|^2/(4u)}\,du\,.
\end{equation}
It follows from this that $G$ is non-negative, smooth for $z\ne0$, and
indeed in $L^1(\R^n)$.

For odd $n$ the above integral can be explicitly calculated. For even
$n$ it is as we shall now see expressible as a modified Bessel
function $K_\nu$ of integer order $\nu$.
By a simple change of variables (\(2w=v+v^{-1}\) with \(v=2mu/|z|\))
in the integral \eqref{eq:K0} we see from \eqref{eq:G-int-representation} 
that 
$
  G^m_{2}(z)=(2\pi)^{-1}K_0(m|z|)\,.
$
{F}rom the recursion formula \eqref{eq:K-recursion} we then find
inductively that for even $n$
\begin{equation}\label{eq:G-K}
  G^m_{n}(z)=m^{(n-2)/2}(2\pi)^{-n/2} |z|^{-(n-2)/2}K_{(n-2)/2}(m|z|)\,.
\end{equation}
In fact, the same formula holds for all $n$, but we do not wish to
discuss the modified Bessel functions of fractional order (one could
simply take this formula as their definition). 
\begin{lemma}\label{eq:heat-kernel}
The heat kernel for the operator $\sqrt{-\Delta+m^2}$ on $L^2(\R^n)$
is given by  
\begin{eqnarray*} 
  \lefteqn{\exp(-t\sqrt{-\Delta+m^2})(x,y)}&&\\&=&-2\partial_t
  G^m_{n+1}(x-y,t)\\&=&
  2\left(\frac{m}{2\pi}\right)^{(n+1)/2}
  \frac{t}{(|x-y|^2+t^2)^{(n+1)/4}}K_{(n+1)/2}(m(|x-y|^2+t^2)^{1/2}) 
\end{eqnarray*}
for $t>0$.
\end{lemma}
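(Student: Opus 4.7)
The plan is to establish the identity in two independent steps: first that $\exp(-t\sqrt{-\Delta+m^2})(x,y)=-2\partial_t G_{n+1}^m(x-y,t)$, and second that this derivative equals the explicit Bessel-function expression.

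For the first equality, I would work on the Fourier side. The Fourier symbol of $\exp(-t\sqrt{-\Delta+m^2})$ is $\exp(-t\sqrt{|\xi|^2+m^2})$, so its integral kernel is $(2\pi)^{-n}\int_{\R^n}\exp(i\xi\cdot(x-y)-t\sqrt{|\xi|^2+m^2})\,d\xi$. I would then compute the same expression starting from $G^m_{n+1}$: from $\widehat{G^m_{n+1}}(\xi,\tau)=(2\pi)^{-(n+1)/2}(|\xi|^2+\tau^2+m^2)^{-1}$ one has, for $t>0$,
\[
  G^m_{n+1}(x-y,t)
  =(2\pi)^{-(n+1)}\int_{\R^n}\int_{\R}\frac{e^{i\xi\cdot(x-y)+i\tau t}}{|\xi|^2+\tau^2+m^2}\,d\tau\,d\xi\,.
\]
A standard contour integration in $\tau$ (closing in the upper half-plane, capturing the simple pole at $\tau=i\sqrt{|\xi|^2+m^2}$) yields
\[
  G^m_{n+1}(x-y,t)
  =(2\pi)^{-n}\int_{\R^n}\frac{e^{i\xi\cdot(x-y)-t\sqrt{|\xi|^2+m^2}}}{2\sqrt{|\xi|^2+m^2}}\,d\xi\,,
\]
and differentiating in $t$ under the integral (justified by the rapid decay of the integrand once $t>0$) gives exactly $-\tfrac12$ of the kernel above. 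Alternatively, the same identity can be obtained as a harmonic-extension statement: $v(x,t)=[\exp(-t\sqrt{-\Delta+m^2})f](x)$ satisfies $(-\Delta_x-\partial_t^2+m^2)v=0$ on $\R^n\times(0,\infty)$ with boundary value $f$, so its integral kernel is the Poisson kernel of $-\Delta_{n+1}+m^2$ on a half-space, which is exactly $-2\partial_t G^m_{n+1}(x-y,t)$ by the method of images.

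For the second equality, I would simply differentiate the formula \eqref{eq:G-K} with $n$ replaced by $n+1$, writing $r=(|x-y|^2+t^2)^{1/2}$ so that $\partial_t r=t/r$, and applying the recursion \eqref{eq:K-recursion}. Explicitly, from \eqref{eq:K-recursion} with $\nu=(n-1)/2$ one has $\frac{d}{d\tau}\bigl(\tau^{-(n-1)/2}K_{(n-1)/2}(\tau)\bigr)=-\tau^{-(n-1)/2}K_{(n+1)/2}(\tau)$; chain rule with $\tau=mr$ then gives
\[
  \partial_t G^m_{n+1}(x-y,t)
  =-\Bigl(\frac{m}{2\pi}\Bigr)^{\!(n+1)/2}\frac{t}{r^{(n+1)/2}}K_{(n+1)/2}(mr)\,,
\]
and multiplying by $-2$ yields the stated expression.

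The main obstacle is really just the Fourier-analytic justification of step one: one must confirm that contour integration is legitimate (the integrand decays as $|\tau|^{-2}$) and that the $t$-derivative can be passed inside the $\xi$-integral. Both follow from dominated convergence for $t$ in any compact subset of $(0,\infty)$, since $e^{-t\sqrt{|\xi|^2+m^2}}\sqrt{|\xi|^2+m^2}$ is integrable on $\R^n$. Once this is in hand the remaining steps are routine manipulations of Bessel function identities.
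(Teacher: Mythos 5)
Your proof is correct, and like the paper's it is Fourier-analytic, but the execution is genuinely different. The paper compares two tempered distributions on $\R^{n+1}$ — the odd extension $\tfrac{t}{|t|}\exp(-|t|\sqrt{-\Delta+m^2})(x,0)$ and $-2\partial_t G^m_{n+1}(x,t)$ — and shows directly that their $(n+1)$-dimensional Fourier transforms coincide; the first is a short one-dimensional integral $\int_{\R} e^{-its}\,\tfrac{t}{|t|}e^{-|t|\sqrt{p^2+m^2}}\,dt$, and the second is $-2is\,\widehat{G^m_{n+1}}(p,s)$ by inspection, so no contour integration or interchange of limits is needed. You instead perform a partial inverse Fourier transform of $\widehat{G^m_{n+1}}$ in the $\tau$-variable by residues, obtaining an explicit $\xi$-integral representation of $G^m_{n+1}$, and then differentiate in $t$ under the integral sign; this requires the (routine, and correctly flagged) verifications of the contour argument and dominated convergence, but it has the virtue of producing the intermediate identity $G^m_{n+1}(x-y,t)=(2\pi)^{-n}\int e^{i\xi\cdot(x-y)-t\sqrt{|\xi|^2+m^2}}\,(2\sqrt{|\xi|^2+m^2})^{-1}d\xi$ explicitly. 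Your treatment of the second equality, differentiating \eqref{eq:G-K} with $n\to n+1$ via the recursion \eqref{eq:K-recursion} and the chain rule with $\partial_t r=t/r$, is exactly what the paper does. One small remark: \eqref{eq:G-K} is formally stated only for even dimension, and the paper explicitly acknowledges this, noting the formula can be taken as a definition of $K_\nu$ in half-integer order; since the paper uses the lemma only with $n=3$ (so $n+1=4$ is even) this is harmless, but it would be worth echoing that caveat if you intend your proof to cover all $n\ge 2$.
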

\begin{proof}
It suffices to show that the two tempered distributions on $\R^{n+1}$,
$$
  \frac{t}{|t|}\exp(-|t|\sqrt{-\Delta+m^2})(x,0)\ \ \hbox{ and }\  \ 
  -2\partial_t G^m_{n+1}(x,t)\,,
$$
have the same Fourier transform. The Fourier transform as a function
of $\xi=(p,s)$ with $p\in\R^n$ and $s\in \R$ of the first distribution
is 
$$
  (2\pi)^{-(n+1)/2}\big(-\int_{-\infty}^0{\rm e}^{-\ic ts+t\sqrt{p^2+m^2}}\,dt
  +\int_{0}^\infty{\rm e}^{-\ic ts-t\sqrt{p^2+m^2}}\,dt\big)= 
  \frac{-2\ic s (2\pi)^{-(n+1)/2}}{|p|^2+s^2+m^2}\,.
$$
The Fourier transform of the second distribution above is 
$$
  -2\ic s\,\widehat G^m_{n+1}(p,s)=\frac{-2\ic s (2\pi)^{-(n+1)/2}}{|p|^2+s^2+m^2}\,.
$$
The last identity in the lemma follows from \eqref{eq:K-recursion} and
\eqref{eq:G-K}.
\end{proof}
If we set $x=y$ in the above lemma we find the following integral
formula for the modified Bessel function
$$
  K_{(n+1)/2}(t)
  =\frac{1}{2}\left(\frac{t}{2\pi}\right)^{(n-1)/2}
  \int_{\R^n}{\rm e}^{-t\sqrt{p^2+1}}\,dp\,,\quad t>0\,.
$$
For $n=3$ this simplifies to
\begin{equation}\label{eq:K2formula}
  K_2(t)=t\int_0^\infty{\rm e}^{-t\sqrt{s^2+1}}s^2\,ds
\end{equation}
from which we immediately get the estimate
\begin{equation}\label{eq:K2-estimate}
  K_2(t)\leq Ct^{-2}{\rm e}^{-t/2}\,.
\end{equation}

\begin{proof}[Proof of Theorem~\ref{IMS} {\rm ({\bf Relativistic IMS
	formula})}]
By scaling, it suffices to prove the statement for \(\alpha=1\).
We start from the identity
$$ \big( f,(\sqrt{-\D+m^2}-m) f\big) = \int |f(x)-f(y)|^2 F(x-y)\, dxdy
$$
with 
\be \label{Formel-kernel} F(x-y) = \frac{m^2 }{4\pi^2} \,\frac{K_2(m|x-y|)}{|x-y|^2}\,,
\ee
where \(K_2\) is the modified Bessel function of second order defined
above (see \eqref{eq:K0}--\eqref{eq:K-recursion}). The identity
follows from Lemma~\ref{eq:heat-kernel} (for a proof, see
\cite[7.12]{Lieb-Loss}).
Then,
\beax
  \lefteqn{\big( f,(\sqrt{-\D+m^2}-m) f\big)
  }\nonumber\\
  &=&\int |f(x)-f(y)|^2 F(x-y)\, dxdy
  \nonumber\\
  &=&\int\int_{\mathcal{M}} \big[\theta_u(x)^2|f(x)|^2 +
  \theta_u(y)^2|f(y)|^2\big] \, F(x-y)\,d\mu(u)dxdy 
  \nonumber\\
  &&+\,\int\int_{\mathcal{M}}\big[-\mfr{1}{2}(\theta_u(x)^2+\theta_u(y)^2) 
  + \theta_u(x) \theta_u(y) - \theta_u(x)  
   \theta_u(y) \big] 
  \nonumber\\
  &&\phantom{+}\qquad\qquad\qquad\qquad\qquad\quad\times\,\big[\,\overline{f(x)}
  f(y) 
  + f(x)\overline{f(y)}\,\big]\,F(x-y)\,d\mu(u)dxdy
  \nonumber\\
  &=&\int\int_{\mathcal{M}} |\theta_u(x)f(x)-\theta_u(y)f(y)|^2 F(x-y)\, d\mu(x)dxdy
  \nonumber\\
  &&+\,\int\int_{\mathcal{M}}\big[-\mfr{1}{2}(\theta_u(x)^2+\theta_u(y)^2) 
  + \theta_u(x) \theta_u(y)  
  \big] \nonumber\\
  &&\phantom{+}\qquad\qquad\qquad\qquad\qquad\quad\times\
  \,\big[\,\overline{f(x)} f(y) +
  f(x)\overline{f(y)}\,\big]\,F(x-y)\,d\mu(u)dxdy 
  \nonumber\\
  &=&\int_{\mathcal M} \big(\theta_u f,(\sqrt{-\D+m^2}-m) \theta_u f\big) \, d\mu(u) 
  \nonumber\\
  && -\, \int\int_{\mathcal{M}} \big(\theta_u(x)-\theta_u(y)\big)^2 f(x) \overline{f(y)} F(x-y)\,
  d\mu(u) dx dy\,.
\eeax
This proves \eqref{eq:IMS} with \(L\) given by \eqref{IMS-error1}--\eqref{IMS-error2}. 
We now show that $\|L_{\theta_u}\| \le C m^{-1}\|\nabla 
\theta_u\|_\infty^2$ for fixed \(u\). 
By \eqref{IMS-error2}, Young's inequality, and \eqref{eq:K2-estimate}, 
\bea\label{eq:bound-L}\nonumber
  \big|(f,L_{\theta_u}f)\big|&\leq&
  \frac{m^2}{4\pi^2}\|\nabla\theta_u\|_\infty^2\int |f(x)|\,|f(y)|\,
  K_2(m|x-y|)\, dx dy\,\\\nonumber
  &\leq& C\,m^2\|\nabla\theta_u\|_\infty^2\|f\|_2^2 \int_0^\infty t^2 K_2(mt)\,dt \\
  &=& C\,m^{-1}\|\nabla\theta_u\|_\infty^2\|f\|_2^2 \,.
\eea
This proves that \(L_{\theta_u}\) is a bounded operator.
\end{proof}

\begin{proof}[Proof of Theorem~\ref{IMS-error-est} {\rm ({\bf
      Localisation error})}] 
Again, by scaling, it suffices to prove the statement for \(\alpha=1\).
With \(\chi_\Omega\) the characteristic function of \(\Omega\) (and
\(L\equiv L_\theta\)) we
have from the representation \eqref{IMS-error2} of $L$, since
\(\theta\) is constant on \(\Omega^c\), that 
\be
  L=\chi_{\Omega}L\chi_{\Omega}+(1-\chi_{\Omega})L\chi_{\Omega}
  +\chi_{\Omega}L(1-\chi_{\Omega}).
\ee
If $\Gamma_1$, $\Gamma_2$ are bounded operators, then
$(\Gamma_1-\Gamma_2)(\Gamma_1-\Gamma_2)^*\ge0$ implies that
$\Gamma_1\Gamma_2^* + \Gamma_2\Gamma_1^* \le \Gamma_1\Gamma_1^* +
\Gamma_2\Gamma_2^*$. Using this with $\Gamma_1  =
\e^{1/2}\chi_{\Omega} , \Gamma_2 = \e^{-1/2} (1-\chi_{\Omega}) L$ for
some $\e>0$ which we choose later, we get
\be\label{inequ:L}
    L \le \chi_{\Omega}L \chi_{\Omega} + \e\chi_{\Omega} + \e^{-1}
    (1-\chi_{\Omega}) L^2 (1-\chi_{\Omega})\,.
\ee
To bound the first term on the right side recall that $\|L\| \le C
m^{-1}\|\nabla \theta\|_\infty^2$ (see \eqref{eq:bound-L}).

Let us now look at the third term in \eqref{inequ:L}. 
Since \(\theta\) is constant on
\(\Omega^c\) and \(\dist(\Omega^c,\supp\,\nabla\theta)\ge \ell\), using 
\eqref{IMS-error2} gives
\beax 
  \Tr\big[(1-\chi_{\Omega}) L^2 (1-\chi_{\Omega})\big] &=&
  \int_{x\in\Omega^c,y\in\Omega, |x-y|> \ell} L(x,y)^2\, dxdy 
  \\
  &\le& C m^4 \|\nabla\theta\|_\infty^4
  \int_{x\in\Omega^c,y\in\Omega,|x-y|> \ell} 
  K_2(m|x-y|)^2\, dxdy\,.
\eeax
Using \eqref{eq:K2-estimate},
\begin{align*}
  \int_{x\in\Omega^c,y\in\Omega,|x-y|> \ell} 
  &K_2(m|x-y|)^2\, dxdy
  \\
  &\le C {\rm e}^{-m\ell}
  \int_{x\in\Omega^c,y\in\Omega,|x-y|>\ell} 
  (m|x-y|)^{-4} \, dxdy
  \\
  &= C {\rm e}^{-m\ell} |\Omega| \int_{\ell}^{\infty}(mt)^{-4}t^2\,dt
  = C m^{-3}(m\ell)^{-1}{\rm e}^{-m\ell} |\Omega|\,.
\end{align*}
This gives the bound
\beax 
  \Tr\big[(1-\chi_{\Omega}) L^2 (1-\chi_{\Omega})\big] 
  &\le& C\ell^{-1}\|\nabla\theta\|_\infty^{4} 
  {\rm e}^{-m\ell}|\Omega| \,.
\eeax
Finally, we choose $\e=m^{-1}\|\nabla\theta\|_\infty^2$.
Then by the above the two first terms in \eqref{inequ:L} are bounded
by $Cm^{-1}\|\nabla\theta\|_\infty^2\chi_{\Omega}$, and the trace of
the third term (which we denote $Q_{\theta}$) is bounded by $Cm
\ell^{-1}{\rm e}^{-m\ell}\|\nabla\theta\|_\infty^2|\Omega|$.
\end{proof}

For the proof of the combined Daubechies-Lieb-Yau inequality
(Theorem~\ref{thm:L-Y-D}) we need 
the following inequality \cite{AD-JPS}.
\begin{lemma}\label{lem:int-ineq}
For \(f\in\mathcal{S}(\R^3)\), 
\bea
    \int_{\mathbb R^3}
    \frac{{\rm e}^{-m^2\pi^{-1}|x|^2}}{|x|}\,|f(x)|^2\,dx
    \le \frac{\pi}{2}\frac{1}{\sqrt{2}-1}\,\big(f,(\sqrt{-\D+m^2}-m)f\big)\,.  
\eea
\end{lemma}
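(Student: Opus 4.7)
The plan is to combine Herbst's sharp Kato inequality $\sqrt{-\Delta}\ge 2/(\pi|\hat x|)$ on $\R^3$ with the elementary pointwise estimate $\sqrt{p^2+m^2}\ge(|p|+m)/\sqrt 2$ (immediate from $(a+b)^2\le 2(a^2+b^2)$), and then to absorb the resulting mass‑dependent remainder into the Gaussian weight on the left‑hand side. Taken as operators the two bounds give
\[
  \sqrt{-\Delta+m^2}-m\ \ge\ \frac{\sqrt 2}{\pi|\hat x|}-\frac{\sqrt 2 -1}{\sqrt 2}\,m\,,
\]
and pairing with $|f|^2$ yields the preliminary, unweighted estimate
\[
  \int\frac{|f(x)|^2}{|x|}\,dx\ \le\ \frac{\pi}{\sqrt 2}\,\bigl(f,(\sqrt{-\Delta+m^2}-m)f\bigr)+\frac{\pi(\sqrt 2 -1)}{2}\,m\|f\|^2\,.
\]
The appearance of $\pi/\sqrt 2$ is encouraging: the target constant factors as $\pi/(2(\sqrt 2 -1))=\pi(\sqrt 2+1)/2=\pi/\sqrt 2+\pi/2$, so the leftover $\pi/2$ is exactly what a second Kato‑type estimate should consume against a Gaussian‑weighted Coulomb integral.

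To bring in the Gaussian weight I would apply Herbst's inequality to the modulated function $g(x)=e^{-m^2|x|^2/(2\pi)}f(x)$, which immediately gives
\[
  \int\frac{e^{-m^2|x|^2/\pi}\,|f(x)|^2}{|x|}\,dx\ =\ \int\frac{|g(x)|^2}{|x|}\,dx\ \le\ \frac{\pi}{2}\,\bigl(g,\sqrt{-\Delta}\,g\bigr)\,.
\]
The remaining task is the operator‑type bound
\[
  \bigl(g,\sqrt{-\Delta}\,g\bigr)\ \le\ \frac{1}{\sqrt 2 -1}\,\bigl(f,(\sqrt{-\Delta+m^2}-m)f\bigr)\,,
\]
i.e.\ that the Gaussian‑dressed massless kinetic energy is dominated by the massive relativistic one. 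On the Fourier side this amounts to comparing the convolution of $\hat f$ by the Gaussian $\widehat{e^{-m^2|\cdot|^2/(2\pi)}}(\xi)=(\pi/m^2)^{3/2}\,e^{-\pi|\xi|^2/(2m^2)}$ with the multiplier $(\sqrt{|\xi|^2+m^2}-m)^{1/2}$. I would carry this out by the explicit Gaussian Fourier transform, an application of Cauchy--Schwarz on the convolution, and a scalar pointwise comparison of symbols, using the inequality $\sqrt{p^2+m^2}\ge(|p|+m)/\sqrt 2$ once more at the symbol level.

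The main obstacle is to arrive at the \emph{sharp} constant $1/(\sqrt 2 -1)=\sqrt 2 +1$ in this last comparison rather than merely a finite one. This is delicate because the extremal case of the underlying scalar inequality $\sqrt{p^2+m^2}\ge(|p|+m)/\sqrt 2$ is attained only at $|p|=m$, precisely the intermediate regime where the Gaussian modulation on the real side produces maximal spreading on the Fourier side. One therefore has to arrange the Cauchy--Schwarz step so that the Gaussian convolution is weighted exactly to saturate the scalar inequality at $|p|\sim m$; carrying out this bookkeeping with Gaussian integrals is where I expect essentially all of the work to sit, and the constant $\sqrt 2 +1$ should emerge automatically from the balance of the two estimates.
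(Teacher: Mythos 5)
Your first display (the unweighted estimate obtained from Herbst and $\sqrt{p^2+m^2}\ge(|p|+m)/\sqrt2$) is correct but is then abandoned, so it does not contribute to the argument. The real content of your plan is: apply Herbst's inequality to $g=e^{-m^2|x|^2/(2\pi)}f$ to produce the factor $\pi/2$, and then prove the operator-type bound
\[
(g,\sqrt{-\D}\,g)\ \le\ \frac{1}{\sqrt2-1}\,\big(f,(\sqrt{-\D+m^2}-m)f\big)\,.
\]
This second step is where the gap lies, and it is not merely ``bookkeeping''. Writing $\hat g=\hat f\ast K$ with $K$ the normalized Gaussian, the natural Cauchy--Schwarz (constant weight on the convolution) would reduce the task to the pointwise symbol inequality
\[
\int |\xi+\eta|\,K(\xi)\,d\xi\ \le\ \frac{1}{\sqrt2-1}\big(\sqrt{|\eta|^2+m^2}-m\big)\quad\text{for all }\eta\,.
\]
At $|\eta|=m$ the right-hand side equals exactly $m$, while Jensen's inequality forces the left-hand side to be \emph{strictly greater} than $|\eta|=m$. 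So the naive Cauchy--Schwarz fails precisely at the critical frequency $|\eta|\sim m$ --- the very place you flag as delicate. A cleverer weight might conceivably salvage this, but you have not exhibited one, and the obstruction is genuine, not cosmetic. As written the proposal does not establish the lemma.

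The paper's proof avoids the operator inequality entirely by never symmetrizing into $(g,\sqrt{-\D}g)$. It writes the left-hand side as the \emph{bilinear} Fourier expression $\frac{1}{2\pi^2}\int\!\!\int\overline{\hat f(p_1)}|p_1-p_2|^{-2}\hat g(p_2)\,dp_1dp_2$, applies Cauchy--Schwarz with the weight $|p_1|/|p_2|$ (keeping $|\hat f(p_1)|^2$ on one side), evaluates the $p_2$-integral exactly by the Riesz composition formula $\int|p_2-p_1-q|^{-2}|p_2|^{-2}dp_2=\pi^3|p_1-q|^{-1}$, and then integrates $q$ against the Gaussian by Newton's theorem. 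This produces the multiplier bound
\[
I\ \le\ \frac{\pi}{2m}\int|\hat f(p)|^2\,\min\{|p|^2,\,m|p|\}\,dp\,,
\]
from which the constant $\frac{1}{\sqrt2-1}$ falls out via the elementary scalar inequality $\sqrt{t^2+1}-1\ge(\sqrt2-1)\min\{t^2,t\}$. If you want to rescue your route, you would need to match that multiplier bound starting from $(g,\sqrt{-\D}g)$, and the Jensen obstruction above shows that this cannot be done by a pointwise comparison after a plain Cauchy--Schwarz on the convolution.
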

\begin{proof}
 Let \(\mu=m^2\pi^{-1}\). Then
\beax
  I=\int_{\mathbb R^3}\frac{{\rm e}^{-\mu|x|^2}}{|x|}\,|f(x)|^2\,dx
      =\frac{1}{2\pi^2}\int_{\mathbb R^3}\int_{\mathbb R^3}
      \overline{\hat{f}(p_1)}\frac{1}{|p_1-p_2|^2}\,\hat{g}(p_2)\,dp_1dp_2\,,
\eeax
with \(g(x)=f(x){\rm e}^{-\mu|x|^2}\). Writing \(\hat{g}(p_2)\)
explicitly as the convolution with the Fourier transform of \({\rm
  e}^{-\mu|x|^2}\) and then applying the Schwarz inequality we get that
\beax
  I\le \frac{1}{16\pi^2(\pi\mu)^{3/2}}\int_{\mathbb R^3}
  \int_{\mathbb R^3}\int_{\mathbb R^3}
  \frac{|\hat{f}(p_1)|^2{\rm e}^{-|q|^2/(4\mu)}}{|p_2-p_1-q|^2}
  \frac{|p_1|^2}{|p_2|^2}\,dp_1dp_2dq\,.
\eeax
Since \cite[5.10 (3)]{Lieb-Loss}
\beax
   \int_{\mathbb R^3}\frac{1}{|p_2-p_1-q|^2}\frac{1}{|p_2|^2}\,dp_2
   =\frac{\pi^3}{|p_1-q|}\,,
\eeax
we have
\beax
  I\le \frac{1}{16\pi^{1/2}\mu^{3/2}}\int_{\mathbb
    R^3}\int_{\mathbb R^3}
  \frac{|\hat{f}(p_1)|^2{\rm e}^{-|q|^2/(4\mu)}}{|p_1-q|}
  |p_1|^2\,dp_1dq\,.
\eeax
By Newton's theorem \cite[9.7 (5)]{Lieb-Loss},
\beax
   \int_{\mathbb R^3}\frac{{\rm e}^{-|q|^2/(4\mu)}}{|p_1-q|}\,dq
   &=& \frac{1}{|p_1|}\int_{|q|<|p_1|}{\rm e}^{-|q|^2/(4\mu)}\,dq
   + \int_{|q|>|p_1|}\frac{{\rm e}^{-|q|^2/(4\mu)}}{|q|}\,dq
   \\&=&{}\frac{8\pi\mu}{|p_1|}\int_0^{|p_1|}{\rm e}^{-r^2/(4\mu)}\,dr
   \le 8\pi\mu \min\big\{1,\frac{(\pi\mu)^{1/2}}{|p_1|}\big\}\,.
 \eeax
Substituting \(\mu=m^2\pi^{-1}\) we
find that
\beax
  I\le \frac{\pi}{2m}\int_{\mathbb R^3}
  |\hat{f}(p_1)|^2\min\{|p_1|^2,m|p_1|\}\,dp_1\,,
\eeax
from which the claim follows since \(\sqrt{t^2+1}-1\ge
(\sqrt{2}-1)\min\{t^2,t\}\) for \(t\ge0\).
\end{proof}
\begin{proof}[Proof of Theorem~\ref{thm:L-Y-D} {\rm ({\bf Combined
       Daubechies-Lieb-Yau})}] 
We may assume that $W(x)\leq 0$ otherwise we simply replace $W$ by
$W_{-}$.

Assume first that \(\nu\alpha\le3/(16\pi M)\). By the Daubechies
inequality \eqref{eq:Daub usefull},  
\bea\label{eq:firstDaubInImproved}\nonumber
  \lefteqn{\Tr\big[\sqrt{-\alpha^{-2}\D
  +m^2\alpha^{-4}}-m\alpha^{-2}+W(\x)\big]_{-}} 
  &&\\&\ge&\tfrac12\,\Tr\big[\sqrt{-\alpha^{-2}\D
  +m^2\alpha^{-4}}-m\alpha^{-2}
  +2W\chi_{\{d_{\bf R}(x)<\alpha m^{-1}\}}\big]_{-}
  \\ &&{}- Cm^{3/2}\int_{d_{\bf R}(x)>\alpha m^{-1}}|W(x)|^{5/2}\,dx
  -C\alpha^{3}\int_{d_{\bf R}(x)>\alpha m^{-1}}|W(x)|^{4}\,dx\,.
  \nonumber
\eea
The assumption on the positions of the nuclei implies that
\(\chi_{\{d_{\bf R}(x)<\alpha m^{-1}\}}=\sum_{j=1}^{M} 
\chi_{\{|x-R_j|<\alpha m^{-1}\}}\), and so, using the assumption on
\(W\), we obtain
\bea\label{eq:secondIneq}\nonumber
    \lefteqn{\Tr\big[\sqrt{-\alpha^{-2}\D+m^2\alpha^{-4}}-m\alpha^{-2}
  +2W\chi_{\{d_{\bf R}(x)<\alpha m^{-1}\}}\big]_{-}}
  &&\\&\ge& \frac{1}{M}\sum_{j=1}^{M}
  \Tr\big[\sqrt{-\alpha^{-2}\D+m^2\alpha^{-4}}-m\alpha^{-2}
  -\big(\,\frac{2\nu M}{|{\hat x}-R_j|}+C\nu
  Mm\alpha^{-1}\big)\chi_{\{|x-R_j|<\alpha m^{-1}\}}\big]_{-}
  \nonumber
  \\&=&\Tr\big[\sqrt{-\alpha^{-2}\D+m^2\alpha^{-4}}-m\alpha^{-2}
  -\big(\,2\nu M|{\hat x}|^{-1}+C\nu
  Mm\alpha^{-1}\big)\chi_{\{|x|<\alpha m^{-1}\}}\big]_{-}\,.
\eea
The last equality follows from the translation invariance of
\({}-\D\). By scaling,
\bea\label{eg:oneScale}\nonumber
  \lefteqn{\Tr\big[\sqrt{-\alpha^{-2}\D+m^2\alpha^{-4}}-m\alpha^{-2}
   -\big(2\nu M|{\hat x}|^{-1}+C\nu
   Mm\alpha^{-1}\big)\chi_{\{|x|<\alpha m^{-1}\}}\big]_{-}}
   &&\\&=&\alpha^{-2}\,\Tr\big[\sqrt{-\D+m^2}-m
   -\big(\gamma|{\hat x}|^{-1}+C\gamma m\big)
   \chi_{\{|x|<m^{-1}\}}\big]_{-}\,,
\eea
with \(\gamma=2M\nu\alpha\le 3/(8\pi)\). Using
Lemma~\ref{lem:int-ineq} and the Daubechies inequality, we get that
\beax
  \lefteqn{\Tr\big[\sqrt{-\D+m^2}-m
  -\big(\gamma|{\hat x}|^{-1}+C\gamma m\big)\chi_{\{|x|<m^{-1}\}}\big]_{-}}
  &&\\&\ge& \big(1-\tfrac{4\pi}{3}\gamma\big)\,
  \Tr\big[\sqrt{-\D+m^2}-m-\gamma(1-\tfrac{4\pi}{3}\gamma\big)^{-1}
  \big(\,\frac{1-{\rm e}^{-m^2\pi^{-1}|{\hat x}|^2}}
  {|{\hat x}|}+Cm\big)\chi_{\{|x|<
  m^{-1}\}}\big]_{-}
  \\&\ge&{}-C\gamma^{5/2}m^{3/2}\int_{|x|<
  m^{-1}}\big(|x|^{-1}+m\big)^{5/2}\,dx
  -C\gamma^{4}\int_{|x|<m^{-1}}\Big(\,
  \frac{1-{\rm e}^{-m^2\pi^{-1}|x|^2}}{|x|}+m\Big)^4\,dx\,, 
\eeax
where we have used that \(\sqrt{2}-1\ge 3/8\) and \(\gamma\le 3/(8\pi)\).

Note that
\beax
  \int_{|x|<
    m^{-1}}\big(|x|^{-1}+m\big)^{5/2}\,dx
  \le Cm^{-1/2}\ , \quad
  \int_{|x|<m^{-1}}\Big(\,\frac{1-{\rm e}^{-m^2\pi^{-1}|x|^2}}
  {|x|}+m\Big)^4\,dx
  \le Cm\,,
\eeax 
and so
\bea\label{eq:oneMore}
  \Tr\big[\sqrt{-\D+m^2}-m
  -\big(\gamma|{\hat x}|^{-1}+C\gamma m\big)
  \chi_{\{|x|<m^{-1}\}}\big]_{-}
  \ge{}-C(\gamma^{5/2}+\gamma^{4})m\,.
\eea
Combining \eqref{eg:oneScale} and \eqref{eq:oneMore}, and using
\(\gamma=2M\nu\alpha\), \(\nu\alpha\le 2/\pi\), we get that 
\bea\label{eg:thirdOneNow}\nonumber
  \lefteqn{\Tr\big[\sqrt{-\alpha^{-2}\D+m^2\alpha^{-4}}-m\alpha^{-2}
  -\big(2\nu M|{\hat x}|^{-1}+C\nu
  Mm\alpha^{-1}\big)\chi_{\{|x|<\alpha m^{-1}\}}\big]_{-}}
  &&\\&\ge&{}-C\alpha^{-2}(\gamma^{5/2}+\gamma^{4})m\ge
  {}-C\nu^{5/2}\alpha^{1/2}m\,.
  \hspace{6cm}
\eea
Combining \eqref{eq:firstDaubInImproved}, \eqref{eq:secondIneq}, and
\eqref{eg:thirdOneNow}, yields \eqref{ineq:ImprovedDLY} for
\(\nu\alpha\le 3/(16\pi M)\).  

Assume now that \(\nu\alpha\in [3/(16\pi M),2/\pi]\). 

Let \(\theta\in C_0^\infty(\R^3)\) satisfy \(0\le\theta(x)\le1\),
\(\theta(x)=1\) for \(|x|\le \alpha m^{-1}/4\), \(\theta(x)=0\) for
\(|x|\ge \alpha m^{-1}/2\), \((1-\theta^2)^{1/2}\in C^{1}(\R)\), and 
$$
  \|\nabla\theta\|_\infty\leq C\alpha^{-1} m,\quad
  \|\nabla(1-\theta^2)^{1/2}\|_\infty\leq C\alpha^{-1} m\,. 
$$
Let \(\theta_j(x)=\theta(x-R_j)\), \(j=1,\ldots,M\), and
\(\theta_{M+1}(x)=(1-\sum_{j=1}^M\theta_j^2)^{1/2}\) (the latter is
well-defined due to the assumption $\min_{k\ne \ell}|R_k-R_\ell|>2\alpha
m^{-1}$). 
The relativistic IMS formula and the localisation estimate 
\eqref{IMS-err-op-est} used for 
\(\Omega_j\), \(j=1,\ldots,M\), being the balls centered at \(R_j\) with
radii \(3\alpha m^{-1}/4\) and \(\ell=\alpha 
m^{-1}/4\), and \(\Omega_{M+1}\) being the (disjoint) union of the
same balls
 and \(\ell=\alpha m^{-1}/8\), gives
the operator inequality 
\bea\label{eq:firstBigLocal} \nonumber
 \lefteqn{\sqrt{-\alpha^{-2}\Delta+m^2\alpha^{-4}}-m\alpha^{-2}+W({\hat x})}&& 
 \\&\geq&
 \theta_{M+1}\big(\sqrt{-\alpha^{-2}\Delta
 +m^2\alpha^{-4}}-m\alpha^{-2}-Cm\alpha^{-2}\sum_{j=1}^M \chi_{\Omega_j}
 +W({\hat x})\big)\theta_{M+1} 
 \\&+&
 \sum_{j=1}^M \theta_j\big(\sqrt{-\alpha^{-2}\Delta
 +m^2\alpha^{-4}}-m\alpha^{-2}-Cm\alpha^{-2}+W({\hat x})\big)\theta_j
 -\sum_{j=1}^{M+1}Q_j\,,
 \nonumber
\eea
with 
\beax
  \Tr\big[Q_j\big]\le C\,m\alpha^{-2} \,.
\eeax
 Here we have used that
\(\theta_i\chi_{\Omega_j}\theta_i=\delta_{ij}\,\theta_i{}^2\),
\(i,j\in\{1,\ldots,M\}\),
\(\theta_{M+1}\chi_{\Omega_{M+1}}\theta_{M+1}=0\), and
\(\theta_i\chi_{\Omega_{M+1}}\theta_i\le\theta_i^2\), \(i\neq M+1\). 

Using the Daubechies
inequality on  the first term in \eqref{eq:firstBigLocal} and the assumption
on \(W\) in the second (noticing that
\(\theta_j(x)/d_{\bf R}(x)=\theta_j(x)/|x-R_j|\) due to the assumption 
\(\min_{k\ne \ell}|R_k-R_\ell|>2\alpha m^{-1}\)), 
we get from this that 
\bea \nonumber
  \lefteqn{\Tr[\sqrt{-\alpha^{-2}\Delta+m^2\alpha^{-4}}
  -m\alpha^{-2}+W({\hat x})]_{-}}&& 
  \\&\ge&
  \nonumber
  {}-Cm^{3/2}\!\!\!\!\!\!\!\!
  \int\limits_{d_{\bf R}(x)>\alpha m^{-1}/4}
  \!\!\!\!\!\!\!\!|W(x)|^{5/2}\,dx
  {}-C\alpha^{3}\!\!\!\!\!\!\!\!\int\limits_{d_{\bf R}(x)>\alpha m^{-1}/4}
  \!\!\!\!\!\!\!\!|W(x)|^{4}\,dx
  -C\,m\alpha^{-2}
  \\&&{}-C\sum_{j=1}^{M}\big(
   m^{3/2}(m\alpha^{-2})^{5/2}+\alpha^{3}(m\alpha^{-2})^{4}\big)
  \big|\{x\,|\,\tfrac14\alpha m^{-1}<|x-R_j|< \tfrac34\alpha
  m^{-1}\}\big|
  \label{eq:before-End}
  \\
  &&{}+ \sum_{j=1}^M\Tr\big[\theta_j\big(\sqrt{-\alpha^{-2}\Delta
  +m^2\alpha^{-4}}-m\alpha^{-2}-Cm\alpha^{-2}-\frac{\nu}{|{\hat x}-R_j|}-C\nu
  m\alpha^{-1}\big)\theta_j\big]_{-}\,.
  \nonumber
\eea
By the translation invariance of \(-\D\), the last term equals
\beax
  M\,\Tr\big[\theta\big(\sqrt{-\alpha^{-2}\Delta
  +m^2\alpha^{-4}}-Cm\alpha^{-2}-C\nu
  m\alpha^{-1}-\nu|{\hat x}|^{-1}\big)\theta\big]_{-}\,,
\eeax
and using the Lieb-Yau inequality 
(and the properties of \(\theta\)
and that \(\nu\alpha\le 2/\pi\)), 
\bea\label{eq:one-Lieb-Yau}\nonumber
  \lefteqn{\Tr\big[\theta\big(\sqrt{-\alpha^{-2}\Delta
  +m^2\alpha^{-4}}-Cm\alpha^{-2}-C\nu
  m\alpha^{-1}-\nu|{\hat x}|^{-1}\big)\theta\big]_{-}}
  &&\\&\ge&
  \alpha^{-1}\,\Tr\big[\theta\big(\sqrt{-\Delta}
  -\frac{2}{\pi}|{\hat x}|^{-1}-Cm\alpha^{-1}\big)\theta\big]_{-}
  \ge{}-Cm\alpha^{-2}\,.
\eea
Further, by the assumption on \(W\), and the assumption 
\(\min_{k\ne \ell}|R_k-R_\ell|>2\alpha m^{-1}\), 
\beax
  \lefteqn{m^{3/2}\!\!\!\!\!\!\!\!\!\!\!\!\!\!\int
  \limits_{\alpha m^{-1}/4<d_{\bf R}(x)<\alpha m^{-1}}
  \!\!\!\!\!\!\!\!\!\!\!\!\!\!|W(x)|^{5/2}\,dx}
  &&\\{\ }\qquad
  &\le& \sum_{j=1}^{M}\ C\nu^{5/2}m^{3/2}
  \!\!\!\!\!\!\!\!\!\!\!\!\!\!
  \int\limits_{\alpha m^{-1}/4<|x-R_j|<\alpha m^{-1}}
  \!\!\!\!\!\!\!\!\!\!\!\!\!\!\!\!\!\!\!
  \big(|x-R_j|^{-1}+\alpha^{-1}m\big)^{5/2}\,dx
  \le C\nu^{5/2}\alpha^{1/2}m\,,
\eeax
and, since \(\nu\alpha\le2/\pi\), 
\beax
  \lefteqn{\alpha^{3}\!\!\!\!\!\!\!\!\!\!\!\!\!\!\int
  \limits_{\alpha m^{-1}/4<d_{\bf R}(x)<\alpha m^{-1}}
  \!\!\!\!\!\!\!\!\!\!\!\!\!\!|W(x)|^{4}\,dx}
  &&\\{\ }\qquad
  &\le& \sum_{j=1}^{M}\ C\nu^{4}\alpha^{3}
  \!\!\!\!\!\!\!\!\!\!\!\!\!\!
  \int\limits_{\alpha m^{-1}/4<|x-R_j|<\alpha m^{-1}}
  \!\!\!\!\!\!\!\!\!\!\!\!\!\!\!\!\!\!\!
  \big(|x-R_j|^{-1}+\alpha^{-1}m\big)^{4}\,dx
  \le C\nu^{4}\alpha^{2}m \le C\nu^{5/2}\alpha^{1/2}m\,.
\eeax
It follows from this, \eqref{eq:one-Lieb-Yau}, and
\eqref{eq:before-End} that 
\beax
  \lefteqn{\Tr[\sqrt{-\alpha^{-2}\Delta+m^2\alpha^{-4}}
  -m\alpha^{-2}+W({\hat x}) ]_{-}}&&
  \\&\ge&
  {}-Cm\alpha^{-2}-C\nu^{5/2}\alpha^{1/2}m-Cm^{3/2}\!\!\!\!\!\!\!\!
  \int\limits_{d_{\bf R}(x)>\alpha m^{-1}}
  \!\!\!\!\!\!\!\!|W(x)|^{5/2}\,dx
  {}-C\alpha^{3}\!\!\!\!\!\!\!\!\int\limits_{d_{\bf R}(x)>\alpha m^{-1}}
  \!\!\!\!\!\!\!\!|W(x)|^{4}\,dx\,.
\eeax
Since \(m\alpha^{-2}\le C\nu^{5/2}\alpha^{1/2}m\) when \(\nu\alpha\in
[3/(16\pi M),2/\pi]\) this proves \eqref{ineq:ImprovedDLY} in this
case. This finishes the proof of Theorem~\ref{thm:L-Y-D}.  
\end{proof}

\begin{proof}[Proof of Theorem~\ref{thm:correlation} {\rm ({\bf
      Correlation inequality})}] The proof essentially uses
  superharmonicity and positive definiteness of $|x|^{-1}$ (see
  \cite{Lieb-Loss}). By superharmonicity and the spherical properties
  of $\Phi_s$,
$$ |x-y|^{-1} \ge \int \Phi_s(z-x)\,|z-z'|^{-1}\, \Phi_s(z'-y) \, dz dz'\,.
$$
Also note that for the Coulomb energy,
$D(\Phi_s,\Phi_s)=s^{-1}D(\Phi,\Phi)=Cs^{-1}$. Therefore, we
immediately get that 
\beax 
  \lefteqn{\sum_{1\le i<j\le N}|x_i-x_j|^{-1}
  }\\
  &\ge& \sum_{1\le i<j\le N} \int
  \Phi_s(z-x_i)\,|z-z'|^{-1}\,\Phi_s(z'-x_j) \, dz dz'  
  \\
  &=& \mfr{1}{2} \int \Big(\sum_{1\le j \le N} \Phi_s(z-x_j)\Big)
  \,|z-z'|^{-1}\, 
  \Big(\sum_{1\le j \le N} \Phi_s(z'-x_j)\Big)\, dz dz' 
  \\
  &&-\,\mfr{N}{2} \int \Phi_s(z) |z-z'|^{-1}\, \Phi_s(z') \, dz dz' 
  \\
  &=&\mfr{1}{2} \int \Big(\sum_{1\le j \le N} \Phi_s(z-x_j) -
  \rho(z)\Big)\, 
  |z-z'|^{-1}\, \Big(\sum_{1\le j \le N} \Phi_s(z'-x_j) -
  \rho(z')\Big)\, dz dz' 
  \\
  &&+\,\int \rho(z) \,|z-z'|^{-1}\,\sum_{1\le j \le N}
  \Phi_s(z'-x_j)\, dzdz' 
  \\
  && -\,\mfr{1}{2} \int \rho(z)\, |z-z'|^{-1}\, \rho(z') \, dzdz' -  
  N D(\Phi_s,\Phi_s)
  \\
  &\ge&\sum_{1\le j\le N}\big(\rho * |x|^{-1} * \Phi_s \big)(x_j) 
     - D(\rho) - CNs^{-1}\,.
\eeax
In the last inequality we have used the positive definiteness of
$|x|^{-1}$ and dropped the first term. This proves inequality 
\eqref{inequ:correlation}. 
\end{proof}

\begin{proof}[Proof of Corollary~\ref{estimate related to correlation}
{\rm{({\bf Estimate on $\rho^{\rm TF}*|x|^{-1}*(\delta_0-\Phi_t)$})}}]
Let, with \(d_{\bf r}\) as in \eqref{ddefinitionBIS},  
\begin{equation}\label{gdefinition}
   g_{\bf r}(x)=\min\{d_{\bf r}(x)^{-1/2}, d_{\bf r}(x)^{-2}\}\,.
 \end{equation}
We claim that for some constant $C>0$,
\be\label{grad-rho} 
   \big|\nabla \rho^{{\rm TF}} * |x|^{-1}\big| \,\le\, C g_{\bf r}(x) \,.
\ee
To prove this we distinguish two regions. 

First, let $d_{\bf r}(x)\ge1$. Then $g_{\bf r}(x)=d_{\bf r}(x)^{-2}$. Using
that $|\nabla V^{{\rm TF}}|(x) \le C g_{\bf r}(x)^2 d_{\bf r}(x)^{-1}
= d_{\bf r}(x)^{-5}$ we obtain 
\beax
  \big|\nabla \rho^{{\rm TF}}* |x|^{-1}\big|
  &=& \Big|\nabla\Big(\sum_{j=1}^{M} z_j |x-r_j|^{-1} - V^{{\rm
      TF}}(x)\Big)\Big| 
  \\
  &\le& \sum_{j=1}^{M} z_j |x-r_j|^{-2} + d_{\bf r}(x)^{-5}
  \le M \max_j\{z_j\} \big(\min_j |x-r_j|\big)^{-2} + d_{\bf
    r}(x)^{-5} 
  \\
  &=& C d_{\bf r}(x)^{-2} + d_{\bf r}(x)^{-5}
  \le Cg_{\bf r}(x)\,.
\eeax

If on the other hand $d_{\bf r}(x)<1$, then, by using
\eqref{eq:tfeqgeneral} and \eqref{eq:tfdf}, 
\beax 
  \big|\nabla \rho^{{\rm TF}} * |x|^{-1}\big|
  &\le& \int \big|\nabla \rho^{{\rm TF}}(y)\big|\, |x-y|^{-1} \, dy 
  \le C \int g_{\bf r}(y)^3 d_{\bf r}(y)^{-1}\,|x-y|^{-1} \, dy 
  \\
  &=&C\sum_{j=1}^{M}\int\limits_{d_{\bf r}(y)=|y-r_j|} 
  g_{\bf r}(y)^3 d_{\bf r}(y)^{-1}\,|x-y|^{-1} \, dy\,.
\eeax
Since \(g_{\bf r}(y)\le d_{\bf r}(y)^{-1/2}=|y-r_j|^{-1/2}\) when
\(d_{\bf r}(y)=|y-r_j|\) this implies that 
\beax 
  \big|\nabla \rho^{{\rm TF}} * |x|^{-1}\big|
  &\le& C\sum_{j=1}^M \int |y-r_j|^{-5/2} \, |x-y|^{-1}\, dy
  \\&=&C\sum_{j=1}^M |x-r_j|^{-1/2}
  \le C M d_{\bf r}(x)^{-1/2} 
  = C g_{\bf r}(x)\,.
\eeax      
This finishes the proof of \eqref{grad-rho}.

Let us now proceed to prove inequality (\ref{correl:upper
  bound}). That the difference is positive is again just
superharmonicity of $|x|^{-1}$. It is easy to see that
\be\label{d-inequ} |d_{\bf r}(x)-d_{\bf r}(y)| \le |x-y|\,.
\ee
In the case when $d_{\bf r}(x)\ge 2t$ we can conclude that
\beax 
  \rho^{{\rm TF}}*|x|^{-1} - \rho^{{\rm TF}}*|x|^{-1}*\Phi_t
  &\le&\sup_{|z-x|\le t} \big\{\left|\nabla \rho^{{\rm
        TF}}*|z|^{-1}\right| \big\} 
     \int |x-y|\,\Phi_t(x-y)\, dy
  \\
  &\le& Ct \sup_{|z-x|\le t} g_{\bf r}(z)
  \le Ctg_{\bf r}(x)\,.
\eeax
In the last step we have used that if $d_{\bf r}(x)\ge 2t$ and
$|z-x|\le t$ (this condition stems from the support of $\Phi_t$), then
inequality (\ref{d-inequ}) guarantees that $\frac{1}{2} d_{\bf
  r}(x)\le d_{\bf r}(z)\le \frac{3}{2} d_{\bf r}(x)$. This in turn
implies $(\frac{3}{2})^{-2} g_{\bf r}(x) \le g_{\bf r}(z)\le
(\frac{1}{2})^{-1/2} g_{\bf r}(x)$. 

If, on the other hand, $d_{\bf r}(x)\le 2t$, then we claim that
\be\label{est:Holder}
   \left| \rho^{{\rm TF}}*|x|^{-1} - \rho^{{\rm TF}}*|y|^{-1} \right|
   \le C |x-y|^{1/2} \,.
\ee
This can be seen as follows. 
\beax
  \left|\rho^{{\rm TF}}*|x|^{-1} - \rho^{{\rm TF}}*|y|^{-1}\right|
  &=&\left|\int_0^1 \frac{d}{d\theta} \Big(\rho^{{\rm TF}}*|\theta
    x+(1-\theta)y|^{-1}\Big) 
  \, d\theta\right|
  \\
  &=&\left|\int_0^1 \nabla \Big(\rho^{{\rm TF}}*|\theta
    x+(1-\theta)y|^{-1}\Big) \cdot (x-y)\, d\theta\right| 
  \\
  &\le&C \int_0^1 g_{\bf r}(\theta x+(1-\theta)y)\, |x-y| \, d\theta
  \\
  &\le&C|x-y|\sum_{j=1}^M \int_0^1 |\theta x+(1-\theta)y-r_j|^{-1/2}\,
  d\theta 
  \\
  &=&C|x-y|^{1/2} \sum_{j=1}^M \int_0^1 \left|
    \frac{\theta(x-y)}{|x-y|} +  
  \frac{y-r_j}{|x-y|}\right|^{-1/2} \, d\theta \,.
\eeax
Let \(n=(x-y)/|x-y|\), \(b=(y-r_j)/|x-y|\), and \(c=n\cdot b\). Then
\(|\theta n+b|^2\ge|\theta+n\cdot b|^2=(\theta+c)^2\). 
Therefore
\beax\
  \left|\rho^{{\rm TF}}*|x|^{-1} - \rho^{{\rm TF}}*|y|^{-1}\right|
  \le C|x-y|^{1/2} \sum_{j=1}^M \int_0^1 |\theta+c|^{-1/2}
  \, d\theta \,.
\eeax
The integral $\int_0^1 |\theta + c|^{-1/2} \, d\theta$ is bounded
uniformly for $c\in\R$. This proves \eqref{est:Holder}.

This allows us finally to show that for $d_{\bf r}(x)\le 2t$,
\beax 
  \rho^{{\rm TF}}*|x|^{-1} - \rho^{{\rm TF}}*|x|^{-1}*\Phi_t
  &=&\int \big(\rho^{{\rm TF}}*|x|^{-1} - \rho^{{\rm
      TF}}*|y|^{-1}\big) \Phi_t(x-y)\,dy 
  \\
  &\le&C\int |x-y|^{1/2} \, \Phi_t(x-y)\, dy
  =C t^{1/2}\,.
\eeax
This finishes the proof of the corollary.
\end{proof}

\section{Estimates of semi-classical integrals} 
\label{app:B}

In this appendix we give the remaining arguments on the analysis of
the integrals in the semi-classical proofs of Lemma~\ref{local
  semi-classics-lower} and Lemma~\ref{lm:upperbound}. 

\begin{proof}[Proof of Lemma~\ref{local semi-classics-lower}
  {\rm{({\bf Lower bound on ${\Tr}[\phi
    H_\b\phi]_-$}): {\bf Estimate of integral \eqref{eq:phiHphi}}}}]
It remains to estimate the integral in \eqref{eq:phiHphi}. 
Note that by Taylor's formula for
$\widetilde\sigma$ we have 
\begin{eqnarray}
  H_{u,q}^{(\varepsilon)}(v,p)
  &\geq&\widetilde{\s}(v,p)+\widetilde{\xi}_{v,p}(u-v,q-p)-
  C|u-v|(b^{-1}+|u-v|^2)\label{eq:atildeapp}
  \\
  &&{}-C|q-p|(b^{-1}+|q-p|^2)\,, \nonumber
\end{eqnarray}
where
$$
  \widetilde{\xi}_{v,p}(u,q)=\mfr{1}{4b}\Delta\widetilde{\s}(v,p) -
  (1-\varepsilon) 
  \mfr{1}{2}\sum_{i,j}\partial_i\p_j T_\b(p) q_iq_j
  -\mfr{1}{2}\sum_{i,j}\partial_i\partial_j V(v)u_iu_j\,.
$$
We have used that
$\left|\Delta\widetilde{\s}(v,p)-\Delta\widetilde{\s}(u,q)\right|\leq
C|u-v| + C |q-p|$, and similarly, when replacing $\p_i\p_jF(q)$ by
$\p_i\p_jF(p)$, and $\p_i\p_jV(u)$ by $\p_i\p_jV(v)$. We get that:
\bea\label{eq:bound-p}
  H_{u,q}^{(\varepsilon)}(v,p)\le0 \ \ \Rightarrow\ \  
  |p| \le C\big(1+|u-v|^3 + |q-p|^3\big)\,. 
\eea
(Note that this holds also for \(\varepsilon=0\), and uniformly in
\(\beta\in[0,1]\); to see the latter, use that \(T_\b(p)\ge T_1(p)\).) 
This implies that
\begin{equation}\label{eq:standard-int1}
  \int\limits_{H_{u,q}^{(\varepsilon)}(v,p)\le0} 
  \!\!\!\!\!\!\!\!
  \big(|u-v|^m + |q-p|^m\big)
  G_b(u-v) G_b(q-p) \,dp du dq \le C b^{-m/2}\,,
\end{equation}
and
\begin{equation}\label{eq:standard-int2}
  \int\limits_{H_{u,q}^{(\varepsilon)}(v,p)\le0} 
  \!\!\!\!\!\!\!\!
  \big(|u-v|^m + |q-p|^m\big)
  G_b(u-v) G_b(q-p) \,dp dv dq \le C b^{-m/2}\,.
\end{equation}
{F}rom this we obtain that
\begin{equation}\label{eq:h-n}
  \int G_b(u-v)G_b(q-p)
  \Big[H_{u,q}^{(\varepsilon)}(v,p)\Big]_-dpdqdv\geq {}-C\,, 
\end{equation} 
and hence from (\ref{eq:phiHphi}) that 
\begin{eqnarray*}
  \lefteqn{
  {\Tr}[\phi H_\b\phi]_-}
  \\
  &\geq&\displaystyle\int\limits_{u\in B_2}
  \phi(v+h^2ab(u-v))^2 G_b(u-v)G_b(q-p)
  \left[H_{u,q}(v,p)\right]_-\, 
  \frac{du dq}{(2\pi h)^n}dv dp 
  \\
  && {}- Ch^{-n} (b^{-3/2}+h^2b)\,.
\end{eqnarray*}
Here we have used the fact that the $u$-integration is over a bounded
region. {F}rom now on we may however ignore the restriction on the
$u$-integration. We note that, by using \eqref{eq:bound-p} and
\eqref{eq:standard-int1}, that \(\phi\) has 
support in \(B_1\), and that \(b>1\), we get that
\begin{eqnarray*}
  \lefteqn{\int\limits_{H_{u,q}^{(\varepsilon)}(v,p)\le0} 
  \!\!\!\!\!\!\!\!
  \phi(v+h^2ab(u-v))^2\big(|u-v|(b^{-1}+|u-v|^2)
  +|q-p|(b^{-1}+|q-p|^2\big)}\\
  &&\qquad\qquad\qquad\qquad\qquad\qquad\qquad\qquad\qquad
  \times\  G_b(u-v) G_b(q-p) \,dudqdvdp  \le C b^{-3/2}\,.
\end{eqnarray*}
Using this and
\eqref{eq:atildeapp} we find after the simple change of variables 
$u\to u+v$ and $q\to q+p$ that 
\begin{eqnarray}\label{eq:lowbd}
  \lefteqn{{\Tr}[\phi H_\b\phi]_-}
  \nonumber\\
  &\geq&\int\phi(v+h^2abu)^2 G_b(u)G_b(q)
  \nonumber\\
  &&\times\,\big[\widetilde{\s}(v,p)+\widetilde{\xi}_{v,p}(u,q) 
  - C|u|(b^{-1}+|u|^2)- C|q|(b^{-1}+|q|^2)\big]_-\,
  \frac{du dq}{(2\pi h)^n}dv dp 
  \nonumber\\
  &&{}-Ch^{-n}(b^{-3/2}+h^2b)
  \nonumber\\
  &\geq&\int\phi(v+h^2abu)^2 G_b(u)G_b(q)
  \big[\widetilde{\s}(v,p)+\widetilde{\xi}_{v,p}(u,q)\big]_- \,
  \frac{du dq}{(2\pi h)^n}dv dp 
  \nonumber\\
  &&{}-Ch^{-n}(b^{-3/2}+h^2b)\,.   
\end{eqnarray}
At this point we divide the $(v,p)$-integration into three regions 
given in terms of a parameter $\Lambda>0$ by 
\[\Omega_- = \{(v,p)\ |\ \widetilde{\s}(v,p)\leq-\Lambda\}\,,\ 
  \Omega_+ = \{(v,p)\ |\ \widetilde{\s}(v,p)\geq \Lambda \}\,,\ 
\Omega_0 \,= \{(v,p)\ |\ |\widetilde{\s}(v,p)|<\Lambda \}\,.\]
 
The parameter $\Lambda$ will be chosen such that 
$1\geq \Lambda\geq C b^{-1}$ for some sufficiently large constant
$C$. This is possible if $\tau$ is small enough and hence $b$ large
enough. Then, since all the second derivatives
of $\widetilde{\s}$ are bounded we may assume that $\frac{1}{4b}
|\Delta\widetilde{\s}(v,p)|<\Lambda/2$ for all $(v,p)$, uniformly in $\b$. 

We first consider $\Omega_+$. We see from \eqref{eq:lowbd} that we only
need to integrate over the set $\{(u,q)\,|\,C(|u|^2+|q|^2) \ge \Lambda\}$.
Also notice that $\widetilde{\s}(v,p)\ge \tfrac12|p|-C$ (since
\(T_\b(p)\ge T_1(p)\))
shows that we only need to integrate over the set $\{p\,\,|\,\, |p|\le
C(1+|q|^2+|u|^2)\}$. Therefore, 
\begin{eqnarray*}
  \int\limits_{(v,p)\in\Omega_+}
  \!\!\!\!\!\!\!
  \phi(v+h^2abu)^2 G_b(u) G_b(q)
  \big[\widetilde{\s}(v,p)+\widetilde{\xi}_{v,p}(u,q)\big]_-\,{du dq}
  dv dp
  \geq {}-C{\rm e}^{-Cb\Lambda}\,.
\end{eqnarray*}

A similar argument shows that on $\Omega_-$ we can ignore the negative
part $[\ \ ]_-$ paying the same price $-C h^{-n}{\rm e}^{-Cb\Lambda}$. 

For $(v,p)\in\Omega_-$ we estimate the integral
\beax
  \int \phi(v+h^2abu)^2 G_b(u) G_b(q)
  \big[\widetilde{\s}(v,p)+\widetilde{\xi}_{v,p}(u,q)\big]\,{dudq}
  \geq(\phi(v)^2+Ch^2b)\widetilde{\s}(v,p) - Ch^2b \,.
\eeax
Here we have expanded $\phi^2$ to second order at the point $v$ and
used the crucial fact that
\bea\label{eq:Gauss-integrals}
  \int \widetilde{\xi}_{v,p}(u,q) G_b(u) G_b(q) \, dudq =0\,.
\eea
For $(v,p)\in\Omega_-$ we have, of course, $\widetilde{\s}(v,p)=
\widetilde{\s}(v,p)_-$. Since the volume of $\Omega_-$ is bounded 
by a constant we get for the integration over $\Omega_-\cup\Omega_+$,
\begin{eqnarray}\lefteqn{
  \int\limits_{(v,p)\in\Omega_-\cup\Omega_+}
  \!\!\!\!\!\!\!\!\!
  \phi(v+h^2abu)^2 G_b(u) G_b(q)
  \big[\widetilde{\s}(v,p)+\widetilde{\xi}_{v,p}(u,q)\big]_-\,{du dq}
  dv dp}\nonumber \\ 
  &&\qquad\qquad\qquad\geq\int\limits_{(v,p)\in\Omega_-\cup\Omega_+}
  \!\!\!\!\!\!\!\!\!
  \phi(v)^2\widetilde{\s}(v,p)_- \,{dvdp} - 
  C(h^2b+ {\rm e}^{-Cb\Lambda})\,.\label{eq:Omega+-}
\end{eqnarray}
Finally, let $(v,p)\in\Omega_0$. Observe that, with
\(\vartheta(t)=(2t+\beta t^2)^{n/2}\), 
\begin{eqnarray}\label{eq:Omega-0-int}
   \int\limits_{(v,p)\in\Omega_0}\!\!\!dp =
   c_n\big(\vartheta(-[\Lambda-V(v)]_{-})-\vartheta(-[\Lambda+V(v)]_{-})\big)  
   \le C\Lambda\,,
\end{eqnarray}
by the mean value theorem (uniformly in \(v\)). Now,
\beax
   \lefteqn{\phi(v+h^2abu)^2\,G_b(u) G_b(q) 
  \big[\widetilde{\s}(v,p)+\widetilde{\xi}_{v,p}(u,q)\big]_{-}}
  \\&\ge&
   \phi(v+h^2abu)^2 \,G_b(u) G_b(q) 
  \big[\widetilde{\s}(v,p)\big]_{-}\\
  &&{}-C \phi(v+h^2abu)^2 \,G_b(u) G_b(q)(b^{-1}+|u|^2+|q|^2)\,,
\eeax
and, using the observation above and making the change of variables
\(v\to v-h^2abu\) in the \(v\)-integral, 
\beax
  \int\limits_{(v,p)\in\Omega_0}\phi(v+h^2abu)^2 \,G_b(u)
  G_b(q)(b^{-1}+|u|^2+|q|^2)\,dvdpdudq
  \le C\Lambda b^{-1}\,.
\eeax
Expanding  $\phi^2$ to first order at $v$ we have that
 \begin{eqnarray*}
   \lefteqn{\int\limits_{(v,p)\in\Omega_0}\phi(v+h^2abu)^2 \,G_b(u)
   G_b(q)\widetilde{\s}(v,p)_{-}\,dvdpdudq}
   \\&\ge&
   \int\limits_{(v,p)\in\Omega_0}\phi(v)^2
   \widetilde{\s}(v,p)_{-}\,dvdp
   \,+\, Ch^2ab\!\!\!\!\!\!\!
   \int\limits_{(v,p)\in\Omega_0, v\in{\text{\rm supp}}\,V}\!\!\!\!\!\!\!
   |u|\,G_b(u)G_b(q)\widetilde{\s}(v,p)_{-}\,dudqdvdp\\
   &\ge&\int\limits_{(v,p)\in\Omega_0}\phi(v)^2
   \widetilde{\s}(v,p)_{-}\,dvdp
   -Chb^{1/2}\Lambda^2\,.
 \end{eqnarray*}
As a consequence, 
\begin{eqnarray}\nonumber
  \lefteqn{
  \int\limits_{(v,p)\in\Omega_0}\!\!\!\!\!\!
  \phi(v+h^2abu)^2 \,G_b(u) G_b(q) \big[\widetilde{\s}(v,p)+
  \widetilde{\xi}_{v,p}(u,q)\big]_-\,{du dq} dvdp}\\
   &\qquad\qquad\qquad\geq&\int\limits_{(v,p)\in\Omega_0} 
  \!\!\!\!\!\!
  \phi(v)^2\widetilde{\s}(v,p)_-\, {dvdp}
  -C\Lambda(\Lambda hb^{1/2}+b^{-1})\,.\label{eq:Omega0}
\end{eqnarray}
Since 
\beax
  \lefteqn{\int\phi(v)^2\widetilde{\s}(v,p)_-\, {dvdp}
  }
  \\
  &=&(1-\varepsilon)^{-n} \int
  \phi(v)^2\big[\sqrt{\b^{-1}p^2 + (1-\varepsilon)^2\b^{-2}}
  -(1-\varepsilon)\b^{-1} + V(v)\big]_-\, dvdp
  \\
  &\geq&(1-\varepsilon)^{-n} \int \phi(v)^2 {\s}_\b(v,p)_-\, {dvdp} 
  \\
  &\geq&\int\phi(v)^2{\s}_\b(v,p)_-\, {dvdp} - C \varepsilon \,,
\eeax
the lemma follows from \eqref{eq:lowbd}, 
\eqref{eq:Omega+-}, and 
\eqref{eq:Omega0} if we choose $b=h^{-4/5}$.
\end{proof}

\begin{proof}[Proof of Lemma~\ref{lm:upperbound} {\rm{({\bf
        Construction of a trial density}): {\bf Estimates of integrals}}}]

We give here the remaining arguments on the
analysis of the integrals in the semi-classical proofs of
Lemma~\ref{lm:upperbound}. 

\ 

\noindent{\bf The energy: proof of \eqref{eq:lemmaupper}.}
It remains to estimate the integral in \eqref{eq:first-sc-int-in app}.  

Using
\eqref{eq:bound-p}, that \(T_\b(p)\le\frac12p^2\), and that
${h}_{u,q}(v,p)=0$ unless $u\in B_2$, we get that
$$
  \int \,\chi[{h}_{u,q}(v,p)]\,G_b(u-v)G_b(q-p)
  \big(1 + T_{\beta}(p+h^2ab(q-p))\big) \,dudq dv dp\leq C\,.
$$
This implies that
\begin{eqnarray*}
  \lefteqn{{\Tr}[\c\phi H_\b \phi]}&&\\
  &\le&
  \int \,\chi[h_{u,q}(v,p)]\,G_b(u-v)G_b(q-p) \phi\big(v+h^2ab(u-v)\big)^2 
  \\
  &&\qquad\times \,
  \s\big(v+h^2ab(u-v),p+h^2ab(q-p)\big)\,
  \frac{dudq}{(2\pi h)^n} dv dp + Ch^2b h^{-n}\,.
\end{eqnarray*}
{F}rom (\ref{eq:happ}) we may now conclude that 
\begin{eqnarray}
  \lefteqn{{\Tr}[\c\phi H_\b\phi]
  }\nonumber\\
  &\leq&\int\limits_{u\in B_2-v}
  \,\chi\big[\s(v,p)+\xi_{v,p}(u,q)-C|u|(b^{-1}
             +|u|^2)-C|q|(b^{-1}+|q|^2)\big] 
  \,G_b(u)G_b(q) \nonumber
  \\ 
  &&\qquad\times\, \phi(v+h^2abu)^2\s(v+h^2abu,p+h^2abq)
  \, \frac{dudq}{(2\pi h)^n} dv dp\, +\, Ch^2b h^{-n}\label{eq:laplace1}\,.
\end{eqnarray}
At this point we introduce the same partition of the
$(v,p)$-integration into sets $\Omega_{\pm},\Omega_0$ as in the proof
of the lower bound above (with
$\e=0$) with the same $\Lambda=b^{-1/2}=h^{2/5}$. 

Then for the integration over $\Omega_+$ we have as above that 
$C(|u|^2+|q|^2)>\Lambda$ and hence
\beax
  \lefteqn{\int\limits_{(v,p)\in\Omega_+,\atop u\in B_2-v} 
  \chi\big[\s(v,p)+\xi_{v,p}(u,q)-C|u|(b^{-1}+|u|^2)-C|q|(b^{-1}+|q|^2)\big]} 
  \\
  &&\,\times  \,\phi(v+h^2abu)^2 \s(v+h^2abu,p+h^2abq)\,
  G_b(u)G_b(q)\,du dq dv dp 
  \le C \,{\rm e}^{-c b\Lambda} \le Ch^2b\,,
\eeax
where we have used (\ref{eq:uniformsigma}) and that $\phi$ is
supported in the ball $B_1$. 

Similarily, if $(v,p)\in \Omega_-$ then for the $(u,q)$-integration 
we can safely assume that the argument of 
$\chi$ is negative to the effect of paying the same ${\rm
  e}^{-Cb\Lambda}$ price. Likewise we may ignore the restriction
$u\in B_2-v$, since $u\not\in B_2-v$ and $v+h^2abu\in B_1$ implies 
$|u|>(1-h^2ab)^{-1}>1$. 
Expanding $\phi^2$ and $\s$ to second order at $(v,p)\in\Omega_-$ and
using the fact that all their second order derivatives are bounded
together with 
\eqref{eq:Gauss-integrals-one} we get that
\beax 
  \lefteqn{\int
  \chi\big[\s(v,p)+\xi_{v,p}(u,q)-C|u|(b^{-1}+|u|^2)-C|q|(b^{-1}+|q|^2)\big]} 
  \\
  &&\qquad\qquad\qquad\qquad\quad\ \ \times  \,\phi(v+h^2abu)^2 \s(v+h^2abu,p+h^2abq)\,
  G_b(u)G_b(q)\,du dq\\ 
  &\leq& \int \Big[\big(\phi(v)^2 + h^2ab\,u\cdot\nabla(\phi^2)(v)\big)
  \, \big(\s(v,p) + h^2ab \, (u,q)\cdot\nabla\s(v,p)\big)\Big]\,
  G_b(u)G_b(q)\,du dq 
  \\
  &&\, + \,Ch^2b + C \,{\rm e}^{-Cb\Lambda}
  \\
  &\le& \phi(v)^2 \s(v,p)_- + Ch^2b \,.
\eeax
It is important here that $\s$ and $\nabla\s$ are bounded  uniformly
in $\beta\leq1$ on $\Omega_-$. This follows from
(\ref{eq:uniformsigma}) and 
$|\nabla\sigma_\beta(v,p)|\leq C(1+|p|)$. Indeed,
(\ref{eq:uniformsigma}), 
in particular, implies that $\Omega_-$ is a bounded set
(uniformly in $\beta$). 
The fact that the volume of $\Omega_-$ is
bounded also gives that the contribution from $\Omega_-$ to the
integral on the right side of (\ref{eq:laplace1}) is bounded above by
$$
(2\pi h)^{-n}\int_{\Omega_-}\phi(v)^2 \s(v,p)_- dvdp + Ch^2bh^{-n}. 
$$

Finally, we consider $(v,p)\in\Omega_0$. 
If we expand $\phi^2$ to first order at $v$ and $\s$ to second order
at $(v,p)$ and use that all second order derivatives of $\s$ are
bounded and that $\nabla\s(v,p)$ is bounded for $(v,p)\in\Omega_0$ we
obtain that 
$$
\phi(v+h^2abu)^2 \s(v+h^2abu,p+h^2abq)\leq
\phi(v)^2\s(v,p)+Ch^2ab(|u|+|q|)+Ch^4a^2b^2(|u|^2+|q|^2).
$$ 
This together with the estimate $|\chi(x+y)x
-\chi(x)x|\le |y|$ implies that 
\beax \lefteqn{\int_{u\in B_2-v}
  \chi\big[\s(v,p)+\xi_{v,p}(u,q)-C|u|(b^{-1}+|u|^2)-C|q|(b^{-1}+|q|^2)\big]}
  \\
  &&\qquad\qquad\qquad\qquad\quad\ \ \times \,\phi(v+h^2abu)^2
  \s(v+h^2abu,p+h^2abq)\,
  G_b(u)G_b(q)\,du dq\\
  &\le&
  \int\chi[\s(v,p)+\xi_{v,p}(u,q)-C|u|(b^{-1}+|u|^2)-C|q|(b^{-1}+|q|^2)]\,G_b(u)
  G_b(q) \, du dq \\
  &&\qquad\qquad\qquad\qquad\quad\times\phi(v)^2 \s(v,p)+
  C h^2b^{3/2}\\
  &\le&\phi(v)^2\s(v,p)_- + C (b^{-1}+h^2b^{3/2})\,.  
\eeax 
We have here again used that the effect of removing the restriction
$u\in B_2-v$
causes a smaller error than the last term above. 
Note that $u\in B_2-v$ and
$v+h^2abu\in B_1$ imply $|u|\leq 3(1-h^2ab)^{-1}\leq 6$ and
hence we only need to consider 
$|v|\leq |v+u|+|u|<8$. If we use that \eqref{eq:Omega-0-int}
implies
$$
  {\rm Vol}(\Omega_0\cap\{v\ |\ |v|<8\}\times\R^n)\leq C\Lambda
$$
we see that  
the contribution from $\Omega_{0}$ 
to the
integral on the right side of \eqref{eq:laplace1} is bounded above by 
$$
  (2\pi h)^{-n}\int_{\Omega_0}\phi(v)^2\s(v,p)_-\,dvdp + 
  C h^{-n}(b^{-1}+h^2b^{3/2})\Lambda\,.
$$
This finishes the proof of the upper bound on the energy in 
\eqref{eq:lemmaupper} .

\

\noindent{\bf The density: proof of \eqref{eq:rhogammaprop1} and
  \eqref{eq:rhogammaprop2}.} 
Here it remains to estimate the integral in \eqref{eq:rhointegral}.
The strategy
is to freeze the variable $|p|$ in $\xi_{v,p}$ so that the remaining
dependence on $|p|$ is explicitly integrable.
This is accomplished in the estimate 
\eqref{eq:kappa estim} below. After the $|p|$-integration the proof is
almost the same as in the non-relativistic case \cite{SS}.

We write $p=|p|\omega$ and define 
$$
  p_0=(\beta|V(v)_-|^2+2|V(v)_-|)^{1/2}\omega=\eta(V(v))^{1/n}\omega\,.
$$
We will then prove that if $u\in B_2-v$ then 
\begin{equation}\label{eq:kappa estim}
  \chi[\s(v,p) + \xi_{v,p_0}(u,q) + R(u,q)] \le
  \chi[h_{u+v,q+p}(v,p)] \le \chi[\s(v,p) + \xi_{v,p_0}(u,q) 
  - R(u,q)] \,,
\end{equation}
where 
$$
  R(u,q)=C(|u|(b^{-1}+|u|^2) + (|q|+\Lambda)(b^{-1}+|q|^2)
  + (b^{-1} +|u|^2+|q|^2)(|u|^2+|q|^2)\Lambda^{-1}) \,.
$$
We first prove \eqref{eq:kappa estim}  for $(v,p)\in\Omega_0$.
In this case we have
$$\eta(V(v)+\Lambda)^{2/n}\leq p^2\leq
\eta(V(v)-\Lambda)^{2/n}\,,$$
from which it follows that $|p^2-p_0^2|\leq C\Lambda$ with a constant
independent of $\beta\in[0,1]$.

Let \(G(t)=\sqrt{\b^{-1}t+\b^{-2}}-\b^{-1}\), so that
$T_\b(p)=G(p^2)$ (we suppress that \(G\) depends on \(\b\)).
Note that then \(\p_i\p_jT_\b(p)=4p_ip_jG''(p^2)+2\delta_{ij}G'(p^2)\),
and so, in particular, $\D T_\beta(p) = 4 G''(p^2) p^2 + 2n
G'(p^2)$. Therefore, using that \(p_i=|p|\omega_i,
p_{0,i}=|p_0|\omega_i\), 
\bea\nonumber
  \lefteqn{\left| \xi_{v,p}(u,q) - \xi_{v,p_0}(u,q) \right|
  }\\\nonumber
  &\le&\mfr{1}{4b} \left|\D\s(v,p)-\D\s(v,p_0)\right| 
   + \mfr{1}{2} \sum_{i,j}\big|\p_i\p_j [T_\b(p) - T_\b(p_0)]\big| \,|q_iq_j|
  \\\nonumber
  &\le& C(b^{-1}+|q|^2) \big(|G''(p^2)p^2 
  - G''(p_0^2)p_0^2| + |G'(p^2) - 
  G'(p_0^2)|\big)
  \\
  &\le&C(b^{-1} + |q|^2) \,|p^2-p_0^2|
  \ \le\ C\Lambda(b^{-1} + |q|^2)\,.  \label{eq:xi-ineq}
\eea
Here we have used the choice of \(p_0\) and that $G'(t)$ and $t
G''(t)$ have bounded derivatives uniformly in $\beta\in[0,1]$. If we
combine \eqref{eq:happ} with \eqref{eq:xi-ineq} we obtain that
\[ 
  \left| h_{u+v,q+p}(v,p) - \s(v,p) - \xi_{v,p_0}(u,q)\right|
  \le C |u|(b^{-1}+|u|^2) + |q|(b^{-1}+|q|^2) + C\Lambda(b^{-1} +|q|^2) \,,
\]
which is, in fact, stronger than \eqref{eq:kappa estim}.

If $(v,p)\in\Omega_+$ we see that the left inequality in
\eqref{eq:kappa estim} is only violated if $\xi_{v,p_0}(u,q)\leq-\Lambda$
and the right inequality is only violated if $h_{u+v,q+p}(v,p)\leq 0$.
Since $(v,p)\in\Omega_+$ we must in both cases have $C(|u|^2+|q|^2)>\Lambda$.
We hence get (again using
\eqref{eq:happ}) that
\begin{eqnarray*}
  \lefteqn{\big| h_{u+v,q+p}(v,p)-\s(v,p) -\xi_{v,p_0}(u,q)\big|}&&\\
   &\le &  \big| h_{u+v,q+p}(v,p)-\s(v,p)-\xi_{v,p}(u,q)\big|+
   \big|\xi_{v,p}(u,q)\big| +\big|\xi_{v,p_0}(u,q)\big|  
  \\
  &\le& C|u|(b^{-1}+|u|^2) + C|q|(b^{-1}+|q|^2)  + C(b^{-1} +
  |u|^2+|q|^2) 
  \\
  &\le& C|u|(b^{-1}+|u|^2) + C|q|(b^{-1}+|q|^2)  + C(b^{-1} +
  |u|^2+|q|^2) (|u|^2+|q|^2) 
  \Lambda^{-1}\,,
\end{eqnarray*}
which gives  \eqref{eq:kappa estim} in this case.

Finally, if $(v,p)\in\Omega_-$ then the left inequality in
\eqref{eq:kappa estim} is only violated if
$h_{u+v,q+p}(v,p)\geq0$ and the right inequality is only violated if
$\xi_{v,p_0}(u,q)\geq\Lambda$. In both cases this implies that
$C(|u|^2+|q|^2)>\Lambda$ and hence the same argument as for $\Omega_+$
proves \eqref{eq:kappa estim}.

Using \eqref{eq:kappa estim} we can estimate the density in
\eqref{eq:rhointegral}
from above and below. We will discuss the lower bound on the density. 
The upper bound is similar.
Performing the $|p|$-integral in \eqref{eq:rhointegral} we obtain 
\bea\label{eq:rholower1} 
  \rho_\gamma(x)
  &\geq& \int\limits_{u\in B_2-v} 
  \!\!\!\!\!\! \Xi(u,q,v,\omega) 
  G_b(u) G_b(q)
  G_{(h^2b)^{-1}}(x-v-h^2abu)\, dvd\omega\,
  \frac{dudq}{(2\pi h)^n}
  \\
  &=&  \!\!\! \!\!\!\int\limits_{(1-h^2ab)u\in B_2-v} 
  \!\!\!\!\!\!\!\!\!\!\!\!
  \Xi(u,q,v-h^2abu,\omega)
  G_b(u) G_b(q) G_{(h^2b)^{-1}}(x-v) \, dvd\omega\,
  \frac{dudq}{(2\pi h)^n} \,,    \nonumber 
\eea
where
$$ 
  \Xi(u,q,v,\omega) =
  n^{-1}\eta\left(V(v)+\xi_{v,p_0}(u,q)+R(u,q)\right)\,. 
$$
We have that 
\begin{eqnarray}
  \lefteqn{V(v-h^2abu)+\xi_{v-h^2abu,p_0}(u,q)+R(u,q)}\nonumber
  \\&\leq& V(v)-h^2ab u\nabla V(v)+\xi_{v,p_0}(u,q)
  +R(u,q)+Ch^4a^2b^2|u|^2+Ch^2ab|u|(b^{-1} + |u|^2)\nonumber\\
  &\leq&V(v)-h^2ab u\nabla V(v)+\xi_{v,p_0}(u,q)
  +R(u,q)+Ch^4a^2b^2|u|^2\,.\label{eq:Xiboundint}
\end{eqnarray}
(In the last line we have used that \(h^2ab\le1\) and the definition
of \(R(u,q)\).) We now use that
\be \label{eta expansion}
   \left| \eta(s) - \eta(t) - \eta'(t)(s-t)\right|
   \le \left\{\begin{array}{ll}
       C|s-t|^{3/2}+C(|s|+|t|)|s-t|^2\,,&n=3\\C(|s|^{\frac{n}{2}-2} +
       |t|^{\frac{n}{2}-2}+|s|^{n-2}+|t|^{n-2})|s-t|^2\,,
   &n\ge4\end{array}\right.\,.  
\ee
We continue with the case $n=3$ and leave $n\ge4$ to the
reader.

If we use the fact that $\eta'(V(v))$ is bounded independently of
$\beta\in[0,1]$ we obtain from \eqref{eq:Xiboundint} and \eqref{eta
  expansion}
\begin{eqnarray}
  n\,\Xi(u,q,v-h^2abu,\omega)&\geq&\eta(V(v))+\eta'(V(v))(\xi_{v,p_0}(u,q)-h^2ab 
  u\nabla
  V(v))\nonumber\\&&{}-C\big(b^{-1}+|q|^2+|u|^2+h^2ab|u|+R(u,q)
  \big)^{3/2}
  \nonumber\\&&{}-C\big(b^{-1}+|q|^2+|u|^2+h^2ab|u|+R(u,q)\big)^{2}
  \nonumber\\&&{}-CR(u,q)-Ch^4a^2b^2|u|^2-Ch^2ab|u|\,.\label{eq:Xibound}
\end{eqnarray}
It is now crucial that (see \eqref{eq:Gauss-integrals-one} and
\eqref{eq:Gauss-integrals}) 
$$ 
\int (\xi_{v,p_0}(u,q)-h^2ab
u\nabla
V(v))\,G_b(u) G_b(q)\, du dq = 0\,.
$$
Since $v\in {\rm supp}(V)\subset B_{3/2}$ and
$(1-h^2ab)u\not\in B_2-v$ implies $|u|>1/2$ we find
\be \label{mu-estimate2}
  \Bigg|\,\int\limits_{(1-h^2ab)u\in B_2-v}(\xi_{v,p_0}(u,q)-h^2ab 
  u\nabla V(v)) \,G_b(u) G_b(q)\, du dq\,\Bigg|  \le C \,{\rm
    e}^{-b/5} \le C h^{6/5}\,. 
\ee
Combining this with \eqref{eq:Xibound} and inserting it into
\eqref{eq:rholower1} we arrive at  (recall that $\Lambda=b^{-1/2}$)
\bea\label{eq:rholower2}
  \rho_\c(x) &\geq& (2\pi h)^{-3}\w_3 \int\eta\big[V(v)\big]
  G_{(h^2b)^{-1}}(x-v)\, dv 
  \nonumber\\   
  &&\,-\, Ch^{-3}\big(h^{6/5}+b^{-3/2}+(h^2ab)^{3/2}b^{-3/4}\big)\,.
\eea
Here we have removed the constraint $(1-h^2ab)u\in B_2-v$ by the same
argument as above. 

We shift the $v$-coordinate by $x$, and then expand
$\eta\big[V(x+v)\big]$ in the integral at $x$ by expanding \(V\) to
second order at \(x\) and using \eqref{eta expansion}. Then
\beax
  \eta\big[V(x+v)\big]\geq \eta\big[V(x)\big]
  +\eta'\big[V(x)] \nabla V(x)\cdot v -C(|v|^{3/2}+|v|^3)\,. 
\eeax  
Then we obtain from \eqref{eq:rholower2} (using
\eqref{eq:Gauss-integrals-one}) that
$$ 
  \rho_\c(x) - (2\pi h)^{-3}\w_3\, \eta\big[V(x)\big] \geq -C
  h^{-3}\big(h^{6/5}+(h^2b)^{3/4}\big) \geq{}-C h^{-3+9/10}\,.
$$
This finishes the proof of \eqref{eq:rhogammaprop1}.

Lastly, we prove \eqref{eq:rhogammaprop2}. By integrating 
\eqref{eq:rholower2} we see that  
\beax
  \lefteqn{\int \phi(x)^2\rho_\c(x)\, dx}
  \\
  &\geq&(2\pi h)^{-3}\w_3 \int \phi(x)^2\,
  G_{(h^2b)^{-1}}(x-v)\,\eta\big[V(v)\big]\, dxdv  
  - C h^{-3+6/5}
  \\
  &\geq&(2\pi h)^{-3}\w_3\int \phi(v)^2 \,\eta\big[V(v)\big]\, dv - C
  h^{-3+6/5}\,. 
\eeax
In the last step we have expanded $\phi^2$ to second order at \(v\)
to obtain that (see also \eqref{eq:Gauss-integrals-one})
$$ \int \phi(x)^2  G_{(h^2b)^{-1}}(x-v)\, dx \le \phi(v)^2 + C h^{6/5}\,.
$$
This finishes the proof of \eqref{eq:rhogammaprop2} and therefore of
Lemma~\ref{lm:upperbound}.
\end{proof}
\end{appendix}
\begin{acknowledgement}
The authors thank the Erwin Schr\"odinger Institute (Vienna, Austria)
for hospitality at various visits. T\O S and WLS wishes to thank the
Department of Mathematics at University of Copenhagen for the
hospitality at numerous visits.

Work of the authors was partially supported by two EU TMR
    grants, and by various grants from The Danish National Research
    Foundation. 
\end{acknowledgement}
%

\providecommand{\bysame}{\leavevmode\hbox to3em{\hrulefill}\thinspace}
\providecommand{\MR}{\relax\ifhmode\unskip\space\fi MR }
\providecommand{\MRhref}[2]{%
  \href{http://www.ams.org/mathscinet-getitem?mr=#1}{#2}
}
\providecommand{\href}[2]{#2}

\end{document}